%% file: main.tex
\documentclass[%
 reprints,nofootinbib
]{revtex4-2}
\usepackage{iceberg}

\usepackage[caption=false]{subfig}
\graphicspath{{./}}

\def\bra{\langle}
\def\ket{\rangle}
\def\lr{\leftrightarrow}
\def\backslash{\symbol{92}}
\def\ve{\varepsilon}
\def\weight{\mathfrak{w}}

\def\run{\{\text{run success}\}}
\newcommand{\noisy}[1]{\tilde{#1}}

\DeclareMathAlphabet{\dutchcal}{U}{dutchcal}{m}{n}
\docsvlist{A,B,C,D,E,F,G,H,I,J,K,L,M,N,O,P,Q,R,S,T,U,V,W,X,Y,Z}

\docsvlist{a,b,c,d,e,f,g,h,j,k,l,m,n,o,p,q,r,s,t,u,v,w,x,y,z}

\DeclareMathOperator{\ch}{ch}
\DeclareMathOperator{\sh}{sh}

\DeclareMathOperator{\cone}{cone}

\DeclareMathOperator{\desc}{desc}
\DeclareMathOperator{\rep}{rep}

\usepackage{accents}

\newcommand{\ubar}[1]{\underaccent{\bar}{#1}}
\newcommand{\super}[1]{^{\scriptscriptstyle(#1)}}

\usepackage{dsfont}
\newcommand{\dnum}[1][1]{\mathds{#1}}

\newtheorem{hypothesis}[definition]{Hypothesis}

\usepackage{tikz-cd} % draw commutative diagram
\usetikzlibrary{matrix,arrows,decorations.pathmorphing}
\usetikzlibrary{fit, shapes.geometric,calc}
\makeatletter
\def\input@path{{./}}

\renewenvironment{algorithm}[1][]{%
  \par\addvspace{\intextsep}%
  \refstepcounter{algorithm}%
  \hrule height.8pt depth0pt \kern2pt%
  \def\caption##1{{\raggedright\textbf{\ALG@name~\thealgorithm}\ ##1\par}%
    \addcontentsline{loa}{algorithm}{\protect\numberline{\thealgorithm}##1}%
    \kern2pt\hrule\kern2pt\nobreak}%
}{%
  \kern2pt\hrule\par\addvspace{\intextsep}%
}
\makeatother

\algdef{SE}[REPEATN]{RepeatN}{EndRepeatN}[1]
{\textbf{repeat} #1 \textbf{times}}
{\textbf{end repeat}}
  
\begin{document}

\title{One-Shot any Code}
\author{Andrew C. Yuan}
% \author{Nouédyn Baspin}
\affiliation{%
 Iceberg Quantum, Sydney
}%
\author{{\scriptsize\texttt{\{andrew\}@iceberg-quantum.com}}}

\begin{abstract}
We first provide a simple proof showing that any single-shot code is self-correcting.
We then show that any $[[n,k]]$ CSS quantum low-density parity-check (QLDPC) code can be transformed into an $[[nm,k]]$ CSS QLDPC code with single-shot (SS) quantum error correction and an efficient local decoder which runs in $O(\log m)$ parallel time. 
Below a constant threshold $p_{\rm RG}>0$ for joint local stochastic physical and measurement noise, the logical failure probability over $T$ correction rounds is bounded by
\begin{equation*}
    O(T n)\exp[-\Omega(m^\alpha)]
\end{equation*}
for a constant $\alpha>0$; we say that the \textit{threshold} is $p_{\rm RG}$ and \textit{error suppression} is $\Omega(m^{\alpha})$.
Moreover, if the input code is equipped with a single-shot decoder with threshold $p_c>0$ and error suppression $\Omega(n^{\beta})$, then the output code \textit{enhances} the error suppression to $\Omega(m^{\alpha} n^{\beta})$, while maintaining the decoder-independent threshold $p_{\rm RG}$, and thus \textit{enhanced} if $p_c < p_{\rm RG}$.
\end{abstract}

\maketitle

\tableofcontents
\input{intro}
\input{prelim}

\input{SS-to-SC}
\input{level}
\input{RG}

\input{run-failure}

\input{enhance}
\input{repair}

\clearpage
\appendix

\input{iterate}
\input{contraction}

\input{dual}
\input{metric}

\bibliography{main}

\end{document}

%% file: intro.tex
\section{Introduction}
\label{sec:intro}

Quantum low-density parity-check (QLDPC) codes offer a route to fault-tolerant quantum computing with low space overhead. 
The construction of asymptotically good QLDPC codes shows that constant encoding rate and linear distance are compatible with bounded check weight and qubit degree \cite{panteleev2022asymptotically,leverrier2022quantum}. 
However, good code parameters alone do not determine the cost of quantum error correction (QEC). Syndrome measurements are themselves noisy, and a decoder must distinguish physical errors from incorrect measurement outcomes. 
The conventional approach obtains this information by repeating syndrome extraction over time, often over a window of order the code distance $d$ \cite{gottesman2013fault}. 
This introduces a temporal cost which can remain substantial even when the spatial cost of encoding is small.

Single-shot QEC replaces this temporal redundancy with structure in the code and its measured checks: one round of noisy syndrome measurements suffices to control the residual error, and the procedure can be repeated without accumulating an uncontrolled error. 
Bombin introduced this principle for three-dimensional gauge color codes and related it to confinement and self-correction \cite{bombin2015single}. 
Campbell developed a general theory based on soundness and showed that suitable choices of checks can give single-shot properties even for an arbitrary stabilizer code, although these checks need not remain low weight \cite{campbell2019theory}.
His argument addresses adversarial noise; it does not establish a stochastic threshold, which also requires controlling the entropy, i.e., the number of possible fault configurations. 
Efficient single-shot decoders are also known for quantum expander codes \cite{fawzi2020constant} and, more recently, quantum Tanner codes \cite{gu2024single}. 
The latter combines single-shot correction with asymptotically good parameters. 
These results motivate a different question: can one equip an arbitrary input CSS QLDPC code with single-shot protection while preserving bounded check weight and qubit degree?

\subsection{Main result}
\label{sec:intro-main-result}

In this work, we first prove that any single-shot stabilizer code is self-correcting, formally stated in Theorem \ref{thm:SS-implies-SC}.
The proof is simple, but interesting, enough so that it is included for completeness.

We then show that it's possible to \textit{one-shot any code}, whose formal statement is given by Theorem \ref{thm:RG-threshold}.
As sketched in Fig. \ref{fig:schematic}, given an $[[n,k]]$ CSS QLDPC code, we can construct an $[[nm,k]]$ CSS QLDPC code with single-shot (SS) quantum error correction and an efficient local renormalization-group (RG) decoder.
The local decoder runs in $O(\log m)$ parallel time, which depends only on the overhead $m$, rather than on the total system size $nm$.
The output preserves the number of logical qubits, and its check weights $\fw$ and qubit degrees $\fd$ are bounded by those of the input code up to some absolute constants, independently of $n$ and $m$. 
More quantitatively, below a constant threshold $p_{\rm RG}>0$, the probability of failure over $T$ rounds of single-shot (SS) decoding is bounded by
\begin{equation}
    \Pr[T\text{ rounds of SS decoding fails}]
    \le O(T n)\left(\frac{p}{p_{\rm RG}}\right)^{\Omega(m^{\alpha})},
    \qquad \alpha >0,
    \label{eq:intro-main-bound}
\end{equation}
for joint local stochastic physical and measurement noise of rate $p$, with constants depending only on $\fw,\fd$.
% As in the formal noise model, failure is assessed with a final noiseless recovery. 
% Thus the error suppression exponent grows polynomially in $m$; at fixed subthreshold noise, the failure probability is exponentially small in a power of the overhead. 
We say that the \textit{threshold} is $p_{\rm RG}$, while the \textit{error suppression} is $\Omega(m^{\alpha})$.
Hence, the single-shot guarantee is sustainable over exponential many repeated rounds and does not require the input code to be single-shot. 
Equation~\eqref{eq:intro-main-bound} is the overhead form of Theorem~\ref{thm:RG-threshold}.

A second guarantee is that the RG decoder \textit{enhances} any single-shot decoder.
Specifically, if an input decoder is provided with threshold $p_c>0$ and error suppression $\Omega(n^{\beta})$, then the RG decoder enahnces the error supression to $\Omega(m^{\alpha} n^{\beta})$, while maintaining the previously introduced decoder independent threshold $p_{\rm RG}$; if $p_c<p_{\rm RG}$, then the threshold is also enhanced.
Moreover, if the input decoder runs in $\tau$ (parallel) time, then the RG decoder runs in $O(\log m) + \tau$ parallel.

% its running time is additional: the full bound is $O(\log m)+T_{\rm seed}$, as in Theorem~\ref{thm:RG-run-time}. Thus the overhead determines the cost of the local RG stages, while enhancement can retain the chosen seed decoder.

\begin{figure}[h]
\centering
\includegraphics[width=0.7\columnwidth]{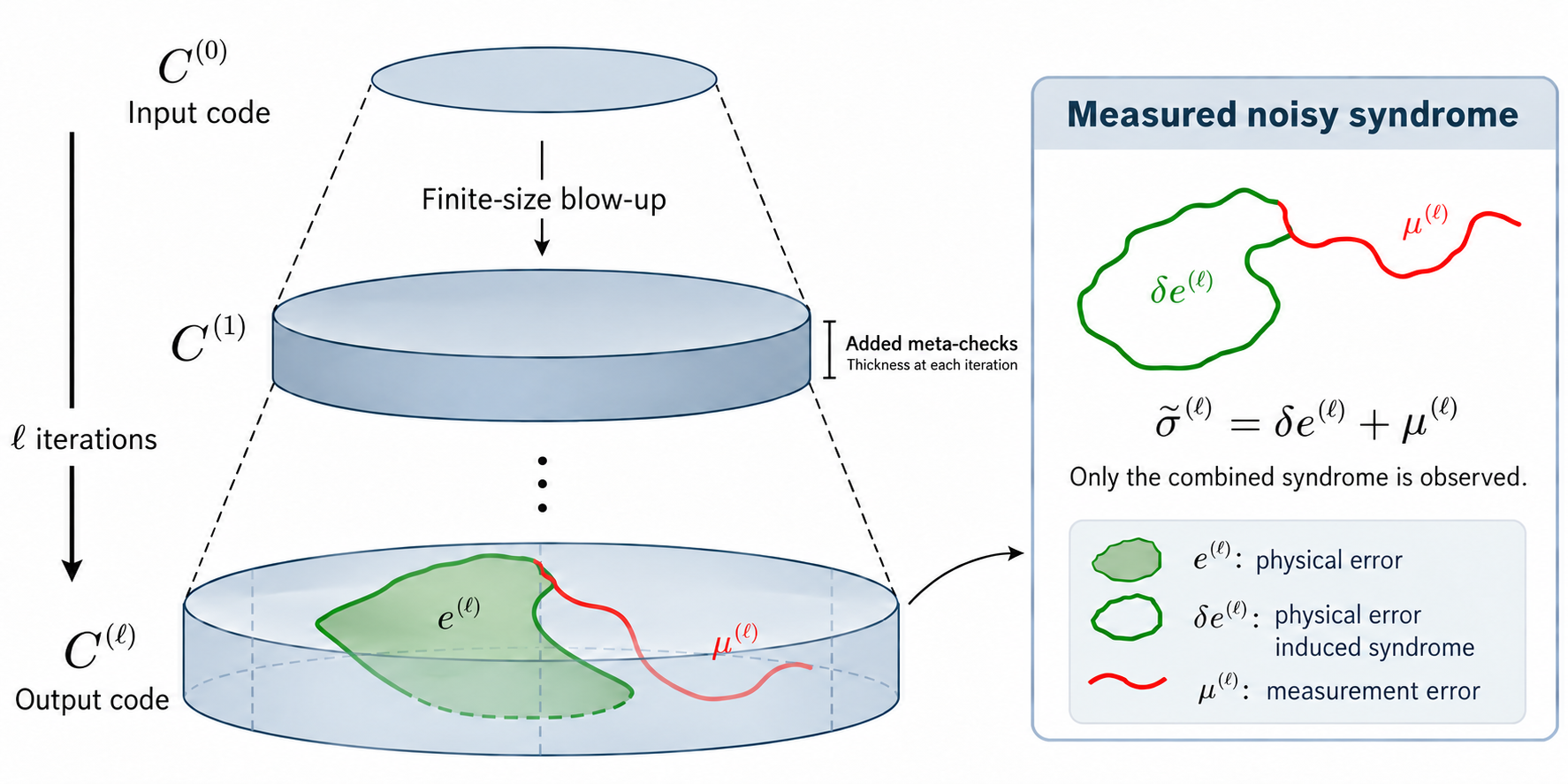}
% \subfloat[\label{fig:input-code}]{%
%     \centering
%     \includegraphics[width=0.43\columnwidth]{input-code-clean.pdf}
% }
% \hspace{0.05\columnwidth}
% \subfloat[\label{fig:local-cone}]{%
%     \centering
%     \includegraphics[width=0.43\columnwidth]{local-cone-clean.pdf}
% }
\caption{Schematic of the RG construction. Physical and measurement errors only occur in the final output code $C\super{\ell}$.
}
\label{fig:schematic}
\end{figure}

\subsection{Literature Comparison}
\label{sec:intro-literature}

Our construction generalizes the recent breakthrough of Balasubramanian, Davydova, and Lin on a passive self-correcting (SC) quantum memory in three dimensions \cite{balasubramanian2026passive}. 
They construct a three-dimensional Pauli stabilizer Hamiltonian with a memory lifetime exponential in a power of system size at sufficiently low, nonzero temperature. 
Their construction recursively replaces local pieces of a seed code and uses an RG decoder to establish protection against thermal errors. 
We build on this recursive construction and its local coarse-graining mechanism, while addressing active correction with faulty syndrome measurements. 
The locality relevant to our decoder is locality in the code; the generalization stated here does not require the input or output code to have a three-dimensional geometric realization.

The extension from passive self-correction to single-shot QEC is nontrivial. 
In the self-correction setting, the errors to be controlled are physical errors, whose exact syndromes are supplied to the recovery argument. 
In single-shot QEC, the observed syndrome also contains measurement faults and need not be a valid physical syndrome. 
Repairing it can produce a physical correction whose residual error enters the next round. 
Consequently, separate bounds on physical errors and measurement errors do not suffice: one must control their combined effect, including interference between the residual correction from one round and fresh errors in the next.
Alternatively, since we also show that any single-shot code is self-correcting, we see that single-shot is a stronger property than self-correction.

This is also where the distinction from Campbell's adversarial guarantee becomes essential \cite{campbell2019theory}: a stochastic threshold requires an entropy bound in addition to control of the weight of each residual error.
One standard approach studied in this context is to impose linear soundness or, more generally, linear confinement \cite{quintavalle2021single,xu2024constant,hong2025preparation,tan2025single}.
Roughly, linear confinement bounds the reduced error weight by its syndrome weight (up to a constant factor), motivating attempts to control the entropy through this relation.
Bombin gives another route to a stochastic single-shot threshold by controlling the entropy of connected syndrome clusters, without requiring linear confinement \cite{bombin2015single}.
Although not explicitly stated in this form, it is quite plausible that his formulation extends to QLDPC codes with \textit{sparse meta-checks}: each meta-check has bounded weight and each check participates in a bounded number of meta-checks, uniformly in the code size.
Such sparsity permits the construction of a bounded degree syndrome graph such that his componentwise validity and recovery compatibility assumptions may hold.
% Such sparsity supplies a bounded-degree syndrome graph on which to control the entropy.
The 4D toric code is an example: it has sparse meta-checks and is self-correcting \cite{alicki2010thermal}, yet a square planar error $e$ of side length $r$ has $|e|\propto r^2$ and syndrome weight $|\delta e|\propto r\propto |e|^{1/2}$, so its confinement is sublinear.
We \textbf{emphasize} that although Bombin provided empirical and qualitative analysis between the similarities, he does not prove that either property implies the other, even when restricted to \textit{sparse meta-checks}.

Our generalization of the passive 3D memory presents both obstructions.
Although the meta-checks introduced at each level are sparse, the full inherited hierarchy, expressed on the final code, is not uniformly sparse: some inherited meta-checks have weight $\Omega(m)$; see Remark~\ref{rem:hidden-meta-checks}.
The construction also has only sublinear confinement, even though the generalized code need not admit a three-dimensional embedding.
Thus neither this approach based on linear confinement nor a direct application of Bombin's bounded-degree syndrome-cluster argument establishes single-shotness for this generalization.

% We address this by combining syndrome repair with local cleaning at every level of the hierarchy. 
% Meta-checks identify violations of syndrome consistency, after which cleaning passes the repaired syndrome to the next coarser code. 
% Repair can enlarge the relevant error clusters, so the clean-syndrome contraction must be strong enough to absorb this loss. 
% In our explicit construction, a three-sheet replacement gives contraction $\kappa=4/9<1/2$, allowing noisy contraction strictly below one. 
% The proof then tracks physical faults together with measurement faults from consecutive rounds in a common witness graph. 
% Any logical failure requires a large connected witness containing a constant fraction of faults. Counting these witnesses in a bounded-degree graph controls the entropy and yields the suppression in Eq.~\eqref{eq:intro-main-bound}. 
% This controls the interaction between the two kinds of faults throughout repeated correction.

% \subsection{Fast surgery}
% \label{sec:intro-surgery}

A further motivation comes from the recent interest in \textit{fast surgery}: measuring logical operators in $O(1)$ syndrome rounds instead of the usual $O(d)$ \cite{gottesman2013fault}. 
Baspin, Berent, and Cohen give a general framework for fast QLDPC surgery using auxiliary chain complexes \cite{baspin2025fast} (later generalized for non-CSS codes \cite{yuan2026non}). 
Recent developments include constant-time surgery on hypergraph product codes with near-constant space overhead \cite{chang2026constant}, and lifted surgery for fast, parallel logical measurements on Abelian group algebra codes \cite{berent2026lifted}. 
These protocols address a central obstacle to fast logical computation, but the time spent in a surgery gadget must be distinguished from the time required to maintain its input and output memories.

Indeed, the fault-tolerance analysis of Ref.~\cite{baspin2025fast} assumes sufficiently reliable base-code syndromes before and after surgery. If the surrounding QEC obtains this reliability through $O(d)$ repeated syndrome rounds, then it can remain the bottleneck even when the surgery gadget itself takes $O(1)$ rounds. 
Our construction removes the need for such temporal repetition for memory correction: combined with a compatible constant-round surgery protocol, both surgery and the intervening QEC can use $O(1)$ quantum measurement rounds. This is a statement about quantum syndrome-extraction depth, with parallel operations on the required code connectivity; the parallel classical decoding takes $O(\log m)$ time, determined by the overhead $m$.

% There are two distinctions when composing these ingredients. First, some fast-surgery results obtain constant \textit{amortized} time by performing many operations within a longer syndrome history \cite{chang2026constant}; their guarantees are different from an isolated constant-round operation. 
% Second, a surgery gadget can introduce a random Pauli frame whose recovery requires more than control of local stochastic residual errors. 
% Single-shot QEC alone therefore does not establish single-shot state preparation or the fault tolerance of every surgery interface. Our result supplies a general single-shot memory on which compatible fast logical measurements can operate, removing repeated QEC as a separate source of quantum time overhead.

\subsection{Organization}
\label{sec:intro-organization}

Section~\ref{sec:prelim} introduces the noise model and decoder conventions. 
Section \ref{sec:SS-implies-SC} provides a simple proof showing that any single-shot stabilizer code is self-correcting; it is self-contained and thus independent of the remainder of this manuscript.
Section~\ref{sec:level} states the level-map, cleaning, and repair properties used in the analysis, with the explicit construction deferred to Appendices~\ref{sec:iterate}--\ref{sec:dual}. 
Section~\ref{sec:RG} defines the RG decoder and reduces repeated correction to an effective input level noise process. 
Section~\ref{sec:run-failure} bounds run failure through syndrome witnesses. Section~\ref{sec:threshold} proves the threshold and enhancement guarantees, culminating in Theorem \ref{thm:RG-threshold}. 
Section~\ref{sec:local-repair} develops local syndrome repair and its parallel implementation, using the geometric estimates in Appendix~\ref{sec:metric}. The run time is given by Theorem \ref{thm:RG-run-time}.

\subsection{Outlook}
\label{sec:intro-outlook}

In light of our simple yet general proof that single-shot implies self-correction, can the converse be proven in a general manner (say, for any QLDPC code)?
Indeed, Bombin \cite{bombin2015single} has given an empirical analysis showing that the two are closely related when dealing with QLDPC codes with \textit{sparse meta-checks}.
This manuscript, in conjunction of \cite{balasubramanian2026passive}, shows that even when meta-checks are non-sparse, the two properties can still closely related.
However, neither has shown a formal equivalence between the two properties, and thus begs the question whether single-shot is equal to or strictly stronger than self-correction?

The passive 3D memory construction has two ingredients: an iterative code blow-up and a Euclidean embedding which positions qubits and checks in $\mathbb{R}^3$ with bounded interaction range and bounded spatial density \cite{balasubramanian2026passive}.
Writing $R$ for the effective qubit blow-up factor per iteration, after $\ell$ iterations the overhead is $R^\ell=\Theta(m)$.
The probabilistic embedding argument, based on the Lov\'asz local lemma, gives an astronomical blow-up factor $R\gtrsim 10^{96}$, corresponding to the square of the subdivision scale $10^{24}$ and squared again due to an alternating sequence so that both $X$ and $Z$ sector errors are protected.
Our generalization focuses on the iterative code construction, i.e., no subdivision occurs, so that a 3D Euclidean embedding is not guaranteed.

% Even without subdivision, due to the alternating construction to protect both $X$ and $Z$ sectors, the blow up is still large, e.g., $R\approx 50^2$ if the surface code is used as an input.
% A natural question is whether a single iterative construction that protects both $X$ and $Z$ sectors simultaneously can be constructed with realistic blow up sizes, e.g., $R\approx 50$.

It then naturally raises the question where one can obtain a simpler 3D Euclidean construction with realistic constants, for both single-shot correction (and thus passive self-correction)?
Ref.~\cite{balasubramanian2026passive} already proposes an explicit alternative with blow up $R \gtrsim 8^4$, corresponding to a subdivision scale $8$.
However, the authors leave a complete proof of exponential memory lifetime for this alternative to future work.
Independently verifying its self-correcting properties remains nontrivial, even with AI-assisted analysis.
A simpler construction with a transparent proof of contraction, locality, and noise stability would therefore be valuable.

In the context of fast surgery, our result supplies a general single-shot memory on which compatible fast logical measurements can operate, removing repeated QEC as a separate source of quantum time overhead.
However, the memory and surgery constructions are currently treated separately, which may incur a large spatial overhead.
If the single-shot construction proves practical, incorporating fast surgery directly into it may therefore be important for reducing the overall spatial overhead.
We leave this integration to future work.

%% file: prelim.tex
\section{Preliminary}
\label{sec:prelim}
\subsection{Notation}

\begin{definition}
    Let $[n]=\{1,...,n\}$. Similarly, define $[n),[0,n],[0,n)$ to be the integers within the corresponding intervals.
\end{definition}

\begin{definition}
    Given linear map $\phi$ between $\F_2$ spaces equipped with bases, let $\|\phi\|_{\rm c}$ be the maximum column weight, and similarly for the maximum row weight $\|\phi\|_{\rm r}$.
\end{definition}
% \begin{definition}
%     If $\cS$ is a collection of subsets, we write $\|\cS\|$ to be the sum of $|s|$ for $s\in \cS$.
% \end{definition}

\begin{definition}[CSS Code]
    \label{def:CSS-code}
    By default, we shall represent a CSS code as a $\F_2$ \textbf{cocomplex} 
    \begin{equation}
        C=X\to Q\to Z
    \end{equation}
    with \textbf{codifferential} $\delta$ satisfying $\delta^2=0$ and $X,Q,Z$ denoting the space of $X$-check, qubits and $Z$-checks, equipped with basis $\cX,\cQ,\cZ$.
    We also reserve the notation $x,q,z$ for the \textbf{cells}, i.e., basis elements, in $\cX,\cQ,\cZ$, respectively, and write $x\prec q\prec z$ to denote the adjacency relation so that $\prec$ defines a partial ordering on cells $\cC\equiv \cX\sqcup \cQ\sqcup \cZ$.
    Note that any \textbf{chain} (element) $c$ in $C$ can be regarded as a subset of cells, and vice-versa.
    Write $\dnum_u c = c|_u$ to denote the \textbf{projection} of $c$ onto the cell $u$ so that $\cong \F_2$; similarly define $\dnum_{\cS} c= c|_{\cS}$ to denote that onto a subset $\cS$ of cells.
    Also write $|c|_{\cS}$ to denote the \textbf{Hamming weight} of the projection $c|_{\cS}$.
    We also write  
    \begin{equation}
        C^\top=Z\to Q\to X
    \end{equation}
    to denote the transpose of $C$ with differential $\partial = \delta^\top$.
    We also write $\|\delta\|_{\rm c}, \|\delta \|_{\rm r}$ to denote the maximum column and row weight.
    In particular, $C$ is \textbf{QLDPC} if the column and row weights of all $\delta$ are bounded; we write $\fw, \fd$ to denote the max check weights and qubit degrees, respectively.
    Unless we are referring to cell (co)complexes, we will not use the term $i$-cell, since CSS codes do not have a canonical convention.
    Instead, we refer to $X,Q,Z$ as the $X$-check, qubit, and $Z$-check \textbf{degree}; in particular, $\deg x=X$ and similarly for $q,z$.
\end{definition}

\begin{definition}[Meta-Checks]
    \label{def:meta-checks}
    In the case where the CSS code has meta-checks, we write
    \begin{equation}
        \bar{C} =  MX \to  X\to Q\to Z \to MZ
    \end{equation}
    so that the bar notation emphasizes that the code has meta-checks $\bar{C}=(\bar{C},\bar{\delta})$, while the \textbf{bare code} is simply $C=(C,\delta)$.
    Also note that if \textbf{syndrome} $\sigma\in Z$ is induced by physical errors $\in \im \delta$, we say it is \textbf{valid}, and if it's consistent with all meta-checks $\in \ker \bar{\delta}$, then it is \textbf{clean}.
    Also, we say that $\bar{C}$ is \textbf{sparse} if the degrees and codegrees are bounded, i.e., the bare code $C$ is QLDPC and the meta-checks $\bar{\delta}$ have bounded column and row weights; we write $\bar{\fw},\bar{\fd}$ for the max meta-check weight and meta-degrees, respectively, to distinguish from the bare $\fw,\fd$.
\end{definition}

\begin{definition}[Repetition]
    Let $R=V\to E$ denote the repetition code on a path with $m\ge 2$ vertices, though for the main result, we will utilize $m=3$ since it is more natural as elaborated in Remark \ref{rem:m-sheets}.
    % It may be worth mentioning that the construction also applies to any tree graph.
\end{definition}

% % ==========================================================
\subsection{Decoder}

% \begin{definition}[MW]
%     Let $\bar{\delta}:Z\to MZ$ be given. Then the \textbf{min-weight (MW)} projection $\im \bar{\delta} \to Z$ is the min-weight $\hat{\mu}\in Z$ such that $\bar{\delta}\hat{\mu}$ is equal to the input.
%     Note that we may extend its domain from $\im \delta$ to $Z$ in some arbitrary manner; here, we take the convention that if the input $\notin \im \bar{\delta}$, the MW projection \textbf{fails}.
%     One can similarly define MW projection relative to $\delta:Q\to Z$, which is commonly used to prove a threshold for noiseless syndrome error correction \cite{gottesman2013fault}, though it is not used in this manuscript.
% \end{definition}

\begin{definition}[Decoder]
    \label{def:decoder}
    Let $C$ be a CSS code.
    A \textbf{(valid) decoder} is a map $\cE:\run\to Q$ defined on a subset $\run \subseteq Z$ which contains the valid syndromes $\im \delta \subseteq \run$, and maps $0\mapsto 0$ and satifies $\delta \cE = \mathrm{id}$ on $\im \delta$.
    We say that $\cE$ \textbf{fails (to run)} on a syndrome $\noisy{\sigma} \in Z$ if $\noisy{\sigma}\notin \run$.
    Even if a decoder runs successfully, it's possible to exhibit a \textbf{logical failure}, i.e., $e+\cE( \delta e) \notin \im \delta$.
    Otherwise, we say it \textbf{decodes} physical error $e$ \textbf{successfully}.
\end{definition}

It is conventional to start with a decoder with domain $ =\im\delta$.
To deal with noisy syndromes $\noisy{\sigma}\notin \im\delta$, the domain is then somehow \textit{enlarged}. One way is to utilize meta-checks as follows.
\begin{definition}[Meta-Check Decoder]
    \label{def:meta-check-decoder}
    Let $\bar{C}$ be a CSS code with meta-checks.
    % A \textbf{syndrome decoder} is a map $\dec: Z\to (Q,Z)$ which maps $0\mapsto (0,0)$ and $\delta \cE + \cM = \rm{id}$.
    A \textbf{(valid) meta-check decoder} is a pair of maps $\cM: Z\to Z$ and $\cE: \run \to Q$ such that 
    \begin{itemize}
        \item $\cM$ is invariant under valid syndromes, i.e., $\cM(\noisy{\sigma} + \delta e) = \cM(\noisy{\sigma})$, maps $0\mapsto 0$ and
        \begin{equation}
            \bar{\delta} \cM (\noisy{\sigma}) = \bar{\delta} \noisy{\sigma},
        \end{equation}
        \item $\cE$ is a valid decoder with domain $\run = \{\noisy{\sigma}:\noisy{\sigma} + \cM(\noisy{\sigma}) \in \im \delta\}$ and
        \begin{equation}
            \cE (\noisy{\sigma}) = \cE(\noisy{\sigma}+\cM(\noisy{\sigma}))
        \end{equation}
    \end{itemize}
    
\end{definition}

\begin{remark}[Run Failure]
    \label{rem:run-failure}
    Note that the meta-check decoders runs successfully with input $\noisy{\sigma} = \delta e + \mu$ iff it runs successfully on $\mu$.
    In a standard noise model, it's conventional to assume that the physical error $e$ and measurement error $\mu$ are jointly local stochastic and thus there is no \textit{interference} between $e,\mu$ when computing the run success (or failure) event; instead, it only depends on $\mu$.
    Hence, we shall focus on the logical failure, while the run failure is considered separately.
\end{remark}

\begin{example}[MW]
    \label{ex:MW}
    The min-weight (MW) decoder relative to $\delta:Q\to Z$ \cite{gottesman2013fault} is a valid decoder. Moreover, if a CSS code has meta-checks (e.g., 3D toric code), then the corresponding MW decoder defined as follows is a valid meta-check decoder; given $\noisy{\sigma}\in Z$, apply MW relative to meta-checks to obtain $\cM (\noisy{\sigma})$, and then apply MW relative to checks on the repaired syndrome $\noisy{\sigma} +\cM (\noisy{\sigma})$ to obtain $\cE(\noisy{\sigma})$. 
    Note that the MW meta-check decoder fails to run if $\noisy{\sigma} +\cM(\noisy{\sigma})$ is clean but not valid.
    However, it's run failure probability is generally well-controlled by the large syndrome distance, i.e., the min weight of clean but not valid syndromes, and thus it's conventional to focus on the logical failure.
\end{example}

\begin{remark}[Valid vs Clean Invariance]
    Note that the valid syndrome invariance of $\cM$ is a weaker property than $\cM$ depending only on the meta-syndrome $\bar{\delta}\tilde{\sigma}$, i.e., invariant under clean syndromes.
    The MW decoder in Example \ref{ex:MW} satisfies the latter stronger property, but our renormalization group (RG) decoder will only possess the former.
    Note that by Remark \ref{rem:run-failure}, the weaker invariance is sufficient when considering the (run) failure probability.
\end{remark}

\begin{definition}[Connectivity]
    Note that $\bar{\delta}:Z\to MZ$ induces a graph $Z(\bar{\delta})$ on $z$ checks with edges $zz'$ if $z,z'$ are in a common meta-check.
    In particular, we can define a \textbf{connectivity} relative to the meta-checks.
    % The case is similar for say $\delta:Q\to Z$, in which we define \textbf{clusters} relative to $Z$-checks.
\end{definition}

\begin{lemma}[Half-Fault]
    \label{lem:half-fault}
    Let $\mu$ be a syndrome with MW projection $\hat{\mu}$, and let $\fs$ be a connected component (relative to meta-checks) of $\mu+\hat{\mu}$. Then
    \begin{equation}
        |\fs| \le 2|\fs \cap \mu|
    \end{equation}
\end{lemma}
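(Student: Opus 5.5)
The plan is the standard local-modification argument for min-weight solutions. Recall that $\hat{\mu}$ is the min-weight syndrome with $\bar{\delta}\hat{\mu} = \bar{\delta}\mu$. Then $\mu+\hat{\mu}\in \ker\bar{\delta}$, i.e.\ $\mu+\hat\mu$ is clean.

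\textbf{Step 1: each component is itself clean.} Let $\fs$ be a connected component of $\mu+\hat{\mu}$ in the graph $Z(\bar{\delta})$. Suppose some meta-check $w$ contains checks of $\fs$ and also checks of $(\mu+\hat\mu)\setminus \fs$. Any two checks in $w$ are adjacent in $Z(\bar{\delta})$, so these checks would lie in the same component, a contradiction. Hence every meta-check meets $\mu+\hat\mu$ either only inside $\fs$ or only outside it. It follows that $\bar{\delta}\fs$ agrees with $\bar{\delta}(\mu+\hat\mu)=0$ on meta-checks touching $\fs$, and is trivially zero elsewhere. So $\bar{\delta}\fs = 0$.

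\textbf{Step 2: exchange argument.} Set $\hat{\mu}' = \hat{\mu}+\fs$. Then $\bar{\delta}\hat\mu' = \bar{\delta}\hat\mu = \bar\delta\mu$, so $\hat\mu'$ is also a valid projection. Minimality gives $|\hat\mu| \le |\hat\mu+\fs|$.

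Since $\fs\subseteq \mu+\hat\mu$, the cells of $\fs$ split into two disjoint types:
\begin{itemize}
\item cells in $\mu\setminus\hat\mu$, namely $\fs\cap\mu$;
\item cells in $\hat\mu\setminus\mu$, namely $\fs\cap\hat\mu$.
\end{itemize}
Adding $\fs$ to $\hat\mu$ removes the $|\fs\cap\hat\mu|$ cells of the second type and adds the $|\fs\cap\mu|$ cells of the first type. Hence
\begin{equation}
|\hat\mu+\fs| = |\hat\mu| - |\fs\cap\hat\mu| + |\fs\cap\mu|.
\end{equation}
Minimality then yields $|\fs\cap\hat\mu|\le|\fs\cap\mu|$. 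Therefore
\begin{equation}
|\fs| = |\fs\cap\mu| + |\fs\cap\hat\mu| \le 2|\fs\cap\mu|.
\end{equation}

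\textbf{Main obstacle.} The only delicate point is Step 1: showing that a single component of a clean chain is clean. This relies on the adjacency convention that every pair of checks sharing a meta-check is joined by an edge in $Z(\bar\delta)$, so that meta-checks cannot straddle two components. I would state this explicitly. I would also note that $\hat\mu$ is well defined, since the feasible set contains $\mu$ and is therefore nonempty, and that ties in the min-weight choice are harmless because only minimality is used.
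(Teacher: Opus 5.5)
Your proof is correct and follows the paper's argument: each component $\fs$ of the clean chain $\mu+\hat\mu$ is itself clean, so $\hat\mu+\fs$ is a competing projection, and minimality gives $|\fs\cap\hat\mu|\le|\fs\cap\mu|$. Your Step 1 spells out why a component is clean, by noting that no meta-check can straddle two components; the paper only asserts this fact.
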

\begin{proof}
    Each component $\fs$ is clean, so $\hat{\mu}+\fs$ has the same meta-syndrome as $\hat\mu$. 
    Minimality gives $|\hat\mu\cap\fs|\le |\fs|/2$. Since $\mu$ and $\hat\mu$ partition $\fs$, the statement follows.
\end{proof}

Since we will be dealing with CSS codes, the $X$ and $Z$ sector can be studied separately even when their noise is correlated. Hence, we will restrict our attention to the $Z$ sector.
\begin{definition}[Noise Model]
    \label{def:noise-model}
    Consider a CSS code $C$ equipped with a decoder $\cE$ as in Definition \ref{def:decoder}.
    Suppose that physical errors $e(t)$ occur at integer times $t\in [0,T]$, while measurement errors $\mu(t^+)$ occur at half-integers $t^\pm =t \pm 1/2$, with a (joint) \textbf{local stochastic (LS)} rate $p$.
    Specifically, any finite set $\cS$ of physical and measurement fault locations (including their times) satisfies
    \begin{equation}
        \Pr[\cS \text{ fault}] \le p^{|\cS|}
    \end{equation}
    Proceed by induction on $t$, and suppose that after $t$ rounds, the residue error is $r(t)$, where $r(0)=0$, so that the noisy ($Z$-sector) syndrome is 
    \begin{equation}
        \noisy{\sigma}(t^+) \equiv  \delta (r(t) + e(t)) +\mu(t^+)    
    \end{equation}
    Apply the meta-check decoder on $\noisy{\sigma}(t^+)$, and apply the correction $\cE$ so that the residual error after $t+1$ rounds is
    \begin{equation}
        r(t+1) = r(t) + e(t) +\cE(\noisy{\sigma}(t^+))
    \end{equation}
    The recurrence is well-defined only if the decoder runs successfully on $\noisy{\sigma}(t^+)$.
    We also assume the \textbf{boundary condition} that after $T$ rounds, the $T+1$ measurement is perfect, i.e., $\mu(T^+)=0$.
    In this case, $T$ rounds of ($Z$-sector) \textbf{single-shot (SS)} decoding \textbf{succeeds} if $r(T+1) =0$ modulo stabilizers.
    We thus say that the code $C$ with decoder $\cE,\cM$ is \textbf{single-shot} if there exists constants $p_c,c,a>0$ such that if the error rate $p<p_c$, then
    \begin{equation}
        \Pr[T \text{ rounds of SS decoding fails}] \le O(T+1) \poly(|\cQ|) \left(\frac{p}{p_c}\right)^{c n^a}
    \end{equation}
    We call $p_c$ the (SS) \textbf{threshold} and $\Omega(n^a)$ the (SS) \textbf{error suppression}.
\end{definition}

\subsection{Covers}

\begin{definition}[Cover]
    Let $\cG$ be a graph of bounded degree, and let $\cA,\cS$ be subsets of vertices.
    The $R$-\textbf{neighborhood} $\cN_R (\cS)$ of $\cS$ is the collection of vertices $u$ such that the graph distance $\dist(u,v)\le R$ for some $v\in \cS$.
    We say that $\cA$ is $R$-\textbf{covered} by $\cS$ (relative to $\cG$) if $\cA$ is within the $R$-\textbf{neighborhood} of $\cS$, i.e., $\cA\subseteq \cN_R(\cS)$.
    % Moreover, we say the $R$-cover $\cS$ is \textbf{minimal} relative to $\cA$ if no proper subsets is an $R$-cover, \textbf{matchable} if there exists injective $\iota:\cS\to \cA$ such that $\iota v \in \cN(v)$.
    If $R=1$, we simply say the neighborhood $\cN(\cS)$ and cover.
\end{definition}
\begin{remark}[$R=1$ is General]
    Note it's sufficient to prove statement for $R=1$, since for any finite $R$, we can simply add edges to the graph so that any vertices originally with $R$ distance are now adjacent in the new graph. The new graph would then have bounded degree $\le (1+\deg \cG)^R -1 $.
\end{remark}
\begin{example}[Code Graph]
    \label{ex:code-graph}
    Consider a QLDPC CSS code $C$ with max check weight $\fw$ and qubit degree $\fd$.
    Then the bounded degree graph $\cG(C)$ \textbf{induced} by $C$ is that with vertices $\cC$ and edges $uv$ if $u\prec v$.
    The motivation here is that by definition, if measurement error $\mu\subseteq \cZ$ is local stochastic, then $\Pr[\tau \text{ covered by } \mu]\le p^{|\tau|}$ for any fixed $\tau \subseteq \cZ$.
    Hence, we wish to obtain a similar property for physical errors $e \subseteq \cQ$, so that physical and measurement errors can be treated analogously, as shown in Lemma \ref{lem:cover-probability}.
    One may also define $Z$ \textbf{graphs} $\cG_Z(C)$ with vertices $\cZ$ and edges $zz'$ if there exists cell $w\in \cC$ such that $w\preceq z,z'$.
\end{example}

\begin{lemma}[Cover Size]
    \label{lem:cover-size}
    Let $\cS$ be a cover of $\cA$ relative to graph $\cG$. Then
    \begin{equation}
        |\cA| \le |\cS| (1+\deg \cG) =O(|\cS|)
    \end{equation}
\end{lemma}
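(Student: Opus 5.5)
The plan is a direct union-bound argument over the neighborhoods of the covering vertices. By the definition of cover with $R=1$, $\cA\subseteq \cN(\cS)=\bigcup_{v\in\cS}\cN(\{v\})$. Here $\cN(\{v\})$ is the closed neighborhood of a single vertex $v$: the vertex $v$ together with all its neighbors in $\cG$.

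The key step is to bound each single-vertex neighborhood. Since $\cG$ has maximum degree $\deg \cG$, we have $|\cN(\{v\})|\le 1+\deg\cG$, counting $v$ itself once and at most $\deg \cG$ adjacent vertices.

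Subadditivity of cardinality under unions then gives
\begin{equation*}
    |\cA|\le \Big|\bigcup_{v\in\cS}\cN(\{v\})\Big| \le \sum_{v\in\cS}|\cN(\{v\})| \le |\cS|(1+\deg\cG).
\end{equation*}
Since $\cG$ is assumed to have bounded degree, the factor $1+\deg\cG$ is an absolute constant, which gives the stated $O(|\cS|)$.

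There is no real obstacle. The only point to watch is the convention that the $1$-neighborhood includes the vertex itself, since $\dist(v,v)=0\le 1$. This is what produces the $+1$ in the factor $1+\deg\cG$.

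For general $R$, one applies the remark that $R=1$ is general: pass to the graph in which vertices at distance at most $R$ are adjacent, whose degree is at most $(1+\deg\cG)^R-1$. The same argument then yields $|\cA|\le |\cS|(1+\deg\cG)^R$, although only the $R=1$ case is being claimed here.
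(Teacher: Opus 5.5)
Your proof is correct and is essentially the paper's argument. The paper writes the same union bound as a double count,
$|\cA|=\sum_u 1\{u\in\cA\}\le\sum_u 1\{u\in\cA\}\sum_{v\in\cS}1\{u\in\cN(v)\}\le|\cS|(1+\deg\cG)$, which likewise uses that each closed neighborhood has at most $1+\deg\cG$ vertices.
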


\begin{proof}
    Note that
    \begin{align}
        |\cA| &= \sum_{u} 1\{u\in \cA\} \\
        &\le \sum_{u} 1\{u\in \cA\} \sum_{v\in \cS} 1\{u \in \cN(v)\} \\
        &\le |\cS| (1+\deg \cG )
    \end{align}
\end{proof}

\begin{definition}[Independence]
    A subset $\cS$ of vertices in $\cG$ is \textbf{independent} if no vertices in $\cS$ are adjacent.
    A \textbf{maximally independent} subset of $\cS$ is an independent subset of $\cS$ such that adding any more vertex in $\cS$ to the subset causes it be dependent.
\end{definition}

\begin{lemma}[Cover Probability]
    \label{lem:cover-probability}
    Let $e,\mu$ be jointly stochastic with rate $p$ and let $\tau$ be a syndrome.
    Then there exists $p_{\rm cover},c>0$ (depending only on qubit degree $\fd$) such that if $p<p_{\rm cover}$, then
    \begin{equation}
        \Pr[\tau \text{ covered by } e\sqcup \mu] \le \left(\frac{p}{p_{\rm cover}}\right)^{c|\tau|}
    \end{equation}
\end{lemma}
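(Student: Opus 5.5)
The plan is to reduce the event ``$\tau$ covered by $e\sqcup\mu$'' to the event that a fixed, not too small set of fault locations all fault, and then take a union bound over the few possible choices of such a set. Throughout, covering is relative to the code graph $\cG(C)$ of Example~\ref{ex:code-graph}. A qubit $q$ is adjacent to a $Z$-check $z$ exactly when $q\prec z$, so a qubit covers at most $\fd$ $Z$-checks. A measurement fault $z'\in\mu$ covers only $z'$ itself among the $Z$-checks, except when adjacency among $Z$-cells is also meant. In either reading, each fault location covers at most $D$ elements of $\tau$, where $D$ depends only on $\fd$ (e.g.\ $D\le 1+\fd$).

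\textbf{Step 1 (extract a small witness).} Suppose $\tau\subseteq\cN(e\sqcup\mu)$. Pick a minimal subset $\cS\subseteq e\sqcup\mu$ such that $\cS$ still covers $\tau$, and such that every element of $\cS$ lies in $\cN(\tau)$. This is possible by discarding all faults not adjacent to or in $\tau$, then pruning to minimality.

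Lemma~\ref{lem:cover-size} gives $|\tau|\le(1+\deg\cG)|\cS|$. To keep the constant depending only on $\fd$, I will count directly instead: each $s\in\cS$ covers at most $D$ cells of $\tau$, so $|\cS|\ge|\tau|/D$.

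Minimality gives the matching upper bound $|\cS|\le|\tau|$, since each $s\in\cS$ must privately cover some element of $\tau$. This upper bound is what makes the union bound finite.

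\textbf{Step 2 (union bound).} The event is contained in
\[
\bigcup_{\cS}\{\cS\text{ fault}\},
\]
where $\cS$ ranges over subsets of $\cN(\tau)$ of size between $|\tau|/D$ and $|\tau|$. The local stochastic assumption of Definition~\ref{def:noise-model} gives $\Pr[\cS\text{ fault}]\le p^{|\cS|}$.

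Here $|\cN(\tau)|\le(1+\Delta)|\tau|$, where $\Delta$ bounds the number of qubits adjacent to a $Z$-check. This count uses the check weight and not only $\fd$. I expect this to be the main subtlety, because the statement claims dependence only on $\fd$.

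To avoid the check-weight dependence, I would count $\cS$ differently: assign to each element of $\tau$ a covering fault adjacent to it. Each $z\in\tau$ has at most $(\text{its qubit neighbors})+1$ choices, but if that still depends on the check weight, accept constants depending on $(\fw,\fd)$. The paper elsewhere allows constants depending on both.

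\textbf{Step 3 (resum).} Either way the number of candidate $\cS$ is at most $K^{|\tau|}$ for a constant $K$. Then
\[
\Pr\le K^{|\tau|}\sum_{j\ge|\tau|/D}p^{j}\le 2K^{|\tau|}p^{|\tau|/D}
\]
for $p\le 1/2$. Setting $c=1/(2D)$ and $p_{\rm cover}=\min(1/2,K^{-2D})/4$, we get for $p<p_{\rm cover}$
\[
K^{|\tau|}p^{|\tau|/(2D)}\le 1
\]
and
\[
2p^{|\tau|/(2D)}\le (p/p_{\rm cover})^{c|\tau|}.
\]
This gives the claimed bound with $c=1/(2D)$. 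Absorbing the factor $2$ is routine: it is handled either by the slack in $p_{\rm cover}$ or, when $\tau=\emptyset$, trivially.
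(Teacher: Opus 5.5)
Your proof is correct and has the same skeleton as the paper's. Both extract a set $\cS\subseteq\cN(\tau)\cap(e\sqcup\mu)$ of faults of size $\Omega(|\tau|)$, then union-bound $p^{|\cS|}$ over all subsets of $\cN(\tau)$.

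\textbf{Witness choice.} The paper takes a maximally independent subset, which only gives $\tau\subseteq\cN_2(\cS)$ and needs Lemma~\ref{lem:cover-size} on the $2$-neighborhood. Your minimal sub-cover gives $\tau\subseteq\cN(\cS)$ directly, and the per-fault count $|\tau|\le\max(1,\fd)\,|\cS|$ is cleaner and depends only on $\fd$. Independence buys nothing, since local stochasticity bounds every fixed $\cS$. Your upper bound $|\cS|\le|\tau|$ and the geometric sum are also unnecessary, because $\Pr\le 2^{|\cN(\tau)|}p^{|\tau|/D}$ already.

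\textbf{Check-weight dependence.} Your concern about the enumeration step is well-founded and exposes an inaccuracy in the paper. In the fault subgraph $\cQ\sqcup\cZ$ a $Z$-check has up to $\fw$ qubit neighbors, so $|\cN(\tau)|\le(1+\fw)|\tau|$. The paper's parenthetical that this subgraph has degree at most $\fd$ only accounts for the qubit vertices. The $\fw$-dependence is in fact unavoidable. Take a single check $z$ of weight $\fw$ and $p=1/\fw$, with noise that always faults exactly one uniformly random qubit of $z$. This noise is local stochastic with rate $p$, yet it covers $\{z\}$ with probability $1$. Hence necessarily $p_{\rm cover}\le 1/\fw$.

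So your fallback to constants depending on $(\fw,\fd)$ is the correct form of the lemma. It costs nothing downstream, since all later constants depend on Eq.~\eqref{eq:constants} anyway.
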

\begin{proof}
    For simplicity, we shall write $c,c'$  as constants that may change from line to line, but only depend on $\fd$.
    If $\tau$ is covered by $e\sqcup \mu$, then let $\cS$ be the maximally independent subset of $\cN(\tau)\cap (e\sqcup \mu)$ (say, determined by a greedy algorithm).
    Then $\cS \subseteq \cN(\tau)\subseteq \cN(\cS)$ and $\cS\subseteq e\sqcup \mu$.
    Hence, $\tau \subseteq \cN_2(\cS)$ and thus by Lemma \ref{lem:cover-size}, 
    \begin{align}
        \Pr[\tau \text{ covered by } e\sqcup \mu] &\le \sum_{\cS} 1\{\cS\subseteq \cN(\tau) \subseteq \cN(\cS)\} \Pr[\cS \subseteq e\sqcup \mu] \\
        &\le p^{c|\tau|} \sum_{\cS}  1\{\cS\subseteq \cN(\tau)\}\\
        &\le p^{c|\tau| } 2^{c'|\tau|}\\
        &= \left(\frac{p}{p_{\rm cover}}\right)^{c|\tau|}
    \end{align}
    where we only considered the subgraph $\cQ\sqcup \cZ$ of $\cG(C)$ (which has degree $\le \fd$), since it contains all realization of errors $e\sqcup \mu$.
    We note that the constants can be further optimized, but for exposition purposes, we have taken this simpler approach.
\end{proof}

% \begin{remark}
%     As suggested by the noise model in Definition \ref{def:noise-model} and Lemma \ref{lem:SS-reduction}, we will care about the joint stochastic rate of $e,\mu_\pm$. 
%     Hence, the previous definition and lemmas can be extended by considering \textit{temporal graphs} with vertices $\cZ^-\sqcup \cQ \sqcup \cZ^+$ and edges $z^-q,qz^+$ if $q\prec z^\pm$ , culminating in an analogous bound to Lemma \ref{lem:cover-probability} given by
%     \begin{equation}
%         \Pr[\tau \text{ covered by } \mu^-\sqcup e \sqcup \mu^+] \le \left(\frac{p}{p_c}\right)^{|\tau|/\fd}
%     \end{equation}
% \end{remark}

\begin{remark}[Cover Equivalence]
    \label{rem:cover-equivalence}
    We note that for QLDPC codes, cover probability and local stochasticity are equivalent, up to constants.
    Indeed, Lemma \ref{lem:cover-probability} shows that local stochastic errors implies the cover probability.
    The converse can be similarly shown by note that
    \begin{equation}
        \Pr[f\subseteq e]\le \Pr [\cN(f) \text{ covered by } e]
    \end{equation}
    and that $|\cN(f)| = \Omega(|f|)$ by Lemma \ref{lem:cover-size}.
\end{remark}

%% file: SS-to-SC.tex
\section{SS implies SC}
\label{sec:SS-implies-SC}

The following section aims to prove that any single-shot code is self-correcting, culminating in Theorem \ref{thm:SS-implies-SC}.
The section is independent from the main result in Theorem \ref{thm:RG-threshold}, and thus self-contained provided that the reader is familiar with the noise model in Definition \ref{def:noise-model}.
Since it is closely related to our main result, possibly interesting to a broad audience and simple enough to prove, we include it for completeness.
The reader may also wish to be familiar with the basic definition of self-correction, which we refer to Refs \cite{chesi2010thermodynamic,balasubramanian2026passive} for a review.
% It is simple enough to prove and closely related to our study of single-shot properties, it warrants being included in this manuscript.
As a corollary of Theorem \ref{thm:RG-threshold} and \ref{thm:SS-implies-SC}, it's clear that any code can also be made self-correcting.

\subsection{Gibbs is LS}

We first show that the Gibbs measure is local stochastic as in Lemma \ref{lem:Gibbs-is-LS}.
\begin{lemma}[MacWilliams, \cite{macwilliams1977theory}]
    \label{lem:macwilliam}
    Let $W\subseteq \F_2^N$ be a subspace, $v\in \F_2^N$ and $0\le y < 1$. Then
    \begin{equation}
        \sum_{w\in W} y^{|v+w|} \le \sum_{w\in W} y^{|w|}
    \end{equation}
    where equality is achieved only if $v\in W$.
\end{lemma}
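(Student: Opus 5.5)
The plan is to pass to Fourier space and reduce the claim to positivity of the transform of $y^{|\cdot|}$. For $u\in\F_2^N$ write $\chi_u(v)=(-1)^{u\cdot v}$. The function $f(v)=y^{|v|}$ factorizes over coordinates as $\prod_{i}y^{v_i}$. Its Fourier transform therefore factorizes as well:
\begin{equation}
    \hat f(u)=\sum_{v}(-1)^{u\cdot v}y^{|v|}=(1+y)^{N-|u|}(1-y)^{|u|}.
\end{equation}
Since $0\le y<1$, this satisfies $\hat f(u)>0$ for every $u$. This positivity is the only real input, and it is where the hypothesis $y<1$ enters.

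Next I apply Poisson summation on the subspace $W$, with annihilator $W^\perp$:
\begin{equation}
    \sum_{w\in W} f(v+w)=\frac{|W|}{2^N}\sum_{u\in W^\perp}(-1)^{u\cdot v}\hat f(u).
\end{equation}
The identity follows by expanding $f(v+w)=2^{-N}\sum_u(-1)^{u\cdot(v+w)}\hat f(u)$ and summing over $w\in W$, using $\sum_{w\in W}(-1)^{u\cdot w}=|W|\,1\{u\in W^\perp\}$. Each term on the right has modulus at most $\hat f(u)$, because $\hat f(u)>0$. Hence the left-hand side is at most the same expression with $v=0$, which is $\sum_{w\in W}y^{|w|}$. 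This gives the inequality.

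For the equality statement, equality forces $(-1)^{u\cdot v}=1$ for every $u\in W^\perp$, again because every $\hat f(u)$ is strictly positive. That means $v\in (W^\perp)^\perp=W$. Conversely, if $v\in W$ then $v+W=W$ and equality clearly holds. So equality holds if and only if $v\in W$, which in particular gives the "only if" claimed.

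The only delicate points are the sign bookkeeping in Poisson summation and the degenerate case $y=0$. When $y=0$ we use the convention $0^0=1$, so the right-hand side equals $1$ from $w=0$. The left-hand side equals $1\{v\in W\}$, which is consistent with both the inequality and the equality statement, and the Fourier argument covers this case unchanged since $\hat f(u)=1>0$.
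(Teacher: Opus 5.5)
Your proof is correct and follows the same route as the paper. Both use strict positivity of $\hat f(u)=(1-y)^{|u|}(1+y)^{N-|u|}$ together with the Poisson-summation identity $\sum_{w\in W}y^{|v+w|}=|W^\perp|^{-1}\sum_{u\in W^\perp}(-1)^{u\cdot v}\hat f(u)$; the paper obtains this identity via the convolution $f*1_W$ and $\hat 1_W=|W|\,1_{W^\perp}$, and your explicit handling of the equality case ($v\in (W^\perp)^\perp=W$) and of $y=0$ only spells out what the paper's one-line conclusion leaves implicit.
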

\begin{proof}
    Regard $\F_2^{N}$ as a finite abelian group so that the Fourier transform is well-defined. Let $f(v)=x^{|v|}$ and note that the Fourier transform is given by
    \begin{align}
        \hat{f}(u) &= \sum_{v\in \F_2^N} u^{|v|} (-1)^{u\cdot v} = \prod_{i=1}^N \sum_{v\in \F_2} (-1)^{u_i \cdot v} y^v \\
        &= (1-y)^{|u|} (1+y)^{N-|u|} > 0
    \end{align}
    Let $1_W(v)=1\{v\in W\}$. Then it has Fourier transform
    \begin{align}
        \hat{1}_W(u) &= \sum_{w\in W} (-1)^{u\cdot w} \\
        &= \prod_{i=1}^{\dim W} (1+(-1)^{u\cdot w_i}) \\
        &= |W| 1_{W^\perp}(u)
    \end{align}
    where $w_i,i=1,...,\dim W$ is a basis of $W$. Hence, we see that
    \begin{align}
        \sum_{w} y^{|v+w|} &= (f*1_W)(v) \\
        &= \frac{1}{2^N} \sum_{u\in \F_2^N} (-1)^{u\cdot v} \hat{f}(u) \hat{1}_W(u) \\
        &= \frac{1}{|W^\perp|} \sum_{u\in W^\perp} (-1)^{u\cdot v} \hat{f}(u)
    \end{align}
    Note that each term in the summation is positive, i.e., $\hat{f}(u)> 0$, coupled to a sign $(-1)^{u\cdot v}$.
    Hence, it is maximized with $v\in W^{\perp \perp} =W$, and thus the statement follows.
\end{proof}

\begin{lemma}[Gibbs is LS]
    \label{lem:Gibbs-is-LS}
    Let $\Pr_{\beta}$ denote the \textbf{Gibbs measure} at \textbf{inverse temperature} $\beta$ over valid syndromes $\sigma\in \im \delta$ of a CSS code, i.e.,
    \begin{equation}
        \Pr_{\beta}[\sigma] =\frac{1}{Z_\beta} e^{-\beta|\sigma|}
    \end{equation}
    where $Z_{\beta}$ is the \textbf{partition function} which normalizes the probability. Then $\sigma$ is local stochastic with rate $e^{-\beta}$, i.e., for any fixed subset $K$ of checks, 
    \begin{equation}
        \label{eq:Gibbs-is-LS}
        \Pr_{\beta}[K\subseteq \sigma] \le e^{-\beta |\sigma|}
    \end{equation}
\end{lemma}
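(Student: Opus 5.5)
The statement as typeset has $e^{-\beta|\sigma|}$ on the right of \eqref{eq:Gibbs-is-LS}. This is presumably a typo for $e^{-\beta|K|}$, which is the local stochastic bound with rate $e^{-\beta}$; I will prove
\begin{equation*}
    \Pr_\beta[K\subseteq\sigma]\le e^{-\beta|K|}.
\end{equation*}

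The plan is to reduce the bound to Lemma \ref{lem:macwilliam} by an inclusion--exclusion (or Fourier) expansion over subsets of $K$. The simplest route is a direct injection-free computation. Write $y=e^{-\beta}$ and $W=\im\delta\subseteq \F_2^{N}$ with $N=|\cZ|$. Split the coordinates into $K$ and its complement $K^c$, so that $|\sigma| = |\sigma|_K+|\sigma|_{K^c}$.

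First I would write the numerator as
\begin{equation*}
    \sum_{\sigma\in W,\,K\subseteq\sigma} y^{|\sigma|} = y^{|K|}\sum_{\sigma\in W,\,\sigma|_K=\dnum_K} y^{|\sigma|_{K^c}}.
\end{equation*}
It then suffices to show
\begin{equation*}
    \sum_{\sigma\in W,\,\sigma|_K = \dnum_K} y^{|\sigma|_{K^c}} \le \sum_{\sigma\in W} y^{|\sigma|} = Z_\beta .
\end{equation*}
For this, consider the subspace $W_0=\{w\in W: w|_K=0\}$. The fiber $\{\sigma\in W:\sigma|_K=\dnum_K\}$ is either empty, in which case we are done, or a coset $v+W_0$ for some $v\in W$.

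Project everything onto $K^c$. The map $w\mapsto w|_{K^c}$ is injective on $W_0$ with image a subspace $W_0'\subseteq\F_2^{K^c}$. Applying Lemma \ref{lem:macwilliam} in $\F_2^{K^c}$ with the shift $v|_{K^c}$ gives
\begin{equation*}
    \sum_{w\in W_0} y^{|v+w|_{K^c}} \le \sum_{w\in W_0} y^{|w|_{K^c}} = \sum_{w\in W_0} y^{|w|}\le Z_\beta.
\end{equation*}
The final inequality holds because $W_0\subseteq W$ and all terms are positive.

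Dividing by $Z_\beta$ yields $\Pr_\beta[K\subseteq\sigma]\le y^{|K|}=e^{-\beta|K|}$, as claimed. The only subtle step, and the one I expect to be the main obstacle, is choosing to fix the coordinates on $K$ and shift within the restricted subspace $W_0$. This is what lets MacWilliams compare a coset with its subspace rather than comparing $W$ with a non-subspace set. Everything else is bookkeeping. If the intended statement really is the event that each $k\in K$ is flipped by local stochastic noise on checks, the same computation applies verbatim.
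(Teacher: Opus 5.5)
Your proof is correct. You are also right that the right-hand side should read $e^{-\beta|K|}$; that is what the paper's proof establishes.

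The paper's route is shorter. For $K\subseteq\sigma$ it writes $|\sigma|=|K|+|\sigma+K|$, bounds the indicator $1\{K\subseteq\sigma\}$ by $1$, and applies Lemma~\ref{lem:macwilliam} directly to $W=\im\delta$ with shift $v=K$ (viewed as a chain). This gives $\sum_{\sigma\in W}e^{-\beta|\sigma+K|}\le Z_\beta$ in one line.

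So the step you flag as the main obstacle is not actually one. The set $\{\sigma+K:\sigma\in W\}$ is already a coset of the subspace $W$, and Lemma~\ref{lem:macwilliam} holds for arbitrary $v\in\F_2^N$, not only $v\in W$; for $v\in W$ it is a trivial equality. Your restriction to the fiber $\{\sigma|_K=1_K\}=v+W_0$, and the injective puncturing of $W_0$ onto $K^c$, are valid but not needed for the stated bound. The inclusion--exclusion plan you mention at the start is also never used.

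What your detour does buy is a slightly sharper intermediate inequality. Since $Z_{W_0}/Z_\beta=\Pr_\beta[\sigma|_K=0]$, your computation actually shows
$\Pr_\beta[K\subseteq\sigma]\le e^{-\beta|K|}\,\Pr_\beta[\sigma|_K=0]$.
That is, the all-ones pattern on $K$ is suppressed by $e^{-\beta|K|}$ relative to the all-zeros pattern. This refinement does not fall out of the paper's version, which discards the indicator over all of $W$, but it is not needed for the lemma as used in Theorem~\ref{thm:SS-implies-SC}.
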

\begin{remark}
    Note that a similar inequality to Eq. \eqref{eq:Gibbs-is-LS} was proven in Theorem 5.13 of \cite{balasubramanian2026passive}. However, it only applies to the special case where $K$ is also a valid syndrome.
    The general case follows from the key observation is Lemma \ref{lem:macwilliam}.
\end{remark}
\begin{proof}
    Note that
    \begin{align}
        \Pr_{\beta}[ \cK \subseteq \sigma] &= \frac{1}{Z_\beta} \sum_{\sigma \in \im \delta} 1\{\cK \subseteq \sigma\} e^{-\beta|\sigma|} \\
        &= \frac{1}{Z_\beta} \sum_{\sigma \in \im \delta} 1\{\cK \subseteq \sigma\} e^{-\beta|\cK|} e^{-\beta |\sigma+\cK|} \\
        &\le e^{-\beta |\cK|} \frac{1}{Z_{\beta}} \sum_{\sigma\in \im \delta} e^{-\beta|\sigma +\cK|} \\
        &\le e^{-\beta |\cK|}
    \end{align}
    where the last inequality uses the MacWilliam identity in Lemma \ref{lem:macwilliam}.
\end{proof}

\subsection{SS implies Stability}

\begin{definition}[Stability, \cite{chesi2010thermodynamic,balasubramanian2026passive}] 
    Consider a CSS code $C$ with decoder $\cE$. A valid syndrome $\sigma \in \im\delta $ is \textbf{stable} (relative to $\cE$) if for any single qubit $q\in \cQ$,
    \begin{equation}
        \cE(\sigma) + \cE(\sigma +\delta q) + q\in \im \delta
    \end{equation}
    If $\sigma$ is unstable, we may write pair $(\sigma,q)$ to be \textbf{unstable} to emphasize which qubit causes a nontrivial logical error.
\end{definition}
\begin{theorem}[SS implies SC]
    \label{thm:SS-implies-SC}
    If a CSS code $C$ with decoder $\cE$ is single-shot, then the Hamiltonian defined from the CSS code $C$ with decoder $\cE$ is self-correcting (SC).
    In particular, there exists (jointly LS) noise model such that
    \begin{equation}
        \Pr[T=1 \text{ round of SS decoding fails}] = p \Pr_{\beta} [\sigma \text{ unstable}]
    \end{equation}
    where $p=e^{-\beta}$ is the rate.
\end{theorem}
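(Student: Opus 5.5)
Our plan is to build one explicit jointly local stochastic noise model for a single round ($T=1$) whose failure event is exactly ``the Gibbs-sampled syndrome is unstable under a single qubit flip.'' We then combine this with the single-shot bound to make $\Pr_\beta[\sigma\text{ unstable}]$ small; this is the standard criterion for self-correction of Refs.~\cite{chesi2010thermodynamic,balasubramanian2026passive}.

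\textbf{The noise model.} First draw a valid syndrome $\sigma\in\im\delta$ from $\Pr_\beta$ and fix a representative $e_0$ with $\delta e_0=\sigma$. Set the residual entering round $1$ to be $r(1)=e_0+\cE(\sigma)$, which is trivial modulo stabilizers since $\delta\cE=\mathrm{id}$ on $\im\delta$. Operationally, let round $0$ carry physical error $e(0)=e_0$ and measurement error $\mu(0^+)=0$, so the decoder sees exactly $\sigma$. In round $1$, pick one qubit $q$ with probability $p$ (and no qubit otherwise) as the fresh physical error $e(1)$, and set the final measurement $\mu(1^+)=0$ by the boundary condition. The final residual is then
\begin{equation}
r(2)=e_0+\cE(\sigma)+q+\cE(\sigma+\delta q),
\end{equation}
modulo stabilizers, since $\delta(r(1)+q)=\delta q$ and $\cE$ is applied to $\sigma+\delta q$, not to $\delta q$ alone. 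We must handle this more carefully: rather than adding $r(1)$ in, we should take the pair of syndromes actually fed to $\cE$ to be $\sigma$ and then $\sigma+\delta q$. To arrange this, round $0$ uses $e(0)=0$ with measurement error $\mu(0^+)=\sigma$, so the decoder applies $\cE(\sigma)$ and $r(1)=\cE(\sigma)$. Round $1$ then has physical error $q$, and the perfect final measurement gives syndrome $\delta(\cE(\sigma)+q)=\sigma+\delta q$. Hence
\begin{equation}
r(2)=\cE(\sigma)+q+\cE(\sigma+\delta q),
\end{equation}
which is trivial iff $(\sigma,q)$ is stable. Averaging over $\sigma$ and a uniformly chosen $q$ (or a worst-case $q$) gives failure probability $p\Pr_\beta[\sigma\text{ unstable}]$, matching the stated identity up to how $q$ is chosen.

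\textbf{Verifying joint local stochasticity.} This is the main obstacle. The measurement fault $\mu(0^+)=\sigma$ satisfies $\Pr[K\subseteq\sigma]\le e^{-\beta|K|}=p^{|K|}$ by Lemma~\ref{lem:Gibbs-is-LS}; this is exactly where Lemma~\ref{lem:macwilliam} is needed, since $K$ is arbitrary rather than a valid syndrome. The single qubit fault occurs with probability at most $p$ and independently of $\sigma$. A fault set $\cS$ is then a set of checks at time $0^+$ together with at most one qubit at time $1$, and the product bound gives $\Pr[\cS\text{ fault}]\le p^{|\cS|}$. To make the qubit choice fit, we choose $q$ independently with $\Pr[q]\le p$ for each $q$, for instance with total probability $p$ spread over the qubits, or for a fixed target $q$.

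\textbf{Invoking single-shotness.} If $p<p_c$, the single-shot bound with $T=1$ gives
\begin{equation}
p\Pr_\beta[(\sigma,q)\text{ unstable}]\le O(1)\poly(|\cQ|)\left(\frac{p}{p_c}\right)^{cn^a}.
\end{equation}
This bound holds for every $q$, so a union bound over $q$ shows that $\Pr_\beta[\sigma\text{ unstable}]$ is at most $\poly(|\cQ|)$ times a quantity decaying as $\exp[-\Omega(n^a)]$ whenever $e^{-\beta}<p_c$. Plugging this into the standard lifetime bound, which says the memory time is at least the inverse unstable probability, yields self-correction.
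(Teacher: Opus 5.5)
The skeleton of your final construction matches the paper's: round $0$ has $e(0)=0$ and a Gibbs-distributed $\mu(0^+)=\sigma$, round $1$ has a single flip $q$, and the last readout is perfect. With $q$ chosen independently of $\sigma$, your local-stochasticity check is fine. The union bound then gives $\Pr_\beta[\sigma\text{ unstable}]\le\sum_q\Pr_\beta[(\sigma,q)\text{ unstable}]\le|\cQ|\,p^{-1}\max_q\Pr[\mathrm{fail}_q]$, where $\mathrm{fail}_q$ is the failure event of your fixed-$q$ model. This already suffices for self-correction, since the single-shot bound carries a prefactor polynomial in $|\cQ|$ anyway.

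What your argument does not deliver is the displayed identity in the statement. For fixed $q$ the failure probability is $p\Pr_\beta[(\sigma,q)\text{ unstable}]$. For a uniformly spread $q$ it is $\frac{p}{|\cQ|}\sum_q\Pr_\beta[(\sigma,q)\text{ unstable}]$. Neither equals $p\Pr_\beta[\sigma\text{ unstable}]$, so ``matching up to how $q$ is chosen'' is precisely the unproved step.

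The missing idea is that the fault \emph{location} may depend on $\sigma$, as long as its \emph{trigger} is independent. For each unstable $\sigma$, fix a witness $q(\sigma)$ with $(\sigma,q(\sigma))$ unstable, and set $q(\sigma)=0$ for stable $\sigma$. Then take $e(1)=\xi\,q(\sigma)$ with $\xi$ an independent Bernoulli$(p)$ variable. Fault sets with $|\cS(1)|\ge2$ have probability zero, and those with $\cS(1)=\varnothing$ are handled by Lemma~\ref{lem:Gibbs-is-LS}. For $\cS(1)=\{q\}$,
\[
\Pr[\cS(0^+)\subseteq\sigma,\ q(\sigma)=q,\ \xi=1]\le p\,\Pr[\cS(0^+)\subseteq\sigma]\le p^{|\cS|}.
\]
Now $r(2)=\cE(\sigma)+\xi q(\sigma)+\cE(\sigma+\xi\,\delta q(\sigma))$ is nontrivial exactly when $\xi=1$ and $\sigma$ is unstable. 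This gives the identity with equality and makes the union bound over qubits, with its factor $|\cQ|$, unnecessary.

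Finally, delete your first attempted construction. With $e(0)=e_0$ and $\mu(0^+)=0$, the round-$1$ syndrome is $\delta q$, not $\sigma+\delta q$, so the $r(2)$ displayed there is wrong. In addition, a representative $e_0$ of a Gibbs syndrome need not be local stochastic.
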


\begin{remark}[Straightforward Extension]
    Note that the proof can be extended to any stabilizer code $C$ in a straightforward fashion. In fact, the code doesn't even have to be QLDPC, though in this case, the \textit{physical meaning} of self-correction is debatable.
\end{remark}
\begin{proof}
    Let $\sigma$ denotes the random variable with Gibbs measure at inverse temperature $\beta$ satisfying $p=e^{-\beta}$. 
    For every unstable valid syndrome $\sigma$, let $q(\sigma)$ denote a single qubit such that the pair $\sigma,q(\sigma)$ is unstable; $q(\sigma)$ is set to $=0$ by default if $\sigma$ is stable.
    Consider the noise model in Definition \ref{def:noise-model} over $T=1$ round with errors
    \begin{equation}
        e(0) = 0, \quad \mu(0^+)=\sigma,\quad e(1) = \xi q(\sigma)
    \end{equation}
    where $\xi$ is an independent Bernoulli parameter with rate $p$.
    We claim that this noise model is local stochastic with rate $p$.
    Indeed, let $\cS= \cS(0^+)\sqcup \cS(1) \subseteq \cZ \sqcup \cQ$ denote fault locations.
    If $\cS(1)$ is of size 2 or empty, then the claim follows by Lemma \ref{lem:Gibbs-is-LS}. Hence, consider the case where $\cS(1)$ has a unique element $q$. Then
    \begin{equation}
        \Pr[\cS \subseteq \mu(0^+)\sqcup e(1) ]= \Pr[\cS(0^+)\subseteq \sigma, q(1)=q(\sigma),\xi=1] \le p \Pr[\cS(0^+)\subseteq \sigma]
    \end{equation}
    Hence, the noise model is local stochastic with rate $p$.
    
    Note that the measurement error $\mu(0^+)$ is a valid syndrome and thus we always have run success.
    Hence, the residue at time $t=1$ is $r(1)=\cE(\sigma)$, while the residue at the boundary time $t=T+1=2$ is
    \begin{equation}
        r(2)= \cE(\sigma) + \xi q(\sigma) + \cE(\sigma + \xi \delta q(\sigma))
    \end{equation}
    Hence, under this noise model,
    \begin{align}
        \Pr[T=1 \text{ round of SS decoding fails}] &= \Pr[\cE(\sigma) + \xi q(\sigma) + \cE(\sigma + \xi \delta q(\sigma))\notin \im \delta ] \\
        &= p \Pr_{\beta}[\sigma \text{ unstable}]
    \end{align}
    By Ref. \cite{chesi2010thermodynamic}, the claim then follows.
\end{proof}

%% file: level.tex
\section{Level Data}
\label{sec:level}
The point of this section (specifically, Theorem \ref{thm:validity}) is to summarize the minimal conditions sufficient to derive the single-shot properties of the RG decoder, so that this section can be read independently of the explicit construction.

\subsection{Level Map}

We first introduce the coding framework \cite{yuan2026unified} in a somewhat different format to fit the exposition of the RG decoder.
\begin{definition}[Level Map]
    \label{def:level-map}
    Let $C$ be a CSS code with meta-check extension $\bar{C}$.
    Let $\bar{\Theta}=\bar{\Theta}(C)$ be a CSS code with meta-checks, $\Theta$ be that without the meta-checks.
    Then $C\mapsto \bar{\Theta}(C)$ is a \textbf{level map} if
    \begin{enumerate}
        \item (Layers) Cells $u'$ in $\bar{\Theta}$ can be partitioned into \textbf{layers} (complexes) $\bar{Y}^u$ over cells $u\in \cC$ such that $\bar{Y}^u$ is of bounded size $\Delta_{\sh}$, and the \textbf{shadow} map defined by $u'\in \bar{Y}^{\smash{\sh u'}}$ is monotone, i.e., if $u'\preceq v'$, then $\sh u'\preceq \sh v'$.
        We adopt the notation
        \begin{itemize}
            \item For a chain $c$ in $\bar{\Theta}$, write $\dnum^u c= c|^{u}$ to denote its restriction to layer $Y^u$ and similarly define $\dnum^{\cS} c=c|^{\cS}$ for a subset of cells $\cS$ in $C$, and $|c|^{\cS}$ denote the Hamming weight of $c|^{\cS}$.
            \item A map $h$ on chains in $\bar{\Theta}$ is \textbf{layer-local} if $\dnum^u h(c)$ only depends on $c$ implicitly via $\dnum^{\preceq u}c$. Similarly extend the definition to maps $\bar{\Theta}\to C$ (and $C\to \bar{\Theta}$) by treating each cell in $C$ as a \textit{layer} whose shadow is itself, i.e., replace $\dnum^u$ with $\dnum_u$.
        \end{itemize}
        
        % \item (Meta-Checks) Two $z$ checks in $\Theta(C)$ in a common same meta-check have comparable shadows.
        \item (SDR) $C$ is a \textbf{strong deformation retract (SDR)} of $\bar{\Theta}$, i.e., there exists layer-local $(\iota,\pi,\phi)$ as sketched below
        \begin{equation}
        \label{eq:SDR}
        \begin{tikzpicture}[baseline]
        \matrix(a)[matrix of math nodes, nodes in empty cells, nodes={minimum size=25pt},
        row sep=2em, column sep=3em,
        text height=1.25ex, text depth=0.25ex]
        {&X& Q  & Z  &\\
         MX(\bar{\Theta}) & X(\Theta) & Q(\Theta)  & Z(\Theta) & MZ(\bar{\Theta})\\};
        \path[-left to,font=\scriptsize,transform canvas={yshift=0.2ex}]
        (a-1-2) edge (a-1-3)
        (a-1-3) edge (a-1-4)
        (a-2-1) edge (a-2-2)
        (a-2-2) edge (a-2-3)
        (a-2-3) edge (a-2-4)
        (a-2-4) edge (a-2-5);
        \path[left to-,font=\scriptsize,transform canvas={yshift=-0.2ex}]
        (a-2-1) edge node[below]{$\phi$}  (a-2-2)
        (a-2-2) edge node[below]{$\phi$}  (a-2-3)
        (a-2-3) edge node[below]{$\phi$}  (a-2-4)
        (a-2-4) edge node[below]{$\phi$}  (a-2-5);
        \path[-left to,font=\scriptsize,transform canvas={xshift=0.2ex}]
        (a-1-2) edge node[right]{$\iota$}  (a-2-2)
        (a-1-3) edge node[right]{$\iota$}  (a-2-3)
        (a-1-4) edge node[right]{$\iota$}  (a-2-4);
        \path[left to-,font=\scriptsize,transform canvas={xshift=-0.2ex}]
        (a-1-2) edge node[left]{$\pi$}  (a-2-2)
        (a-1-3) edge node[left]{$\pi$}  (a-2-3)
        (a-1-4) edge node[left]{$\pi$}  (a-2-4);
        \end{tikzpicture}
        \end{equation}
        such that \textbf{inclusion} $\iota$ and \textbf{projection} $\pi$ are chain maps, $\phi^2=\pi \phi =\phi \iota=0$ and
        \begin{equation}
            \pi \iota = \mathrm{id}_{C}, \quad \iota \pi =\mathrm{id}_{\bar{\Theta}} + \phi\bar{\delta} +\bar{\delta} \phi
        \end{equation}
    \end{enumerate}
    The sequence (or hierarchy) of codes $\bar{\bm{C}}$ defined inductively by $\bar{C}\super{i}=\bar{\Theta}(C\super{i-1}),i\in [\ell]$ is the \textbf{(hierarchy) tower} obtained from the \textbf{seed code} $C\super{0}$.
    We further say that $\bar{\Theta}$ \textbf{preserves sparsity} if $C\super{0}$ being QLDPC implies that the codes $\bar{C}\super{i}$ are (uniformly) sparse, i.e., with column and row weights bounded independent of $i$.
\end{definition}

\begin{remark}[Relation with \cite{yuan2026unified}]
    \label{rem:relation-with-embedding-framework}
    If $\bar{\Theta}$ is a level map, then it's clear that $\bar{\Theta}$ is a (height-2) cone in the code embedding framework \cite{yuan2026unified}; indeed, monotonicity of the shadow map implies that the codifferentials $\bar{\delta}$ of $\bar{\Theta}$ are lower-triangular (relative to the decomposition $\bar{Y}^\cX\oplus \bar{Y}^\cQ\oplus \bar{Y}^\cZ$).
    Also note that if a map is linear, then layer-locality coincides with lower-triangularity (e.g., $\bar{\delta}$ is layer-local), but since we will also deal with nonlinear maps, the layer-local definition is more appropriate.
    Also, each layer $\bar{Y}^u$, viewed as a CSS code with meta-checks, is \textbf{contractible} in the following sense; 
    \begin{equation}
        H_{\bullet}(\bar{Y}^u) \cong \F_2 1\{\bullet = \deg u\}
    \end{equation}
    This follows immediately from showing that the restriction $\dnum^u\iota \dnum_u,\dnum_u \pi \dnum^u, \dnum^u \phi\dnum^u$ satisfy the SDR properties.
    The converse also holds, provided that each layer is contractible, which follows from the homological perturbation lemma \cite{balasubramanian2026passive}.
    Examples such as the Layer codes \cite{williamson2023layer,yuan20264d} and the \textit{almost}-optimal geometrically local codes \cite{li2026almost,lin2023geometrically} are level maps, provided that we remove the restriction that each layer is bounded in size, since no hierarchy tower is constructed in these examples.
\end{remark}
\begin{remark}[Hidden Meta-Checks]
    \label{rem:hidden-meta-checks}
    Note that $\bar{\Theta}$ also \textit{implicitly} inherits the meta-checks of $\bar{C}$ in the following sense; as in Eq. \eqref{eq:SDR}, note that $\bar{\delta}\pi:Z(\Theta)\to MZ(\bar{C})$ satisfies $\bar{\delta}\pi\delta=0$ and thus can be added to $\bar{\Theta}$ in addition to its original meta-checks $MZ(\bar{\Theta})$.
    However, the inherited meta-checks will generally increase in weight, and thus if a tower $\bar{\bm{C}}$ is constructed, then the inherited meta-checks from previous levels (e.g., $0\to \ell$) will be unbounded in weight.
    Hence, we shall generally not inherit meta-checks.
\end{remark}

\subsection{Cleaning}
\begin{definition}[Cleaning]
    \label{def:cleaning}
    A level map $\bar{\Theta}$ is equipped with a ($Z$-sector) \textbf{cleaning} $f: Z(\Theta)\cap \ker \bar{\delta} \to Q(\Theta)$ if
    \begin{enumerate}
        \item (Locality) $f$ is layer-local (but not necessarily linear) with $f(0)=0$
        \item (Descent) $\sigma +\delta f(\sigma)\in \im \iota$ and if the \textbf{(clean syndrome) descent} $\desc \sigma \equiv \pi (\sigma +\delta f(\sigma))$, then
        \begin{equation}
            |\desc\sigma| \le \kappa |\sigma|
        \end{equation}
        with \textbf{contraction} $\kappa$.
        Also let the \textbf{physical descent} $\desc e \equiv \pi(e+f(\delta e))$ so that $\delta \desc = \desc \delta$
        \item (Passthrough) If $\sigma$ is a clean syndrome and $\noisy{\tau}\in Z(C)$, then $f(\sigma + \iota \noisy{\tau}) = f(\sigma)$.
        In particular, this implies similar passthrough properties for the syndrome and physical descents.
        \item (Bounded Fill) There exists constant $c_f>0$ such that for any clean syndrome $\sigma$,
        \begin{equation}
            |\sh \{\succeq f(\sigma)\} \backslash \sh \sigma| \le c_f|\sigma|
        \end{equation}
    \end{enumerate}
    An $X$-sector cleaning can be similarly defined, but since decoding $X$ and $Z$ sectors can be independently studied, we shall restrict our attention to the $Z$ sector.
\end{definition}
\begin{remark}[Descent]
    Note that
    \begin{itemize}
        \item the descent is linear modulo $\im \delta$, i.e., $\desc (\sigma+\tau) = \desc \sigma +\desc \tau +\im \delta$ for clean syndromes $\sigma,\tau$.
        \item the descent satisfies $\iota \desc e +e+f(\delta e) \in \im \delta$
    \end{itemize}
    
\end{remark}

\begin{remark}[Descent is Noisy]
    Note that the projection $\pi$ is a chain map only into $C$ and not into $\bar{C}$
    Hence, $\bar{\delta} \desc \ne 0$ in general since we cannot commute $\bar{\delta},\pi$ at the meta-check level.
\end{remark}
\begin{lemma}[Composition]
    \label{lem:level-map-composition}
    Let $\bar{\Theta}_1,\bar{\Theta}_2$ be level maps with corresponding items. Define the composition $\bar{\Theta} =\bar{\Theta}_2\circ \bar{\Theta}_1$ as the CSS code $\Theta_2 \circ \Theta_1$ with $X$ and $Z$-sector meta-checks\footnote{The output $\bar{\Theta}$ includes the \textit{intermediate} meta-checks in $\bar{\Theta}_1$. This does not contradict Remark \ref{rem:hidden-meta-checks} since we only \textit{inherit} a bounded number of times.} 
    \begin{equation}
        \bar{\delta} = 
        \begin{pmatrix}
            \iota_2 \bar{\delta}_1 & \bar{\delta}_2
        \end{pmatrix},\quad
        \bar{\delta}=
        \begin{pmatrix}
            \bar{\delta}_1 \pi_2 \\
            \bar{\delta}_2
        \end{pmatrix}
    \end{equation}
    Then
    $\bar{\Theta}$ is a level map with items
    \begin{align}
        \sh   = \sh_1\circ \sh_2&, \\
        \iota = \iota_2\iota_1&,\quad \pi=\pi_1\pi_2 \\
        f(\sigma) = f_2(\sigma) + \iota_2f_1(\desc_2 \sigma)&, \quad \desc = \desc_1\circ \desc_2 \\
        \kappa=\kappa_1\kappa_2&, \quad c_{f} = c_{f,2} +\kappa_2 c_{f,1}
    \end{align}
\end{lemma}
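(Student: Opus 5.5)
We verify the two clauses of Definition~\ref{def:level-map} for the composite $\bar{\Theta}=\bar{\Theta}_2\circ\bar{\Theta}_1$, and then the four cleaning axioms of Definition~\ref{def:cleaning}. Throughout we assume layer-locality is preserved under composition; this follows from monotonicity of the shadows, since $\dnum^{u}$ of a composite depends only on $\dnum^{\preceq v}$ for intermediate cells $v\preceq$ the relevant shadow.

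\textbf{Layers and shadows.} Layers of $\bar{\Theta}$ over $u\in\cC$ are the unions $\bigsqcup_{v\in \bar{Y}_1^u}\bar{Y}_2^{v}$. These have size at most $\Delta_{\sh,1}\Delta_{\sh,2}$, and $\sh=\sh_1\circ\sh_2$ is monotone as a composite of monotone maps.

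\textbf{Meta-checks.} We check $\bar{\delta}^2=0$ for the stated meta-checks.
\begin{itemize}
\item For the $Z$-sector, $\bar{\delta}_1\pi_2\delta_2=\bar{\delta}_1\delta_1\pi_2=0$, because $\pi_2$ is a chain map into $\Theta_1$ and $\bar{\delta}_1\delta_1=0$.
\item The $X$-sector is dual: $\delta_2\iota_2\bar{\delta}_1=\iota_2\delta_1\bar{\delta}_1=0$.
\end{itemize}

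\textbf{SDR data.} Take $\iota=\iota_2\iota_1$ and $\pi=\pi_1\pi_2$, which are chain maps with $\pi\iota=\mathrm{id}$. The homotopy is the standard one, $\phi=\phi_2+\iota_2\phi_1\pi_2$. One computes
\[
\iota\pi=\iota_2(\mathrm{id}+\phi_1\delta_1+\delta_1\phi_1)\pi_2=\iota_2\pi_2+\iota_2\phi_1\pi_2\delta_2+\delta_2\iota_2\phi_1\pi_2,
\]
and substitutes $\iota_2\pi_2=\mathrm{id}+\phi_2\delta_2+\delta_2\phi_2$. The side conditions $\phi^2=\pi\phi=\phi\iota=0$ follow from those of each factor. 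Since $\phi$ must map into and out of the meta-check spaces via the block forms above, the meta-level homotopies must be assembled with the same block structure. This bookkeeping is the first place to be careful; we would define $\phi$ on $MZ$ and $MX$ componentwise.

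\textbf{Cleaning.} Let $f(\sigma)=f_2(\sigma)+\iota_2f_1(\desc_2\sigma)$.
\begin{enumerate}
\item \emph{Well-definedness.} We need $\desc_2\sigma$ to be clean in $\bar{\Theta}_1$. The composite meta-checks include $\bar{\delta}_1\pi_2$. Since $\sigma+\delta_2f_2(\sigma)=\iota_2\desc_2\sigma$, applying $\bar{\delta}_1\pi_2$ gives $\bar{\delta}_1\desc_2\sigma=\bar{\delta}_1\pi_2\sigma+\bar{\delta}_1\delta_1\pi_2f_2=0$. This is precisely why the intermediate meta-checks are included, and it is the key point of the argument.
\item \emph{Descent.} We have
\[
\sigma+\delta f(\sigma)=\iota_2\desc_2\sigma+\iota_2\delta_1f_1(\desc_2\sigma)=\iota_2\iota_1\desc_1\desc_2\sigma\in\im\iota.
\]
Applying $\pi$ gives $\desc=\desc_1\circ\desc_2$, hence $\kappa=\kappa_1\kappa_2$.
\item \emph{Locality and $f(0)=0$.} Both are immediate from the corresponding properties of $f_1,f_2,\desc_2$.
\item \emph{Passthrough.} We have $f_2(\sigma+\iota\tilde{\tau})=f_2(\sigma)$ because $\iota\tilde{\tau}=\iota_2(\iota_1\tilde{\tau})$. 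Consequently $\desc_2(\sigma+\iota\tilde\tau)=\desc_2\sigma+\iota_1\tilde\tau$, using $\pi_2\iota_2=\mathrm{id}$. Passthrough for $f_1$ then gives the claim.
\end{enumerate}

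\textbf{Bounded fill (main obstacle).} We split $\sh\{\succeq f(\sigma)\}\setminus\sh\sigma$ into the contributions of $f_2$ and of $\iota_2 f_1$.
\begin{itemize}
\item The $f_2$ part, after applying $\sh_1$, contributes at most $c_{f,2}|\sigma|$ new cells of $C$. This requires $\sh_1$ to be nonexpansive on counts, which holds since $\sh$ is a map.
\item For the $\iota_2 f_1$ part, layer-locality of $\iota_2$ places $\sh_2\{\succeq\iota_2 f_1\}\subseteq\{\succeq f_1\}$. The fill of $f_1$ relative to $\sh_1\desc_2\sigma$ is at most $c_{f,1}|\desc_2\sigma|\le c_{f,1}\kappa_2|\sigma|$.
\item It remains to show $\sh_1\sh_2\{\succeq\iota_2\desc_2\sigma\}$ lies within $\sh\sigma$ together with the $f_2$ fill. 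This holds because $\iota_2\desc_2\sigma=\sigma+\delta_2f_2(\sigma)$ is supported in $\{\succeq\sigma\}\cup\{\succeq f_2(\sigma)\}$.
\end{itemize}
Summing the two contributions gives $c_f=c_{f,2}+\kappa_2c_{f,1}$. The delicate point is the convention for $\sh$ and $\{\succeq\cdot\}$ across levels; we would fix it first.
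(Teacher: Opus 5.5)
Your proposal is correct and follows the paper's route. The fill bound is exactly the paper's pair of inclusions $\sh\{\succeq f(\sigma)\}\subseteq\sh\{\succeq f_2(\sigma)\}\cup\sh_1\{\succeq f_1(\desc_2\sigma)\}$ and $\sh_1\desc_2\sigma\subseteq\sh\sigma\cup\sh\{\succeq f_2(\sigma)\}$, and your checks of the items the paper omits as routine are sound, in particular using the intermediate meta-checks $\bar{\delta}_1\pi_2$ to make $\desc_2\sigma$ clean.

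Two pieces of bookkeeping need tidying. First, an intermediate meta-check of $\bar{\Theta}_1$ has no $\bar{Y}_2^v$ to sit in, so it should be placed directly in the composite layer over its $\sh_1$. Second, on $MZ(\bar{\Theta}_1)$ the homotopy is $\iota_2\phi_1$, which works because $\bar{\delta}_2\iota_2=0$.
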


\begin{proof}
    Most equalities are straightforward (though tedious) to check and thus omitted.
    The only somewhat tricky one is the fill bound $c_f$, and thus we elaborate. Note that
    \begin{equation}
        \sh \{\succeq f(\sigma)\} \subseteq \sh \{\succeq f_2(\sigma)\} \cup \sh_1 \{\succeq f_1(\desc_2 \sigma)\}
    \end{equation}
    Also note that by definition
    \begin{equation}
        \sh_1 \desc_2\sigma \subseteq \sh \sigma \cup \sh \{\succeq f_2(\sigma)\}
    \end{equation}
    Combining the two implies that we can choose $c_f =c_{f,2} +\kappa_2 c_{f,1}$.
\end{proof}

% The point of Lemma \ref{lem:level-map-composition} is that if a single level map does not have contraction $\kappa<1/2$
\begin{theorem}[Validity]
    \label{thm:validity}
    Let $\bar{\Gamma}$ denote the level map in Definition \ref{def:gluing} with $m=3$ sheets.
    Then $\bar{\Theta}=\bar{\Gamma}_*\circ\bar{\Gamma}$ is a level map which preserves sparsity and has $Z$-sector cleaning $f$, with contraction $\kappa=4/9$ and a (uniform) fill bound $c_f<3$.
\end{theorem}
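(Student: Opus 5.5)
The plan is to reduce everything to properties of the single gluing map $\bar{\Gamma}$ and its dual $\bar{\Gamma}_*$, then assemble them with Lemma \ref{lem:level-map-composition}.

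\textbf{Step 1: each factor is a level map.} For $\bar{\Gamma}$ (Definition \ref{def:gluing}, $m=3$ sheets) I would verify directly from the construction that the cells over each $u\in\cC$ form a layer $\bar{Y}^u$ of size bounded in terms of $\fw,\fd$ only. The shadow is monotone because every new coboundary connects cells over comparable base cells. Rather than building $(\iota,\pi,\phi)$ by hand, I would use the converse in Remark \ref{rem:relation-with-embedding-framework}. It suffices to check that each layer is contractible, i.e.\ $H_\bullet(\bar{Y}^u)\cong\F_2$ concentrated in degree $\deg u$, which is a small repetition-code homology computation on three sheets. The homological perturbation lemma then yields layer-local SDR data. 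The dual $\bar{\Gamma}_*$ is handled by transposing: apply the construction to $C^\top$ and transpose back, which swaps the roles of the $X$ and $Z$ sectors. Sparsity preservation follows because one application raises check weights and degrees by at most an additive or multiplicative absolute constant, and the layer structure means the weights at level $i$ depend only on local data at level $i-1$. So I would show the bounds reach a fixed point independent of $i$, rather than compounding.

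\textbf{Step 2: cleaning for each factor.} For the factor that carries nontrivial $Z$-sector contraction (I expect $\bar{\Gamma}$), I would define $f$ layer by layer in order of increasing shadow. On each layer, the clean syndrome restricted to the three sheets is pushed by a local majority-type move onto the image of $\iota$. Locality and $f(0)=0$ are then immediate. Passthrough holds because $\iota\noisy{\tau}$ is invisible to the local move. Bounded fill holds because $f$ only touches layers adjacent in shadow to $\sh\sigma$.

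The contraction estimate is where the specific value comes from. Each descent step should map a syndrome supported on three sheets to at most $2/3$ of its weight, so that $\kappa_1=\kappa_2=2/3$ and the composition gives $\kappa=\kappa_1\kappa_2=4/9$.

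For the other factor, I would instead prove a $Z$-sector cleaning that is only mildly expanding in fill. Its contraction would then be $2/3$ via the dual sheet structure.

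\textbf{Step 3: composition and constants.} Lemma \ref{lem:level-map-composition} gives $\kappa=\kappa_1\kappa_2=4/9$ and $c_f=c_{f,2}+\kappa_2c_{f,1}$. Taking per-factor fill bounds $c_{f,i}\le 9/5$ with $\kappa_2=2/3$ gives $c_f\le 9/5+6/5=3$, and strictly less with a slightly better per-factor bound. The composed meta-checks $\bar{\delta}_1\pi_2$ are inherited only once, so sparsity is retained, consistent with the footnote to Lemma \ref{lem:level-map-composition}.

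\textbf{Main obstacle.} I expect the hardest step to be the contraction bound $|\desc\sigma|\le\tfrac23|\sigma|$ for \emph{arbitrary} clean syndromes, not just valid ones. Cleanness only guarantees local consistency with the level's own meta-checks. My first attempt would be a charging argument: each cell of $\desc\sigma$ is charged to at least $3/2$ cells of $\sigma$ in its layer and the adjacent layers, using that the meta-checks force, on each three-sheet layer, either a repetition-consistent pattern (which descends with a factor $1/3$) or a defect adjacent to a boundary cell (which is absorbed into the fill). The delicate part is ensuring that the fill does not double-count against the contraction.
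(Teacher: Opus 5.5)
\textbf{Gap: the dual step cannot contract.} The factorization $\kappa=\kappa_1\kappa_2$ with $\kappa_1=\kappa_2=2/3$ fails, because the dual half-step $\bar{\Gamma}_*$ admits no $Z$-sector contraction at all.

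\begin{itemize}
\item In $\Gamma_*$, the $Z$-checks are the vertices of the $z$-layers, and the qubits there are the layer's edges. $\bar{\Gamma}_*$ adds no $Z$-sector meta-checks of its own.
\item Take $\sigma$ to be a single vertex of a $z$-layer. Since $z$ is maximal in $C$, layer-locality confines $f(\sigma)$ to that layer. Hence $\pi f(\sigma)=0$ and $\desc\sigma=\pi\sigma$.
\item A layer-local chain map $\pi$ vanishes on in-layer edges, so it is constant on the vertices of the connected layer. The identity $\pi\iota z=z$ forces that constant to be $z$.
\item So $|\desc\sigma|=|\sigma|$. The best possible is $\kappa_2=1$ (with $c_{f,2}=0$, Theorem \ref{thm:dual}).
\end{itemize}

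The whole factor $4/9$ must therefore come from $\bar{\Gamma}$ alone; this is exactly how the paper argues: $\kappa_1=4/9$ and $c_{f,1}<3$ for $\bar{\Gamma}$, $\kappa_2=1$ and $c_{f,2}=0$ for $\bar{\Gamma}_*$, then Lemma \ref{lem:level-map-composition}. A majority-type move giving $2/3$ on $\bar{\Gamma}$ would leave $\kappa=2/3>1/2$, which destroys $\noisy{\kappa}<1$.

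\textbf{Missing idea: how $\bar{\Gamma}$ alone reaches $4/9$} (Theorem \ref{thm:contraction}). Cleanness is not the obstacle you anticipate; contractibility of the layers handles it.
\begin{itemize}
\item Restricted to an $x$-layer, a clean syndrome is a $2$-cocycle of the contractible $\bar{Y}^x$, hence a coboundary.
\item After clearing all $x$-layers, the $q$-layer restrictions are $1$-coboundaries.
\item What remains sits on $z$-layers and is constant on the three sheets. So it equals $\iota\sigma'$ with $|\iota\sigma'|=3|\sigma'|$.
\end{itemize}
What you actually need is $|\sigma+\delta f(\sigma)|\le\frac{4}{3}|\sigma|$, and this requires sharp, carefully bookkept coexpansion estimates.
\begin{itemize}
\item In each $x$-layer: push the flux onto cone lines $xq\otimes v$, which glue to zero. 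Solve each $z$-block on hinges. Average over the three reference vertices with weights $2/7,3/7,2/7$. This gives $|g^{Qx}f^x|\le\frac{9}{7}|\sigma|^x$ (Theorem \ref{thm:x-layer-cleaning}), and the pushed weight lands only on spokes, never on rails.
\item In each $q$-layer: the cost is the spoke weight plus $\frac{4}{3}$ times the rail weight (Theorem \ref{thm:q-layer-cleaning}).
\item Summing gives $|\sigma|^Z+\frac{4}{3}|\sigma|^Q+\frac{9}{7}|\sigma|^X\le\frac{4}{3}|\sigma|$. Without the spoke/rail separation you only get $\zeta^{ZQ}\zeta^{QX}/3=4/7>1/2$.
\end{itemize}

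\textbf{Fill bound.} The same two quantities give $c_{f,1}=\frac{9}{7}+\frac{4}{3}<3$ (Theorem \ref{thm:fill-bound}), so $c_f=c_{f,2}+\kappa_2c_{f,1}=c_{f,1}$. Your adjacency argument yields only a degree-dependent constant, and the value $9/5$ is not justified.

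\textbf{Sparsity.} Per-step additive or multiplicative increases would compound along the tower. What is needed are the max-type recursions of Lemma \ref{lem:sparsity}: new weights are bounded by the maximum of the old weights and a degree-dependent constant, and degrees by the maximum of the old degrees and an absolute constant. These recursions stabilize under alternation of $\bar{\Gamma}$ and $\bar{\Gamma}_*$.

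Your Step 1 route to the SDR data (contractible layers plus homological perturbation) is fine and matches the paper.
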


\begin{proof}
    Note that by Theorem \ref{thm:contraction}, \ref{thm:fill-bound}, and \ref{thm:dual}, the level $\bar{\Gamma}$ with its associated cleaning has contraction $\kappa_1=4/9$ and fill bound $c_{f,1} <3$, while $\bar{\Gamma}_*$ has contraction $\kappa_2=1$ and fill bound $c_{f,2}=0$.
    Then the statement follows from composition in Lemma \ref{lem:level-map-composition}.
    Also note that $\bar{\Theta}$ preserves sparsity by repeated application of Lemma \ref{lem:sparsity}.
\end{proof}

\begin{remark}[$X$ sector]
    Though we restrict our attention to RG (and single-shot) properties of the $Z$ sector, it's worth mentioning that the $X$ sector corresponds to the level map
    \begin{equation}
        C^{\top} \to \bar{\Theta}(C)^{\top} = (\bar{\Gamma} \circ \bar{\Gamma}_*)(C^{\top})
    \end{equation}
    In particular, $\bar{\Theta}_*\equiv \bar{\Gamma}\circ \bar{\Gamma}_*$ has cleaning with the same contraction $\kappa$ and filling $c_f<3$ as $\bar{\Theta}$.
\end{remark}

\subsection{Repair}
Note that cleaning $f$ deals only with clean syndromes, and thus we must have a repair rule $Z\mapsto \ker\bar{\delta}$ to deal with general noisy syndromes $\noisy{\sigma} \in Z$.
The proof-oriented repair rule is conventionally taken to be the MW correction (e.g., the 3D or 4D toric codes).
However, since the level map induces a natural local repair (postponed to Section \ref{sec:local-repair}), we generalize the sufficient properties of a syndrome repair using meta-checks to accommodate both scenarios.

\begin{definition}[Repair]
    \label{def:repair}
    A level map $\bar{\Theta}$ is equipped with a ($Z$-sector) \textbf{syndrome repair rule} $\nu:Z(\Theta)\to Z(\Theta)$ with \textbf{energy (slack)} $0\le \ve <1$ if it satisfies the following.
    \begin{itemize}
        \item (Valid) $\nu$ is invariant under valid syndromes, maps $0\mapsto 0$ and $\bar{\delta} \nu(\noisy{\sigma}) = \bar{\delta}\noisy{\sigma}$
        \item (Near half fault) If $\fs$ is any connected component (relative to meta-checks) of $\rep \noisy{\sigma}\equiv \noisy{\sigma} + \nu (\noisy{\sigma})$, then
        \begin{equation}
            |\fs| \le \frac{2}{1-\ve} |\noisy{\sigma}\cap \fs|
        \end{equation}
    \end{itemize}
\end{definition}

\begin{example}[MW]
    By Lemma \ref{lem:half-fault}, the (global) min-weight correction is a syndrome repair rule with $\ve =0$.
\end{example}

\begin{remark}[$m$ Sheets]
    \label{rem:m-sheets}
    Note that the main reason we generalized the $m=2$ sheet construction in \cite{balasubramanian2026passive} with contraction $=1/2$ to $m=3$ sheets is so that the latter has contraction $\kappa <1/2$, where the constant 2 derives from the near half fault property of the syndrome repair rule.
    However, by Lemma \ref{lem:level-map-composition}, we could've simply used the 2-sheet construction $\bar{\Theta}$, but treat the composition $\bar{\Theta}^2$ as a single level map so that the composed contraction is $\kappa=1/4<1/2$.
    Both methods are valid, but the former answers the natural question whether a single level can have contraction $<1/2$, and also provides better constants.
\end{remark}

%% file: RG.tex
\section{Renormalization Group}
\label{sec:RG}

In this section, we set up the RG decoder in Definition \ref{def:RG-decoder} which will be shown to be sustainably single-shot (SS), as stated in Theorem \ref{thm:RG-threshold}.
As discussed previously, we shall consider the RG decoder which repairs noisy syndromes in a manner that satisfies Definition \ref{def:repair}. 
The reader may wish to have the global min-weight projection as a concrete example in mind, though it's worth mentioning that a local repair is possible, and postponed to Section \ref{sec:local-repair}.
Specifically, we shall assume that 
We shall also assume that the seed code $C\super{0}$ is locally trivial as follows

\begin{remark}[Setup]
    \label{rem:setup}
    The tower $\bar{\bm{C}}$ is constructed with a sparsity preserving level map $\bar{\Theta}$ and QLDPC seed code $C\super{0}$ so that the tower $\bar{\bm{C}}$ is sparse.
    It is also equipped with cleaning $f$ and repair rule $\nu$ as in Definition \ref{def:cleaning} and \ref{def:repair}.
    The corresponding constants are collected as
    \begin{equation}
        \label{eq:constants}
        \Delta_{\sh},\fw,\fd,\kappa,c_f,\ve
    \end{equation}
    where $\Delta_{\sh}$ is the uniform bound on layer size, $\fw,\fd$ denote the max check weight and qubit degrees of sparse $\bar{\bm{C}}$, respectively.
    We write $\bar{\fw},\bar{\fd}$ denotes those of the meta-checks to distinguish from $\fw,\fd$, since they will only be need in Section \ref{sec:local-repair}. Also define the \textbf{noisy contraction} constant as
    \begin{equation}
        \noisy{\kappa} = \frac{2\kappa}{1-\ve}
    \end{equation}
    We further assume that the seed code $C\super{0}$ is \textbf{locally trivial}, i.e., every physical error $h$ in the seed code $C\super{0}$ supported within one seed $X$-check or one qubit such that $\delta h=0$ is a stabilizer $h\in \im \delta$.
    This is a property of the seed code, not of a decoder. For example, it holds if the distance in the error sector exceeds the largest seed $X$-check weight.
\end{remark}

\subsection{Decoder}

\begin{definition}[Lifted Chains]
    Let $\bm{e}=(e\super{i})$ be a sequence of $i$-level physical chains. Then define the \textbf{lifted} physical as
    \begin{equation}
        \Sigma \bm{e} = \iota \left(\sum_i e_i\right)
    \end{equation}
    where we slightly abuse notation, and also use $\iota$ to denote the iterated inclusion from $C\super{i}\to C\super{\ell}$. Similarly define \textbf{lifted} syndromes.
\end{definition}

\begin{definition}[RG Decoder]
    \label{def:RG-decoder}
    Let $\noisy{\sigma}\super{i} = \delta e\super{i} +\mu\super{i}$ be a noisy syndrome with $i$-level physical $e\super{i}$ and syndrome $\mu\super{i}$.
    Then the $i$-\textbf{level pass} for $i\ge 1$ performs
    \begin{itemize}
        \item (R) Repair the syndrome so that $\rep \noisy{\sigma}\super{i}\equiv \noisy{\sigma}\super{i} +\nu\super{i}$ is a clean syndrome, where $\nu\super{i}=\nu(\noisy{\sigma}\super{i})=\nu(\mu\super{i})$
        \item (C) Clean $\rep \noisy{\sigma}\super{i} \mapsto \noisy{\sigma} \super{i-1} = \desc \rep \noisy{\sigma}\super{i}$ and record cleaning $f\super{i}=f(\rep \noisy{\sigma}\super{i})$
    \end{itemize}
    If $i=0$, then  we apply a fixed arbitrary noisy seed decoder to obtain correction $\nu\super{0},f\super{0}$; to emphasize the distinction between the seed level $i=0$ and $i\ge 1$, we may also write $\cM_0,\cE_0$ instead.
    Note the seed decoder may fail to run; we state successful-run conditions where needed\footnote{Recall by Definition \ref{def:decoder}, the difference between a successful run and successful decoding}.
    % In particular, the seed syndrome correction $\nu\super{0}$ is always defined, while $f\super{0}$ is defined only when the seed decoder runs successfully on the descended syndrome $\noisy{\sigma}\super{0}$.
    In particular, given noisy syndrome $\noisy{\sigma}\super{\ell}$, apply the level pass in descending order $\ell,\ell-1,...$ to obtain the \textbf{syndrome history} $\noisy{\bm{\sigma}}=(\noisy{\sigma}\super{i})$. Similarly, define the \textbf{measurement} $\bm{\nu}$, \textbf{physical} $\bm{f}$, and \textbf{repair history} $\rep \noisy{\bm{\sigma}}$.
    Further define
    \begin{equation}
        \cE(\noisy{\sigma}\super{\ell}) = \Sigma \bm{f}, \quad \cM(\noisy{\sigma}\super{\ell}) = \Sigma \bm{\nu}
    \end{equation}
    We may also write $\hat{\cE}$ to only sum over $f\super{i},i\ne 0$, i.e., all but the seed level.
    The map $\noisy{\sigma}\super{\ell}\mapsto \cE,\cM$ is the \textbf{RG decoder}, since $\cE,\cM$ will be the final \textit{approximations} for the physical error $e\super{\ell}$ and measurement error $\mu\super{\ell}$.
\end{definition}

\begin{remark}[Valid Syndromes]
    Suppose that at some level $\noisy{\sigma}\super{i}$ is a valid syndrome, then by the descent, all lower levels $\noisy{\sigma}\super{j},j<i$ are valid syndromes.
    In particular, the RG decoder is a valid meta-check decoder as in Definition \ref{def:meta-check-decoder} so that
    \begin{equation}
        \delta \cE(\sigma\super{\ell}) = \sigma\super{\ell}, \quad \sigma\super{\ell} \in \im \delta
    \end{equation}
\end{remark}

% In a conventional noisy syndrome $\noisy{\sigma}=\delta e+\mu$ decoder with meta-checks (e.g., 4D toric code), the syndrome correction $\nu$ is independent of the physical error $e$ since it only depends on the meta-syndrome $\delta \noisy{\sigma}$, while the physical correction $\hat{e}$ depends on $\noisy{\sigma}$ only through the repaired syndrome $\noisy{\sigma} +\nu$.
% In the following, we show that the RG decoder exhibits the same property.
\begin{lemma}[meta-check decoder]
    \label{lem:RG-noisy-decoder}
    Let $\noisy{\sigma}\super{\ell} = \delta e\super{\ell} +\mu\super{\ell}$ be an $\ell$-level noisy syndrome. Then the RG decoder applied to $\noisy{\sigma}\super{\ell}$ outputs 
    \begin{itemize}
        \item the same measurement history $\bm{\nu}$ as that applied to just $\mu\super{\ell}$; in particular,
        \begin{equation}
            \cM(\noisy{\sigma}\super{\ell}) = \cM(\mu\super{\ell}) 
        \end{equation}
        \item if the decoder runs successfully, the same physical history $\bm{f}$ as that applied to the repaired $\noisy{\sigma}\super{\ell} +\cM(\mu\super{\ell})$; in particular,
        \begin{equation}
            \cE(\noisy{\sigma} \super{\ell}) = \cE(\noisy{\sigma}\super{\ell} + \cM(\mu\super{\ell}))
        \end{equation}
    \end{itemize}
    In particular, the RG decoder is a meta-check decoder for $C\super{\ell}$ as in Definition \ref{def:meta-check-decoder}.
\end{lemma}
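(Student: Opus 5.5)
The plan is induction downward on the level $i=\ell,\ell-1,\dots,0$. Alongside the decoder run on $\noisy{\sigma}\super{\ell}=\delta e\super{\ell}+\mu\super{\ell}$, we run it on $\mu\super{\ell}$ alone. The inductive claim is that at each level $i$ there is a physical chain $e\super{i}$ in $C\super{i}$ such that
\begin{equation}
\noisy{\sigma}\super{i}(\noisy{\sigma}\super{\ell}) = \noisy{\sigma}\super{i}(\mu\super{\ell}) + \delta e\super{i},
\end{equation}
where the argument indicates which input the decoder started from. The base case $i=\ell$ holds with the given $e\super{\ell}$.

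For the inductive step, write $\noisy{\sigma}\super{i}(\mu\super{\ell})=\mu\super{i}$. First use the (Valid) property of the repair rule in Definition \ref{def:repair}: $\nu$ is invariant under valid syndromes, so $\nu(\mu\super{i}+\delta e\super{i})=\nu(\mu\super{i})$, and the two runs record the same $\nu\super{i}$. The repaired syndromes are then $\rep\mu\super{i}+\delta e\super{i}$ and $\rep\mu\super{i}$. Both are clean, since $\bar{\delta}\delta=0$ and $\bar{\delta}\nu=\bar{\delta}$.

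Next descend. Since $\desc$ is linear modulo $\im\delta$ on clean syndromes and $\delta\desc=\desc\delta$, we get
\begin{equation}
\desc(\rep\mu\super{i}+\delta e\super{i})=\desc\rep\mu\super{i}+\delta\desc e\super{i}+\delta(\cdot).
\end{equation}
To make this exact rather than "modulo $\im\delta$," I would unpack the definition: $\pi(\sigma+\delta e+\delta f(\sigma+\delta e))$ equals $\pi(\sigma+\delta f(\sigma))+\delta\pi(e+f(\sigma+\delta e)+f(\sigma))$, using only that $\pi$ is a chain map. So the new physical chain is $e\super{i-1}=\pi(e\super{i}+f(\sigma+\delta e\super{i})+f(\sigma))$ with $\sigma=\rep\mu\super{i}$, and the invariant propagates.

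At level $0$, the seed decoder is assumed to be a meta-check decoder. Its $\cM_0$ is therefore invariant under valid syndromes, which gives equality of $\nu\super{0}$. Summing the lifted measurement history then yields $\cM(\noisy{\sigma}\super{\ell})=\Sigma\bm{\nu}=\cM(\mu\super{\ell})$.

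For the second bullet, apply the same reasoning to the inputs $\noisy{\sigma}\super{\ell}$ and $\noisy{\sigma}\super{\ell}+\cM(\mu\super{\ell})$. The key observation is that the second input has repair $\nu\super{\ell}$ equal to $0$ at the top level, and in fact at every level. I would check this by showing that the history of $\noisy{\sigma}\super{\ell}+\Sigma\bm{\nu}$ is exactly the repaired history $\rep\noisy{\bm{\sigma}}$. That requires the lifted lower-level corrections $\iota\nu\super{j}$, $j<i$, to pass through level $i$ untouched. This follows from the (Passthrough) property: $f(\sigma+\iota\noisy{\tau})=f(\sigma)$. Combined with $\pi\iota=\mathrm{id}$, this lets the $\iota$-lifted terms descend unchanged.

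Concretely, at level $i$ the input becomes $\rep\noisy{\sigma}\super{i}+\iota(\text{lower }\nu)$. Its repair must then vanish, and I expect this to be the main obstacle. Invariance of $\nu$ under valid syndromes alone does not give $\nu(\text{clean}+\iota\tau)=0$. So I would instead argue directly that the recorded cleanings $f\super{i}$ agree: they depend only on the clean part by passthrough. I would treat the repair at each level as already absorbed, using that the decoder's definition via $\cE(\noisy{\sigma})=\cE(\noisy{\sigma}+\cM(\noisy{\sigma}))$ only needs equality of $\bm{f}$, i.e., of the cleanings and of the seed output. At the seed, the analogous identity $\cE_0(\noisy{\sigma})=\cE_0(\noisy{\sigma}+\cM_0(\noisy{\sigma}))$ closes the induction, giving equality of $\bm f$ and hence of $\cE=\Sigma\bm f$. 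Run success transfers along the way, and the two bullets together verify Definition \ref{def:meta-check-decoder}.
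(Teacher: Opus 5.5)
Your first bullet is fine. It is the paper's coset induction made exact: your level-$(i-1)$ chain $\pi(e\super{i}+f(\rep\mu\super{i}+\delta e\super{i})+f(\rep\mu\super{i}))$ is precisely the noisy descent of Lemma \ref{lem:perturbation}. You are also right that valid-invariance of the seed's $\cM_0$ is what gives equality of $\nu\super{0}$.

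The gap is in the second bullet. You correctly see that everything hinges on the repair vanishing at every level of the run on $\noisy{\sigma}\super{\ell}+\cM(\mu\super{\ell})$. You also correctly note that valid-invariance of $\nu$ says nothing about $\nu$ on a merely clean input. But you never use the hypothesis that the decoder runs successfully, and your substitute does not close the gap. Passthrough only lets $f$ ignore summands in $\im\iota$. If the second run's level-$i$ repair $\nu'$ were nonzero, its cleaning would be $f(\rep\noisy{\sigma}\super{i}+\nu')$ and its descent would pick up $\desc\nu'$ modulo $\im\delta$; neither need agree with the first run. Appealing to $\cE(\noisy{\sigma})=\cE(\noisy{\sigma}+\cM(\noisy{\sigma}))$ for the RG decoder is the very statement being proved; only its seed-level instance is a hypothesis.

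The missing idea is that run success makes the second input valid, not just clean. The telescoping identity in the proof of Lemma \ref{lem:RG-run-failure} gives $\noisy{\sigma}\super{\ell}+\cM(\noisy{\sigma}\super{\ell})=\iota(\noisy{\sigma}\super{0}+\nu\super{0})+\delta\iota\sum_{i\ge1}f\super{i}$. A successful run means $\noisy{\sigma}\super{0}+\nu\super{0}=\delta f\super{0}$. Combined with your first bullet, the second input therefore equals $\delta\cE(\noisy{\sigma}\super{\ell})\in\im\delta$. The induction then goes as follows:
\begin{itemize}
\item The top-level repair is $\nu(\delta\cE(\noisy{\sigma}\super{\ell}))=\nu(0)=0$.
\item Passthrough gives the cleaning $f\super{\ell}$ and the descended syndrome $\rep\noisy{\sigma}\super{\ell-1}+\iota\bigl(\sum_{j<\ell-1}\nu\super{j}\bigr)$.
\item This descended syndrome is again valid, since $\desc\delta e=\delta\desc e$.
\item Inducting, every level $i\ge1$ has zero repair and the same $f\super{i}$.
\item At the seed the input $\noisy{\sigma}\super{0}+\nu\super{0}$ is valid, so $\cM_0$ vanishes there. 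The seed's own meta-check property then yields the same $f\super{0}$ and shows the second run succeeds.
\end{itemize}
This is the paper's argument; the successful-run hypothesis is exactly what supplies the validity your proposal was missing.
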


\begin{proof}
    Adopt the notation convention of Definition \ref{def:RG-decoder} with input $\noisy{\sigma}\super{\ell}$.
    Let us proceed by induction, and assume that the coset $[\noisy{\sigma}\super{i}]\equiv \noisy{\sigma}\super{i}/ \im \delta$ is determined by coset $[\noisy{\sigma}\super{\ell}]$.
    Indeed, since the syndrome descent is linear modulo $\im \delta$, we have
    \begin{align}
        \noisy{\sigma}\super{i-1} &= \desc(\noisy{\sigma}\super{i} + \nu\super{i}) \\
        &= \desc (\mu\super{i} + \nu\super{i}) +\desc \delta e\super{i}  + \im \delta \\
        &= \desc \rep \mu\super{i} + \im \delta
    \end{align}
    where $\mu\super{i}$ is any representation of the coset $[\noisy{\sigma}\super{i}]$ so that $\noisy{\sigma}\super{i}=\mu\super{i} +\delta e\super{i}$, and we used the fact $\nu\super{i}$ depends on $\noisy{\sigma}\super{i}$ only implicitly through the coset.
    Hence, the coset $[\noisy{\sigma}\super{i-1}]$ is determined by the coset $[\noisy{\sigma}\super{i}]$ and thus by induction follows.
    Now consider the physical corrections. 
    In the case where the decoder runs successfully, we see that
    \begin{equation}
        \noisy{\sigma}\super{\ell} + \cM(\mu\super{\ell}) = \delta \cE(\noisy{\sigma}\super{\ell})
    \end{equation}
    and thus the left side is valid.
    Note that at the $\ell$ level, we have
    \begin{equation}
        \noisy{\sigma}\super{\ell} + \cM(\mu\super{\ell}) = \noisy{\sigma}\super{\ell} +\nu\super{\ell} + \iota\underbrace{\left(\cdots \right)}_{i<\ell}
    \end{equation}
    where the underbraced term denote syndromes in $i<\ell$ levels.
    By the passthrough property of the level map,
    \begin{equation}
        \desc (\noisy{\sigma}\super{\ell} + \cM(\mu\super{\ell})) = \desc \rep \noisy{\sigma}\super{\ell} + \nu\super{\ell-1} + \iota\underbrace{\left(\cdots \right)}_{i<\ell-1}
    \end{equation}
    Hence, the statement follows by induction.
    % Finally, by Theorem \ref{thm:contraction}, we see that
    % \begin{equation}
    %     \noisy{\sigma}\super{\ell} = \cM(\noisy{\sigma}\super{\ell}) + \delta \cE(\noisy{\sigma})
    % \end{equation}
    % Hence, the statement follows.
\end{proof}

\begin{lemma}[Run Failure]
    \label{lem:RG-run-failure}
    The RG decoder runs successfully on $\noisy{\sigma}\super{\ell}$ iff the seed decoder runs successfully on $\noisy{\sigma}\super{0}$, where $\noisy{\sigma}\super{0}$ is that obtained from $\noisy{\sigma}\super{\ell}$ in Definition \ref{def:RG-decoder}.
    Equivalently, the totally repaired input is valid iff the seed's repaired syndrome is valid.
    % In particular, under Hypothesis \ref{hyp:seed-decoder}, the RG decoder always runs successfully.
\end{lemma}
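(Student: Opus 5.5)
The plan is to show, by downward induction on the level $i$, that for every $i\in[0,\ell]$ the following holds: the partially repaired input
\begin{equation*}
\rho\super{i}\equiv \noisy{\sigma}\super{\ell}+\iota\Big(\sum_{j>i}\nu\super{j}\Big)
\end{equation*}
becomes valid after adding $\iota\nu\super{i}$ and descending, if and only if $\noisy{\sigma}\super{i}+\nu\super{i}$ is valid in $C\super{i}$. Put more simply, the statement to carry through the induction is that
\begin{equation*}
\noisy{\sigma}\super{\ell}+\cM(\noisy{\sigma}\super{\ell})\in\im\delta \iff \rep\noisy{\sigma}\super{i}+\iota\Big(\sum_{j<i}\nu\super{j}\Big)\in\im\delta \text{ at level } i.
\end{equation*}
For $i=\ell$ the two sides are literally the same chain. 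For $i=0$ the right-hand side is $\noisy{\sigma}\super{0}+\nu\super{0}\in\im\delta$, which is exactly the statement that the seed decoder runs successfully. This also gives the stated equivalence with the seed's repaired syndrome.

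For the inductive step from level $i$ to level $i-1$, write $\alpha=\rep\noisy{\sigma}\super{i}$, which is clean by Definition \ref{def:repair}, and $\beta=\sum_{j<i}\nu\super{j}\in Z(C\super{i-1})$. By the passthrough property, $f(\alpha+\iota\beta)=f(\alpha)$. Adding $\delta f(\alpha)$, which is valid, does not change validity, so $\alpha+\iota\beta$ is valid iff $\alpha+\delta f(\alpha)+\iota\beta$ is valid. The descent property puts $\alpha+\delta f(\alpha)\in\im\iota$. Using $\iota\pi=\mathrm{id}+\phi\bar\delta+\bar\delta\phi$ on elements of $\im\iota$, we get $\iota\pi=\mathrm{id}$ there, since $\pi\iota=\mathrm{id}$ gives $\iota\pi\iota=\iota$. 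Hence
\begin{equation*}
\alpha+\delta f(\alpha)+\iota\beta=\iota\big(\desc\alpha+\beta\big)=\iota\big(\noisy{\sigma}\super{i-1}+\beta\big).
\end{equation*}
So it suffices to show that for $\gamma\in Z(C\super{i-1})$, the chain $\iota\gamma$ is valid in $C\super{i}$ iff $\gamma$ is valid in $C\super{i-1}$. Applying this with $\gamma=\noisy{\sigma}\super{i-1}+\nu\super{i-1}+\sum_{j<i-1}\nu\super{j}$ closes the induction.

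That last equivalence is where the SDR does the work. If $\gamma=\delta e$, then $\iota\gamma=\delta\iota e$, because $\iota$ is a chain map. Conversely, if $\iota\gamma=\delta e'$, then $\gamma=\pi\iota\gamma=\pi\delta e'=\delta\pi e'$, because $\pi$ is a chain map in degrees $Q\to Z$ (bare differentials only). So $\iota$ induces an injection on $Z/\im\delta$, and validity transfers in both directions.

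The main obstacle I expect is bookkeeping rather than mathematics. One point to check is that the iterated inclusion used in $\cM=\Sigma\bm\nu$ is compatible level by level, i.e.\ $\iota\super{\ell\leftarrow j}=\iota\super{\ell\leftarrow i}\iota\super{i\leftarrow j}$. The other is that passthrough is applied with $\noisy{\tau}=\beta$ an arbitrary, not necessarily clean, chain of the coarser code; Definition \ref{def:cleaning} allows exactly this.
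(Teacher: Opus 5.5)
Your proposal is correct and is essentially the paper's argument unrolled one level at a time. The paper telescopes the per-level identities $\rep\noisy{\sigma}\super{i}+\delta f\super{i}=\iota\noisy{\sigma}\super{i-1}$ into the single identity $\noisy{\sigma}\super{\ell}+\iota(\sum_{i=1}^{\ell}\nu\super{i})+\delta\iota(\sum_{i=1}^{\ell}f\super{i})=\iota\noisy{\sigma}\super{0}$, and then concludes from $\iota,\pi$ being chain maps with $\pi\iota=\mathrm{id}$, which is exactly your transfer-of-validity step. Your appeal to passthrough is harmless but unnecessary, since adding the coboundary $\delta f(\alpha)$ preserves validity on its own.
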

\begin{proof}
    Note that each $i$-level pass always runs successfully for $i\ge 1$ so that
    \begin{equation}
        \noisy{\sigma}\super{\ell} + \iota\left(\sum_{i=1}^{\ell} \nu\super{i}\right) +\delta \iota\left(\sum_{i=1}^{\ell} f\super{i}\right) = \iota\noisy{\sigma}\super{0}
    \end{equation}
    The statement then follows by definition and that $\pi\iota$ is the identity.
    % If the seed decoder runs successfully, then $\noisy{\sigma}\super{0} =\nu\super{0} + \delta f\super{0}$ and thus
    % \begin{equation}
    %     \noisy{\sigma}\super{\ell}  = \cM(\mu\super{\ell})+\delta \cE(\noisy{\sigma}\super{\ell}) 
    % \end{equation}
    % Hence, the RG decoder runs successfully.
    % Conversely, if the RG decoder runs successfully, then $\noisy{\sigma}\super{\ell}  = \cM(\mu\super{\ell})+\delta \cE(\noisy{\sigma}\super{\ell})$ and thus
    % \begin{equation}
    %     \iota (\noisy{\sigma}\super{0} + \nu\super{0} + \delta f\super{0}) =0
    % \end{equation}
    % Since $\pi\iota$ is the identity, we see that the seed decoder runs successfully.
\end{proof}

\begin{remark}[Run Failure]
    \label{rem:RG-run-failure}
    By Remark \ref{rem:run-failure}, Lemma \ref{lem:RG-noisy-decoder} and \ref{lem:RG-run-failure}, we see that when investigating the run failure on $\noisy{\sigma}\super{\ell}$, it's equivalent to investigate the measurement error $\mu\super{\ell}$ and its seed level descent $\mu\super{0}$.
    A particularly interesting case is when $\mu\super{0}=0$, since it would imply a successful run.
\end{remark}

\subsection{Reduction}

In this subsection, we show that the noise model at the top level $C\super{\ell}$ is reduced to an \textit{effective} noise model at the seed level $C\super{0}$ by the RG decoder.
\begin{definition}[Noisy Descent]
    Let $\mu\super{\ell}$ be a noisy syndrome with history $\bm{\mu}$.
    Define the \textbf{noisy (physical) descent} of $e\super{i}$ relative to $\mu\super{\ell}$ as
    \begin{equation}
        \desc_{\bm{\mu}} e\super{i} = \pi (e\super{i} + f(\delta e\super{i} + \rep \mu\super{i})+ f(\rep \mu\super{i}))
    \end{equation}
    Note that $\desc_{\bm{\mu}} 0 =0$, and if $\mu\super{\ell}=0$, then $\desc_{\bm{\mu}} = \desc$ is the physical descent.
    If we started with top level physical $e\super{\ell}$, then the sequence $\bm{\desc}_{\bm{\mu}} e\super{\ell}$  defined inductively is the \textbf{noisy descent (sequence)} of $e\super{\ell}$ relative to $\mu\super{\ell}$.
\end{definition}

\begin{remark}[Passthrough]
    \label{rem:noisy-descent-passthrough}
    Note that $\desc_{\bm{\mu}}$ satisfies
    \begin{equation}
        \desc_{\bm{\mu}} (e\super{i} + \iota e\super{i-1}) = \desc_{\bm{\mu}} e\super{i} + e\super{i-1}
    \end{equation}
\end{remark}

\begin{remark}[Stabilizer Equivalent]
    \label{rem:noisy-descent-stabilizer-equivalent}
    Note that $\desc_{\bm{\mu}}$ satisfies
    \begin{equation}
        \iota \desc_{\bm{\mu}} e\super{i}  + e\super{i} + f(\delta e\super{i} + \rep \mu\super{i}) + f(\rep\mu\super{i}) \in \im \delta
    \end{equation}
\end{remark}
\begin{lemma}[Perturbation]
    \label{lem:perturbation}
    Let $\noisy{\sigma}\super{\ell} = \delta e\super{\ell} + \mu\super{\ell}$ be a noisy syndrome with noisy history $\noisy{\bm{\sigma}}$, and let $\bm{\mu}$ denote the noisy history of $\mu\super{\ell}$. 
    Let $\bm{e}_{\bm{\mu}}$ denote the noisy descent sequence of $e\super{\ell}$ relative to $\mu\super{\ell}$. Then
    \begin{equation}
        \noisy{\sigma}\super{i} = \delta e_{\bm{\mu}}\super{i} + \mu\super{i}, \quad i\in [0,\ell]
    \end{equation}
\end{lemma}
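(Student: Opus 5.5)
We argue by downward induction on the level $i$, from $i=\ell$ to $i=0$. At $i=\ell$ the statement holds by definition, since $e_{\bm{\mu}}\super{\ell}=e\super{\ell}$ and $\noisy{\sigma}\super{\ell}=\delta e\super{\ell}+\mu\super{\ell}$. For the inductive step, assume $\noisy{\sigma}\super{i}=\delta e_{\bm{\mu}}\super{i}+\mu\super{i}$. We then show that one level pass of Definition \ref{def:RG-decoder} applied to $\noisy{\sigma}\super{i}$ yields exactly $\delta e_{\bm{\mu}}\super{i-1}+\mu\super{i-1}$. Here $\mu\super{i-1}=\desc\rep\mu\super{i}$ is the next term in the history of $\mu\super{\ell}$, and $e_{\bm{\mu}}\super{i-1}=\desc_{\bm{\mu}}e_{\bm{\mu}}\super{i}$.

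The key steps run in order as follows.

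\textbf{Step 1 (repair).} By the valid-syndrome invariance of the repair rule (Definition \ref{def:repair}), $\nu(\noisy{\sigma}\super{i})=\nu(\mu\super{i})$. Hence
\begin{equation}
\rep\noisy{\sigma}\super{i}=\delta e_{\bm{\mu}}\super{i}+\rep\mu\super{i}.
\end{equation}
This is clean, since $\rep\mu\super{i}$ is clean and $\bar{\delta}\delta=0$.

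\textbf{Step 2 (cleaning).} Write $\sigma=\delta e_{\bm{\mu}}\super{i}+\rep\mu\super{i}$ and $\rho=\rep\mu\super{i}$. Both are clean, so $f(\sigma)$ and $f(\rho)$ are defined, and by descent both $\sigma+\delta f(\sigma)$ and $\rho+\delta f(\rho)$ lie in $\im\iota$.

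\textbf{Step 3 (compute).} We want the identity
\begin{equation}
\pi(\sigma+\delta f(\sigma))=\pi(\rho+\delta f(\rho))+\delta\,\pi\bigl(e_{\bm{\mu}}\super{i}+f(\sigma)+f(\rho)\bigr).
\end{equation}
It follows from linearity of $\pi$ together with $\pi\delta=\delta\pi$, because $\pi$ is a chain map. Explicitly,
\begin{equation}
\sigma+\delta f(\sigma)=\rho+\delta f(\rho)+\delta\bigl(e_{\bm{\mu}}\super{i}+f(\sigma)+f(\rho)\bigr),
\end{equation}
since $\sigma=\rho+\delta e_{\bm{\mu}}\super{i}$ and we work over $\F_2$. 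Applying $\pi$ gives the displayed identity. The left side is $\noisy{\sigma}\super{i-1}$ and the first term on the right is $\mu\super{i-1}$. The argument of $\delta\pi(\cdot)$ is precisely the defining expression of $\desc_{\bm{\mu}}e_{\bm{\mu}}\super{i}$, namely $e+f(\delta e+\rep\mu)+f(\rep\mu)$. This closes the induction.

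\textbf{Main obstacle.} The step needing care is the seed level $i=0$ and the bookkeeping at the bottom. The lemma asserts equality for $i\in[0,\ell]$, and $\noisy{\sigma}\super{0}$ is produced by the level-$1$ pass. So the induction only needs passes at levels $i\ge1$, where $f$ is the cleaning of Definition \ref{def:cleaning}. The seed decoder's $\nu\super{0},f\super{0}$ never enter.

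A second subtlety is consistency of the two histories. The history $\bm{\mu}$ must be the one produced by the RG decoder on input $\mu\super{\ell}$ alone. Lemma \ref{lem:RG-noisy-decoder} guarantees that the measurement corrections $\nu\super{i}$ agree between the two runs, and the induction above then shows the syndromes agree exactly, not just modulo $\im\delta$. No passthrough property is needed, only linearity of $\pi$ and $\delta$ and the chain-map property of $\pi$.
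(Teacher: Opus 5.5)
Your proposal is correct and follows essentially the same argument as the paper: downward induction, using valid-syndrome invariance of $\nu$ to get $\rep\noisy{\sigma}\super{i}=\delta e_{\bm{\mu}}\super{i}+\rep\mu\super{i}$, then applying the chain map $\pi$ to the $\F_2$ identity relating the two cleanings. The differences are cosmetic. The paper rewrites $\sigma+\delta f(\sigma)$ as $\iota\desc\sigma$ and invokes $\pi\iota=\mathrm{id}$, whereas you read the descents directly off $\desc\sigma=\pi(\sigma+\delta f(\sigma))$. Also, your appeal to Lemma \ref{lem:RG-noisy-decoder} is redundant, since Step 1 of your induction already gives agreement of the repairs at each level.
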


\begin{proof}
    It's clear that the equality is true for $i=\ell$ and thus we proceed by induction.
    Note that
    \begin{align}
        \delta e_{\bm{\mu}}\super{i-1} &= \pi(\delta e_{\bm{\mu}}\super{i} + \delta f( \rep \noisy{\sigma}\super{i}) + \delta f(\rep \mu\super{i})) \\
        &= \pi (\delta e_{\bm{\mu}}\super{i} + \rep \noisy{\sigma}\super{i} + \iota \noisy{\sigma}\super{i-1} + \rep \mu\super{i} + \iota \mu\super{i-1})\\
        &= \noisy{\sigma}\super{i-1} +\mu\super{i-1}
    \end{align}
    where we used the fact that $\rep \noisy{\sigma}\super{i} = \delta e_{\bm{\mu}}\super{i} + \rep\mu\super{i}$, and $\pi \iota =\mathrm{id}$.
\end{proof}

\begin{lemma}[Seed Level Reduction]
    \label{lem:seed-lvl-reduction}
    Let $\bm{e}_\pm$ be the noisy descent sequence of $e_\pm\super{\ell}$ relative to $\mu\super{\ell}$, and let $\noisy{\bm{\sigma}}$ be the history of $\noisy{\sigma}_\pm\super{\ell} =\delta e_\pm\super{\ell} + \mu\super{\ell}$.
    Then, modulo stabilizers,
    \begin{equation}
        e_+\super{\ell} + e_-\super{\ell} + \hat{\cE}(\noisy{\sigma}_+\super{\ell}) + \hat{\cE}(\noisy{\sigma}_-\super{\ell}) = \iota (e_+\super{0} +e_-\super{0}) + \im \delta
    \end{equation}
\end{lemma}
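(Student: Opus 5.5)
The plan is to prove a single-error identity level by level and then sum it over the two errors. The hats matter here: $\hat{\cE}$ sums only the non-seed corrections $f\super{i}$, $i\ge 1$. Write $\noisy{\sigma}_\pm\super{i}$ for the histories of $\noisy{\sigma}_\pm\super{\ell}$ and $\bm{\mu}$ for that of $\mu\super{\ell}$. By Lemma \ref{lem:perturbation}, $\noisy{\sigma}_\pm\super{i} = \delta e_\pm\super{i} + \mu\super{i}$ for every $i$. Since $\nu$ is invariant under valid syndromes, $\nu(\noisy{\sigma}_\pm\super{i})=\nu(\mu\super{i})$, and therefore
\begin{equation}
    \rep \noisy{\sigma}_\pm\super{i} = \delta e_\pm\super{i} + \rep \mu\super{i}.
\end{equation}
It follows that the recorded cleaning at level $i$ is $f_\pm\super{i} = f(\delta e_\pm\super{i} + \rep\mu\super{i})$.

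Next I apply Remark \ref{rem:noisy-descent-stabilizer-equivalent} at each level $i\ge 1$. It gives
\begin{equation}
    \iota e_\pm\super{i-1} + e_\pm\super{i} + f_\pm\super{i} + f(\rep \mu\super{i}) \in \im\delta .
\end{equation}
The unknown term $f(\rep\mu\super{i})$ is the same for both signs, because it depends only on $\mu\super{\ell}$. Adding the $+$ and $-$ relations therefore cancels it, leaving
\begin{equation}
    e_+\super{i}+e_-\super{i} + f_+\super{i}+f_-\super{i} = \iota(e_+\super{i-1}+e_-\super{i-1}) + \im\delta .
\end{equation}
This cancellation is exactly why the statement pairs two errors with a common measurement error: individually, the noisy descent is off by $f(\rep\mu\super{i})$.

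I then lift each level-$i$ relation to $C\super{\ell}$ with the iterated inclusion $\iota$ and telescope from $i=\ell$ down to $i=1$. This uses two facts. First, $\iota$ is a chain map, so $\iota(\im\delta)\subseteq\im\delta$. Second, the iterated inclusions compose, so that $\iota\circ\iota$ from $C\super{i-1}\to C\super{i}\to C\super{\ell}$ equals the iterated inclusion $C\super{i-1}\to C\super{\ell}$. The intermediate terms $\iota(e_+\super{i}+e_-\super{i})$ for $0<i<\ell$ then cancel pairwise. The cleanings sum to $\Sigma$ over $i\ge1$ of $f_\pm\super{i}$, which is $\hat{\cE}(\noisy{\sigma}_\pm\super{\ell})$ by definition. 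What remains is $e_+\super{\ell}+e_-\super{\ell}+\hat{\cE}(\noisy{\sigma}_+\super{\ell})+\hat{\cE}(\noisy{\sigma}_-\super{\ell})=\iota(e_+\super{0}+e_-\super{0})+\im\delta$, which is the claimed identity.

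The main obstacle is bookkeeping rather than estimates. One point to check is that the $f$ in the noisy descent is evaluated on the same argument as the decoder's recorded cleaning. That reduces to the repair identity $\rep \noisy{\sigma}_\pm\super{i} = \delta e_\pm\super{i} + \rep \mu\super{i}$, which rests on Lemma \ref{lem:perturbation} and the valid-syndrome invariance of $\nu$. The other point is that no seed-level term enters: the telescoping stops at $i=1$, which is consistent with $\hat{\cE}$ excluding $f\super{0}$.
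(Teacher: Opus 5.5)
Your proposal is correct and follows the same route as the paper, whose proof is the one-line statement that the identity follows from repeated application of Remark~\ref{rem:noisy-descent-stabilizer-equivalent} and Lemma~\ref{lem:perturbation}. You make the bookkeeping explicit: you match the recorded cleanings to the noisy-descent terms using the valid-syndrome invariance of $\nu$, cancel the common term $f(\rep \mu\super{i})$ between the two signs, and telescope the lifted level-$i$ relations down to level $1$.
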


% \begin{remark}
%     By local triviality in Remark \ref{rem:setup}, if $\noisy{\sigma}\super{0}_\pm=0$, then
%     \begin{equation}
%         e_+\super{\ell} + e_-\super{\ell} + \cE(\sigma_+\super{\ell}) + \cE(\sigma_-\super{\ell}) \in \im \delta
%     \end{equation}
% \end{remark}
\begin{proof}
    The statement follows repeated application of Remark \ref{rem:noisy-descent-stabilizer-equivalent} and Lemma \ref{lem:perturbation}.
\end{proof}

\begin{theorem}[Noise Model Reduction]
    \label{thm:noise-model-reduction}
    Consider the noise model in Definition \ref{def:noise-model} with physical $e\super{\ell}(t)$ and measurement errors $\mu\super{\ell}(t)$ at the top level $C\super{\ell}$.
    Then the residual after $t$ rounds, modulo stabilizers, satisfies
    \begin{equation}
        r\super{\ell}(t) = \hat{\cE}(\mu\super{\ell}(t^-)) + \iota s\super{0}(t) + \im \delta
    \end{equation}
    where $s\super{0}(t)$ is the seed level \textbf{residue} defined with $s\super{0}(0)=0$ and inductively as
    \begin{equation}
        s\super{0}(t+1) = s\super{0}(t) + E\super{0}(t) + \cE_0( \delta (s\super{0}(t) + E\super{0}(t)) + \mu\super{0}(t^+))
    \end{equation}
    where $E\super{0}(t)$ is the seed level noisy descent of \textbf{effective} error 
    \begin{equation}
        E\super{\ell}(t) \equiv e\super{\ell}(t) + \hat{\cE}(\mu\super{\ell}(t^-))
    \end{equation}
    relative to $\mu\super{\ell}(t^+)$; note that it only depends on adjacent time steps $t$ and $t^\pm$.
    $\mu\super{0}(t^+)$ is the seed level output of the history of $\mu\super{\ell}(t^+)$.
    We refer to $E\super{0}(t),\mu\super{0}(t^+)$ as the \textbf{effective} seed level physical and measurement errors.
\end{theorem}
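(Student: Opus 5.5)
We argue by induction on $t$, using Lemma \ref{lem:seed-lvl-reduction} as the single-step engine and Lemma \ref{lem:RG-noisy-decoder} to split the full correction $\cE$ into its non-seed part $\hat{\cE}$ and the seed part $\iota f\super{0}$. The base case $t=0$ holds trivially: $r\super{\ell}(0)=0$, $s\super{0}(0)=0$, and $\mu\super{\ell}(0^-)$ is taken to be $0$, so $\hat{\cE}(0)=0$ because $f(0)=0$ and $\nu(0)=0$.

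For the inductive step, assume $r\super{\ell}(t)=\hat{\cE}(\mu\super{\ell}(t^-))+\iota s\super{0}(t)+\im\delta$. The syndrome at time $t^+$ is then
\begin{equation}
\noisy{\sigma}\super{\ell}(t^+)=\delta\bigl(E\super{\ell}(t)+\iota s\super{0}(t)\bigr)+\mu\super{\ell}(t^+),
\end{equation}
since stabilizer differences do not change $\delta$. Write $e_+\super{\ell}=E\super{\ell}(t)+\iota s\super{0}(t)$. By the passthrough property of the noisy descent (Remark \ref{rem:noisy-descent-passthrough}, iterated through the levels), the seed-level noisy descent of $e_+\super{\ell}$ relative to $\mu\super{\ell}(t^+)$ equals $E\super{0}(t)+s\super{0}(t)$. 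By Lemma \ref{lem:perturbation}, the seed-level syndrome is therefore $\noisy{\sigma}\super{0}=\delta(E\super{0}(t)+s\super{0}(t))+\mu\super{0}(t^+)$, which is exactly the argument of $\cE_0$ in the recursion for $s\super{0}(t+1)$.

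Next, apply Lemma \ref{lem:seed-lvl-reduction} with $e_-\super{\ell}=0$. The history of $\noisy{\sigma}_-\super{\ell}=\mu\super{\ell}(t^+)$ has trivial seed-level descent, and we get
\begin{equation}
e_+\super{\ell}+\hat{\cE}(\noisy{\sigma}\super{\ell}(t^+))+\hat{\cE}(\mu\super{\ell}(t^+))=\iota\bigl(E\super{0}(t)+s\super{0}(t)\bigr)+\im\delta .
\end{equation}
We then expand the recurrence
\begin{equation}
r\super{\ell}(t+1)=r\super{\ell}(t)+e\super{\ell}(t)+\hat{\cE}(\noisy{\sigma}\super{\ell}(t^+))+\iota\cE_0(\noisy{\sigma}\super{0}).
\end{equation}
Substituting the inductive hypothesis gives $r\super{\ell}(t)+e\super{\ell}(t)=e_+\super{\ell}+\im\delta$. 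Combining this with the display above yields $r\super{\ell}(t+1)=\hat{\cE}(\mu\super{\ell}(t^+))+\iota\bigl(s\super{0}(t)+E\super{0}(t)+\cE_0(\noisy{\sigma}\super{0})\bigr)+\im\delta$, where we used that $\iota$ is a chain map, so it maps stabilizers to stabilizers. This is the claim at $t+1$, because $\mu\super{\ell}((t+1)^-)=\mu\super{\ell}(t^+)$.

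The main obstacle is bookkeeping the two descents consistently. One must check that the term $\hat{\cE}(\mu\super{\ell}(t^+))$ produced by Lemma \ref{lem:seed-lvl-reduction} is precisely the one carried into the next round; this requires that the $e_-=0$ branch has seed-level noisy descent $0$, using $\desc_{\bm{\mu}}0=0$. One must also check that the passthrough identity holds exactly, not merely modulo $\im\delta$, when $\iota s\super{0}(t)$ is pushed through all $\ell$ levels. Exactness is needed because the seed decoder $\cE_0$ is nonlinear and is only guaranteed invariant under valid syndromes. For this reason, we will verify at each level that $\iota s\super{0}$ contributes $\iota\delta s\super{0}$ to the syndrome, which cleaning ignores by the passthrough property of $f$ in Definition \ref{def:cleaning}. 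The dependence on adjacent times then follows directly from the formula for $E\super{\ell}(t)$.
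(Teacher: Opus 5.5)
Your proposal is correct and follows the paper's proof: induction on $t$, applying Lemma \ref{lem:seed-lvl-reduction} with $e_+\super{\ell}=E\super{\ell}(t)+\iota s\super{0}(t)$, $e_-\super{\ell}=0$ and measurement error $\mu\super{\ell}(t^+)$, then using the passthrough property of the noisy descent to get seed-level descent $E\super{0}(t)+s\super{0}(t)$. Your explicit use of Lemma \ref{lem:perturbation} to identify the argument of $\cE_0$, and your splitting of $\cE$ into $\hat{\cE}$ plus the seed term, fill in steps the paper leaves implicit.
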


\begin{proof}
    Note that by induction, we have
    \begin{equation}
        r(t+1) = r(t) + e(t) + \cE(\delta (r(t) +e(t)) + \mu(t^+))
    \end{equation}
    Apply Lemma \ref{lem:seed-lvl-reduction} by replacing 
    \begin{align}
        e_+\super{\ell} &\mapsto e(t) +\hat{\cE}(\mu(t^-)) + \iota s(t) \\
        e_-\super{\ell} &\mapsto 0\\
        \mu\super{\ell} &\mapsto \mu\super{\ell}(t^+)
    \end{align}
    where we note that the physical is stabilizer equivalent to the residue $r(t)$ by induction.
    By the passthrough property in Remark \ref{rem:noisy-descent-passthrough}, the statement then follows.
\end{proof}

\begin{remark}
    By Theorem \ref{thm:noise-model-reduction} and boundary condition $\mu\super{\ell}(T^+)=0$, we see that if the seed level residue $s\super{0}(T+1) \in \im \delta$, then so is the top level residue $r\super{\ell}(T+1)$.
    Hence, the problem has been reduced to considering the effective seed level errors $E\super{0}(t),\mu\super{0}(t^+)$; both of which are \textbf{memory-less}, in the sense that they only depend on at most adjacent time steps $t,t^\pm$.
    In particular, we will show that every $E\super{0}(t)$ is equal to some $\ubar{E}\super{0}(t)$ modulo seed level stabilizers, such that $\ubar{E}\super{0}(t),\mu\super{0}(t^+),t\in [0,T]$ are jointly local stochastic with rate $\ll p$, as in Theorem \ref{thm:noise-transfer}.
\end{remark}
\subsection{Stability}

\begin{definition}
    Two $i$-level chains \textbf{agree off} cell $w\super{i-1}$ in $C\super{i-1}$ if their difference has shadow in the \textbf{up set} $\{\succeq w\super{i-1}\}$.
\end{definition}
\begin{lemma}[Local Stability]
    \label{lem:RG-stability}
    Let $\bm{\mu}$ be the history of $\mu\super{\ell}$.
    If $e_\pm\super{i} \in Q\super{i}$ agree off a cell $w\super{i-1} \in \cC\super{i-1}$, then $\desc_{\bm{\mu}} e_\pm\super{i}$ differs only in $\{\succeq w\super{i-1}\}$. 
    In particular, if $e_\pm\super{\ell}$ differ by $\iota q$ where $q\in \cQ\super{j}$ is a single qubit in some $j\le \ell$ level and $\bm{e}_\pm$ denote the noisy descents sequence relative to $\bm{\mu}$, then their seed level noisy descents differ only in 
    \begin{equation}
        e_+\super{0} + e_-\super{0} \subseteq \{\succeq w_q\}
    \end{equation}
    where $w_q=\sh \circ \cdots \circ \sh q$ is the seed level (compositive) shadow of $q$.
\end{lemma}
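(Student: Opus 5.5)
The first claim is a single-level statement, and I will prove it directly from layer-locality. The second claim will then follow by induction down the tower.

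Write $d = e_+\super{i} + e_-\super{i}$. By hypothesis $\sh d \subseteq \{\succeq w\super{i-1}\}$. By the definition of the noisy descent,
\begin{equation}
\desc_{\bm{\mu}} e_+\super{i} + \desc_{\bm{\mu}} e_-\super{i} = \pi d + \pi\bigl(f(\delta e_+\super{i} + \rep \mu\super{i}) + f(\delta e_-\super{i} + \rep \mu\super{i})\bigr).
\end{equation}
The terms $f(\rep\mu\super{i})$ cancel because the history $\bm{\mu}$ is common to both. It then suffices to check three facts.

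First, $\delta$ is layer-local (lower triangular). Hence $\delta d$ has shadow in $\{\succeq \sh d\}\subseteq\{\succeq w\super{i-1}\}$, using monotonicity of the shadow and the fact that the up set is upward closed.

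Second, the two clean inputs $\sigma_\pm = \delta e_\pm\super{i} + \rep\mu\super{i}$ differ only by $\delta d$. So $\dnum^{\preceq u}\sigma_+ = \dnum^{\preceq u}\sigma_-$ for every cell $u \not\succeq w\super{i-1}$, since $\{\preceq u\}$ avoids the up set whenever $u\not\succeq w\super{i-1}$. Layer-locality of $f$ then gives $\dnum^u f(\sigma_+) = \dnum^u f(\sigma_-)$ for all such $u$. Thus $f(\sigma_+)+f(\sigma_-)$ is supported in layers over $\{\succeq w\super{i-1}\}$.

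Third, $\pi$ is layer-local as a map $\bar\Theta \to C$. A chain supported in layers over an upward-closed set $\cU$ is therefore sent to a chain supported in $\cU$: for $u\notin\cU$, $\dnum_u \pi c$ depends only on $\dnum^{\preceq u} c = 0$. Applying this to $\pi d$ and to $\pi(f(\sigma_+)+f(\sigma_-))$ shows the difference lies in $\{\succeq w\super{i-1}\}$.

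For the second claim, take $e_\pm\super{\ell}$ differing by $\iota q$ with $q\in\cQ\super{j}$. There are two regimes.

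For levels $i > j$, the chains $\iota q$ and $q$ are related by the passthrough property, Remark \ref{rem:noisy-descent-passthrough}. Thus $\desc_{\bm{\mu}}(e\super{i} + \iota q) = \desc_{\bm{\mu}} e\super{i} + \iota q$, with $\iota$ now the iterated inclusion into level $i-1$. So the descents still differ by exactly $\iota q$ all the way down to level $j$, where they differ by $q$ itself.

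At level $j$, the chains agree off $w\super{j-1} = \sh q$, so the first part gives a difference in $\{\succeq \sh q\}$ at level $j-1$. Inductively, suppose the level-$(k)$ descents differ in $\{\succeq w\super{k}\}$. Monotonicity of $\sh$ gives $\sh\{\succeq w\super{k}\} \subseteq \{\succeq \sh w\super{k}\}$. So they agree off $w\super{k-1} = \sh w\super{k}$, and the first part applies again. Ending at level $0$ yields $e_+\super{0}+e_-\super{0}\subseteq\{\succeq w_q\}$.

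The main subtle point is the second fact: the precise meaning of layer-locality for the nonlinear $f$. One must also check that the inputs differ only in layers over the up set, not merely in cells whose shadows are comparable. Both rely on the up set being upward closed, so that $\{\preceq u\}$ is disjoint from it exactly when $u$ is outside it. I would state this as a small auxiliary observation before the induction.
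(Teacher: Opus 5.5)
Your proposal is correct and follows essentially the same route as the paper: the single-level claim comes from layer-locality of $\delta$, $f$ and $\pi$, with the common $f(\rep \mu\super{i})$ term cancelling and the up set being upward closed. The seed-level claim comes from the passthrough property for levels above $j$, followed by induction using shadow monotonicity. Your version just unpacks the paper's one-line assertion that the noisy descent is layer-local into three separate facts, and handles the levels $i>j$ more cleanly via Remark \ref{rem:noisy-descent-passthrough}.
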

\begin{proof}
    Note that $\delta,\pi,f$ are layer-local and thus so is the noisy descent, i.e.,
    \begin{equation}
        (\desc_{\bm{\mu}} e_\pm)|_{u} = (\desc_{\bm{\mu}} e_{\pm}|^{\preceq u} )|_u
    \end{equation}
    Hence, we see that if $u\not\succeq w$, then $v\not\succeq w$ for any $v\preceq u$. Hence, $(\desc_{\bm{\mu}} e_\pm)|_u$ can only differ in $\{\succeq w\}$.
    Now consider the case where $e_+\super{\ell} + e_-\super{\ell}=\iota q\super{j}$.
    By the passthrough property of the noisy descent, we see that
    \begin{equation}
        e_+\super{i} = e_-\super{i} + \iota q\super{j}, \quad i\ge j
    \end{equation}
    In particular, this implies that
    \begin{equation}
        f(e_+\super{i} +\mu\super{i}) = f(\delta e_-\super{i} +\mu\super{i}), \quad i > j
    \end{equation}
    We also see that $e_\pm\super{j}$ only differ by $q\super{j}$ and thus must agree off some cell $w\super{j-1}$.
    We can apply induction, so that $e_\pm\super{i}$ agree off $w\super{i-1}$ where $w\super{i-1}=\sh w\super{i}$.
    Hence, $e_{\pm}\super{0}$ are equal everywhere except possibly in $\{\succeq w\super{0}\}$.
\end{proof}

\begin{example}[Intuition]
    \label{ex:stability-intuition}
    Let us provide a simple example to describe how we will utilize Lemma \ref{lem:RG-stability}.
    Consider the first pass $t=0$ in Theorem \ref{thm:noise-model-reduction} so that the seed level effective error $e\super{0}$ is the noisy descent of $e\super{\ell}$ relative to measurement error $\mu\super{\ell}$, where we have omitted the $t=0$ dependence.
    If $e\super{0}$ is a stabilizer, then we are done and thus we want to consider the opposite.
    
    Consider an increasing sequence of subsets $e_{\bullet}\super{\ell}\subseteq e\super{\ell}$ from $0\to e\super{\ell}$ by adding one qubit $q\in e\super{\ell}$ at a time, with noisy descent $e_{\bullet}\super{0}$ relative to $\mu\super{\ell}$.
    By Lemma \ref{lem:RG-stability}, the difference $e_{\bullet}\super{0} + e_{\bullet-1}\super{0}$ is \textit{local}.
    By Lemma \ref{lem:perturbation},
    \begin{equation}
        \delta (e_{\bullet}\super{0} + e_{\bullet-1}\super{0})= \noisy{\sigma}\super{0}_{\bullet} + \noisy{\sigma}\super{0}_{\bullet-1}
    \end{equation}
    If $e\super{0}$ is not a stabilizer, by Remark \ref{rem:setup}, there must be some intermediate point with $\noisy{\sigma}\super{0}_{\bullet} \ne 0$, and thus the goal is to show that even if $\noisy{\sigma}_{\bullet}\super{\ell}\ne 0$, its seed level descendent $\noisy{\sigma}\super{0}_{\bullet}$ has a high probability of being zero.
    This mirrors the $\mu\super{0}=0$ case for run failure in Remark \ref{rem:RG-run-failure}.
\end{example}

%% file: run-failure.tex
\section{Run Failure}
\label{sec:run-failure}
By Remark \ref{rem:RG-run-failure} (and tangentially, Theorem \ref{thm:noise-model-reduction}), the run failure of the RG decoder is governed by that of measurement error $\mu\super{\ell}$, without \textit{interference} from the physical errors and residue.
Hence, this section is devoted to the study of the history of $\mu\super{\ell}$ as a graphical representation, so that the probability of a seed level history $\mu\super{0}$ will be well-controlled by a connected component.
For exposition purposes, this section will mainly aim to show that $\mu\super{0}=0$ with high probability; its local stochastic behavior will be postponed to Theorem \ref{thm:noise-transfer}.

\subsection{Syndrome Graph}
The intuition is to construct a bounded degree graph $\cG$ that corresponds to the space of syndromes so that any syndrome $\noisy{\sigma}\super{\ell}$ (with history $\noisy{\bm{\sigma}}$) corresponds to a subgraph $\cG(\noisy{\bm{\sigma}})\subseteq \cG$. The connectivity (edges) of the graph $\cG$ is chosen so that restriction to each connected component of the subgraph commutes with cleaning and descent.
One can then provide bounds on the number of connected components to prove error correction properties.
In our case, the noisy history $\noisy{\bm{\sigma}}$ has a natural hierarchy structure, and thus it's natural to consider a graph $\cG$ with a corresponding hierarchy structure that emphasizes the layer-locality of the embedding. 
Specifically, 

\begin{definition}
    Let the \textbf{RG syndrome graph} $\cG$ be that with vertices $\cV = \bigsqcup_{i\ge 1} \cV\super{i}$ and edges defined as follows.
    \begin{itemize}
        \item \textbf{Vertices}. $\cV\super{i},i\in [\ell]$ are the cells $\cC\super{i-1}$ in level $i-1$ (meta-checks are not included)
        % \item \textbf{\textit{Horizontal} Nearest-Neighbor (NN) Edges}. $uv$ for $u,v\in \cV\super{i}$ if $u\preceq v$
        \item \textbf{\textit{Horizontal} Edges}. $uv$ for $u,v\in \cV\super{i}$ if there exists $w \in \cV\super{i}$ such that $u,v\preceq w$
        \item \textbf{\textit{Vertical} Edges}. $uv$ where $u\in \cV\super{i-1}$ and $v\in \cV\super{i}$ if $u=\sh v$
    \end{itemize}
    Note that the degree due to vertical edges is bounded by the layer sizes $\Delta_{\sh}$.
    Let $\Delta_{\cG}$ denote the max degree of $\cG$, which depends only on $\Delta_{\sh},\fw,\fq$.
\end{definition}

% Given a history, write $\cK\super{i}=\cK\cap\cV\super{i}$ for a component's vertices at level $i$. For $i\ge1$, $\dnum^{\cK}$ restricts chains to layers with shadows in $\cK\super{i}$; $\dnum_{\cK}$ restricts chains in $C\super{i-1}$ to cells in $\cK\super{i}$. At the seed, use cell restriction $\dnum_{\cK\super{1}}$. These define a restricted history, which need not be the output of rerunning the decoder on the restricted input. In particular, we do not assume that repair commutes with restriction.

\begin{remark}[Meta-Check vs RG Connectivity]
    \label{rem:meta-check-vs-RG-connectivity}
    By layer locality, we see that if $z,z'$ are in a common meta-check, then their shadows are equal or adjacent in the RG graph.
    Hence, the connectivity of the RG graph will inherit Lemma \ref{lem:half-fault} as used in Lemma \ref{lem:noisy-contraction}.
\end{remark}
% \begin{remark}[Next Nearest-Neighbors]
%     In the definition of the horizontal edges, one may suspect that it's more natural to only define \textit{nearest-neighbors}, i.e., $uv$ is an edge if $u\preceq v$. However, as we shall see in the proof of Theorem \ref{lem:correction-invariance}, or specifically, Eq. \eqref{eq:nnn-used}, this is insufficient. Instead, \textit{next nearest-neighbors} are necessary.
%     To emphasize this distinction, let $\Delta_{\rm{nn}}$ denote the max degree due to nearest neighbors and $\Delta_{\rm{nnn}}\le \Delta_{\rm{nn}}^2$ be that due to next nearest neighbors.
% \end{remark}

% Next, we want to define a mapping $\noisy{\bm{\sigma}}\mapsto \cG(\noisy{\bm{\sigma}})$ from a history to a collection of vertices (and thus a subgraph) of the RG syndrome graph, such that if $\cK$ is a connected component of $\cG(\noisy{\bm{\sigma}})$ where $\cK\super{i}=\cK\cap \cV\super{i}$ denotes the intersection, then the sequence 
% \begin{equation}
%     \noisy{\bm \sigma}|^\cK \equiv (\noisy{\sigma}\super{i}|^\cK), \quad \noisy{\sigma}\super{i}|^\cK \equiv \noisy{\sigma}\super{i}|^{\cK\super{i}}
% \end{equation}
% is the output of the RG decoder with input $\noisy{\sigma}\super{\ell}|^\cK$.
% In other words, each connected component corresponds to a \textit{history}.
% As will become apparent below, we define the mapping as follows

% Given a history, 

\begin{definition}[RG Subgraph]
    \label{def:RG-subgraph}
    Given a history $\noisy{\bm{\sigma}}$ (or noisy syndrome $\noisy{\sigma}\super{\ell}$), let the \textbf{RG syndrome subgraph} $\cG(\noisy{\bm{\sigma}})$ denote the collection of vertices in $v\in \cG$ (and thus corresponds to a subgraph) such that either $\noisy{\sigma}\super{i}|^v, (\rep \noisy{\sigma}\super{i})|^v, (\delta f\super{i})|^v$ or $f\super{i}|^{v}$ is nonzero, i.e.,
    \begin{equation}
        \cG(\noisy{\bm{\sigma}}) = \bigcup_i \sh(\noisy{\sigma}\super{i}\cup \rep \noisy{\sigma}\super{i} \cup \delta f\super{i} \cup f\super{i})
    \end{equation}
    If $\cK$ denotes a connected component of $\cG(\noisy{\bm{\sigma}})$, write $\cK\super{i}=\cK\cap\cV\super{i}$ for a component's vertices at level $i$. For $i\ge1$, $\dnum^{\cK}$ restricts chains to layers with shadows in $\cK\super{i}$; $\dnum_{\cK}$ restricts chains in $C\super{i-1}$ to cells in $\cK\super{i}$. 
    These define a restricted history, which need not be the output of rerunning the decoder on the restricted input. In particular, we do not assume that repair commutes with restriction.
\end{definition}

\begin{lemma}[Cell-Layer-Invariance]
    \label{lem:cell-layer-invariance}
    Let $\noisy{\bm{\sigma}}$ be a noisy history with RG syndrome graph $\cG(\noisy{\bm{\sigma}})$ and a connected component $\cK$. Then for any $i\in [\ell)$,
    \begin{equation}
        \noisy{\sigma}\super{i}|_{\cK} = \noisy{\sigma}\super{i}|^{\cK}
    \end{equation}
    For consistency, we also write $\noisy{\sigma}\super{0}|^{\cK} \equiv \noisy{\sigma}\super{0}|_{\cK}$.
    % A similarly inequality also holds for $\sigma_{\rm rep}$.
\end{lemma}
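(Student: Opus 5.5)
The claim is that, for each $i\in[\ell)$, a cell $u\in\cC\super{i}$ in the support of $\noisy{\sigma}\super{i}$ lies in $\cK\super{i+1}$ exactly when its shadow $\sh u$ lies in $\cK\super{i}$. Here $u$ is viewed as a vertex of $\cV\super{i+1}$ and $\sh u$ as a vertex of $\cV\super{i}$. Off the support both restrictions vanish, so proving this cellwise statement proves the lemma. The plan is to show that both $u$ and $\sh u$ are vertices of $\cG(\noisy{\bm{\sigma}})$. The vertical edge $u\,\sh u$ then places them in the same connected component, which gives both directions of the equivalence.

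First, $\sh u\in\cG(\noisy{\bm{\sigma}})$ holds by Definition~\ref{def:RG-subgraph}, since $\sh(\noisy{\sigma}\super{i})$ is one of the sets in the union. For $i=0$ there is no level-$0$ vertex set, and the statement holds by the stated convention $\noisy{\sigma}\super{0}|^{\cK}\equiv\noisy{\sigma}\super{0}|_{\cK}$. So assume $i\ge1$ for the shadow side.

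Second, to see that $u$ is itself a vertex of $\cG(\noisy{\bm{\sigma}})$, I would use the descent identity
\begin{equation}
    \iota\noisy{\sigma}\super{i} = \rep\noisy{\sigma}\super{i+1} + \delta f\super{i+1},
\end{equation}
which follows from the Descent property in Definition~\ref{def:cleaning}. It suffices to show that $\iota\noisy{\sigma}\super{i}$ is nonzero on the layer $\bar{Y}^u$. Then $\rep\noisy{\sigma}\super{i+1}$ or $\delta f\super{i+1}$ is nonzero there, so $u\in\sh(\rep\noisy{\sigma}\super{i+1}\cup\delta f\super{i+1})\subseteq\cG(\noisy{\bm{\sigma}})$.

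The natural tool is the observation in Remark~\ref{rem:relation-with-embedding-framework} that the diagonal blocks satisfy the SDR relations, in particular $\dnum_u\pi\dnum^u\,\dnum^u\iota\dnum_u=\mathrm{id}$. This makes $\dnum^u\iota\dnum_u$ injective. Since $\iota$ is lower-triangular, we can decompose
\begin{equation}
    \dnum^u\iota\noisy{\sigma}\super{i} = \dnum^u\iota\dnum_u\noisy{\sigma}\super{i} + \dnum^u\iota\,\dnum_{\prec u}\noisy{\sigma}\super{i}.
\end{equation}
The first term is nonzero because $u$ is in the support of $\noisy{\sigma}\super{i}$.

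The main obstacle is that the second term in this decomposition could cancel the first. My first attempt would apply $\pi$ instead. By $\pi\iota=\mathrm{id}$ and layer-locality of $\pi$, the value $\noisy{\sigma}\super{i}|_u$ is determined by $(\iota\noisy{\sigma}\super{i})|^{\preceq u}$. Hence some $v\preceq u$ has $\iota\noisy{\sigma}\super{i}$ nonzero on $\bar{Y}^v$, so $v\in\cG(\noisy{\bm{\sigma}})$ at level $i+1$. Since $v$ and $u$ are both $\preceq u$, there is a horizontal edge $vu$ by definition. Monotonicity of $\sh$ gives $\sh v\preceq\sh u$, hence a horizontal edge $\sh v\,\sh u$; together with the vertical edge $v\,\sh v$ this keeps everything in one component. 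This fallback handles $\sh u$ and $v$ cleanly. Closing the gap for $u$ itself, when only a strictly smaller $v$ is certified, is where I expect the real work. I would try an induction over $\supp\noisy{\sigma}\super{i}$ in increasing $\preceq$-order, peeling off $\iota\dnum_u\noisy{\sigma}\super{i}$ at minimal cells. If that fails, I would instead read $\dnum_{\cK}$ as restriction to cells adjacent, in $\cG$, to $\cK$, which the $v$-argument already supports.
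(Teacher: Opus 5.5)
Your strategy is the paper's. You show that a support cell $u$ of $\noisy{\sigma}\super{i}$ and its shadow $\sh u$ are both vertices of $\cG(\noisy{\bm{\sigma}})$, then join them by the vertical edge. The paper's ``by the descent'' is exactly your identity $\iota\noisy{\sigma}\super{i}=\rep\noisy{\sigma}\super{i+1}+\delta f\super{i+1}$.

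As written, however, you leave open the only nontrivial step: that $\iota\noisy{\sigma}\super{i}$ is nonzero on $\bar{Y}^u$. Neither of your remedies is acceptable as stated. The induction over the support is not carried out. Reading $\dnum_{\cK}$ as restriction to cells adjacent to $\cK$ changes the statement you are asked to prove.

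The missing idea is a degree observation.
\begin{itemize}
\item Since $\noisy{\sigma}\super{i}\in Z(C\super{i})$, its support consists only of $z$-checks. No $z$-check lies strictly below another $z$-check, because $\prec$ only relates $x\prec q\prec z$.
\item Hence $\dnum_{\prec u}\noisy{\sigma}\super{i}=0$. The cross term $\dnum^u\iota\,\dnum_{\prec u}\noisy{\sigma}\super{i}$ that you feared could cancel the diagonal term is therefore identically zero.
\item Equivalently, layer-locality forces $\iota$ restricted to $Z(C\super{i})$ to be block-diagonal. For an $x$- or $q$-cell $w$, $\dnum_{\preceq w}$ annihilates every $Z$-chain. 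For a $z$-cell $w$, it returns $c|_w$. So $\iota z$ is supported in $\bar{Y}^z$ alone.
\item Therefore $\dnum^u\iota\noisy{\sigma}\super{i}=\dnum^u\iota u\neq 0$, by the injectivity you already derived.
\end{itemize}

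The same fact also closes your $\pi$-fallback directly. The certified $v\preceq u$ carries part of $\iota\noisy{\sigma}\super{i}$, which lives only on $z$-layers. So $v$ is a $z$-check with $v\preceq u$, which forces $v=u$. No induction is needed.

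With this one line added, your argument is complete and coincides with the paper's.
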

\begin{proof}
    Note that we can regard the left and right side as a subsets of $z$ checks in $C\super{i}$.
    If $\noisy{\sigma}\super{i}|_z\ne0$ for $z\in \cK\super{i+1}$, then $\noisy{\sigma}\super{i}|^u\ne 0$ for $u=\sh z$.
    By definition, $u\in \cG(\noisy{\bm{\sigma}})$ but since $z\in \cK\super{i+1}$, we see that $u\in \cK\super{i}$. Hence, 
    \begin{equation}
        \noisy{\sigma}\super{i}|_{\cK} \subseteq \noisy{\sigma}\super{i}|^{\cK}
    \end{equation}
    Conversely, consider $z \in \noisy{\sigma}\super{i}|^\cK$ so that $\noisy{\sigma}\super{i}|_z\ne 0$ for some $\sh z \in \cK\super{i}$.
    By the descent, we see that either $\rep \noisy{\sigma}\super{i+1}|^z\ne 0$ or $(\delta f(\rep \noisy{\sigma}\super{i+1}))|^z \ne 0$.
    Hence, $z\in \cG(\noisy{\bm{\sigma}})$ and thus $z\in \cK\super{i+1}$.
    Hence, the other direction is also proven.
\end{proof}
\begin{lemma}[Correction Invariance]
    \label{lem:correction-invariance}
    Let $\noisy{\bm{\sigma}}$ be a history with syndrome subgraph $\cG(\noisy{\bm{\sigma}})$. If $\cK$ is a connected component of $\cG(\noisy{\bm \sigma})$ and $i\in [\ell]$, then $\dnum^\cK \rep \noisy{\sigma}\super{i}$ is clean and
    \begin{equation}
        \delta f(\dnum^\cK \rep \noisy{\sigma}\super{i}) = \dnum^\cK\delta f(\rep \noisy{\sigma}\super{i}), \quad f(\dnum^\cK \rep \noisy{\sigma}\super{i}) = \dnum^\cK f(\rep \noisy{\sigma}\super{i})
    \end{equation}
    In particular,
    \begin{equation}
        \dnum_{\cK} \desc \rep \noisy{\sigma}\super{i}  = \desc \dnum^{\cK} \rep \noisy{\sigma}\super{i}
    \end{equation}
\end{lemma}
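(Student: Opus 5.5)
The plan is to use layer-locality of $\bar\delta$ and $f$, and to show that the component $\cK$ is separated from the rest of $\cG(\noisy{\bm\sigma})$ by at least one RG-graph step. Write $\sigma\equiv\rep\noisy{\sigma}\super{i}$, a clean syndrome. Split it as $\sigma=\dnum^\cK\sigma+\dnum^{\cK^c}\sigma$, where $\cK^c$ is the set of remaining vertices of $\cG(\noisy{\bm\sigma})\cap\cV\super{i}$. All vertices in the support of $\sigma$ lie in $\cG(\noisy{\bm\sigma})$ by Definition \ref{def:RG-subgraph}.

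\textbf{Cleanness.} Each meta-check $a\in MZ$ restricts $\sigma$ to $z$-checks whose shadows are comparable or adjacent, by Remark \ref{rem:meta-check-vs-RG-connectivity}. More precisely, layer-locality of $\bar\delta$ gives $\sh z\preceq \sh a$ for all $z\prec a$. So any two shadows in the support of $\sigma$ touched by one meta-check are joined by a horizontal edge through $w=\sh a$. Hence no meta-check sees both $\dnum^\cK\sigma$ and $\dnum^{\cK^c}\sigma$, since otherwise $\cK$ would not be a full component. It follows that $\bar\delta\dnum^\cK\sigma=\dnum^{\cK}\bar\delta\sigma$ on the meta-checks touching $\cK$, and $\bar\delta\dnum^\cK\sigma=0$ elsewhere. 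Since $\bar\delta\sigma=0$, the chain $\dnum^\cK\sigma$ is clean.

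\textbf{Commuting with $f$.} Fix a layer $u\in\cV\super{i}$. By layer-locality, $\dnum^u f(\sigma)$ depends only on $\dnum^{\preceq u}\sigma$. I will distinguish two cases.

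In the first case, $\{\preceq u\}$ meets the support of $\dnum^\cK\sigma$. Then every shadow $v\preceq u$ carrying $\sigma$ is horizontally adjacent to that vertex through the common upper bound $u$. So $\dnum^{\preceq u}\sigma=\dnum^{\preceq u}\dnum^\cK\sigma$, which gives $\dnum^u f(\sigma)=\dnum^u f(\dnum^\cK\sigma)$. Moreover, if $\dnum^u f(\sigma)\neq0$, then $u\in\cG(\noisy{\bm\sigma})$ and $u$ is adjacent to $\cK$. Hence $u\in\cK$, and this value equals $\dnum^u\dnum^\cK f(\sigma)$.

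In the second case, $\{\preceq u\}$ misses $\dnum^\cK\sigma$. Then $\dnum^{\preceq u}\dnum^\cK\sigma=0$, so $\dnum^u f(\dnum^\cK\sigma)=\dnum^u f(0)=0$, and I must show $\dnum^u\dnum^\cK f(\sigma)=0$. This is the main obstacle. A layer $u\in\cK$ could carry $f(\sigma)\neq0$ that is generated purely by $\dnum^{\cK^c}\sigma$ below it. I expect the argument to be that such a $u$ is adjacent, via $u$ itself, to the shadow $v\preceq u$ of some $\cK^c$ support. This would put $v$ in the same component, contradicting $v\in\cK^c$. The nonzero $f(\sigma)|^u$ is exactly what puts $u$ in $\cG(\noisy{\bm\sigma})$, and $u\succeq v$ gives the edge $uv$ because $u,v\preceq u$. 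I would also need to check that $\dnum^{\preceq u}\dnum^{\cK^c}\sigma\neq0$ is forced whenever $f(\sigma)|^u\neq0$, using $f(0)=0$ and layer-locality.

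\textbf{Commuting with $\delta f$ and $\desc$.} The same two-case argument applied to $\delta f$ gives $\delta f(\dnum^\cK\sigma)=\dnum^\cK\delta f(\sigma)$. It uses that $\delta f$ is layer-local and that its nonzero layers lie in $\cG(\noisy{\bm\sigma})$ by definition. Finally,
\begin{equation*}
\desc\dnum^\cK\sigma=\pi\bigl(\dnum^\cK\sigma+\dnum^\cK\delta f(\sigma)\bigr)=\pi\dnum^\cK(\sigma+\delta f(\sigma)).
\end{equation*}
Because $\sigma+\delta f(\sigma)\in\im\iota$, and $\pi$ is layer-local with $\pi\iota=\mathrm{id}$, restricting to layers over $\cK\super{i}$ before applying $\pi$ equals restricting to cells in $\cK\super{i}$ afterwards. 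This gives $\dnum_\cK\desc\sigma$, the last claim, in the spirit of Lemma \ref{lem:cell-layer-invariance}.
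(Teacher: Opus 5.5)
Your proposal is correct and follows essentially the paper's argument. Cleanness comes from Remark~\ref{rem:meta-check-vs-RG-connectivity}, and the commutation comes from layer-locality of $f$ and $\delta f$ together with the fact that any two layers below a common cell are adjacent in $\cG$. You split cases on whether $\{\preceq u\}$ meets the support of $\dnum^{\cK}\sigma$, whereas the paper splits on $v\in\cK\super{i}$, $v\not\succeq\cK\super{i}$, or $v\succeq\cK\super{i}$ with $v\notin\cK\super{i}$, which is only a cosmetic difference. Your tentative resolution of Case 2 is right: for $u\in\cK$, every $\sigma$-carrying layer below $u$ is adjacent to $u$ and hence lies in $\cK$, so $\dnum^{\preceq u}\sigma=0$ and $\dnum^u f(\sigma)=\dnum^u f(0)=0$.

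One caveat concerns the final $\desc$ identity. Layer-locality of $\pi$ together with $\pi\iota=\mathrm{id}$ does not by itself let you swap the two restrictions. What makes the swap work is that $\iota$ sends each $z$-check into its own layer, so $\dnum^{\cK}\iota=\iota\dnum_{\cK}$. This holds for the explicit $\bar{\Theta}$, and the paper also uses it tacitly, e.g.\ in Lemma~\ref{lem:cell-layer-invariance}.
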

\begin{proof}
    % Let's first prove the simpler second equality. Indeed, by layer-locality,
    % \begin{equation}
    %     (\delta f(\rep \noisy{\sigma}\super{i})) |^{\cK} = (\delta f(\rep \noisy{\sigma}\super{i}|^{\preceq \cK})) |^{\cK}
    % \end{equation}
    % Note that
    % \begin{equation}
    %     \rep \noisy{\sigma}\super{i} = \sum_{\cK'}\rep \noisy{\sigma}\super{i}|^{\cK'}
    % \end{equation}
    % where the summation is over all connected components $\cK'$ of $\cG(\noisy{\bm \sigma})$. Note that by definition, $\cK'\cap \{\preceq \cK\}$ is empty if $\cK'\ne \cK$. Hence, we have
    % \begin{equation}
    %     (\delta f(\rep \noisy{\sigma}\super{i})) |^{\cK} = (\delta f(\dnum^{\cK} \rep \noisy{\sigma}\super{i})) |^{\cK}
    % \end{equation}
    Note that by Remark \ref{rem:meta-check-vs-RG-connectivity}, a meta-check can only be adjacent to at most one $\dnum^{\cK} \rep \noisy{\sigma}\super{i}$ among connected components $\cK$ of $\cG(\noisy{\bm{\sigma}})$ and thus $\dnum^{\cK} \rep \noisy{\sigma}\super{i}$ is a clean syndrome.
    For the first equality, if $v\in \cK\super{i}$, then by layer-locality,
    \begin{equation}
        \delta f(\dnum^{\cK} \rep \noisy{\sigma}\super{i})|^v = \delta f(\dnum^{\cK,\preceq v} \rep \noisy{\sigma}\super{i})|^v 
    \end{equation}
    Note that if $u\preceq v\in \cK\super{i}$ is such that $\rep \noisy{\sigma}\super{i}|^u\ne 0$, then $u\in \cK\super{i}$ and thus
    \begin{align}
        \dnum^{\cK,\preceq v} \rep \noisy{\sigma}\super{i} &= \dnum^{\preceq v} \rep \noisy{\sigma}\super{i}\\
        \dnum^{v} \delta f(\dnum^{\cK} \rep \noisy{\sigma}\super{i})) &= \dnum^{v} \delta f(\rep \noisy{\sigma}\super{i})
    \end{align}
    Note that if $v\not\succeq \cK\super{i}$, then the first equality trivially holds. 
    Hence, consider the final scenario where $v\succeq \cK\super{i}$ but $\notin \cK\super{i}$.
    Then
    \begin{align}
        \rep \noisy{\sigma}\super{i} &= \sum_{\cK'} \dnum^{\cK'} \rep \noisy{\sigma}\super{i} \\
        \dnum^{\preceq v} \rep \noisy{\sigma}\super{i} &= \sum_{\cK'} \dnum^{\cK',\preceq v} \rep \noisy{\sigma}\super{i} \\
        \dnum^{\preceq v} \rep \noisy{\sigma}\super{i} &= \dnum^{\cK,\preceq v} \rep \noisy{\sigma}\super{i}
        \label{eq:nnn-used}
    \end{align}
    where the last equality is the only time we use the \textit{next nearest-neighbor} property of the horizontal edges.
    Hence,
    \begin{equation}
        \dnum^{v} \delta f(\dnum^{\cK} \rep \noisy{\sigma}\super{i}) =\dnum^v  \delta f(\rep \noisy{\sigma}\super{i}) =0
    \end{equation}
    The statement then follows.
    A similar argument proves the second equality.
\end{proof}

% \begin{remark}[Valid Syndrome]
%     If input $\noisy{\sigma}\super{\ell}$ is a valid syndrome with valid descent $\noisy{\sigma}\super{0}|_{\cK}$, then so is $\noisy{\sigma}\super{\ell}|^{\cK}$ and rerunning the decoder on this restricted input reproduces the restricted syndrome history, with syndrome subgraph $\cK$.
% \end{remark}

\begin{lemma}[Noisy Contraction]
    \label{lem:noisy-contraction}
    Let $\noisy{\bm{\sigma}}$ be a noisy history with RG syndrome subgraph $\cG(\noisy{\bm{\sigma}})$ and connected component $\cK$.
    If $i\in [\ell]$, then
    \begin{equation}
        |\noisy{\sigma}\super{i-1}|^{\cK} \le \kappa | \rep \noisy{\sigma}\super{i}|^{\cK} \le \noisy{\kappa} |\noisy{\sigma}\super{i}|^{\cK}, \quad \noisy{\kappa} = \frac{2\kappa}{1-\ve} <1
    \end{equation}
    where $\noisy{\kappa}$ is the noisy contraction in Remark \ref{rem:setup}.
\end{lemma}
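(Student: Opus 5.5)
The plan is to prove the two inequalities separately, reducing each to a property already established for the full (unrestricted) history by restricting to the component $\cK$. Throughout I identify level-$(i-1)$ cells with vertices of $\cV\super{i}$, so that $\dnum^{\cK}$ on level $i$ and $\dnum_{\cK}$ on level $i-1$ refer to the same vertex set $\cK\super{i}$.

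For the first inequality I use Lemma \ref{lem:cell-layer-invariance}. It identifies $\noisy{\sigma}\super{i-1}|^{\cK}$ with the cell restriction $\dnum_{\cK}\noisy{\sigma}\super{i-1}$; at the seed this holds by the stated convention. Since $\noisy{\sigma}\super{i-1}=\desc\rep\noisy{\sigma}\super{i}$, Lemma \ref{lem:correction-invariance} gives
\begin{equation}
\dnum_{\cK}\noisy{\sigma}\super{i-1}=\desc\,\dnum^{\cK}\rep\noisy{\sigma}\super{i},
\end{equation}
and the same lemma says that $\dnum^{\cK}\rep\noisy{\sigma}\super{i}$ is clean. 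The descent property of the cleaning (Definition \ref{def:cleaning}) applies to any clean syndrome. It therefore yields $|\desc\,\dnum^{\cK}\rep\noisy{\sigma}\super{i}|\le\kappa|\dnum^{\cK}\rep\noisy{\sigma}\super{i}|$, which is exactly $|\noisy{\sigma}\super{i-1}|^{\cK}\le\kappa|\rep\noisy{\sigma}\super{i}|^{\cK}$.

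For the second inequality I decompose $\dnum^{\cK}\rep\noisy{\sigma}\super{i}$ into connected components $\fs$ of $\rep\noisy{\sigma}\super{i}$, where connectivity is taken relative to the meta-checks. By Remark \ref{rem:meta-check-vs-RG-connectivity}, two $z$-checks sharing a meta-check have shadows that are equal or adjacent in $\cG$. Every $z\in\rep\noisy{\sigma}\super{i}$ has $\sh z\in\cG(\noisy{\bm{\sigma}})$ by Definition \ref{def:RG-subgraph}, so each meta-component $\fs$ has all its shadows in a single RG component. Hence $\dnum^{\cK}\rep\noisy{\sigma}\super{i}$ is a disjoint union of entire components $\fs$. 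The near half fault property (Definition \ref{def:repair}) gives $|\fs|\le\frac{2}{1-\ve}|\noisy{\sigma}\super{i}\cap\fs|$ for each such $\fs$. Summing over the components $\fs$ with shadows in $\cK\super{i}$, and noting that the sets $\noisy{\sigma}\super{i}\cap\fs$ are disjoint and contained in $\noisy{\sigma}\super{i}|^{\cK}$, I get
\begin{equation}
|\rep\noisy{\sigma}\super{i}|^{\cK}\le\frac{2}{1-\ve}|\noisy{\sigma}\super{i}|^{\cK}.
\end{equation}
Multiplying by $\kappa$ gives the second inequality, with constant $\noisy{\kappa}$.

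Finally, $\noisy{\kappa}<1$ is a condition on the constants in Remark \ref{rem:setup}. I read it as part of the standing setup, namely $\ve<1-2\kappa$. With Theorem \ref{thm:validity}'s $\kappa=4/9$ this requires $\ve<1/9$, which holds for MW repair since there $\ve=0$.

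The main obstacle is the component-splitting step in the second inequality. The meta-check graph lives on level-$i$ $z$-checks while $\cK$ lives on shadows, so I must check that meta-adjacency implies equal-or-adjacent shadows. That requires a meta-check and its $z$-checks to have comparable shadows under the monotone shadow map, and then the horizontal edges, which include next-nearest neighbours via a common upper cell $w$, to connect them. If this fails I would fall back on enlarging the horizontal edges by a bounded radius, which keeps $\Delta_{\cG}$ bounded.
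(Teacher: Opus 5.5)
Your proposal is correct and follows the paper's proof: the second inequality comes from splitting $\dnum^{\cK}\rep\noisy{\sigma}\super{i}$ into whole meta-check components $\fs$ (via Remark~\ref{rem:meta-check-vs-RG-connectivity}) and summing the near half fault bound over them. The paper dispatches the first inequality with ``follows from the descent''; your explicit route through Lemmas~\ref{lem:cell-layer-invariance} and \ref{lem:correction-invariance} (cleanness of the restriction, then the contraction $\kappa$) is exactly the step it leaves implicit.
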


\begin{proof}
    By Definition \ref{def:repair}, we see that if $\fs$ is a connected component (relative to meta-checks) of $\rep \noisy{\sigma}\super{i}$, then
    \begin{equation}
        |\fs| \le \frac{2}{1-\ve} |\noisy{\sigma}\super{i}\cap \fs|
    \end{equation}
    Also note that by Remark \ref{rem:meta-check-vs-RG-connectivity}, each $\fs$ must have shadow contained in a unique connected component of $\cG(\noisy{\bm{\sigma}})$,
    Hence, by definition
    \begin{align}
        \dnum^{\cK} \rep \noisy{\sigma}\super{i} &=\sum_{\sh \fs \subseteq \cK} \fs \\
        | \rep \noisy{\sigma}\super{i}|^{\cK} &\le \frac{2}{1-\ve}\sum_{\sh \fs \subseteq \cK} |\noisy{\sigma}\super{i}\cap \fs| \\
        &\le \frac{2}{1-\ve}|\noisy{\sigma}\super{i}|^{\cK}
    \end{align}
    The statement then follows from the descent.
\end{proof}

\subsection{Witness}

\begin{definition}[Witness]
    Let $\noisy{\bm{\sigma}}$ be a noisy history with RG subgraph $\cG(\noisy{\bm{\sigma}})$. 
    A component $\cK$ with $\noisy\sigma\super{0}|_{\cK}\ne0$ is a \textbf{witness}. 
    A witness exists iff the descended seed syndrome $\noisy{\sigma}\super{0}\ne 0$
    % We write $\noisy{\bm{\sigma}}_{\star}=\noisy{\bm{\sigma}}|^{\cK}$ to be that restricted to layers in the witness $\cK$.
\end{definition}

\begin{lemma}[Component Size]
    \label{lem:witness-size}
    If noisy history $\noisy{\bm{\sigma}}$ has connected component $\cK$, then 
    \begin{equation}
         |\cK| = \Omega(\noisy{\kappa}^{-\ell})  |\noisy{\sigma}\super{0}|^{\cK},\quad |\cK| = O(|\noisy{\sigma}\super{\ell}|^{\cK})
    \end{equation}
    where hidden constants in $\Omega,O$ depends only on $\noisy{\kappa},c_f,\ve$.
    Note that if $\noisy{\sigma}\super{\ell}$ is valid, then we can replace $\noisy{\kappa}\mapsto \kappa$.
\end{lemma}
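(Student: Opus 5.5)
Both bounds will be read level by level off the restricted history $\noisy{\bm{\sigma}}|^{\cK}$, using Lemma \ref{lem:noisy-contraction}, Lemma \ref{lem:cell-layer-invariance} and the bounded fill of the cleaning. Write $a_i \equiv |\noisy{\sigma}\super{i}|^{\cK}$ and $b_i\equiv|\rep\noisy{\sigma}\super{i}|^{\cK}$.

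\textbf{Lower bound.} Iterating Lemma \ref{lem:noisy-contraction} from $i=\ell$ down to $i=1$ gives $a_{i-1}\le \noisy{\kappa}\, a_i$, hence $a_i \ge \noisy{\kappa}^{-i} a_0$ for every $i\in[0,\ell]$. By Lemma \ref{lem:cell-layer-invariance}, the syndrome $\noisy{\sigma}\super{i}|^{\cK}$ lies in layers whose shadows are vertices of $\cK\super{i}$. Each layer has at most $\Delta_{\sh}$ cells, so $a_i\le \Delta_{\sh}|\cK\super{i}|\le \Delta_{\sh}|\cK|$. Taking $i=\ell$ yields $|\cK|\ge \Delta_{\sh}^{-1}\noisy{\kappa}^{-\ell}|\noisy{\sigma}\super{0}|^{\cK}$. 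Summing over $i$ instead only improves the constant. If $\noisy{\sigma}\super{\ell}$ is valid, every level is valid and the repair is trivial, $\nu\super{i}=0$. The middle inequality of Lemma \ref{lem:noisy-contraction} then reads $a_{i-1}\le\kappa a_i$, which replaces $\noisy{\kappa}$ by $\kappa$.

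\textbf{Upper bound.} Here I would bound $|\cK\super{i}|$ level by level. By Definition \ref{def:RG-subgraph}, $\cK\super{i}$ is covered by the shadows of $\noisy{\sigma}\super{i}$, $\rep\noisy{\sigma}\super{i}$, $\delta f\super{i}$ and $f\super{i}$, all restricted to $\cK$. Lemma \ref{lem:correction-invariance} lets me treat $f\super{i}|^{\cK}$ as $f$ applied to the clean syndrome $\dnum^{\cK}\rep\noisy{\sigma}\super{i}$. The bounded fill property then gives
\begin{equation}
|\cK\super{i}|\le |\sh\noisy{\sigma}\super{i}|^{\cK}|+|\sh\rep\noisy{\sigma}\super{i}|^{\cK}|+|\sh\{\succeq f(\dnum^{\cK}\rep\noisy{\sigma}\super{i})\}| \le a_i+(1+c_f)b_i,
\end{equation}
where I use that the shadow of $\delta f\super{i}$ lies in $\sh\{\succeq f\super{i}\}$ and that $|\sh c|\le|c|$. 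Lemma \ref{lem:noisy-contraction} gives $b_i\le \frac{2}{1-\ve}a_i$, so $|\cK\super{i}| = O(a_i)$. Combining this with $a_i\le \noisy{\kappa}^{\ell-i}a_\ell$ and summing the geometric series (which converges since $\noisy{\kappa}<1$) gives $|\cK|=O(a_\ell)$, with constant $(1+\frac{2(1+c_f)}{1-\ve})/(1-\noisy{\kappa})$.

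\textbf{Main obstacle.} The delicate step is the bookkeeping at the bottom and top of the index range. One point is the vertex set $\cV\super{1}=\cC\super{0}$, which carries the seed syndrome $\noisy{\sigma}\super{0}$; the other is confirming that the fill bound applies to the component-restricted cleaning rather than to the global one. Lemma \ref{lem:correction-invariance} handles the second point. For the first, I would note that $\cK\super{1}$ is still covered by the level-1 terms, up to the seed cells of $\noisy{\sigma}\super{0}|_{\cK}$, whose number is at most $a_0\le a_\ell$. That keeps the bound $O(|\noisy{\sigma}\super{\ell}|^{\cK})$, with constants depending only on $\noisy{\kappa},c_f,\ve$ and the bounded layer size.
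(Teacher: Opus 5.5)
\paragraph{Upper bound.} This part is essentially the paper's argument. You cover $\cK\super{i}$ by the shadows of $\noisy{\sigma}\super{i}|^{\cK}$, of $\rho\super{i}\equiv(\rep\noisy{\sigma}\super{i})|^{\cK}$ and of $\{\succeq f(\rho\super{i})\}$. You use Lemma~\ref{lem:correction-invariance} so that the fill bound applies to the clean chain $\rho\super{i}$. You control $|\rho\super{i}|$ by near half fault and sum the geometric series from Lemma~\ref{lem:noisy-contraction}. Two small remarks:
\begin{itemize}
\item To land on $a_i+(1+c_f)b_i$, your third term should be $|\sh\{\succeq f(\rho\super{i})\}\backslash \sh\rho\super{i}|$, as the paper writes it. With the full $|\sh\{\succeq f(\rho\super{i})\}|$ you get $a_i+(2+c_f)b_i$, which is harmless.
\item Your concern about $\cV\super{1}$ is moot. $\cG(\noisy{\bm{\sigma}})$ has no separate seed term, and $\iota\noisy{\sigma}\super{0}=\rep\noisy{\sigma}\super{1}+\delta f\super{1}$ already puts every cell of $\noisy{\sigma}\super{0}$ into the level-$1$ shadows.
\end{itemize}

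\paragraph{Lower bound.} Here you take a different, cruder route, and it does not prove the constant dependence the statement claims. Bounding $a_\ell\le\Delta_{\sh}|\cK\super{\ell}|$ by counting cells per layer is valid. However, it puts a factor $\Delta_{\sh}^{-1}$ into the constant, whereas the statement claims dependence only on $\noisy{\kappa},c_f,\ve$. Downstream, the exponent $c$ in Theorem~\ref{thm:witness-probability} makes the same claim.

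The fact you attribute to Lemma~\ref{lem:cell-layer-invariance} is just the definition of $\dnum^{\cK}$. The lemma's actual content is $\noisy{\sigma}\super{i-1}|^{\cK}=\noisy{\sigma}\super{i-1}|_{\cK}$: the cells counted in $a_{i-1}$ are themselves vertices of $\cK\super{i}$. This gives $|\cK\super{i}|\ge a_{i-1}$ with no layer-size factor. Summing over $i\in[\ell]$ with $a_{i-1}\ge\noisy{\kappa}^{-(i-1)}a_0$ gives the paper's bound $|\cK|\ge\frac{\noisy{\kappa}^{-\ell}-1}{\noisy{\kappa}^{-1}-1}\,|\noisy{\sigma}\super{0}|^{\cK}$.

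Since $\Delta_{\sh}$ is a fixed setup constant, your version suffices for every later use in the paper. As written, though, it proves a slightly weaker statement than the one claimed, and the fix is the one-line use of cell-layer invariance above.
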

\begin{proof}
    Let
    \begin{equation}
        \noisy{\tau}\super{i} = \noisy{\sigma}\super{i}|^{\cK},\quad  \rho\super{i} = (\rep \noisy{\sigma}\super{i})|^{\cK}
    \end{equation}
    Note that if $i\ge 1$, then
    \begin{align}
        \cK\super{i} &= \sh (\noisy{\tau}\super{i} \cup \rho\super{i} \cup \delta f(\rho\super{i})\cup f(\rho\super{i})) \\
        &= \sh \noisy{\tau}\super{i} \cup \sh ( \rho\super{i} \backslash \noisy{\tau}\super{i}) \cup \sh (\delta f(\rho\super{i}) \cup f(\rho\super{i})) \backslash \sh (\rho\super{i}) 
    \end{align}
    By definition, it's straightforward to check that
    \begin{equation}
        |\sh \noisy{\tau}\super{i}| \le |\noisy{\tau}\super{i}|, \quad  |\sh ( \rho\super{i} \backslash \noisy{\tau}\super{i})| \le |\rho\super{i} \backslash \noisy{\tau}\super{i}| \le \frac{1+\ve}{1-\ve} |\noisy{\tau}\super{i}|
    \end{equation}
    By the fill bound property of the level map, we have
    \begin{equation}
        |\cK\super{i}| \le \frac{2}{1-\ve} (1+c_f) |\noisy{\tau}\super{i}|
    \end{equation}
    Hence, by Lemma \ref{lem:noisy-contraction},
    \begin{equation}
        |\cK| \le c|\noisy{\bm{\tau}}|  \le \frac{1+c_f}{1-\noisy{\kappa}} \frac{2}{1-\ve} |\noisy{\tau}\super{\ell}|
    \end{equation}
    In the case where $\cK$ is the witness, by Lemma \ref{lem:cell-layer-invariance}, we see that
    \begin{equation}
        |\cK\super{i}| \ge |\noisy{\sigma}\super{i-1}|_{\cK} =|\noisy{\sigma}\super{i-1}|^{\cK}
    \end{equation}
    Hence, by Lemma \ref{lem:noisy-contraction},
    \begin{equation}
        |\cK| \ge \sum_{i=1}^{\ell} |\noisy{\tau}\super{i-1}| =\frac{\noisy{\kappa}^{-\ell} - 1}{\noisy{\kappa}^{-1}-1} |\noisy{\sigma}\super{0}|^{\cK}
    \end{equation}
    The case where $\noisy{\sigma}\super{\ell}$ is valid can be similarly derived.
\end{proof}

\begin{lemma}[Witness Entropy]
    \label{lem:witness-entropy}
    The number of connected clusters of the RG syndrome graph $\cG$ meeting $\cV\super{1}$ and of size $M$ is bounded by 
    \begin{equation}
        |\cC\super{0}| (e\Delta_{\cG})^M
    \end{equation}
\end{lemma}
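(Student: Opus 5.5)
The plan is the standard lattice-animal counting bound, applied to the bounded-degree graph $\cG$. A connected cluster meeting $\cV\super{1}$ contains at least one vertex of $\cV\super{1}$. Recall that $\cV\super{1}$ is identified with the seed cells $\cC\super{0}$. So I will charge each cluster to a choice of root $v\in\cV\super{1}$, which gives at most $|\cV\super{1}| = |\cC\super{0}|$ choices. It then suffices to show that, for a fixed vertex $v$ in a graph of maximum degree $\Delta_{\cG}$, the number of connected vertex subsets of size $M$ containing $v$ is at most $(e\Delta_{\cG})^M$. A cluster may contain several roots, so this overcounts, but that only helps the upper bound.

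For the rooted count I will use the spanning-tree/DFS encoding. Each connected set $\cK\ni v$ of size $M$ has a spanning tree, which I root at $v$. A depth-first traversal of this tree, recorded as the sequence of neighbor indices chosen at each step (each index in $[\Delta_{\cG}]$, plus forward/backtrack moves), determines $\cK$. The number of such walks is bounded by the number of rooted plane trees on $M$ vertices with labeled edge directions. The sharper classical bound is that the number of subtrees of size $M$ containing $v$ in the infinite $\Delta$-regular tree equals
\begin{equation}
    \frac{1}{(\Delta-1)M+1}\binom{\Delta M}{M-1},
\end{equation}
and this dominates the count in any graph of maximum degree $\Delta$, because the universal cover surjects onto rooted connected sets. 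Using $\binom{\Delta M}{M-1}\le \binom{\Delta M}{M}\le (e\Delta M/M)^M=(e\Delta)^M$ gives the claim.

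An alternative, which is self-contained and avoids the universal-cover argument, is an induction via the exploration process. Fix an ordering of each vertex's neighbors and grow $\cK$ from $v$ by breadth-first exploration. Each of the at most $\Delta_{\cG} M$ probed edges is either accepted or rejected, with exactly $M-1$ acceptances, so there are at most $\binom{\Delta_{\cG}M}{M-1}\le (e\Delta_{\cG})^M$ outcomes. The map from $\cK$ to its acceptance pattern is injective because the exploration is deterministic given $\cK$. This is the version I would actually write down.

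The main obstacle is minor: making the exploration well-defined and injective. In particular, one must ensure that probing an edge to an already-visited vertex is not counted as a new acceptance, and that the total number of probes is at most $\Delta_{\cG}M$. Since each explored vertex probes at most $\Delta_{\cG}$ edges and $M$ vertices are explored, this bound holds. The final step is to multiply by the $|\cC\super{0}|$ root choices.
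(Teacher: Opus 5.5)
Your proposal is correct and follows the paper's route. The paper fixes a root $v\in\cV\super{1}$ (at most $|\cC\super{0}|$ choices) and cites Gottesman's standard bound on connected clusters containing a fixed vertex; your exploration-process argument is a self-contained proof of that cited bound. Two precision points. First, injectivity does not come from the exploration being deterministic given $\cK$. It comes from the fact that the accept/reject record alone lets one replay the exploration and recover $\cK$, and padding the record with rejections to length $\Delta_{\cG}M$ then justifies the $\binom{\Delta_{\cG}M}{M-1}$ count. Second, your first sketch's formula is off: the number of rooted $M$-vertex subtrees of the infinite $\Delta$-ary tree is $\frac{1}{(\Delta-1)M+1}\binom{\Delta M}{M}$, not $\frac{1}{(\Delta-1)M+1}\binom{\Delta M}{M-1}$, since yours gives $6/5$ instead of $3$ for $\Delta=3$, $M=2$. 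This is harmless, because the correct count is still at most $\binom{\Delta M}{M}\le(e\Delta)^M$.
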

\begin{proof}
    Follows from \cite{gottesman2013fault} by fixing any vertex $v\in \cV\super{1}$ and counting the number of connected clusters of size $M$ containing $v$.
\end{proof}

\begin{theorem}[Witness Probability]
    \label{thm:witness-probability}
    Let $\mu\super{\ell}$ be local stochastic with noisy history $\bm{\mu}$ and RG subgraph $\cG(\bm{\mu})$. 
    Then there exists $p_{\rm wit}>0$, depending only on constants in Eq. \eqref{eq:constants}, such that if $p/p_{\rm wit}$ is sufficiently small (depending only on $\noisy{\kappa},\ve,c_f$), then 
    \begin{equation}
        \Pr[\cG(\bm{\mu}) \text{ has witness}] \le 2|\cC\super{0}| \left(\frac{p}{p_{\rm wit}}\right)^{c\noisy{\kappa}^{-\ell}}
    \end{equation}
    where $c>0$ depends only $\noisy{\kappa},\ve,c_f$.
\end{theorem}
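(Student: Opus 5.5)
We will bound the witness probability by a union bound over candidate connected clusters of $\cG$, combined with the local stochasticity of $\mu\super{\ell}$ through a cover argument. Fix a witness $\cK$ of $\cG(\bm{\mu})$. By Lemma \ref{lem:witness-size}, since $\mu\super{0}|_{\cK}\ne 0$ implies $|\mu\super{0}|^{\cK}\ge 1$, we have $|\cK| \ge c_1 \noisy{\kappa}^{-\ell}$, and also $|\cK|\le c_2 |\mu\super{\ell}|^{\cK}$. The second inequality is the crucial one: it says that the top-level measurement faults lying in layers over $\cK\super{\ell}$ constitute a constant fraction of $|\cK|$. Hence the event ``$\cK$ is a witness'' is contained in the event that $\mu\super{\ell}$ contains at least $|\cK|/c_2$ faults supported in the layers $\dnum^{\cK\super{\ell}}$. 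These layers have total size at most $\Delta_{\sh}|\cK|$.

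The union bound then proceeds as follows. First, sum over the size $M\ge c_1\noisy{\kappa}^{-\ell}$ and over connected clusters $\cK\subseteq\cG$ of size $M$ meeting $\cV\super{1}$; every witness meets $\cV\super{1}$, since $\mu\super{0}|_{\cK}\ne 0$ forces $\cK\super{1}\neq\emptyset$. Lemma \ref{lem:witness-entropy} bounds the number of such clusters by $|\cC\super{0}|(e\Delta_{\cG})^M$. Second, for fixed $\cK$, sum over subsets $\cS$ of $\dnum^{\cK\super{\ell}}$-layer $z$-checks of size $\lceil M/c_2\rceil$; there are at most $2^{\Delta_{\sh}M}$ of them. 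Local stochasticity gives $\Pr[\cS\subseteq\mu\super{\ell}]\le p^{M/c_2}$. Combining the two steps,
\begin{equation}
\Pr[\text{witness}]\le |\cC\super{0}|\sum_{M\ge c_1\noisy{\kappa}^{-\ell}} \left(e\Delta_{\cG}2^{\Delta_{\sh}}\right)^M p^{M/c_2}.
\end{equation}
Setting $p_{\rm wit}=(e\Delta_{\cG}2^{\Delta_{\sh}})^{-c_2}$, each summand becomes $(p/p_{\rm wit})^{M/c_2}$. When $(p/p_{\rm wit})^{1/c_2}\le 1/2$, the geometric series is at most twice its first term, which yields $2|\cC\super{0}|(p/p_{\rm wit})^{c\noisy{\kappa}^{-\ell}}$ with $c=c_1/c_2$.

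The main obstacle is the legitimacy of the event inclusion. $\cK$ is a component of $\cG(\bm{\mu})$, which depends on the entire $\mu\super{\ell}$, while the bound $|\cK|\le c_2|\mu\super{\ell}|^{\cK}$ is a deterministic statement about that realized history. It must hold for the restricted quantities, since repair does not commute with restriction. Lemma \ref{lem:witness-size} handles exactly this via Lemma \ref{lem:noisy-contraction}, which applies componentwise, and Lemma \ref{lem:correction-invariance}. Consequently, the event we bound, namely that a fixed cluster $\cK$ of size $M$ has at least $M/c_2$ faults in its top layers, is a monotone event determined by a fixed set of locations. That makes the local stochastic bound applicable without needing independence.

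We must also check that the constants $c_1,c_2$ depend only on $\noisy{\kappa},\ve,c_f$. This follows from Lemma \ref{lem:witness-size}. $\Delta_{\cG}$ and $\Delta_{\sh}$ enter only $p_{\rm wit}$, which matches the claimed dependence on Eq. \eqref{eq:constants}.
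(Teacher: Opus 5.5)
Your proposal is correct and follows the paper's proof essentially step for step. Both arguments take a union bound over connected clusters meeting $\cV\super{1}$, counted via Lemma \ref{lem:witness-entropy}; use the two bounds of Lemma \ref{lem:witness-size} to restrict to $|\cK|\gtrsim\noisy{\kappa}^{-\ell}$ and to force $|\mu\super{\ell}|^{\cK}\ge|\cK|/c_2$; pay a factor $2^{\Delta_{\sh}|\cK|}$ for fault subsets in the layers over $\cK\super{\ell}$; and sum a geometric series with ratio at most $1/2$, arriving at the same $p_{\rm wit}=(e\Delta_{\cG}2^{\Delta_{\sh}})^{-c_2}$. The only difference is cosmetic: you sum over fault sets of size exactly $\lceil M/c_2\rceil$ rather than over all restrictions $\tau\super{\ell}$ of weight at least $a|\cK|$, which gives the identical bound.
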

\begin{remark}[Successful Run]
    Note that by Remark \ref{rem:RG-run-failure}, the RG decoder fails to run with probability bounded by Theorem \ref{thm:witness-probability}, and thus succeeds with high probability. 
    Hence, we can focus on the logical failure.
\end{remark}
\begin{proof}
    For simplicity, we shall write $a,b$ as constants depending only on $\noisy{\kappa},\ve,c_f$.
    Note that
    \begin{equation}
        \Pr[\cG(\bm{\mu}) \text{ has witness}] \le \sum_{\cK \text{ cluster meets } \cV\super{1}} \Pr[\cG(\bm{\mu}) \text{ has witness } \cK]
    \end{equation}
    Hence, by Lemma \ref{lem:witness-entropy}, it's sufficient to bound each term in the summation, while by Lemma \ref{lem:witness-size}, we only need to consider clusters $\cK$ with size $|\cK|\ge b \noisy{\kappa}^{-\ell}$
    By the upper bound in Lemma \ref{lem:witness-size}, we see that if $\cG(\bm{\mu})$ has witness $|\cK|=K$, then
    \begin{equation}
        |\mu\super{\ell}|^{\cK} \ge a K
    \end{equation}
    Hence, we can sum over all syndromes $\tau\super{\ell}$ with shadow contained in $\cK\super{\ell}$ and consider the event where $\tau\super{\ell} = \mu\super{\ell}|^{\cK}$ so that
    \begin{align}
        \Pr[\cG(\bm{\mu}) \text{ has witness } \cK] &\le \sum_{\sh \tau\super{\ell} \subseteq \cK\super{\ell}:|\tau\super{\ell}|\ge a |\cK|} \Pr[\tau\super{\ell} \subseteq \mu\super{\ell}] \\
        &\le \left(\frac{p}{p_1}\right)^{a K}, \quad p_1 =2^{-\Delta_{\sh}/a}
    \end{align}
    Hence, by Lemma \ref{lem:witness-entropy}, we have
    \begin{align}
        \Pr[\cG(\bm{\mu}) \text{ has witness}] &\le \sum_{K\ge b\noisy{\kappa}^{-\ell}} \sum_{\cK \text{ cluster meets } \cV\super{1}: |\cK|=K} \left(\frac{p}{p_1}\right)^{a K} \\
        &\le |\cC\super{0}|\frac{\rho^{b \noisy{\kappa}^{-\ell}}}{1-\rho}
    \end{align}
    where
    \begin{equation}
        \rho = \left(\frac{p}{p_{\rm wit}}\right)^{a},\quad p_{\rm wit} = (e\Delta_{\cG})^{-1/a} p_1
    \end{equation}
    Set $\rho \le 1/2$ and the statement follows.
\end{proof}

%% file: enhance.tex
\section{Threshold and Enhancement}
\label{sec:threshold}
Consider the setup in Remark \ref{rem:setup} and the noise model reduction in Theorem \ref{thm:noise-model-reduction}.
In this section, we shall show that the effective seed level errors $E\super{0}(t),\mu\super{0}(t^+)$ are jointly local stochastic with rate $\ll p$.

\subsection{Active History}

% By Lemma \ref{lem:seed-lvl-reduction} and Theorem \ref{thm:noise-model-reduction}, the seed decoder (disregarding the residue) at time $t$ observes seed level syndrome $\noisy{\sigma}\super{0}$, which is obtained via the history with input
% \begin{equation}
%     \label{eq:t-round-syndrome}
%     \noisy{\sigma}\super{\ell} = \delta E\super{\ell} + \mu\super{\ell}_+
% \end{equation}
% where we have omitted the time $t$ dependence for simplicity.
% Note that $\delta E\super{\ell}$ is a valid lifted syndrome, and thus similar to Section \ref{sec:threshold}, we can adopt the idea of an active subgraph.
By Theorem \ref{thm:noise-model-reduction}, the effective error at time $t$ is the noisy descent of $E\super{\ell} = e\super{\ell} + \hat{\cE}(\mu\super{\ell}_-)$ relative to $\mu_+\super{\ell}$ where 
\begin{equation}
    e\super{\ell} =e\super{\ell}(t), \quad \mu_\pm\super{\ell} = \mu\super{\ell}(t^\pm)
\end{equation}
so we have omitted the time $t$ dependence for simplicity.
Note that $\cE(\mu_-\super{\ell})$ has a special \textit{lifted} structure, and thus it warrants a slight variation of the RG syndrome subgraph in Definition \ref{def:RG-subgraph}.
For later reference, the corresponding noisy syndrome is
\begin{equation}
    \label{eq:t-round-syndrome}
    \noisy{\sigma}\super{\ell} = \delta E\super{\ell} + \mu_+\super{\ell}
\end{equation}
\begin{lemma}[Active History]
    \label{lem:active-history}
    Let $\bm{\rho}$ be a sequence of $i$-level valid syndromes with lifted syndrome $\Sigma \bm{\rho}$, and let $\mu\super{\ell}$ be an $\ell$-level syndrome. 
    Let $\noisy{\sigma}\super{\ell} = \mu\super{\ell} + \Sigma \bm{\rho}$ have noisy history $\noisy{\bm{\sigma}}$. Then
    \begin{equation}
        \noisy{\sigma}\super{i} = \noisy{a}\super{i} + \iota \left(\sum_{j<i} \rho\super{j}\right)
    \end{equation}
    where $\noisy{a}\super{i}$ is initialized with $\noisy{a}\super{\ell} = \mu\super{\ell} +\rho\super{\ell}$ and defined inductively as
    \begin{equation}
        \noisy{a}\super{i-1} = \desc \rep \noisy{a}\super{i} + \rho\super{i-1}
    \end{equation}
    With slight abuse of notation\footnote{Technically, $\noisy{\bm{a}}$ depends on both $\bm{\rho},\mu\super{\ell}$ and not just on $\noisy{\bm{\sigma}}$}, we refer to $\noisy{\bm{a}} = (\noisy{a}\super{i})$ as the \textbf{active history} of $\noisy{\bm{\sigma}}$, since the cleaning $f(\rep \noisy{\sigma}\super{i}) = f(\rep \noisy{a}\super{i})$.
\end{lemma}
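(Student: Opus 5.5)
We argue by downward induction on $i$, starting from $i=\ell$. The base case is immediate: since the sum over $j<\ell$ is lifted by $\iota$ into level $\ell$,
\begin{equation*}
\noisy{\sigma}\super{\ell}=\mu\super{\ell}+\rho\super{\ell}+\iota\Big(\sum_{j<\ell}\rho\super{j}\Big)=\noisy{a}\super{\ell}+\iota\Big(\sum_{j<\ell}\rho\super{j}\Big).
\end{equation*}
For the inductive step, assume $\noisy{\sigma}\super{i}=\noisy{a}\super{i}+\iota\tau\super{i-1}$ with $\tau\super{i-1}\equiv\sum_{j<i}\rho\super{j}\in Z(C\super{i-1})$, where $\iota$ here is the one-step inclusion. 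We then compute the $i$-level pass of Definition \ref{def:RG-decoder} on $\noisy{\sigma}\super{i}$ in two stages, first repair and then cleaning.

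\textbf{Repair.} Each $\rho\super{j}$ is a valid syndrome and $\iota$ is a chain map, so $\iota\tau\super{i-1}\in\im\delta$. By the valid-syndrome invariance of $\nu$ in Definition \ref{def:repair}, $\nu(\noisy{\sigma}\super{i})=\nu(\noisy{a}\super{i})$. Therefore
\begin{equation*}
\rep\noisy{\sigma}\super{i}=\rep\noisy{a}\super{i}+\iota\tau\super{i-1},
\end{equation*}
and $\rep\noisy{a}\super{i}$ is clean, since $\bar{\delta}\rep\noisy{a}\super{i}=\bar{\delta}\noisy{a}\super{i}+\bar{\delta}\nu(\noisy{a}\super{i})=0$.

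\textbf{Cleaning.} We apply the passthrough property of Definition \ref{def:cleaning} with clean $\sigma=\rep\noisy{a}\super{i}$ and $\noisy{\tau}=\tau\super{i-1}$. This gives $f(\rep\noisy{\sigma}\super{i})=f(\rep\noisy{a}\super{i})$, which is the parenthetical claim about cleanings. It then remains to compute the descent:
\begin{equation*}
\desc\rep\noisy{\sigma}\super{i}=\pi\big(\rep\noisy{a}\super{i}+\delta f(\rep\noisy{a}\super{i})\big)+\pi\iota\tau\super{i-1}=\desc\rep\noisy{a}\super{i}+\tau\super{i-1},
\end{equation*}
using linearity of $\pi$ and $\pi\iota=\mathrm{id}$. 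Splitting $\tau\super{i-1}=\rho\super{i-1}+\sum_{j<i-1}\rho\super{j}$, where the second term is again understood as lifted into level $i-1$, yields
\begin{equation*}
\noisy{\sigma}\super{i-1}=\noisy{a}\super{i-1}+\iota\Big(\sum_{j<i-1}\rho\super{j}\Big),
\end{equation*}
which closes the induction.

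The main point to check is that the passthrough property is applicable: it requires the first argument to be exactly a clean syndrome and the perturbation to lie in $\im\iota$. The decomposition above makes both hold, but only after the repair stage. We cannot simply pass $\nu$ through a non-clean input; instead we must first use valid-syndrome invariance to move $\iota\tau\super{i-1}$ outside $\nu$. A second bookkeeping point is the seed level $i=0$. There the recursion only defines $\noisy{a}\super{0}=\desc\rep\noisy{a}\super{1}+\rho\super{0}$, the sum over $j<0$ is empty, and the seed decoder is never applied inside the induction, so no further argument is needed.
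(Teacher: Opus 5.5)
Your proof is correct and follows the same route as the paper, whose proof simply cites the passthrough property of the cleaning and the valid-syndrome invariance of the repair. Your downward induction spells out those two steps and also uses the identity $\pi\iota=\mathrm{id}$, which the paper leaves implicit.
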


\begin{proof}
    Follows from the passthrough property of the cleaning in Definition \ref{def:cleaning} and the valid syndrome invariance of the syndrome repair.
\end{proof}

\begin{definition}[Active Subgraph]
    Let $\bm{\rho}$ be a sequence of valid syndromes and $\noisy{\sigma}\super{\ell} = \mu\super{\ell} + \Sigma\bm{\rho}$ induce history $\noisy{\bm{\sigma}}$, cleanings $\bm{f}$ and active history $\noisy{\bm{a}}$.
    Then the \textbf{active} RG syndrome subgraph $\cG_{\rm act} (\noisy{\bm{\sigma}})$ is the subgraph induced by vertices
    \begin{equation}
        \cG_{\rm act} (\noisy{\bm{\sigma}}) = \noisy{a}\super{0}\cup \bigcup_{i \ge 1} \sh (\noisy{a}\super{i} \cup \rep\noisy{a} \cup f\super{i} \cup \delta f\super{i})
    \end{equation}
    Note that $\cG(\noisy{\bm{\sigma}})$ in Definition \ref{def:RG-subgraph} does not include $\noisy{\sigma}\super{0}$ explicitly since it is implicitly included in the shadows, but for the active subgraph, $\rho\super{0}$ changes this.
\end{definition}

\begin{lemma}[Witness Equivalence]
    \label{lem:witness-equivalence}
    Let $\noisy{\sigma}\super{\ell} = \mu\super{\ell} + \Sigma \bm{\rho}$ where $\bm{\rho}$ is a sequence of valid syndromes, and $\noisy{\bm{\sigma}}$ denote its history.
    Then the RG subgraph $\cG(\noisy{\bm{\sigma}})$ has a witness iff the active RG subgraph $\cG_{\rm act}(\noisy{\bm{\sigma}})$ has a witness.
\end{lemma}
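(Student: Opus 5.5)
The plan is to compare, level by level, the two vertex sets that define $\cG(\noisy{\bm{\sigma}})$ and $\cG_{\rm act}(\noisy{\bm{\sigma}})$. By Lemma~\ref{lem:active-history}, at every level the cleaning is the same for both, $f\super{i}=f(\rep\noisy{\sigma}\super{i})=f(\rep\noisy{a}\super{i})$. The syndromes differ only by the lifted valid part $\iota(\sum_{j<i}\rho\super{j})$. Because $\nu$ is invariant under valid syndromes, the repairs differ by the same lifted valid chain: $\rep\noisy{\sigma}\super{i}=\rep\noisy{a}\super{i}+\iota(\sum_{j<i}\rho\super{j})$. At the seed level this gives
\begin{equation}
\noisy{\sigma}\super{0}=\noisy{a}\super{0}.
\end{equation}
Hence $\noisy{\sigma}\super{0}\ne0$ iff $\noisy{a}\super{0}\ne0$. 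This settles the statement if a witness in either graph just means a component meeting the nonzero seed syndrome: a witness exists in $\cG(\noisy{\bm{\sigma}})$ iff $\noisy{\sigma}\super{0}\ne 0$. I therefore first need the analogous fact for $\cG_{\rm act}$.

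For the active graph I take a witness to be a component $\cK$ with $\noisy{a}\super{0}|_{\cK}\ne0$. The forward direction is then immediate, since $\noisy{a}\super{0}$ is included explicitly as vertices of $\cG_{\rm act}$. For $\cG(\noisy{\bm{\sigma}})$, I would check that every $z\in\noisy{\sigma}\super{0}$ lies in $\cV\super{1}\cap\cG(\noisy{\bm{\sigma}})$. This follows from the descent $\iota\noisy{\sigma}\super{0}=\rep\noisy{\sigma}\super{1}+\delta f\super{1}$ (via $\pi\iota=\mathrm{id}$): since $\iota$ is layer-local, $\iota z$ has a nonzero restriction to layer $Y^z$. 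So at least one of $\rep\noisy{\sigma}\super{1}|^z$ or $(\delta f\super{1})|^z$ is nonzero, and $z\in\cG(\noisy{\bm{\sigma}})$. This is the same argument as the converse step in Lemma~\ref{lem:cell-layer-invariance}. Thus some component $\cK$ contains $z$, and $\noisy{\sigma}\super{0}|_{\cK}\ne0$.

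Combining the two gives
\begin{equation}
\cG(\noisy{\bm{\sigma}})\text{ has a witness}\iff\noisy{\sigma}\super{0}\ne0\iff\noisy{a}\super{0}\ne0\iff\cG_{\rm act}(\noisy{\bm{\sigma}})\text{ has a witness}.
\end{equation}

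The main obstacle is the identity $\noisy{\sigma}\super{0}=\noisy{a}\super{0}$ at the bottom of the recursion, which requires the bookkeeping in Lemma~\ref{lem:active-history}. I would prove it by induction from level $\ell$ downward. At each step, passthrough lets me pull $\iota\rho\super{j}$ through $\desc$: for $j<i$, $\desc(\sigma+\iota\rho)=\desc\sigma+\rho$. The newly exposed $\rho\super{i-1}$ is then absorbed into $\noisy{a}\super{i-1}$. At $i=0$ the sum over $j<0$ is empty. Care is needed that the seed term $\rho\super{0}$ lands in $\noisy{a}\super{0}$ rather than being dropped; this is why $\cG_{\rm act}$ lists $\noisy{a}\super{0}$ explicitly.
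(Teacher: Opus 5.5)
Your proposal is correct and takes the same approach as the paper, whose proof is the one-line observation that Lemma~\ref{lem:active-history} gives $\noisy{\sigma}\super{0}=\noisy{a}\super{0}$ (the lifted sum over $j<0$ is empty), and that in each graph a witness exists iff this seed syndrome is nonzero. Your extra check that every $z\in\noisy{\sigma}\super{0}$ is actually a vertex of $\cG(\noisy{\bm{\sigma}})$, via $\iota\noisy{\sigma}\super{0}=\rep\noisy{\sigma}\super{1}+\delta f\super{1}$, supplies a justification the paper only asserts in its definition of a witness.
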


\begin{proof}
    By Lemma \ref{lem:active-history}, $\noisy{\sigma}\super{0} = \noisy{a}\super{0}$ and thus the statement follows.
\end{proof}

Unlike $\cG(\noisy{\bm{\sigma}})$, the witness of $\cG_{\rm act}(\noisy{\bm{\sigma}})$ may be small, since it does not necessarily extend to the top level, e.g., only $\rho\super{0}\ne 0$.
However, such sources can be \textit{attached} to connected components of $\cG(\bm{\rho})$.
In the case of Eq. \eqref{eq:t-round-syndrome}, the goal is thus to estimate the size of the \textit{temporal} subgraph, i.e., the active subgraph + attachments due to $\bm{\mu}_\pm$, which we postpone to the next subsection.

\begin{lemma}[Cell-Layer Invariance]
    \label{lem:active-cell-layer-invariance}
    Let $\bm{\rho}$ is a sequence of valid syndromes and $\noisy{\sigma}\super{\ell} = \mu\super{\ell} +\Sigma\bm{\rho}$ have history $\noisy{\bm{\sigma}}$ and active history $\noisy{\bm{a}}$.
    Let $\cK$ be a connected component of the active RG subgraph $\cG_{\rm act} (\noisy{\bm{\sigma}})$. 
    If $i\in [\ell)$, then
    \begin{equation}
        \noisy{a}\super{i}|_{\cK} = \noisy{a}\super{i}|^{\cK}
    \end{equation}
    For notation consistency, we also write $\noisy{a}\super{0}|^{\cK} = \noisy{a}\super{0}|_{\cK}$.
\end{lemma}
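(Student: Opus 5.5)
The argument follows the proof of Lemma \ref{lem:cell-layer-invariance}, now for the active history $\noisy{\bm{a}}$ and the active subgraph $\cG_{\rm act}(\noisy{\bm{\sigma}})$. I view both sides as subsets of $z$-checks of $C\super{i}$ and prove the two inclusions separately. Throughout, I use that for $i\ge 1$ the active subgraph contains $\sh(\noisy{a}\super{i}\cup\rep\noisy{a}\super{i}\cup f\super{i}\cup\delta f\super{i})$. By Lemma \ref{lem:active-history}, $f\super{i}=f(\rep\noisy{a}\super{i})$, so this set is exactly the analogue of the vertex set in Definition \ref{def:RG-subgraph}. The recursion $\noisy{a}\super{i-1}=\desc\rep\noisy{a}\super{i}+\rho\super{i-1}$ replaces the plain descent.

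For $\noisy{a}\super{i}|_{\cK}\subseteq\noisy{a}\super{i}|^{\cK}$: take $z\in\cK\super{i+1}$ with $\noisy{a}\super{i}|_z\neq0$. Here $z$ is a cell of $C\super{i}$, hence a vertex of $\cV\super{i+1}$, and its shadow $u=\sh z$ is a vertex of $\cV\super{i}$. Since $i\ge1$ and $\noisy{a}\super{i}|^u\neq0$, we have $u\in\sh\noisy{a}\super{i}\subseteq\cG_{\rm act}$. The vertical edge $uz$ joins $u$ to $z\in\cK$, so $u\in\cK\super{i}$ and $z$ lies in $\noisy{a}\super{i}|^{\cK}$.

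For the reverse inclusion, take $z$ with $\noisy{a}\super{i}|_z\neq0$ and $\sh z\in\cK\super{i}$. I must show $z\in\cG_{\rm act}$ at level $i+1$; vertical adjacency to $\sh z\in\cK$ then places it in $\cK\super{i+1}$. Write $\noisy{a}\super{i}=\pi(\rep\noisy{a}\super{i+1}+\delta f\super{i+1})+\rho\super{i}$. There are two cases.
\begin{itemize}
\item If $\pi(\rep\noisy{a}\super{i+1}+\delta f\super{i+1})|_z\neq0$, then by layer-locality of $\pi$ (with $\pi$ evaluated on an element of $\im\iota$, so that $\pi$ only sees the layer $Y^z$) either $\rep\noisy{a}\super{i+1}|^z\neq0$ or $\delta f\super{i+1}|^z\neq0$. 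Either way $z\in\sh(\rep\noisy{a}\super{i+1}\cup\delta f\super{i+1})\subseteq\cG_{\rm act}$.
\item Otherwise $\rho\super{i}|_z\neq0$, and this is the step I expect to be the main obstacle. The vertex $z\in\cV\super{i+1}$ need not lie in $\cG_{\rm act}$, because the definition of the active subgraph only adds $\noisy{a}\super{0}$ explicitly at the bottom and does not add the lifted $\rho\super{i}$ for $i\ge1$.
\end{itemize}

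My first attempt at the second case is to show it cannot occur with $z$ outside the shadows. Since $\noisy{a}\super{i}|_z\neq0$ and $i\ge1$, we get $\sh z\in\sh\noisy{a}\super{i}$, and the goal is to exhibit some level-$(i+1)$ object supported on the layer $Y^z$. Here I would use that $\noisy{a}\super{i}$ is, up to $\rho\super{i}$, exactly the projection of $\rep\noisy{a}\super{i+1}+\delta f\super{i+1}$, which lies in $\im\iota$ by the descent property. Then $\iota\noisy{a}\super{i}$ restricted to $Y^z$ agrees with $\iota\rho\super{i}$ there. If the first case fails, the contributions from $\rep\noisy{a}\super{i+1}$ and $\delta f\super{i+1}$ vanish on $Y^z$, and the nonzero entry comes purely from $\rho\super{i}$.

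If that route stalls, I would instead read $\noisy{a}\super{i}|_{\cK}$ for $i\ge1$ with the convention that $\rho\super{i}$ is absorbed into the level-$(i+1)$ pass. Lemma \ref{lem:active-history} already treats $\noisy{a}\super{i}$ as an input whose repair and cleaning contribute at level $i+1$, so this is natural. I would then note that $\sh z\in\cK$ together with the horizontal next-nearest-neighbor edges lets $z$ be connected through any cell of $\rep\noisy{a}\super{i+1}$ or $f\super{i+1}$ sharing an upper cell with it. Failing both, I would restrict the claimed equality to the part $\desc\rep\noisy{a}\super{i+1}$. That part is what is actually used downstream, and there the argument is verbatim Lemma \ref{lem:cell-layer-invariance}.
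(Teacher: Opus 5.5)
Your inclusion $\noisy{a}\super{i}|_{\cK}\subseteq\noisy{a}\super{i}|^{\cK}$ and the first case of the reverse inclusion are correct. Your second case is exactly where the argument breaks, and the paper's proof, which only says ``similar to Lemma~\ref{lem:cell-layer-invariance}'', has the same hole. You cannot close it, because for $\ell\ge 2$ the statement is false as written.

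Take $1\le i\le \ell-1$, $\mu\super{\ell}=0$, $\rho\super{j}=0$ for $j\neq i$, and $\rho\super{i}=\delta q\neq 0$ for a qubit $q$ of $C\super{i}$. Since $\nu(0)=0$ and $f(0)=0$, every $\noisy{a}\super{j}$, $\rep\noisy{a}\super{j}$ and $f\super{j}$ with $j>i$ vanishes, so $\cG_{\rm act}\cap\cV\super{i+1}=\varnothing$. Meanwhile $\noisy{a}\super{i}=\delta q$ puts $\sh z$ into $\cG_{\rm act}$ for every $z\in\delta q$. For the component $\cK\ni\sh z$ you get $\noisy{a}\super{i}|_{\cK}=0$, since $\cK\super{i+1}=\varnothing$, yet $z\in\noisy{a}\super{i}|^{\cK}$.

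This disposes of your first two fallbacks. The case cannot be ruled out, and there are no level-$(i+1)$ cells for next-nearest-neighbor edges to connect through. ``Absorbing $\rho\super{i}$ into the level-$(i+1)$ pass'' works only as an explicit change of definition: add the cells of $\noisy{a}\super{i}$ (or of $\rho\super{i}$) to $\cG_{\rm act}$ as vertices of $\cV\super{i+1}$ for every $i$, as the paper already does for $\noisy{a}\super{0}$. Then the reverse inclusion follows immediately from the vertical edge between $z$ and $\sh z$, and your forward inclusion is unchanged. The size bound of Lemma~\ref{lem:active-component-size} only gains a term $|\noisy{a}\super{i}|_{\cK}=|\noisy{a}\super{i}|^{\cK}$ per level, which it already controls.

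Your third fallback is also wrong as stated. With $D=\desc\rep\noisy{a}\super{i+1}$, the equality $D|_{\cK}=D|^{\cK}$ still fails, because $\rho\super{i}$ can cancel $D$ on an entire layer. For instance, take $\rho\super{i+1}=\iota\delta q$, $\rho\super{i}=\delta q$ and everything else zero. Valid-invariance of $\nu$ and passthrough of $f$ give $D=\delta q$, but $\noisy{a}\super{i}=0$, so nothing at level $i$ survives. Hence for $z\in\delta q$ you have $z\in\cG_{\rm act}$ (because $\noisy{a}\super{i+1}=\iota\delta q$ is nonzero on $Y^z$) and $D|_z\neq 0$, so $z\in D|_{\cK}$. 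At the same time $\sh z\notin\cG_{\rm act}$, so $z\notin D|^{\cK}$.

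Only the inclusion $D|^{\cK}\subseteq D|_{\cK}$ survives, and it is your first case verbatim. That inclusion is exactly what the step in Lemma~\ref{lem:active-component-size} passing from $|\desc\rep\noisy{a}\super{i+1}|^{\cK}$ to $\kappa|\rep\noisy{a}\super{i+1}|^{\cK}$ (via Lemma~\ref{lem:active-correction-invariance}) needs, so that downstream use can be salvaged. As a proof of the stated equality, however, your proposal has a gap no argument can fill. The right conclusion was the counterexample together with the corrected definition of $\cG_{\rm act}$, not a list of alternatives.
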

\begin{proof}
    Similar to Lemma \ref{lem:cell-layer-invariance} and thus omitted.
\end{proof}
\begin{lemma}[Correction Invariance]
    \label{lem:active-correction-invariance}
    Let $\bm{\rho}$ is a sequence of valid syndromes and $\noisy{\sigma}\super{\ell} = \mu\super{\ell} +\Sigma\bm{\rho}$ have history $\noisy{\bm{\sigma}}$ and active history $\noisy{\bm{a}}$.
    Let $\cK$ be a connected component of the active RG subgraph $\cG_{\rm act} (\noisy{\bm{\sigma}})$. 
    If $i\in [\ell]$, then $\dnum^{\cK} \rep \noisy{a}\super{i}$ is clean and
    \begin{equation}
        \dnum^{\cK} f(\rep \noisy{a}\super{i}) = f(\dnum^{\cK} \rep \noisy{a}\super{i}), \quad \dnum^{\cK}\delta f(\rep \noisy{a}\super{i}) = \delta f( \dnum^{\cK} \rep \noisy{a}\super{i})
    \end{equation}
    In particular,
    \begin{equation}
        \dnum_{\cK} \desc \rep \noisy{a} = \desc \dnum^{\cK}\rep\noisy{a}\super{i}
    \end{equation}
\end{lemma}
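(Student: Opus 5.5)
The plan is to repeat the proof of Lemma \ref{lem:correction-invariance} with the active history $\noisy{\bm{a}}$ in place of $\noisy{\bm{\sigma}}$. The only new input is that, by Lemma \ref{lem:active-history}, the repair and cleaning at level $i$ see $\noisy{\sigma}\super{i}$ only through $\noisy{a}\super{i}$.

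\textbf{Step 1.} By Lemma \ref{lem:active-history}, $f\super{i}=f(\rep\noisy{\sigma}\super{i})=f(\rep\noisy{a}\super{i})$. Hence the vertex set $\cG_{\rm act}(\noisy{\bm{\sigma}})$ at level $i\ge1$ contains $\sh(\noisy{a}\super{i}\cup\rep\noisy{a}\super{i}\cup f(\rep\noisy{a}\super{i})\cup\delta f(\rep\noisy{a}\super{i}))$. This is exactly the role that the shadows of $\noisy{\sigma}\super{i},\rep\noisy{\sigma}\super{i},f\super{i},\delta f\super{i}$ played in Definition \ref{def:RG-subgraph}.

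\textbf{Step 2 (cleanness).} Write $\rep\noisy{a}\super{i}=\sum_{\cK'}\dnum^{\cK'}\rep\noisy{a}\super{i}$, summing over the connected components $\cK'$ of $\cG_{\rm act}$. By Remark \ref{rem:meta-check-vs-RG-connectivity}, two $z$-checks in a common meta-check have equal or adjacent shadows. So each meta-check touches the support of at most one summand, since distinct components are non-adjacent. Because $\rep\noisy{a}\super{i}$ is clean, every meta-check sees even parity from the unique summand it touches. Hence each $\dnum^{\cK}\rep\noisy{a}\super{i}$ is clean, and $f$ is defined on it.

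\textbf{Step 3 (commuting $f$ with restriction).} Fix a cell $v$ at level $i-1$ and compare $\dnum^v f(\dnum^{\cK}\rep\noisy{a}\super{i})$ with $\dnum^v f(\rep\noisy{a}\super{i})$. Layer-locality reduces both to the restrictions to $\{\preceq v\}$. There are three cases.
\begin{itemize}
    \item If $v\in\cK\super{i}$: any $u\preceq v$ carrying $\rep\noisy{a}\super{i}$ is in $\cG_{\rm act}$ and is adjacent (or equal) to $v$, so $u\in\cK\super{i}$. The two inputs then agree on $\{\preceq v\}$.
    \item If $v\not\succeq\cK\super{i}$: the restricted input vanishes on $\{\preceq v\}$, so the left side is $0$ because $f(0)=0$. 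The right side is $0$ as well. Either $v\notin\cG_{\rm act}$, in which case $f\super{i}$ vanishes on its layer, or $v$ lies in another component; in the latter case the same argument as in the first case with that component shows the right side is supported off $\cK$.
    \item If $v\succeq\cK\super{i}$ but $v\notin\cK\super{i}$: the next-nearest-neighbor horizontal edges (Eq. \eqref{eq:nnn-used}) show that every $u\preceq v$ with nonzero $\rep\noisy{a}\super{i}|^u$ lies in a single component. This component must be $\cK$, so the inputs agree on $\{\preceq v\}$. Both sides vanish on $\dnum^v$ because $v\notin\cG_{\rm act}$.
\end{itemize}
These cases give $\dnum^{\cK}f(\rep\noisy{a}\super{i})=f(\dnum^{\cK}\rep\noisy{a}\super{i})$. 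The identical argument applied to $\delta f$, which is also layer-local, gives the $\delta f$ equality.

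\textbf{Step 4 (descent).} By definition $\desc\rep\noisy{a}\super{i}=\pi(\rep\noisy{a}\super{i}+\delta f(\rep\noisy{a}\super{i}))$. Since $\pi$ is layer-local as a map $\bar\Theta\to C$, we have $\dnum_{\cK}\pi=\pi\dnum^{\cK}$ on chains whose layer-support splits along components. Combined with Step 3, this gives $\dnum_{\cK}\desc\rep\noisy{a}\super{i}=\desc\dnum^{\cK}\rep\noisy{a}\super{i}$.

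\textbf{Main obstacle.} The delicate point is the mixed case in Step 3. There one must be sure that the active subgraph still contains enough vertices to force connectivity through the next-nearest-neighbor edges. This holds because $\rep\noisy{a}\super{i}$ and $f(\rep\noisy{a}\super{i})$ are included in $\cG_{\rm act}$. The additional terms $\rho\super{i-1}$ matter only for Lemma \ref{lem:active-cell-layer-invariance}, not here.
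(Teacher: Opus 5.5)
Your route is the paper's: its proof only says that the argument of Lemma~\ref{lem:correction-invariance} goes through for the active history, and your Steps 1--3 carry that out correctly. In Case 3 you should still say why $v\notin\cG_{\rm act}$: otherwise $v$ would be adjacent, via $w=v$, to some $u_0\in\cK\super{i}$ with $u_0\preceq v$, which would force $v\in\cK\super{i}$.

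The gap is in Step 4. Layer-locality of $\pi$ does not give $\dnum_{\cK}\pi=\pi\dnum^{\cK}$ on chains whose layer-support splits along components. It only says that $\dnum_u\pi s$ depends on $\dnum^{\preceq u}s$, i.e.\ $\pi$ is lower-triangular, and it genuinely is: in Lemma~\ref{lem:SDR}, the term $p_0gh_0$ of $\pi=p_0(1+gh_0)^{-1}$ carries $q$-layer checks to $z$-checks. So for a chain supported on layers of $\cK\super{i}$, $\pi$ may produce an entry at a cell $u\succeq\cK\super{i}$ with $u\notin\cG_{\rm act}$, hence $u\notin\cK\super{i}$. Then $\dnum_{\cK}$ removes that entry but $\pi\dnum^{\cK}$ keeps it.

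The missing idea has two parts. First, you only apply $\pi$ to $s=\rep\noisy{a}\super{i}+\delta f(\rep\noisy{a}\super{i})$, which lies in $\mathrm{im}\,\iota$ by the Descent axiom. Second, $\iota$ is layer-diagonal on $Z$: a $z$-cell is maximal in $\cC\super{i-1}$, so layer-locality forces $\iota z$ into the layer $Y^z$. Write $s=\iota t$ with $t=\pi s=\desc\rep\noisy{a}\super{i}$. Then $\dnum^{\cK}s=\iota\dnum_{\cK}t$, and $\pi\iota=\mathrm{id}$ gives $\pi\dnum^{\cK}s=\dnum_{\cK}\desc\rep\noisy{a}\super{i}$. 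By your Step 3, $\dnum^{\cK}s=\dnum^{\cK}\rep\noisy{a}\super{i}+\delta f(\dnum^{\cK}\rep\noisy{a}\super{i})$, whose $\pi$-image is $\desc\dnum^{\cK}\rep\noisy{a}\super{i}$. This is the claimed identity.

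Equivalently, the fact you need is $\mathrm{supp}\,\desc\rep\noisy{a}\super{i}=\sh s\subseteq\cG_{\rm act}$. The paper also uses this tacitly in Lemma~\ref{lem:cell-layer-invariance}.
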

\begin{proof}
    Follows similarly as in Lemma \ref{lem:correction-invariance}.
\end{proof}

\begin{lemma}[Equivalent Repair]
    \label{lem:equivalent-repair}
    Let $\bm{\rho}$ be valid syndromes, $\bm{\mu}$ be the history of $\mu\super{\ell}$, and $\noisy{\bm{a}}$ be the active history of $\noisy{\sigma}\super{\ell} = \mu\super{\ell} +\Sigma \bm{\rho}$.
    Then $\noisy{a}\super{i},\mu\super{i}$ differ by a valid syndrome for all $i$. In particular, the syndrome repair satisfies
    \begin{equation}
        \nu( \noisy{a}\super{i} ) = \nu (\mu\super{i} )
    \end{equation}
\end{lemma}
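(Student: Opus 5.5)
Proceed by downward induction on $i$, from $i=\ell$ to $i=0$. The claim to carry through the induction is
\begin{equation*}
    \noisy{a}\super{i} + \mu\super{i} \in \im \delta .
\end{equation*}
The equality of repairs then follows at each level from the (Valid) property of the repair rule in Definition \ref{def:repair}: $\nu$ is invariant under valid syndromes, so $\nu(\noisy{a}\super{i}) = \nu(\mu\super{i} + \delta h) = \nu(\mu\super{i})$.

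The base case $i=\ell$ is immediate. By Lemma \ref{lem:active-history}, $\noisy{a}\super{\ell} = \mu\super{\ell} + \rho\super{\ell}$, and $\rho\super{\ell}$ is valid by assumption.

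For the inductive step, assume $\noisy{a}\super{i} = \mu\super{i} + \delta h\super{i}$. Then
\begin{equation*}
    \nu(\noisy{a}\super{i}) = \nu(\mu\super{i}),
    \qquad
    \rep \noisy{a}\super{i} = \rep \mu\super{i} + \delta h\super{i}.
\end{equation*}
Both $\rep\noisy{a}\super{i}$ and $\rep\mu\super{i}$ are clean, so the descent is defined on each. We now use that the descent is linear modulo $\im\delta$ on clean syndromes (the remark after Definition \ref{def:cleaning}). Applied to the clean syndromes $\rep\mu\super{i}$ and $\delta h\super{i}$, it gives
\begin{equation*}
    \desc \rep\noisy{a}\super{i} = \desc\rep\mu\super{i} + \desc \delta h\super{i} + \im\delta .
\end{equation*}
Since $\delta\desc = \desc\delta$, the term $\desc\delta h\super{i} = \delta \desc h\super{i}$ is itself valid. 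Combining with the recursion of Lemma \ref{lem:active-history} and with $\mu\super{i-1} = \desc\rep\mu\super{i}$ (the history of $\mu\super{\ell}$), we get
\begin{equation*}
    \noisy{a}\super{i-1} = \desc\rep\noisy{a}\super{i} + \rho\super{i-1} \in \mu\super{i-1} + \rho\super{i-1} + \im\delta = \mu\super{i-1} + \im\delta,
\end{equation*}
because $\rho\super{i-1}$ is valid. This closes the induction.

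The only delicate point is the linearity of descent modulo stabilizers. If it needs justification rather than citation, I would prove it directly from the SDR identities. For a clean $\sigma$, $\iota\pi(\sigma + \delta f(\sigma)) = \sigma + \delta f(\sigma)$, because $\sigma + \delta f(\sigma) \in \im\iota$ and $\pi\iota = \mathrm{id}$. Therefore $\iota(\desc\sigma + \desc\tau + \desc(\sigma+\tau))$ lies in $\im\delta$ in $\Theta$. Applying the chain map $\pi$ and using $\pi\iota = \mathrm{id}$ then places the sum in $\im\delta$ of $C$.

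The argument should not rely on passthrough, since $\delta h\super{i}$ need not lie in $\im\iota$. Linearity modulo $\im\delta$ is exactly the property that handles a general valid perturbation.
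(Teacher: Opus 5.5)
Your proposal is correct and follows the paper's proof. It uses downward induction from $i=\ell$, valid-syndrome invariance of $\nu$ to get $\rep\noisy{a}\super{i} = \rep\mu\super{i} + \delta h\super{i}$, and linearity of $\desc$ modulo $\im\delta$ on clean syndromes. Your derivation of that linearity from $\pi\iota=\mathrm{id}$ and the chain-map property of $\pi$ is also correct; the paper only cites it from the remark after Definition \ref{def:cleaning}. One small point: at $i=0$ the repair equality rests on the valid-syndrome invariance of the seed's $\cM_0$ (Definition \ref{def:meta-check-decoder}), not on Definition \ref{def:repair}.
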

\begin{proof}
    By definition, $\noisy{a}\super{\ell} = \mu\super{\ell} + \rho\super{\ell}$ satisfies the statement.
    Let us now proceed by induction and note that
    \begin{align}
        \noisy{a}\super{i-1} &=\desc \rep \noisy{a}\super{i} + \rho\super{i-1} \\
        &= \desc (\rep \mu\super{i} + \im \delta) + \im \delta \\
        &= \desc \rep \mu\super{i} + \im \delta \\
        &= \mu\super{i-1} + \im \delta
    \end{align}
    where we used the fact that $\desc$ is linear on clean syndromes modulo stabilizers. The statement then follows by induction.
\end{proof}
\begin{lemma}[Component Size]
    \label{lem:active-component-size}
    Let $\bm{\rho}$ is a sequence of valid syndromes and $\noisy{\sigma}\super{\ell} = \mu\super{\ell} + \Sigma \bm{\rho}$ have history $\noisy{\bm{\sigma}}$ and active history $\noisy{\bm{a}}$. Let $\cK$ be a connected component of the active RG subgraph $\cG_{\rm act} (\noisy{\bm{\sigma}})$. Then
    \begin{equation}
        |\cK| = O\left(|\mu\super{i}|^{\cK} +\sum_{i=0}^{\ell} |\rho\super{i}|^{\cK}\right) 
    \end{equation}
    where the constant in $O$ only depends on $\kappa,c_f,\ve$.
    Moreover,
    \begin{equation}
        (\noisy{\kappa}\super{\ell}+\kappa^{\ell}) |\mu\super{\ell}|^{\cK} + \sum_{j=0}^{\ell} \kappa^j |\rho\super{j}|^{\cK} \ge |\noisy{a}\super{0}|_{\cK}
    \end{equation}
\end{lemma}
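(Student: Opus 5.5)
I will mirror the proof of Lemma \ref{lem:witness-size}, now working with the active history $\noisy{\bm{a}}$ in place of $\noisy{\bm{\sigma}}$. Throughout, fix a component $\cK$ of $\cG_{\rm act}(\noisy{\bm{\sigma}})$ and set
\begin{equation}
    \noisy{\tau}\super{i} = \noisy{a}\super{i}|^{\cK}, \qquad \rho_{\cK}\super{i} = (\rep \noisy{a}\super{i})|^{\cK}.
\end{equation}

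The first step is a one-level recursion. By Lemma \ref{lem:active-correction-invariance}, $\dnum^{\cK}\rep\noisy{a}\super{i}$ is clean, and the descent commutes with restriction to $\cK$. By Lemma \ref{lem:equivalent-repair}, $\nu(\noisy{a}\super{i})=\nu(\mu\super{i})$, so the near half fault property of Definition \ref{def:repair} applies to $\rep\noisy{a}\super{i}$ with input $\noisy{a}\super{i}$. Repeating the argument of Lemma \ref{lem:noisy-contraction} componentwise (meta-check components have shadows inside a single component, by Remark \ref{rem:meta-check-vs-RG-connectivity}) gives
\begin{equation}
    |\rho_{\cK}\super{i}| \le \tfrac{2}{1-\ve}|\noisy{\tau}\super{i}|, \qquad |\desc \rho_{\cK}\super{i}| \le \kappa |\rho_{\cK}\super{i}|.
\end{equation}
Combining this with the inductive definition $\noisy{a}\super{i-1}=\desc\rep\noisy{a}\super{i}+\rho\super{i-1}$ and Lemma \ref{lem:active-cell-layer-invariance}, I expect the recursion
\begin{equation}
    |\noisy{\tau}\super{i-1}| \le \noisy{\kappa}|\noisy{\tau}\super{i}| + |\rho\super{i-1}|^{\cK},
\end{equation}
together with the sharper form $|\noisy{\tau}\super{i-1}| \le \kappa|\rho_{\cK}\super{i}| + |\rho\super{i-1}|^{\cK}$.

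The upper bound on $|\cK|$ then follows the decomposition used in Lemma \ref{lem:witness-size}. Split $\cK\super{i}$ into the shadows of $\noisy{\tau}\super{i}$, of $\rho_{\cK}\super{i}\backslash\noisy{\tau}\super{i}$, and of the filled part; the fill bound controls the last piece. This gives $|\cK\super{i}| = O(|\noisy{\tau}\super{i}|)$, plus the level-0 term $|\noisy{\tau}\super{0}|$. Unrolling the recursion yields
\begin{equation}
    |\noisy{\tau}\super{i}| \le \noisy{\kappa}^{\ell-i}|\mu\super{\ell}+\rho\super{\ell}|^{\cK} + \sum_{j\ge i}\noisy{\kappa}^{j-i}|\rho\super{j}|^{\cK}.
\end{equation}
Summing over $i$ with the geometric factor $1/(1-\noisy{\kappa})$ gives $|\cK| = O(|\mu\super{\ell}|^{\cK}+\sum_i|\rho\super{i}|^{\cK})$, where the constant depends only on $\kappa,c_f,\ve$.

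For the lower bound on $|\noisy{a}\super{0}|_{\cK}$, I would unroll the same recursion down to level $0$, using Lemma \ref{lem:active-cell-layer-invariance} to identify $|\noisy{a}\super{0}|_{\cK}=|\noisy{\tau}\super{0}|$. The main obstacle is obtaining the stated pure powers $\kappa^j$ on the $\rho\super{j}$ terms, since the naive recursion produces factors $\noisy{\kappa}^j$ instead. My plan is to split off the valid part. Write $\noisy{a}\super{i} = \alpha\super{i} + \beta\super{i}$, where $\beta\super{i}$ is the clean-syndrome descent of the $\rho$'s. Because $\rho\super{j}$ is valid and $\nu$ is invariant under valid syndromes, the repair only acts on the $\mu$-coset. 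The descent is linear modulo $\im\delta$, so the $\rho\super{j}$ contribution undergoes only clean contraction $\kappa$ per level. The $\mu\super{\ell}$ contribution picks up either $\noisy{\kappa}^{\ell}$ through repair or $\kappa^\ell$ from the unrepaired $\mu\super{\ell}$ part. Triangle inequalities on the restricted supports then give the claimed bound $(\noisy{\kappa}^{\ell}+\kappa^\ell)|\mu\super{\ell}|^{\cK}+\sum_j\kappa^j|\rho\super{j}|^{\cK}$.

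The delicate point is that the descent is nonlinear and only linear modulo stabilizers, so the weight bounds must be applied to actual representatives and not to cosets. I would handle this with the passthrough property and cell-layer invariance, and if needed weaken $\kappa^j$ to $\noisy{\kappa}^j$, which still suffices for later use.
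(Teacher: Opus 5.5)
Your argument for the first bound is fine. The near half fault property of Definition~\ref{def:repair} applies to the input $\noisy{a}\super{i}$ directly, so you do not need Lemma~\ref{lem:equivalent-repair} there. The resulting $\noisy{\kappa}$-recursion, together with the fill bound, gives $|\cK|=O(|\mu\super{\ell}|^{\cK}+\sum_i|\rho\super{i}|^{\cK})$; the paper instead sums the sharper recursion below.

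The gap is the second inequality. Your picture of where $\noisy{\kappa}^{\ell}$ and $\kappa^{\ell}$ come from is right, but the mechanism does not survive one level. You propose splitting $\noisy{a}\super{i}=\alpha\super{i}+\beta\super{i}$ and descending the pieces separately. However, $\desc$ is nonlinear: $\desc\rep\noisy{a}\super{i}$ agrees with the sum of the descents of the pieces only modulo $\im\delta$. The restricted weight $|\cdot|^{\cK}$ is not invariant under adding coboundaries, so bounds on the pieces say nothing about $|\noisy{a}\super{i-1}|^{\cK}$. You flag this but leave it open. Your fallback $\kappa^j\mapsto\noisy{\kappa}^j$ proves a strictly weaker statement. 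In Lemma~\ref{lem:temporal-witness-size}, where $|\rho\super{j}|=O(\noisy{\kappa}^{\ell-j}|\mu_-\super{\ell}|)$, it turns $\sum_j\kappa^j\noisy{\kappa}^{\ell-j}=O(\noisy{\kappa}^{\ell})$ into $\ell\,\noisy{\kappa}^{\ell}$.

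The missing idea is to decompose only the repaired syndrome, and to bound the repair against $\mu\super{i}$ rather than against $\noisy{a}\super{i}$. By Lemma~\ref{lem:equivalent-repair}, $\rep\noisy{a}\super{i}=\noisy{a}\super{i}+\nu(\mu\super{i})$. Apply Lemma~\ref{lem:active-correction-invariance} and the clean contraction $\kappa$ to the whole clean restriction $\dnum^{\cK}\rep\noisy{a}\super{i}$, and only afterwards the triangle inequality:
\begin{equation}
|\noisy{a}\super{i-1}|^{\cK}\le\kappa|\noisy{a}\super{i}|^{\cK}+\kappa|\nu(\mu\super{i})|+|\rho\super{i-1}|^{\cK},\qquad |\nu(\mu\super{i})|\le\frac{1+\ve}{1-\ve}|\mu\super{i}|\le\frac{1+\ve}{1-\ve}\,\noisy{\kappa}^{\ell-i}|\mu\super{\ell}|.
\end{equation}
The second chain is near half fault for the input $\mu\super{i}$, followed by Lemma~\ref{lem:noisy-contraction} along the pure measurement history. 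The paper, too, controls these terms through the full history of $\mu\super{\ell}$ rather than componentwise in $\cK$.

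Now unroll from $\noisy{a}\super{\ell}=\mu\super{\ell}+\rho\super{\ell}$. The $\rho\super{j}$ and $\mu\super{\ell}$ terms see only clean contraction, giving $\kappa^j$ and $\kappa^{\ell}$. The repair terms total $\frac{1+\ve}{1-\ve}\sum_{j=1}^{\ell}\kappa^j\noisy{\kappa}^{\ell-j}|\mu\super{\ell}|\le\noisy{\kappa}^{\ell}|\mu\super{\ell}|$, because $\kappa/\noisy{\kappa}=(1-\ve)/2$ and $\sum_{k\ge1}((1-\ve)/2)^k=(1-\ve)/(1+\ve)$. This yields the stated form without ever using linearity of $\desc$ on representatives.
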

\begin{proof}
    Similar to Lemma \ref{lem:witness-size}, we can obtain
    \begin{equation}
        |\cK\super{i}| \le (1+c_f) \frac{2}{1-\ve} |\noisy{a}\super{i}|^{\cK} + 1\{i=1\}|\noisy{a}\super{0}|_{\cK}
    \end{equation}
    By Lemma \ref{lem:active-correction-invariance} and \ref{lem:equivalent-repair}, we have
    \begin{align}
        |\noisy{a}\super{i}|^{\cK} &\le |\desc \rep \noisy{a}\super{i+1}|^{\cK} + |\rho\super{i}|^{\cK} \\
        &\le \kappa |\nu (\mu\super{i+1}) + \noisy{a}\super{i+1})|^{\cK} + |\rho\super{i}|^{\cK} \\
        &\le \kappa \frac{1+\ve}{1-\ve} |\mu\super{i+1}|^{\cK} + \kappa |\noisy{a}\super{i+1}|^{\cK} + |\rho\super{i}|^{\cK} 
    \end{align}
    By inudction, it's straightforward to check that
    \begin{equation}
        |\noisy{a}\super{i}|^{\cK} \le \frac{1+\ve}{1-\ve} \sum_{j>i} \kappa^{j-i} |\mu\super{j}| + \kappa^{\ell-i} |\mu\super{\ell}| + \sum_{j\ge i} \kappa^{j-i} |\rho\super{j}|
    \end{equation}
    By Lemma \ref{lem:noisy-contraction}, we have $|\mu\super{j}| \le \noisy{\kappa}^{\ell-j}|\mu\super{\ell}|$ and thus
    \begin{equation}
        |\noisy{a}\super{i}|^{\cK} \le  (\noisy{\kappa}^{\ell-i} + \kappa^{\ell-i}) |\mu\super{\ell}| + \sum_{j\ge i} \kappa^{j-i} |\rho\super{j}|
    \end{equation}
    Take $i=0$ and the second inequality follows.
    Also note that
    \begin{equation}
        |\cK| \le (1+c_f) \frac{2}{1-\ve} \left( \left(\frac{1}{1-\noisy{\kappa}} + \frac{1}{1-\kappa}\right)|\mu\super{\ell}|^{\cK} + \frac{1}{1-\kappa}\sum_{i=0}^{\ell} |\rho\super{i}|^{\cK}\right)
    \end{equation}
\end{proof}

\subsection{Temporal Graph}

As noted in Theorem \ref{thm:noise-model-reduction}, the seed level effective error $E\super{0}(t)$ is memory-less, i.e., only depends on adjacent time steps.
This warrants the introduction of a \textit{temporal} graph.

\begin{definition}[Temporal Graph]
    \label{def:temporal-graph}
    The \textbf{temporal} RG syndrome graph $\cJ$ is constructed by three independent copies of the RG syndrome graph $\cG$ with vertices labeled by $\cV\times \{-,0,+\}$, which correspond to the times of measurement errors and physical errors.
    In additional to the \textbf{spatial} edges in each copy of $\cG$, add the \textbf{temporal} edges $(u,0)(u',\pm)$ if $u=u'$ or $uu'$ is a spatial edge in $\cG$.
    Note that its max degree is $\Delta_{\cJ}\le 3\Delta_{\cG}+2$.
    For any subset $\cS\subseteq \cJ$, we wrte $\cS_0,\cS_\pm$ to denotes it restrictions to copies $0,\pm$ respectively.
\end{definition}

Intuitively, we shall let the active subgraph $\cG_{\rm act}(\noisy{\bm{\sigma}})$ live in the copy $\cG\times 0$. 
As discussed previously, the active subgraph may have a small witness due to the sources, which is a result of measurement errors at times $t^-$.
Hence, we will examine the subgraphs $\cG(\bm{\mu}_\pm)$ \textit{attached} to the active subgraph, which themselves live in the copies $\cG\times \pm$.

\begin{definition}[Temporal Subgraph]
    \label{def:temporal-subgraph}
    Let $e\super{\ell},\mu_\pm\super{\ell}$ be given and let $\bm{\mu}_\pm,\bm{f}_\pm$ be the history and cleaning of $\mu_\pm\super{\ell}$. 
     Define the disjoint union of qubits
    \begin{equation}
        \Lambda(e\super{\ell},\mu_-\super{\ell}) = e\super{\ell} \sqcup \bigsqcup_{i=1}^{\ell} f_-\super{i}, \quad f_-\super{i}\equiv f(\rep \mu_-\super{i})
    \end{equation}
    We say that any subset $\lambda\subseteq \Lambda$ is \textbf{admissible} relative to $e\super{\ell},\mu_-\super{\ell}$. It then defines a top level physical
    \begin{equation}
        E\super{\ell}(\lambda) = \lambda \cap e\super{\ell} + \iota \left(\sum_{i=1}^{\ell} \lambda \cap f_-\super{i}\right)
    \end{equation}
    Let $\noisy{\sigma}\super{\ell}(\lambda) = \mu_+\super{\ell} + \delta E\super{\ell}(\lambda)$ have history $\noisy{\bm{\sigma}}(\lambda)$ and active history $\noisy{\bm{a}}$.
    The \textbf{temporal} RG subgraph $\cJ(\lambda,\mu_+\super{\ell})$ is then defined as the subgraph of $\cJ$ induced by vertices in
    \begin{equation}
        \cJ(\lambda,\mu_+\super{\ell}) = \cG(\bm{\mu}_-) \sqcup \cG_{\rm act}(\noisy{\bm{\sigma}}(\lambda)) \sqcup \cG(\bm{\mu}_+)
    \end{equation}
    where we have implicitly assumed that $\cG(\bm{\mu}^\pm)$ live in copies labeled by $\pm$ and the active subgraph lives in copy $0$.
    The \textbf{terminal checks} is the collection of seed level checks
    \begin{equation}
        \cT(\lambda, \mu_+\super{\ell}) =(\noisy{a}\super{0} \times 0) \sqcup (\mu\super{0}_+\times +)
    \end{equation}
    Note that copy $-$ does not matter since we are only interested in induced seed levels errors at time steps $0,+$.
    A \textbf{temporal witness} is a connected component meeting the terminal checks.
\end{definition}

\begin{lemma}[Temporal Witness]
    \label{lem:temporal-witness-size}
    Let $\lambda$ be admissible relative to $e,\mu_-$ and $\cJ(\lambda,\mu_+\super{\ell})$ have nonempty temporal witness $\cK$ and terminal checks $\cT$.
    Then
    \begin{equation}
        \quad N(\cK)= \Omega( |\cK|), \quad |\cK| = \Omega(\noisy{\kappa}^{-\ell})|\cK\cap \cT|
    \end{equation}
    where $\Omega,O$ only depend on constants in Eq. \eqref{eq:constants}, and $N(\cK)$ denotes the \textbf{top level} weight,
    \begin{equation}
        N(\cK) = |\delta \ubar{e}\super{\ell}|^{\cK_0} + |\mu_+\super{\ell}|^{\cK_+} + |\mu_-\super{\ell}|^{\cK_-}
    \end{equation}
    % and
    % \begin{equation}
    %     c_R = \frac{1-2\kappa}{2c_f}, \quad C_f = 1+\frac{3}{1-\kappa} + \frac{3c_f}{(1-\kappa)(1-2\kappa)} + \frac{6}{1-2\kappa}
    % \end{equation}
\end{lemma}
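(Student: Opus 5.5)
We prove the two bounds separately, working on each copy of $\cJ$ and reusing the single-history estimates. Write $\cK_-,\cK_0,\cK_+$ for the restrictions of the temporal witness $\cK$ to the three copies. The spatial edges of $\cJ$ inside each copy are exactly those of $\cG$. Hence each of $\cK_\pm$ is a union of connected components of $\cG(\bm{\mu}_\pm)$, and $\cK_0$ is a union of components of $\cG_{\rm act}(\noisy{\bm{\sigma}}(\lambda))$. Only the temporal edges $(u,0)(u',\pm)$ glue these pieces together.

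\textbf{Upper bound $|\cK| = O(N(\cK))$.}
\begin{itemize}
\item For each component $\cK'$ of $\cK_\pm$, the upper bound of Lemma \ref{lem:witness-size} gives $|\cK'| = O(|\mu_\pm\super{\ell}|^{\cK'})$. Summing over components bounds $|\cK_\pm|$ by the last two terms of $N(\cK)$.
\item For each component $\cK'$ of $\cK_0$, Lemma \ref{lem:active-component-size} gives $|\cK'| = O(|\mu_+\super{\ell}|^{\cK'} + \sum_i |\rho\super{i}|^{\cK'})$ with $\bm{\rho}$ the level syndromes of $E\super{\ell}(\lambda)$. The $\mu_+$ term is harmless: $\mu_+\super{\ell}|^{\cK'}\neq 0$ forces $\cK'\cap\cG(\bm{\mu}_+)\neq\emptyset$ on the same shadows, which lies in $\cK_+$ via a temporal edge with $u=u'$. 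It is therefore absorbed into $|\mu_+\super{\ell}|^{\cK_+}$.
\item The sources are $\rho\super{\ell}=\delta(\lambda\cap e\super{\ell})$, which is the $|\delta \ubar{e}\super{\ell}|^{\cK_0}$ term, and $\rho\super{i-1} = \delta(\lambda\cap f_-\super{i})$ for $i\ge1$.
\item Each nonzero cell of $\delta f_-\super{i}$ has shadow in $\cG(\bm{\mu}_-)$, by Definition \ref{def:RG-subgraph}. Its copy in $\cJ$ is adjacent (temporally) to the corresponding vertex of $\cK_0$, so it lies in $\cK_-$.
\item By layer-locality and bounded check weight, $|\rho\super{i-1}|^{\cK_0} \le \fw\,|f_-\super{i}|^{\cK_-} = O(|\cK_-\super{i}|)$, using $\Delta_{\sh}$-bounded layers.
\item Summing over $i$ gives $\sum_i|\rho\super{i}|^{\cK_0} = O(|\cK_-|) = O(|\mu_-\super{\ell}|^{\cK_-})$.
\end{itemize}
Combining the three copies yields $|\cK| = O(N(\cK))$, which is the claim $N(\cK) = \Omega(|\cK|)$.

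\textbf{Lower bound $|\cK|=\Omega(\noisy{\kappa}^{-\ell})|\cK\cap\cT|$.} Split $\cK\cap\cT$ into its $+$ part and its $0$ part.
\begin{itemize}
\item For the $+$ part, $\mu_+\super{0}|_{\cK_+}$, apply the lower bound of Lemma \ref{lem:witness-size} to each component of $\cK_+$ directly.
\item For the $0$ part, start from the second inequality of Lemma \ref{lem:active-component-size}:
\[
|\noisy{a}\super{0}|_{\cK'} \le (\noisy{\kappa}^{\ell}+\kappa^{\ell})|\mu_+\super{\ell}|^{\cK'} + \sum_j \kappa^j|\rho\super{j}|^{\cK'}.
\]
\item The first term is at most $2\noisy{\kappa}^{\ell}\,O(|\cK_+|)$, since $\kappa\le\noisy{\kappa}$.
\item For the $\rho\super{j}$ with $j<\ell$, use the preceding step in reverse. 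A cell of $\rho\super{j}$ at level $j$ comes from $f_-\super{j+1}$, and hence from a component of $\cG(\bm{\mu}_-)$ that reaches level $j+1$. Such a component reaches level $j+1$ only if its top syndrome is large: by Lemma \ref{lem:noisy-contraction} descending from level $\ell$, together with the fill bound, $|\rho\super{j}|^{\cK'} = O(\noisy{\kappa}^{\ell-j})|\mu_-\super{\ell}|^{\cK_-}$.
\item Multiplying by $\kappa^j\le\noisy{\kappa}^j$ gives $\kappa^j|\rho\super{j}|^{\cK'} = O(\noisy{\kappa}^{\ell})|\cK_-|$. The sum over $j$ then contributes $O(\ell\,\noisy{\kappa}^{\ell})|\cK_-|$.
\end{itemize}
This factor $\ell$ is the main obstacle. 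I would remove it by summing geometrically in place of the crude per-level bound. The weight of $f_-\super{j+1}$ is bounded by the weight of level $j+1$ in $\cK_-$, which is itself $O(\noisy{\kappa}^{\ell-j-1}|\mu_-\super{\ell}|)$. The product $\kappa^j\noisy{\kappa}^{\ell-j}$ then still carries slack via $\kappa<\noisy{\kappa}$. Specifically, $\sum_j\kappa^j\noisy{\kappa}^{\ell-j} = \noisy{\kappa}^{\ell}\sum_j(\kappa/\noisy{\kappa})^j = O(\noisy{\kappa}^\ell)$, because $\kappa/\noisy{\kappa}=(1-\ve)/2<1$.

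The top term $\rho\super{\ell}$ with $\kappa^\ell$ is handled the same way, bounded by $\kappa^{\ell}\,O(N(\cK))$. Combining all the pieces with the upper bound gives $|\cK\cap\cT| = O(\noisy{\kappa}^{\ell})|\cK|$, which is the claimed lower bound.
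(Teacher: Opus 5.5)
Your proposal is correct and follows essentially the paper's argument. You bound $|\cK_\pm|$ by Lemma \ref{lem:witness-size}, and $|\cK_0|$ by Lemma \ref{lem:active-component-size} after charging $\mu_+\super{\ell}$ to $\cK_+$ and the sources $\delta(\lambda\cap f_-\super{i})$ to $\cK_-$ through temporal edges. The lower bound then comes from the second inequality of Lemma \ref{lem:active-component-size} with the geometric sum $\sum_j\kappa^j\noisy{\kappa}^{\ell-j}=O(\noisy{\kappa}^{\ell})$.

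There are only cosmetic slips:
\begin{itemize}
\item $f_-\super{i}$ is a level-$i$ chain, so it feeds $\rho\super{i}$ rather than $\rho\super{i-1}$.
\item The constant is the qubit degree $\fd$, not $\fw$.
\item The temporal adjacency should be argued through the producing qubit $q\prec z$, since $\delta(\lambda\cap f_-\super{i})\not\subseteq\delta f_-\super{i}$ in general.
\item The last step needs the trivial bound $N(\cK)\le\Delta_{\sh}|\cK|$, not the upper bound you cite.
\end{itemize}
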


\begin{proof}
    By Lemma \ref{lem:active-component-size}, can upper bound the size of $|\cK_0|$ via the measurement error $|\mu_+\super{\ell}|^{\cK_0}$ and the valid syndromes $|\delta \ubar{E}\super{i}|^{\cK_0}$.
    Note that by definition, if $z\in (\delta \ubar{f}_-\super{i})|^{\cK_0}$ for $i\in [\ell)$, then $z \succ \ubar{f}_-\super{i}$.
    Since $\ubar{f}_-\super{i}\subseteq f_-\super{i}$, we see that $z\succ f_-\super{i}$ and thus $\sh z \succeq \sh f_-\super{i}$.
    Note that $\sh f_-\super{i}$ is contained in $\cG(\bm{\mu}_-)$.
    Hence, by connectivity of temporal graph, we see that
    \begin{align}
        |\delta \ubar{f}_-\super{i}|^{\cK_0} &\le \sum_{z} 1\{\sh z \in \cK_0\} \sum_{q} 1\{z\succ q\} 1\{q\in f_-\super{i}\} \\
        &\le \sum_{q} 1\{q\in f_-\super{i}|^{\cK_-}\} \sum_{z} 1\{z\succ q\} \\
        &\le \fd |f_-\super{i}|^{\cK_-}
    \end{align}
    Including the top level case, we have for $i\in [\ell]$, that
    \begin{equation}
        |\delta \ubar{E}\super{\ell}|^{\cK_0} \le |\delta e\super{\ell}|^{\cK_0}  + \fd \Delta_{\sh} (1+c_f) \frac{2}{1-\ve} \sum_{i=1}^{\ell} |\mu_-\super{i}|^{\cK_-}
    \end{equation}
    where we used the fill bound of cleaning map and the near half-fault property of the repair syndrome.
    Also note that if $z\in \mu_+\super{\ell}|^{\cK_0}$, then $\sh z\in \cK_+$ by connectivity of the temporal graph. Hence,
    \begin{equation}
        |\mu_+\super{\ell}|^{\cK_0} \le |\mu_+\super{\ell}|^{\cK_+}
    \end{equation}
    Hence, by Lemma \ref{lem:active-component-size}, 
    \begin{equation}
        |\cK_0| = O(N(\cK))
    \end{equation}
    where we used the noisy contraction.
    The case is similar for $|\cK_\pm|$ by Lemma \ref{lem:witness-size} and thus we obtain the upper bound $|\cK|= O(N(\cK))$.
    One can similarly show that $N(\cK)=\Omega(\noisy{\kappa}^{-\ell})|\cK\cap \cT|$ using Lemma \ref{lem:active-component-size}.
    A lower bound of $|\cK|$ is easily achieved by noting that $|\cK|\ge N(\cK)/\Delta_{\sh}$.
    % Put 
    % \begin{equation}
    %     h_i=|\mu_+\super i|^{\cK_+}+|\mu_-\super i|^{\cK_-}, \quad u=|\delta\ubar e|^{\cK_0}
    % \end{equation}
    % Each $\cK_\pm$ is a union of complete measurement-history components, so Lemma \ref{lem:noisy-contraction} gives $h_i\le\noisy{\kappa}^{\ell-i} h_{\ell}$.
    % By definition of $\Delta_{\sh}$, the fill bound $c_f$ and near half fault, 
    % \begin{equation}
    %     |f(\rep \mu_{\pm}\super i)|^{\cK_\pm} \le L|\mu_\pm\super i|^{\cK_\pm},\qquad
    %     L=\Delta_{\sh}(1+c_f) \frac{2}{1-\ve}.
    % \end{equation}
    % A source check in $\cK_0$ produced by $g_\pm$ is adjacent to its producing qubit. Their shadows are equal or spatially adjacent, so the temporal edges place that qubit in $\cK_\pm$. Consequently,
    % \begin{equation}
    %     |s\super i|^{\cK_0}\le \fd Lh_i+ 1\{i=\ell\} u,\qquad s\super0=0.
    % \end{equation}
    % Apply Lemma~\ref{lem:active-component-size} to all active components in $\cK_0$ and Lemma~\ref{lem:witness-size} to the attached measurement components. This gives
    % \begin{equation}
    %     |\cK|\le \frac{1+c_f}{1-\kappa} u + \frac{1+c_f}{1-\noisy{\kappa}} \frac{2}{1-\ve} \left( \fd \Delta_{\sh} \frac{1+c_f}{1-\kappa}  + 1 \right)h_{\ell}\le K(u+h_{\ell})
    % \end{equation}
    % for a uniform $K$. 
    % Since at least one active component is a witness,
    % \begin{align}
    %     1&\le\sum_{i=1}^{\ell}\kappa^i|s\super i|^{\cK_0}\\
    %     &\le\kappa^\ell u+\fd L h_{\ell}\sum_{i=1}^{\ell}\kappa^i\noisy{\kappa}^{\ell-i} \\
    %     &\le H\noisy{\kappa}^\ell (u+h_{\ell}),\qquad H=1+ \fd L.
    % \end{align}
    % The statement then follows.
\end{proof}

% \begin{lemma}[Temporal Witness Entropy]
%     \label{lem:temporal-witness-entropy}
%     The number of connected clusters of the temporal RG syndrome graph $\cJ$ meeting $\cV\super{1}\times 0$ and of size $M$ is bounded by 
%     \begin{equation}
%         |\cV\super{1}| (e\Delta_{\cJ}^2)^M
%     \end{equation}
% \end{lemma}
% \begin{proof}
%     Similar to Lemma \ref{lem:witness-entropy} and thus omitted.
% \end{proof}

\begin{lemma}[Terminal Probability]
    \label{lem:terminal-probability}
    Let $e\super{\ell},\mu\super{\ell}_\pm$ be joint local stochastic with rate $p$. Let $\cS$ be a fixed subset of checks $z\times t$ in $\cJ$ where $t=0,+$.
    Then there exists constants $p_{\rm term},a,c>0$, depending only on the constants in Eq. \eqref{eq:constants}, such that if $p/p_{\rm term} \le a$, then
    \begin{equation}
        \Pr[\cA_{\cS}] \le  2 q^{|\cS|}, \quad q = \left(\frac{p}{p_{\rm term}}\right)^{c \noisy{\kappa}^{-\ell}}
    \end{equation}
    where $\cA_{\cS}$ is the event that there admissible $\lambda$ relative to $e\super{\ell},\mu_-\super{\ell}$ such that $\cS \subseteq \cT(\lambda,\mu_+\super{\ell})$, and $c$ is some constant.
\end{lemma}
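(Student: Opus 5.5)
The plan is a Peierls-type union bound over temporal witnesses, modeled on the proof of Theorem \ref{thm:witness-probability} but carried out on the temporal graph $\cJ$. On $\cA_{\cS}$, fix an admissible $\lambda$ with $\cS\subseteq\cT(\lambda,\mu_+\super{\ell})$. Every terminal check in $\cS$ lies in some temporal witness of $\cJ(\lambda,\mu_+\super{\ell})$. Let $\cK_1,\dots,\cK_r$ be the distinct temporal witnesses meeting $\cS$. They are disjoint connected clusters of $\cJ$, each meeting the seed level copy $\cV\super{1}\times\{0,+\}$. By Lemma \ref{lem:temporal-witness-size}, each satisfies
\begin{equation*}
|\cK_j| \ge b\,\noisy{\kappa}^{-\ell}|\cK_j\cap\cS|, \qquad N(\cK_j)\ge a'|\cK_j|,
\end{equation*}
for constants $a',b>0$. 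The event $\cA_{\cS}$ is then contained in the union, over such families of disjoint clusters, of the event that every $\cK_j$ carries top-level weight $N(\cK_j)\ge a'|\cK_j|$.

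The key step is to make this weight condition independent of $\lambda$, so that it only involves $e\super{\ell},\mu_\pm\super{\ell}$. The terms $|\mu_\pm\super{\ell}|^{\cK_\pm}$ already have this form. For the term $|\delta\ubar{e}\super{\ell}|^{\cK_0}$, I would use $\ubar{e}\super{\ell}=\lambda\cap e\super{\ell}\subseteq e\super{\ell}$. Every check counted there is adjacent to a qubit of $e\super{\ell}$ whose shadow lies in $\cK_0$ or is spatially adjacent to it. So $N(\cK_j)$ is at most $\fd$ times the number of faults of $e\super{\ell}\sqcup\mu_-\super{\ell}\sqcup\mu_+\super{\ell}$ in the layers over $\cN(\cK_j)$. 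The event therefore reduces to: the fault set restricted to the layers over $\bigsqcup_j\cN(\cK_j)$ has size at least $a''\sum_j|\cK_j|$. Since the neighborhoods may overlap, I would either pass to a maximal subfamily with disjoint neighborhoods, losing only a constant factor by Lemma \ref{lem:cover-size}, or absorb the overlap into the constants.

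Next, sum over fault subsets, exactly as in Theorem \ref{thm:witness-probability}. There are at most $2^{O(\Delta_{\sh}\sum|\cK_j|)}$ candidate subsets inside these layers, and joint local stochasticity gives probability $p^{a''\sum_j|\cK_j|}$ for each. Entropy is counted as in Lemma \ref{lem:witness-entropy}, now with $\Delta_{\cJ}$ in place of $\Delta_{\cG}$: for a fixed $z\times t\in\cS$, the number of connected clusters of size $M$ containing its vertex is at most $(e\Delta_{\cJ})^M$. Crucially, because the clusters are anchored at the points of $\cS$, no factor $|\cC\super{0}|$ appears. I assign each $s\in\cS$ to the cluster containing it. For each cluster I pick a representative anchor and sum over sizes $M_j\ge b\noisy{\kappa}^{-\ell}n_j$, where $n_j=|\cK_j\cap\cS|$. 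This gives, per cluster, a factor $\sum_{M\ge b\noisy{\kappa}^{-\ell}n_j}\rho^{M}\le 2\rho^{b\noisy{\kappa}^{-\ell}n_j}$, with $\rho=(p/p_{\rm term})^{a''}\le 1/2$.

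The main obstacle is the combinatorics of how $\cS$ is partitioned among clusters. Summing over set partitions of $\cS$ can cost up to $|\cS|^{|\cS|}$, and the factor $2^r$ must not become $2^{|\cS|}$ in a way that breaks the claimed form $2q^{|\cS|}$. I would handle this by absorbing the partition entropy into $q$. The partition count is at most $C^{|\cS|}$ after ordering the anchors, since for each $s$ one only records whether it starts a new cluster or belongs to the previous cluster containing it, which is determined by the clusters themselves. Also, $\rho^{b\noisy{\kappa}^{-\ell}n_j}$ contains a slack factor $\rho^{b\noisy{\kappa}^{-\ell}n_j/2}\le 2^{-n_j}$ once $p/p_{\rm term}\le a$ is small. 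Choosing $c$ somewhat smaller than $a''b$ then yields $\Pr[\cA_{\cS}]\le 2q^{|\cS|}$.
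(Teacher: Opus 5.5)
Your proposal is correct and follows essentially the same route as the paper. Both arguments take a union bound over unions of temporal witnesses anchored at $\cS$. Lemma \ref{lem:temporal-witness-size} supplies $|\cK|=\Omega(\noisy{\kappa}^{-\ell})|\cS|$ and $N(\cK)=\Omega(|\cK|)$, and the $\lambda$-dependence is removed by summing over $\lambda$-independent fault patterns in the layers of $\cK$.

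The only differences are cosmetic:
\begin{enumerate}
\item For $\tau_0=\delta\ubar{e}\super{\ell}|^{\cK_0}$, you count faults of $e\super{\ell}$ in the layers over $\cN(\cK_0)$ directly, where the paper invokes Lemma \ref{lem:cover-probability}.
\item You control how $\cS$ is split among clusters by a leader argument, paying a factor $2^{|\cS|}$ that you absorb into $q$. The paper instead cites the bound $e^{|\cS|-1}(e\Delta_{\cJ})^{K-|\cS|}$ on anchored unions.
\end{enumerate}
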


\begin{proof}
    If $\cA_{\cS}$ occurs, then there exists a union $\cK$ of connected clusters in $\cJ$ such that $\cS\subseteq \cK$, each connected cluster in the union intersects $\cS$, and that each connected cluster is a witness of $\cJ(\lambda,\mu_+\super{\ell})$.
    If the union has fixed size $|\cK|= K$, then the number (entropy) of such unions is upper bounded by \cite{gottesman2013fault}
    \begin{equation}
        e^{|\cS|-1} (e\Delta_{\cJ})^{K-|\cS|}
    \end{equation}
    By Lemma \ref{lem:temporal-witness-size}, we see that given such a fixed union $\cK$, we have
    \begin{equation}
        |\cK| =\Omega(\noisy{\kappa}^{-\ell})  |\cS|, \quad N(\cK) =\Omega(|\cK|)
    \end{equation}
    Hence,
    \begin{equation}
        \Pr[\cA_{\cS}] \le \sum_{K=\Omega(\noisy{\kappa}^{-\ell})|\cS|} \sum_{|\cK|=K} \Pr[\cJ(\lambda,\mu_+\super{\ell}) \text{ has witnesses } \cK]
    \end{equation}
    If the union $\cK$ consists of witness of $\cJ(\lambda,\mu_+\super{\ell})$, then by Lemma \ref{lem:temporal-witness-size}, there must exist $\ell$-level syndromes $\tau_t,t=0,\pm$ with shadows $\sh \tau_t \subseteq \cK_{t}$ such that
    \begin{equation}
        \tau_0 = \delta \ubar{e}\super{\ell}|^{\cK_0}, \quad \tau_\pm=\mu_\pm\super{\ell}|^{\cK} 
    \end{equation}
    and that
    \begin{equation}
        |\tau|\equiv |\tau_0| + |\tau_+| +|\tau_-| = \Omega(|\cK|)
    \end{equation}
    In particular, we see that $\tau_0$ is covered by $e\super{\ell}$ and $\tau_\pm$ are covered by $\mu_\pm\super{\ell}$.
    Hence,
    \begin{equation}
        \Pr[\cA_{\cS}] \le \sum_{K=\Omega(\noisy{\kappa}^{-\ell})|\cS|} \sum_{|\cK|=K} \sum_{|\tau|=\Omega(|\cK|):\sh \tau_t \subseteq \cK_t}\Pr[\tau_0 \text{ covered by } e\super{\ell},\tau_\pm \text{ covered by } \mu_\pm\super{\ell}] 
    \end{equation}
    The statement the follows by Lemma \ref{lem:cover-probability}.
\end{proof}

\subsection{Noise Transfer}

\begin{lemma}[Terminal Dependence]
    \label{lem:terminal-dependence}
    Consider the notation in Definition \ref{def:temporal-subgraph}. For every $q\in \Lambda$, let $w_q\in \cC\super{0}$ denotes it seed level (composite) shadow, i.e., $w_q =\sh\circ \cdots \circ \sh q$.
    Then $\noisy{a}\super{0}(\lambda)|_z$ depends on $\lambda$ only implicitly through qubits $q$ with $w_q \preceq z$, i.e.,
    \begin{equation}
        \noisy{a}\super{0}(\lambda)|_z = \noisy{a}\super{0}(\lambda\cap \{q:w_q\preceq z\})|_z
    \end{equation}
\end{lemma}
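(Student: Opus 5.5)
We argue by layer-locality, tracking how each qubit of $\lambda$ enters the active history. Fix a seed check $z\in\cZ\super{0}$ and split $\lambda=\lambda_{\le z}\sqcup\lambda_{\not\le z}$, where $\lambda_{\le z}=\lambda\cap\{q:w_q\preceq z\}$. We must show that adding the qubits of $\lambda_{\not\le z}$ to $\lambda_{\le z}$ leaves $\noisy{a}\super{0}|_z$ unchanged.

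By Lemma \ref{lem:active-history} applied with $\bm{\rho}$ the components of $\delta E\super{\ell}(\lambda)$ level by level (i.e.\ $\rho\super{\ell}=\delta(\lambda\cap e\super{\ell})$ and $\rho\super{i}=\delta(\lambda\cap f_-\super{i})$ for $i\in[\ell)$), the active history is
\begin{equation}
    \noisy{a}\super{\ell}=\mu_+\super{\ell}+\rho\super{\ell},\qquad \noisy{a}\super{i-1}=\desc\rep\noisy{a}\super{i}+\rho\super{i-1}.
\end{equation}
The key observation is that every map in this recursion is layer-local. The repair $\rep\noisy{a}\super{i}=\noisy{a}\super{i}+\nu(\mu_+\super{i})$ is layer-local because, by Lemma \ref{lem:equivalent-repair}, the term $\nu(\noisy{a}\super{i})=\nu(\mu_+\super{i})$ does not depend on $\lambda$ at all. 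This is the crucial point: a priori $\nu$ need not be layer-local (for instance, MW repair), but its dependence on $\lambda$ is eliminated entirely. The cleaning $f$, the codifferential $\delta$ and the projection $\pi$ are layer-local by Definitions \ref{def:level-map} and \ref{def:cleaning}. Hence $\desc\rep$ is layer-local as a map $Z\super{i}\to Z\super{i-1}$, modulo the $\lambda$-independent additive term coming from $\nu$. Equivalently, $\dnum_u\desc\rep\noisy{a}\super{i}$ depends only on $\dnum^{\preceq u}\noisy{a}\super{i}$.

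Next we induct downward on $i$ with the claim: for every cell $u\in\cC\super{i}$, $\dnum^{\preceq u}\noisy{a}\super{i}$ (at level $i=0$, $\dnum_{\preceq u}$) depends on $\lambda$ only through the qubits $q$ whose level-$i$ composite shadow is $\preceq u$. At the top level this holds because $\rho\super{\ell}|_{z'}$ with $z'\preceq u$ involves only qubits $q\prec z'$, hence $q\preceq u$. For lower levels, the new source $\rho\super{i-1}|_{z'}$ with $z'\preceq u$ involves only $q\in f_-\super{i-1}$ with $q\prec z'$, so $q\preceq u$. The descended part $\dnum_{\preceq u}\desc\rep\noisy{a}\super{i}$ depends, by layer-locality, on $\dnum^{\preceq u}\noisy{a}\super{i}$, i.e.\ on the restriction to layers over cells $v\preceq u$ at level $i-1$. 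Every cell $v'$ in such a layer satisfies $\sh v'=v\preceq u$, and monotonicity of $\sh$ shows that every cell $\preceq v'$ has shadow $\preceq u$. So the induction hypothesis at level $i$, applied to cells $v'$ in those layers, gives dependence only on qubits whose shadow chains land $\preceq u$, using $\sh\circ\cdots$ composed one further step. Taking $i=0$ and $u=z$ gives the statement.

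I expect the main obstacle to be bookkeeping in the inductive step. One issue is to state cleanly that $\desc$, a composite of a nonlinear layer-local $f$ with linear maps, is layer-local when viewed as a map between different levels, with "layer" at level $i-1$ meaning a single cell. The other is to verify that the passthrough property is not needed here. The nonlinearity of $f$ is harmless because layer-locality was defined for nonlinear maps. The reduction $\nu(\noisy{a}\super{i})=\nu(\mu_+\super{i})$ from Lemma \ref{lem:equivalent-repair} is what removes the only potentially nonlocal ingredient.
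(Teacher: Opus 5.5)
Your proof is correct, but it is organized differently from the paper's.

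The paper removes one qubit of $\lambda$ at a time. For $q$ with $w_q\not\preceq z$, Lemma \ref{lem:RG-stability} shows that the seed-level noisy descents $E\super{0}(\lambda)$ and $E\super{0}(\lambda+q)$ differ only in the up-set $\{\succeq w_q\}$. Lemmas \ref{lem:perturbation} and \ref{lem:active-history} then give
\begin{equation*}
\noisy{a}\super{0}(\lambda)+\noisy{a}\super{0}(\lambda+q)=\delta\bigl(E\super{0}(\lambda)+E\super{0}(\lambda+q)\bigr),
\end{equation*}
which is also supported in $\{\succeq w_q\}$, a set that does not contain $z$.

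You instead run a down-set dependence induction directly on the active-history recursion. You use only three ingredients: layer-locality of $f,\delta,\pi$; monotonicity of $\sh$; and Lemma \ref{lem:equivalent-repair}, which fixes the repair term as $\nu(\noisy{a}\super{i})=\nu(\mu_+\super{i})$ independently of $\lambda$.

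The two arguments are dual forms of the same mechanism. The paper's noisy descent builds in the fixed $\rep\mu\super{i}$, which is exactly your use of Lemma \ref{lem:equivalent-repair}. What each buys:
\begin{itemize}
\item The paper's version is shorter because it reuses the stability lemma, which is needed anyway for the increments $h_\bullet$ in Theorem \ref{thm:noise-transfer}.
\item Yours is self-contained. It never passes through physical noisy descents, and it avoids both Lemma \ref{lem:perturbation} and the passthrough property used inside Lemma \ref{lem:RG-stability}.
\end{itemize}

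Two cosmetic fixes are needed:
\begin{itemize}
\item Phrase the induction hypothesis with the cell restriction $\dnum_{\preceq u}\noisy{a}\super{i}$ for $u\in\cC\super{i}$. The layer restriction $\dnum^{\preceq v}$ should appear only in the inductive step.
\item Include $\delta(\lambda\cap f_-\super{\ell})$ in $\rho\super{\ell}$, since $f_-\super{\ell}$ consists of level-$\ell$ qubits. Your base case treats these exactly like $e\super{\ell}$.
\end{itemize}
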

\begin{proof}
    Let $q\in \Lambda$ be such that $w_q\not\preceq z$ and consider the symmetric difference $\lambda +q$. 
    Let $E\super{0}(\lambda)$ be the seed level noisy descent of $E\super{\ell}(\lambda)$ relative to $\mu_+\super{\ell}$.
    By Lemma \ref{lem:RG-stability}, we see that $E\super{0}(\lambda) + E\super{0}(\lambda+q)$ is supported within $\{\succeq w_q\}$.
    Also note that by Lemma \ref{lem:active-history}, we have
    \begin{equation}
        \noisy{a}\super{0}(\lambda) +\noisy{a}\super{0}(\lambda+q) = \delta  (E\super{0}(\lambda) + E\super{0}(\lambda+q))
    \end{equation}
    Hence, the difference is supported within $\{\succeq w_q\}$. 
    Since $z\not\succeq w_q$, we see that $\noisy{a}\super{0}(\lambda)|_z = \noisy{a}\super{0}(\lambda+q)|_z$ and thus the claim follows.
\end{proof}

% \begin{definition}[$Z$ Graph]
%     \label{def:Z-graph}
%     Let $\cG_Z(C\super{0})$ denote the graph with vertices $\cZ\super{0}$ and edges $zz'$ if there exists cell $w\in \cC\super{0}$ such that $w\preceq z,z'$.
%     Note that $\cG_Z(C\super{0})$ is of bounded degree, depending only on $\fw,\fd$.
%     We refer to $\cG_Z(C\super{0})$ the $Z$ graph \textbf{induced} by $C\super{0}$.
% \end{definition}

\begin{lemma}[Admissible Cover]
    \label{lem:admissible-cover}
    Consider the notation in Definition \ref{def:temporal-subgraph} and let $\cG_Z(C\super{0})$ denote the $Z$ graph of the seed code as in Example \ref{ex:code-graph}.
    Let $\cB(e\super{\ell},\mu_\pm\super{\ell})\subseteq \cZ\super{0}$ denote the collection of seed level $z$ checks such that $\noisy{a}\super{0}(\lambda)|_z\ne 0$ for some admissible $\lambda$.
    For any $\cA \subseteq \cB$, there exists $\cS\subseteq \cA$ is a maximmaly independent (relative to the $Z$ graph) subset, then there exists admissible $\lambda\subseteq \Lambda$ such that
    \begin{equation}
        \cS\subseteq \noisy{a}\super{0}(\lambda)
    \end{equation}
\end{lemma}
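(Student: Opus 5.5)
The plan is to assemble $\lambda$ locally, one element of $\cS$ at a time, using the terminal dependence of Lemma~\ref{lem:terminal-dependence} as the locality principle. Since $\cS\subseteq\cA\subseteq\cB$, each $z\in\cS$ comes with an admissible witness $\lambda_z\subseteq\Lambda$ such that $\noisy{a}\super{0}(\lambda_z)|_z\ne0$. Lemma~\ref{lem:terminal-dependence} lets us shrink this witness to
\begin{equation}
    \lambda_z' \equiv \lambda_z\cap\{q\in\Lambda: w_q\preceq z\},
\end{equation}
which still satisfies $\noisy{a}\super{0}(\lambda_z')|_z=\noisy{a}\super{0}(\lambda_z)|_z\ne0$. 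Each $\lambda_z'$ is admissible, as a subset of $\Lambda$, and it is supported only on qubits whose seed shadow lies below $z$. We then set
\begin{equation}
    \lambda=\bigsqcup_{z\in\cS}\lambda_z'.
\end{equation}

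The key step is that the supports of these pieces are disjoint, and moreover each piece is invisible to the checks of the others. Suppose $q\in\lambda_z'$ satisfies $w_q\preceq z'$ for some $z'\ne z$ in $\cS$. Then $w_q\preceq z$ and $w_q\preceq z'$, so $zz'$ is an edge of the $Z$ graph $\cG_Z(C\super{0})$ of Example~\ref{ex:code-graph}. This contradicts the independence of $\cS$. Hence for every $z\in\cS$,
\begin{equation}
    \lambda\cap\{q:w_q\preceq z\}=\lambda_z',
\end{equation}
and a second application of Lemma~\ref{lem:terminal-dependence} gives $\noisy{a}\super{0}(\lambda)|_z=\noisy{a}\super{0}(\lambda_z')|_z\ne0$ for all $z\in\cS$. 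This is exactly $\cS\subseteq\noisy{a}\super{0}(\lambda)$. The union $\lambda$ is again a subset of $\Lambda$, hence admissible.

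The expected obstacle is the correct use of Lemma~\ref{lem:terminal-dependence}. It must be applied with $\mu_+\super{\ell}$ fixed and $\lambda$ varying, and one must confirm that the shrinking step is legitimate. The lemma's proof flips one qubit at a time, using Lemma~\ref{lem:RG-stability} and Lemma~\ref{lem:active-history}, so removing all qubits with $w_q\not\preceq z$ one by one preserves the $z$-entry; I would make this iteration explicit. There is also a subtlety with the edge convention: the $Z$ graph joins $z,z'$ whenever some cell $w\preceq z,z'$, which includes qubit and $X$-check shadows. Because $w_q$ may be any cell of $\cC\super{0}$, this broad adjacency is exactly what the disjointness argument needs. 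Maximality of $\cS$ plays no role here; only independence is used, and maximality matters later for covering $\cA$ via Lemma~\ref{lem:cover-size}.
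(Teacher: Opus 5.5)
Your proposal is correct and matches the paper's proof: choose witnesses $\lambda_z$, truncate each to $\{q: w_q\preceq z\}$, take the union over $z\in\cS$ (disjoint and mutually invisible by independence in $\cG_Z(C\super{0})$), and apply Lemma~\ref{lem:terminal-dependence}. Your explicit check that $\lambda\cap\{q:w_q\preceq z\}=\lambda_z'$, and your remark that only independence (not maximality) is used here, spell out what the paper compresses into the phrase ``$\lambda_{\cS}$ is well-defined.''
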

\begin{proof}
    Let $\cS$ be a maximally independent subset of $\cA$ (relative to the $Z$ graph).
    For every $z\in \cS\subseteq \cB$, we can choose admissible $\lambda_z$ such that the seed level active history $\noisy{a}\super{0}(\lambda_z)|_z=1$.
    Define
    \begin{equation}
        \lambda_{\cS} = 
        \begin{cases}
            \lambda_z \cap \{q: w_q\preceq z\} & z\in \cS \\
            0 & \text{otherwise}
        \end{cases}
    \end{equation}
    Since all checks in $\cS$ are non-adjacent, we see that $\lambda_{\cS}$ is well-defined. By the terminal dependence, we see that 
    \begin{equation}
        \cS \subseteq \noisy{a}\super{0}(\lambda_{\cS})
    \end{equation}
\end{proof}
\begin{theorem}[Noise Transfer]
    \label{thm:noise-transfer}
    Consider the noise model reduction in Theorem \ref{thm:noise-model-reduction}. Then there exists $p_{\rm RG},a,c>0$, depending only on the constants in Eq. \eqref{eq:constants}, such that if the joint stochastic rate $p$ of top level errors satisfies $p/p_{\rm RG} \le a$, then effective seed level errors are local stochastic with rate
    \begin{equation}
        q = \left(\frac{p}{p_{\rm RG}}\right)^{c \noisy{\kappa}^{-\ell}}
    \end{equation}
    The same estimate holds jointly for disjoint copies of this three-time gadget whose underlying physical and measurement fault locations are disjoint, even when those faults are correlated.
\end{theorem}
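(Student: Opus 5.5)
The plan is to reduce local stochasticity of the effective seed errors to the terminal probability bound of Lemma \ref{lem:terminal-probability}. First fix what "effective errors" means. The measurement part is $\mu\super{0}(t^+)$. For the physical part, I would replace $E\super{0}(t)$ by a stabilizer-equivalent $\ubar{E}\super{0}(t)$ supported near the checks it violates. Here $\delta E\super{0}(t) = \noisy{a}\super{0}(t)+\mu\super{0}(t^+)$ by Lemma \ref{lem:perturbation} and Lemma \ref{lem:active-history}. Local triviality of the seed code (Remark \ref{rem:setup}) lets me decompose $E\super{0}$ into local pieces attached to violated seed checks. Any piece with trivial syndrome supported in one $X$-check or one qubit is a stabilizer and can be discarded. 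Concretely, I would add qubits one at a time as in Example \ref{ex:stability-intuition}. By Lemma \ref{lem:RG-stability}, each increment of the seed descent is confined to $\{\succeq w_q\}$. So every qubit of $\ubar{E}\super{0}$ that survives lies in the $Z$-graph neighborhood of some $z$ with $\noisy{a}\super{0}(\lambda)|_z\neq 0$ for an admissible $\lambda$ (some intermediate $\lambda$ in the ordering), that is, of a check in $\cB(e\super{\ell},\mu_\pm\super{\ell})$.

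With this in hand, take a fixed set $\cS$ of seed fault locations, qubits at times $t$ and checks at times $t^+$. The event $\cS\subseteq \ubar{E}\super{0}\sqcup\mu\super{0}$ implies that $\cS$ is covered, in $\cG(C\super{0})$ with constant radius, by $\cB\times 0 \sqcup \mu\super{0}_+\times +$. I would then pass to a maximally independent subset $\cS'$ of the covering checks, with $|\cS'|=\Omega(|\cS|)$ by Lemma \ref{lem:cover-size}. By Lemma \ref{lem:admissible-cover} there is a single admissible $\lambda$ with $\cS'_0\subseteq \noisy{a}\super{0}(\lambda)$, so the event $\cA_{\cS'}$ occurs. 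I would then sum over the possible $\cS'$, at entropy $2^{O(|\cS|)}$ exactly as in the proof of Lemma \ref{lem:cover-probability}, and apply Lemma \ref{lem:terminal-probability}. This gives $\Pr\le 2^{O(|\cS|)}\,2q'^{\,c|\cS|}$, which is absorbed into $q^{|\cS|}$ after lowering $p_{\rm RG}$ and the exponent constant $c$; this needs $p/p_{\rm term}$ small.

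For the joint statement over all times $t\in[0,T]$, the gadget at time $t$ depends only on faults at $t^-,t,t^+$. Adjacent gadgets share measurement layers, however, so the terminal witnesses from different times must be handled together. I would take a union of temporal witnesses across all gadgets that meet $\cS$, glue the temporal graphs along shared $\pm$ copies (the degree stays bounded), and rerun the cluster-counting argument of Lemma \ref{lem:terminal-probability}. For disjoint gadgets with disjoint fault locations, the product bound follows directly, because joint local stochasticity of the top-level faults applies to the union of the covering sets $\tau$.

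The main obstacle is the first step: building $\ubar{E}\super{0}$ so that its support is genuinely controlled by checks that are active for some admissible $\lambda$, rather than by the actual $\noisy{a}\super{0}$. A descent that is locally zero in syndrome could still be a nontrivial local operator. I would first try to handle this by interpolating over one qubit at a time and applying local triviality to each increment inside $\{\succeq w_q\}$.
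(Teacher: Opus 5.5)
Your proposal is correct and follows the paper's proof essentially step for step. The shared steps are: the one-qubit-at-a-time interpolation with local triviality applied to each increment $h_\bullet\subseteq\{\succeq w_\bullet\}$ to define $\ubar{E}\super{0}$, the covering of the surviving support by $\cB$, the maximally independent subset fed into Lemma \ref{lem:admissible-cover}, and the entropy sum against Lemma \ref{lem:terminal-probability}. Your handling of consecutive gadgets that share $\mu\super{\ell}(t^+)$ goes slightly beyond the paper, which writes out only a single time step; instead of gluing, you could note that each fault location feeds at most two gadgets, so the union of the per-gadget covering sets loses only a factor of two in the exponent.
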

\begin{proof}
    By Remark \ref{rem:cover-equivalence}, it's sufficient to consider a fixed subset of seed level $z$-checks at distinct time steps, i.e., fault locations
    \begin{equation}
        \tau = \bigsqcup_{t} \tau(t) \sqcup \tau(t^+), \quad  \tau(t),\tau(t^+)\subseteq \cZ\super{0}
    \end{equation}
    For exposition purpose, we shall only consider $\tau=\tau(t)$ for a fixed time step $t$ (and thus omitted); the general case can be similarly derived.
    Specifically, we shall upper bound the probability
    \begin{equation}
        \Pr[\tau \text{ covered by } \ubar{E}\super{0}]
    \end{equation}
    where $\ubar{E}\super{0}$ is a to be determined seed level physical, which is stabilizer equivalent to the effective seed level error $E\super{0}$ as defined in Theorem \ref{thm:noise-model-reduction}.
    As before, let us define
    \begin{equation}
        E\super{\ell} = e\super{\ell} +\hat{\cE}(\mu_-\super{\ell})
    \end{equation}
    Consider an increasing sequence $\lambda_{\bullet}$ of admissible subsets relative to $e\super{\ell},\mu_-\super{\ell}$ constructed in some deterministic manner such that the increment $\lambda_i \backslash \lambda_{i-1}$ is a single qubit $q_i$ in $\Lambda(e\super{\ell},\mu_-\super{\ell})$ for each $i$.
    Let
    \begin{equation}
        \noisy{\sigma}\super{\ell}(\lambda_{\bullet}) = \delta E\super{\ell}(\lambda_{\bullet}) + \mu_+\super{\ell}
    \end{equation}
    By Lemma \ref{lem:RG-stability}, if $E\super{0}(\lambda_{\bullet})$ denotes the noisy descent of $E\super{\ell}(\lambda_{\bullet})$ relative to $\mu_+\super{\ell}$, then, modulo stabilizers,
    \begin{equation}
        E\super{\ell} + \hat{\cE}(\noisy{\sigma}\super{\ell}) + \hat{\cE}(\mu_-\super{\ell}) = \sum_{\bullet} h_{\bullet} + \im\delta, \quad h_{\bullet} = E\super{0}(\lambda_\bullet) + E\super{0}(\lambda_{\bullet-1})
    \end{equation}
    Note that by Lemma \ref{lem:RG-stability} and \ref{lem:terminal-dependence}, each difference $h_{\bullet}$ is supported within $\{\succeq w_{\bullet}\}$ (covered by $w_{\bullet}$) at the seed level where $w_{\bullet}\in \cC\super{0}$ is the seed level shadow of $q_{\bullet}$.
    Hence, by local triviality in Remark \ref{rem:setup}, if $\delta h_{\bullet} =0$, then it must be a stabilizer $\in \im \delta$.
    Therefore, we can consider the equivalent representation of the effective seed level physical error
    \begin{equation}
        \ubar{E}\super{0} = \sum_{\bullet: \delta h_{\bullet} \ne 0} h_{\bullet}
    \end{equation}
    Note that if $\noisy{\bm{a}}(\lambda_{\bullet})$ is the noisy history of $\noisy{\sigma}\super{\ell}(\lambda_{\bullet})$, then by Lemma \ref{lem:perturbation} and \ref{lem:active-history}, we have
    \begin{equation}
        \delta h_{\bullet} = \noisy{a}\super{0}(\lambda_{\bullet}) +\noisy{a}\super{0}(\lambda_{\bullet-1})
    \end{equation}
    Hence, removing $h_{\bullet}$ with $\delta h_{\bullet}=0$ from the error $E\super{0}\mapsto \ubar{E}\super{0}$ means that at least one of active histories has a terminal check.
    Note that if $\tau$ is covered by $\ubar{E}\super{0}$ (relative to the induced graph of $C\super{0}$), then it must be covered by $\cB$ (relative to the induced $Z$ graph of $C\super{0}$) as defined in Lemma \ref{lem:admissible-cover}.
    Indeed, if $z\in \tau$, then $z \succ h_i$ for some $\delta h_i\ne 0$. Since $h_i$ is supported in $\{\succeq w_i\}$, we see that $z\succ w_i$.
    Since $\delta h_i \ne 0$, we can assume without loss of generality that there exists $z'$ such that $\noisy{a}\super{0}(\lambda_i)|_{z'}\ne 0$ while $\noisy{a}\super{0}(\lambda_{i-1})|_{z'}= 0$; hence, $z'\in \cB$.
    Since $z' \succ w_i$, we see that $z\in \cN(z')$, where we write $\cN$ for neighborhoods relative to the $Z$ graph, and thus $\tau$ is covered by $\cB$ relative to the $Z$ graph.
    Apply Lemma \ref{lem:admissible-cover} to $\cN(\tau) \cap \cB$, there exists cover $\cS\subseteq \cN(\tau) \cap \cB \subseteq \cN(\tau)$ of $\cN(\tau)\cap \cB$ with 
    \begin{equation}
        \quad \cS\subseteq \noisy{a}\super{0}(\lambda)
    \end{equation}
    for some admissible $\lambda$. 
    Hence, if $\tau$ covered by $\ubar{E}\super{0}$, then there exists admissible $\lambda$ relative to $e\super{\ell},
    \mu_-\super{\ell}$ such that $\cS$ is a subset of terminal checks of $\cJ(\lambda,\mu_+\super{\ell})$, and thus we wish to apply Lemma \ref{lem:terminal-probability}.
    Note, however, that $\cS$ depends on $\tau$ but also the realization of errors $e\super{\ell},\mu_\pm\super{\ell}$, and thus we shall instead consider the following, which is independent of the realization of errors,
    \begin{equation}
        \{\cS\subseteq \cN(\tau): \tau \subseteq \cN_{2}(\cS)\}
    \end{equation}
    Indeed, our argument implies that given $e\super{\ell},\mu_\pm\super{\ell}$ such that $\tau$ is covered by $\ubar{E}\super{0}$, then there exists $\cS$ in the collection such that event $\cA_{\cS}$ occurs. 
    Hence, by Lemma \ref{lem:cover-size} and \ref{lem:terminal-probability}, if $c=\deg \cG_Z(C\super{0})$, which depends only on $\fw,\fd$, then
    \begin{align}
        \Pr[\tau \text{ covered by } \ubar{E}\super{0}] &\le \sum_{\cS} 1\{\cS\subseteq \cN(\tau),\tau \subseteq \cN_2(\cS)\} \Pr[\cA_{\cS}] \\
        &\le 2 q^{|\tau|/c^2} \sum_{\cS} 1\{\cS\subseteq \cN(\tau)\}\\
        &\le 2 q^{|\tau|/c^2} 2^{c|\tau|} 
    \end{align}
    The statement then follows.
\end{proof}

\begin{theorem}[SS Threshold]
    \label{thm:RG-threshold}
    Consider the setup in Remark \ref{rem:setup}, and the noise model in Definition \ref{def:noise-model} such that physical $e\super{\ell}(t)$ and measurement errors $\mu\super{\ell}(t^+)$ for $t\in [0,T]$ in the top level $C\super{\ell}$ are (jointly) local stochastic with rate $p$.
    Then there exists \textbf{threshold} $p_{\rm RG}>0$ and constants $a,c>0$, all of wchich depending only on the constants in Eq. \eqref{eq:constants} (independent of seed decoder), such that if $p/p_{\rm RG} \le a$, then
    \begin{itemize}
        \item If $C\super{0}$ is equipped with an arbitrary meta-check decoder, then the RG decoder satisfies
        \begin{equation}
            \Pr[T \text{ rounds of SS decoding } C\super{\ell} \text{ fails}]= O(T)|\cQ\super{0}| \left(\frac{p}{p_{\rm RG}}\right)^{c\noisy{\kappa}^{-\ell}}
        \end{equation}
        \item If $C\super{0}$ is equipped with a single-shot meta-check decoder, i.e., below threshold $p_c>0$ of the seed decoder,
        \begin{equation}
            \Pr[T \text{ rounds of SS decoding } C\super{0} \text{ fails}] =O(T) \poly (|\cQ\super{0}|) \left(\frac{p}{p_{c}}\right)^{\Omega(n^{\beta})}
        \end{equation}
        then the RG decoder satisfies
        \begin{equation}
            \Pr[T \text{ rounds of SS decoding } C\super{\ell} \text{ fails}]= O(T) \poly(|\cQ\super{0}|) \left(\frac{p}{p_{\rm RG}}\right)^{\Omega(n^{\beta}\noisy{\kappa}^{-\ell})}
        \end{equation}
        provided that $\ell$ is sufficiently large so that
        \begin{equation}
            q = \left(\frac{p}{p_{\rm RG}}\right)^{c \noisy{\kappa}^{-\ell}} < p_c
        \end{equation}
    \end{itemize}
\end{theorem}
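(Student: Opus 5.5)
The plan is to reduce everything to the seed level via Theorem \ref{thm:noise-model-reduction} and then feed the effective seed noise, controlled by Theorem \ref{thm:noise-transfer}, into whatever guarantee the seed decoder has. By the remark after Theorem \ref{thm:noise-model-reduction}, the boundary condition $\mu\super{\ell}(T^+)=0$ gives $\hat{\cE}(\mu\super{\ell}(T^+))=0$. Hence $r\super{\ell}(T+1)\in\im\delta$ as soon as the seed residue $s\super{0}(T+1)\in\im\delta$. Failure of $T$ rounds at level $\ell$ is therefore contained in the union of two events: a run failure in some round, and logical failure of the seed recursion driven by $E\super{0}(t),\mu\super{0}(t^+)$. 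By Theorem \ref{thm:noise-transfer}, replacing each $E\super{0}(t)$ with its stabilizer-equivalent $\ubar{E}\super{0}(t)$ leaves the seed residue unchanged modulo $\im\delta$, because $\cE_0$ is a valid decoder and depends only on syndromes. The effective family $\ubar{E}\super{0}(t),\mu\super{0}(t^+)$ is then jointly local stochastic with rate $q=(p/p_{\rm RG})^{c\noisy{\kappa}^{-\ell}}$.

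The one point needing care is \emph{joint} local stochasticity across all $t\in[0,T]$. The gadget for $E\super{0}(t)$ uses $e(t),\mu(t^\pm)$, so consecutive gadgets share $\mu(t^+)$. I would split the times into even and odd $t$. Within each class the gadgets use disjoint fault locations, so the last sentence of Theorem \ref{thm:noise-transfer} applies to each class. The two classes are then combined by Cauchy--Schwarz (or by a union over which class each fault location belongs to), at the cost of $q\mapsto q^{1/2}$ up to constants, which is absorbed into $c$ and $p_{\rm RG}$. I expect this bookkeeping to be the main obstacle. One must check that terminal checks and the admissible-cover argument (Lemma \ref{lem:admissible-cover}) for different $t$ only involve the fault sets of their own gadget. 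Also, the measurement errors $\mu\super{0}(t^+)$ appear both as measurement faults of round $t$ and inside $E\super{0}(t+1)$ through $\hat{\cE}(\mu(t^+))$, and this double role must be accounted for.

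For the first bullet, with an arbitrary seed meta-check decoder, I would argue crudely. Define the good event that, for every $t$, $\ubar{E}\super{0}(t)=0$ and $\mu\super{0}(t^+)=0$. On this event every seed syndrome is $0$, so the seed decoder runs and outputs $0$, and the seed residue stays $0$. Run success follows from Lemma \ref{lem:RG-run-failure}, since $\noisy{\sigma}\super{0}=\delta(s\super{0}+\ubar{E}\super{0})+\mu\super{0}$ is valid and, because $\cE_0$ maps $0\mapsto0$ and is valid, the residue remains trivial modulo $\im\delta$. The complement is a union over $O(T)$ times and $O(|\cQ\super{0}|)$ seed fault locations (seed cells are $O(|\cQ\super{0}|)$ by sparsity) of single-fault events, each of probability at most $q$. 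A union bound gives $O(T)|\cQ\super{0}|(p/p_{\rm RG})^{c\noisy{\kappa}^{-\ell}}$. Equivalently, one can use Theorem \ref{thm:witness-probability} together with the single-location case of Theorem \ref{thm:noise-transfer}.

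For the second bullet, the seed decoder is single-shot and the effective seed noise is jointly local stochastic with rate $q<p_c$. Its hypothesis therefore applies directly, giving failure probability $O(T)\poly(|\cQ\super{0}|)(q/p_c)^{\Omega(n^\beta)}$. I would then shrink $p_{\rm RG}$ (and possibly $c$) so that $q/p_c\le (p/p_{\rm RG})^{c'\noisy{\kappa}^{-\ell}}$. Since $p_c$ is a fixed constant, this holds whenever $\ell$ is large enough that $q<p_c$ with margin, for instance by requiring $q\le p_c^2$ via the stated condition after adjusting constants. The result is $(p/p_{\rm RG})^{\Omega(n^\beta\noisy{\kappa}^{-\ell})}$, and the union with the run-failure term from the first part is dominated by this bound.
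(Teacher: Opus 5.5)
Your route is the paper's. Its proof is just Theorem~\ref{thm:noise-transfer} plus a union bound for the first bullet, and the seed decoder's guarantee applied to the effective seed noise for the second. Your even/odd split, with $\Pr[\cS]\le\min(\Pr[\cS_{\rm even}],\Pr[\cS_{\rm odd}])\le q^{|\cS|/2}$, is a reasonable way to make rigorous the joint-in-time local stochasticity that the paper only asserts through the disjoint-copies clause of Theorem~\ref{thm:noise-transfer}. The first bullet is fine as written, and it only needs single-location marginals.

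The step that fails is the last sentence of your second bullet. The run-failure term you carry over from the first part is of order $O(T)|\cQ\super{0}|\,q$ with $q=(p/p_{\rm RG})^{c\noisy{\kappa}^{-\ell}}$. Theorem~\ref{thm:witness-probability} gives nothing better. The enhanced bound is of order $q^{\Omega(n^{\beta})}$ up to polynomial prefactors, so the crude term dominates it rather than the other way round. Taking that union would collapse the second bullet back to the first.

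The fix is that no separate run-failure term arises. Use Theorem~\ref{thm:noise-model-reduction}, Lemma~\ref{lem:perturbation} and the passthrough of the noisy descent, assuming earlier rounds ran. The seed syndrome descended from round $t$ is then exactly $\delta(s\super{0}(t)+E\super{0}(t))+\mu\super{0}(t^+)$, the syndrome of round $t$ of the effective seed recursion. By Lemma~\ref{lem:RG-run-failure}, the level-$\ell$ decoder fails to run in round $t$ iff the seed decoder fails to run on that syndrome. The recursion in Definition~\ref{def:noise-model} is only defined when every round runs, so such a round is already a failure of $T$ rounds of SS decoding of $C\super{0}$. By induction over rounds, the whole level-$\ell$ failure event, run or logical, is contained in the seed failure event, which the seed decoder's single-shot guarantee bounds directly.

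A smaller point concerns converting $(q/p_c)^{\Omega(n^\beta)}$. Adjust only $c$, not $p_{\rm RG}$, since the theorem requires $p_{\rm RG}$ to be independent of the seed decoder. Concretely, let $q_0=(p/p_{\rm RG})^{c_0\noisy{\kappa}^{-\ell}}$ be the rate from Theorem~\ref{thm:noise-transfer}, and state the theorem with $c=c_0/2$. Then the condition $q<p_c$ gives $q_0=q^2<p_c^2$, hence $q_0/p_c<q$, which yields the claimed exponent.
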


\begin{proof}
    By Theorem \ref{thm:noise-transfer}, the probability that the effective seed level errors are nonzero (at least one seed level effective error occurs) is upper bounded by a union bound and thus the first statement follows.
    The case is similar for the second statement.
\end{proof}

%% file: repair.tex
\section{Local Repair}
\label{sec:local-repair}
This section constructs a local syndrome repair rule that satisfies Definition \ref{def:repair}, so that the resulting CSS code $C\super{\ell}$ equipped with the RG decoder is single-shot.
Again, we shall restrict our attention to the $Z$ sector, since the $X$ sector can be similarly derived.
For this section, let $C\mapsto \bar{\Theta}$ be a sparsity-preserving level map as in Definition \ref{def:level-map} with layer local maps $\iota,\pi,\phi$ forming an SDR.
Let $\bar{\bm{C}}$ be the tower obtained with seed code QLDPC $C\super{0}$ so that $\phi=\phi\super{i}$ has a uniformly bounded column and row weight $\|\phi\|_{\rm c},\|\phi\|_{\rm r}$ among all levels $i\in [\ell]$.
We shall then construct a (nonlinear) local repair starting with (linear) $\phi:MZ(\bar{\Theta}) \to Z(\Theta)$, so that the approximately half-fault property in Definition \ref{def:repair} is satisfied.

To do so, we need the following additional geometric properties.
% For simplicity, let us also assume the following hypothesis, which we later show that can be removed by replacing the seed code with $C\super{\ell_0}$ for some constant $\ell_0$.

\begin{definition}[Metric Contraction]
    The level map $\bar{\Theta}$ has \textbf{metric contraction} with constants $\alpha>1,\beta\ge 0$ if for any cells $u,v$ in $\Theta$,
    \begin{equation}
        \dist_C (\sh u,\sh v)\le \frac{1}{\alpha} \dist_{\Theta}(u,v) +\beta
    \end{equation}
    where $\dist_C$ denotes the graph distance on the \textbf{bare graph}, i.e., vertices $x,q,z$ of $C$ and edges $xq,qz$ (similarly for $\dist_{\Theta}$). 
\end{definition}

\begin{definition}[Meta-Check Locality]
    The $Z$-sector meta-checks of level map $\bar{\Theta}$ are \textbf{local} relative to the bare graph $\Theta$ if there exists constant $\gamma$ such that for any input code $C$ and for any pair of $z,z'\in Z(\Theta)$ checks in a common meta-check, 
    \begin{equation}
        \dist_\Theta(z,z')\le \gamma
    \end{equation}
    The case is similar for the $X$ sector.
\end{definition}

\begin{lemma}[Validity]
    \label{lem:local-validity}
    The level map $\bar{\Theta} = \bar{\Gamma}_*\circ \bar{\Gamma}$ has metric contraction with constant $(4,4)$, and $Z$ sector meta-checks are local relative to the bare graph. The case is similar for $\bar{\Theta}_*=\bar{\Gamma}\circ \bar{\Gamma}_*$ and the $X$ sector meta-checks.
\end{lemma}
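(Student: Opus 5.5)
The plan is to prove the two properties separately for each factor $\bar{\Gamma}$ and $\bar{\Gamma}_*$, and then show that both properties are stable under the composition of Lemma \ref{lem:level-map-composition}. The composition step is elementary. Suppose $\bar{\Theta}_1,\bar{\Theta}_2$ have metric contraction $(\alpha_1,\beta_1),(\alpha_2,\beta_2)$. With $\sh=\sh_1\circ\sh_2$,
\begin{equation}
    \dist_C(\sh u,\sh v)\le \frac{1}{\alpha_1}\dist_{\Theta_1}(\sh_2 u,\sh_2 v)+\beta_1 \le \frac{1}{\alpha_1\alpha_2}\dist_{\Theta}(u,v)+\frac{\beta_2}{\alpha_1}+\beta_1 .
\end{equation}
So the target $(4,4)$ follows from, e.g., $\bar{\Gamma}$ having constants $(\alpha_1,\beta_1)$ and $\bar{\Gamma}_*$ having $(\alpha_2,\beta_2)$ with $\alpha_1\alpha_2\ge 4$ and $\beta_1+\beta_2/\alpha_1\le 4$. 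I expect each factor to contract distances by a factor of about $2$, with an additive constant $\le 2$. The precise bookkeeping belongs to the geometric estimates in Appendix \ref{sec:metric}, which I will invoke.

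For a single factor, I will use the layer structure. A bare path $u=u_0,u_1,\dots,u_D=v$ in $\Theta$ projects under $\sh$ to a sequence of cells in $C$. By monotonicity, each step $u_j\prec u_{j+1}$ gives either $\sh u_j=\sh u_{j+1}$ or $\sh u_j\prec \sh u_{j+1}$, so $\sh$ is $1$-Lipschitz. To obtain $\alpha>1$, I will use the explicit gluing construction of Definition \ref{def:gluing}. Moving between the layers $\bar{Y}^{x},\bar{Y}^{q},\bar{Y}^{z}$ over two adjacent cells of $C$ requires traversing internal cells of the layer (the $m=3$ sheets and their connecting cells). Consequently, each unit of bare distance in $C$ costs at least $2$ bare steps in $\Theta$, up to a bounded boundary correction at the endpoints. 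This boundary correction is the origin of $\beta$.

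I will argue this by partitioning the path into maximal segments with constant shadow, and lower bounding the length of every segment not at an endpoint by the minimal crossing length of the corresponding layer. For the endpoints I only use the bound $\Delta_{\sh}$, or better a layer radius bound, which gets absorbed into $\beta$. The main obstacle is this step. Paths may shortcut through a layer without passing its "center", for example by entering $\bar{Y}^q$ from $\bar{Y}^x$ and leaving to $\bar{Y}^z$ in few steps. To handle this, I will need the explicit incidence pattern of the gluing and check, case by case over the degrees $X,Q,Z$ of the shadow cell, the minimal number of bare edges between distinct boundary faces of a layer.

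Meta-check locality is easier. Each $Z$-meta-check of $\bar{\Gamma}$ (and of $\bar{\Gamma}_*$) is supported within a single layer or within layers over comparable cells, by layer-locality of $\bar{\delta}$. Its $z$-checks therefore lie in a bounded region of the bare graph, at distance at most some $\gamma_0$ depending on the explicit layer. The composed meta-checks $\bar{\delta}_1\pi_2$ from Lemma \ref{lem:level-map-composition} are handled by the triangle inequality. Two $z$-checks in the support of $\bar{\delta}_1\pi_2$ have $\pi_2$-images in a common $\bar{\Theta}_1$ meta-check, so those images are within distance $\gamma_1$. Since $\pi_2$ is layer-local with bounded weight, each $z$-check lies within bounded distance of the $\iota_2$-image of its $\pi_2$-image. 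The lift of a bounded $\Theta_1$-path has bounded $\Theta$-length, because adjacent layers are connected by bounded paths. This gives $\gamma\le \gamma_2+O(\gamma_1\Delta_{\sh})$. The $X$-sector statement for $\bar{\Theta}_*$ follows by the same argument applied to the transposed complexes.
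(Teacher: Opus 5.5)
\textbf{Verdict: genuine gap.} Your composition formula and your treatment of meta-check locality are fine, and they essentially follow the paper: bounded lifts of input paths, and the composed meta-checks $\bar\delta_\Gamma\pi_{\Gamma_*}$ with $\pi_{\Gamma_*}$ the parity map on $z$-layers. The gap is the single half-step bound $\dist_C(\sh u,\sh v)\le\frac12\dist_F(u,v)+2$.

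You plan to lower-bound every interior constant-shadow segment of a path by a positive minimal crossing length of its layer. That crossing length is $0$, because gluing incidences can be chained with no internal edge in between. For example, the $X$-check $|z;x\rangle\otimes v\in Y^x$ is glued to the qubit $|z;q\rangle\otimes v\in Y^q$, which is glued to the $Z$-check $|z;z\rangle\otimes v\in Y^z$. So the shadow moves from $x$ to $z$ (input distance $2$) in two bare steps. Similarly, $|q;x\rangle\otimes v\to|q;q\rangle\otimes v\to|q;x'\rangle\otimes v$ moves it from $x$ to $x'$. This is exactly the shortcut you flagged, and it can occur in the interior of a path, not only at the endpoints. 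Locally the shadow moves at rate one, so no case-by-case analysis of crossing lengths can produce the factor $\frac12$.

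What is missing is an amortization showing that such gluing runs cannot accumulate progress. The paper gets it from a second label on bare cells, the terminal input cell $t(a)$:
\begin{itemize}
\item $t=b$ for a local vertex $b$ or an edge $ab$;
\item $t=z$ for a face $xz$;
\item $t$ is unaffected by the repetition factor.
\end{itemize}
This label satisfies $\dist_C(\sh a,t(a))\le 2$. Along internal edges, $\sh$ is constant and $t$ moves by at most one input edge. Along gluing edges, $t$ is constant and $\sh$ moves by one input edge. For a path with $I$ internal and $J$ gluing edges, projecting with $\sh$ and with $t$ gives $\dist_C(\sh u,\sh v)\le\min\{J,I+4\}\le\frac12(I+J)+2$. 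That is $(2,2)$ per half-step, and by your composition formula $(4,3)$ for $\Theta$, hence $(4,4)$. Without an argument of this type, namely a second projection that is frozen exactly where $\sh$ moves, your segment decomposition does not close.

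A minor point: monotonicity alone only gives $\sh u_j\preceq\sh u_{j+1}$, which could be the distance-two relation $x\prec z$. The one-step Lipschitz bound for $\sh$ uses $g^{zx}=0$ from the explicit gluing.
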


\begin{proof}
    See Lemmas~\ref{lem:explicit-metric-contraction} and \ref{lem:explicit-meta-locality} in Appendix~\ref{sec:metric}.
\end{proof}

By Lemma \ref{lem:local-validity}, we can then treat the level map as an abstract object whose explicit construction exists and deferred in the Appendix.
For the remainder of this section, let $\bar{\Theta}$ also exhibit $(\alpha,\beta)$ metric contraction and $\gamma$-local meta-checks (relative to $\Theta$).

\subsection{Bravyi-Haah Separation}

The following property is motivated by the Bravyi-Haah clustering algorithm \cite{bravyi2013quantum},
\begin{lemma}[Polynomial Growth]
    \label{lem:poly-growth}
    Let $d=\log_{\alpha}\Delta_{\sh}$.
    If $i\ge \log_{\alpha} r$, then the $i$ level ball on the bare graph $C\super{i}$ has size
    \begin{equation}
        |B_r(a)| \le Vr^d, \quad B_r(a)\equiv \{\dist(a,\cdot)\le r\}
    \end{equation}
    where $a$ is a cell in $C\super{i}$, and $V=\Delta_{\rm ver}V_0$ and $V_0$ is a bounded constant representing the max size of a ball of radius $1+\beta (1-\alpha^{-1})^{-1}$ in any $C\super{i},i\ge 0$.
\end{lemma}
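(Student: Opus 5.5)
We prove the bound by pulling the ball $B_r(a)$ down the tower through the shadow maps, using metric contraction to control the radius at each step, and then counting preimages with the layer-size bound $\Delta_{\sh}$. Fix $i\ge \log_\alpha r$ and a cell $a$ in $C\super{i}$. Write $\sh^{j}$ for the $j$-fold composite shadow from level $i$ to level $i-j$.

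\textbf{Step 1 (radius recursion).} If $\dist_{C\super{i}}(a,u)\le r$, iterating metric contraction gives
\begin{equation}
    \dist_{C\super{i-j}}(\sh^j a,\sh^j u)\le \alpha^{-j} r+\beta\sum_{k=0}^{j-1}\alpha^{-k}\le \alpha^{-j}r+\frac{\beta}{1-\alpha^{-1}}.
\end{equation}
Choose $j=\lfloor \log_\alpha r\rfloor\le i$, which is allowed by the hypothesis $i\ge\log_\alpha r$. Then $\alpha^{-j}r<\alpha$. This is slightly larger than the radius $1+\beta(1-\alpha^{-1})^{-1}$ used to define $V_0$, so at this point I would round $j$ up instead, to $j=\lceil\log_\alpha r\rceil$. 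That choice makes $\alpha^{-j}r\le 1$, but it requires $i\ge\lceil\log_\alpha r\rceil$. The hypothesis only gives $i\ge\log_\alpha r$, and since $i$ is an integer this already implies $i\ge\lceil\log_\alpha r\rceil$. With this $j$, $\sh^j u$ lies in the ball of radius $1+\beta(1-\alpha^{-1})^{-1}$ around $\sh^j a$ in $C\super{i-j}$, which has at most $V_0$ cells.

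\textbf{Step 2 (counting preimages).} Every cell $u\in B_r(a)$ lies in the layer $\bar Y^{\sh u}$, and each layer has at most $\Delta_{\sh}$ cells. So each cell at level $i-j$ has at most $\Delta_{\sh}^j$ descendants at level $i$. Hence
\begin{equation}
    |B_r(a)|\le V_0\,\Delta_{\sh}^{j}.
\end{equation}
Since $j\le \log_\alpha r+1$, we get $\Delta_{\sh}^j\le \Delta_{\sh}\, r^{\log_\alpha\Delta_{\sh}}=\Delta_{\sh} r^d$. The factor $\Delta_{\sh}$ coming from the ceiling is exactly what $\Delta_{\rm ver}$ should absorb. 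I read $\Delta_{\rm ver}$ as the vertical degree bound, i.e. the layer size $\Delta_{\sh}$, so $V=\Delta_{\rm ver}V_0$ as claimed.

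\textbf{Main obstacle.} The delicate part is the bookkeeping at the endpoint. The reduced radius must be small enough for $V_0$ to apply, the exponent $j$ must not exceed $i$, and the ceiling has to be paid for by exactly one factor of $\Delta_{\rm ver}$. I would also confirm that $V_0$ is finite uniformly in the level. This follows because the tower is uniformly sparse (Remark~\ref{rem:setup}), so the bare graphs of all $C\super{i}$ have bounded degree, and balls of a fixed radius in them have uniformly bounded size.

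Finally, metric contraction is stated for cells of $\Theta$, and the bare-graph distances must match the ones in the lemma. Since $C\super{i}=\Theta(C\super{i-1})$, each step of the recursion in Step 1 applies metric contraction to the bare graph of $C\super{i-k}$, and this is the same bare graph used in the lemma.
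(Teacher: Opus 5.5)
Your proof is correct and follows essentially the same argument as the paper: take $k=\lceil\log_\alpha r\rceil\le i$, iterate metric contraction so that $B_r(a)$ lands in a ball of radius $1+\beta(1-\alpha^{-1})^{-1}$ at level $i-k$ (at most $V_0$ cells), and multiply by the at most $\Delta_{\sh}^k\le\Delta_{\sh}r^d$ preimages per coarse cell, with the extra factor $\Delta_{\sh}$ from the ceiling absorbed into $\Delta_{\rm ver}$. You can simply delete the initial detour through $\lfloor\log_\alpha r\rfloor$.
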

\begin{proof}
    Put $k=\lceil\log_\alpha r\rceil\le i$. 
    Iterating metric contraction maps $B_r(a)$ into a radius-$R_0$ ball at level $i-k$ where $R_0=1+\beta (1-\alpha^{-1})^{-1}$.    
    Each coarse cell has at most $\Delta_{\sh}^k\le\Delta_{\sh}r^d$ preimages. Uniform sparsity bounds the coarse ball by $V$, proving the claim.
\end{proof}
\begin{lemma}[Separation]
    \label{lem:separation}
    Let $\eta>0$, and $R=R_{\eta}\ge 1$ be an integer such that $(1+\eta)^R >(\gamma R)^d V$, and $i_\eta=\log_{\alpha} (\gamma R)$.
    For any syndrome $\noisy{\sigma}$ in the $i\ge i_{\eta}$ level, there exists subset $\noisy{\tau}\subseteq \noisy{\sigma}$ such that
    \begin{equation}
        |\noisy{\tau}|\le \eta |\noisy{\sigma}|
    \end{equation}
    and every connected component (relative to meta-checks) of $\noisy{\sigma}\backslash \noisy{\tau}$ has size $\le (\gamma R)^d V$.
\end{lemma}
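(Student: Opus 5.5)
We prove Lemma~\ref{lem:separation} with a Bravyi--Haah style multiscale argument that uses only the polynomial growth of Lemma~\ref{lem:poly-growth}. Throughout, distances are taken in the bare graph of $C\super{i}$. Since meta-checks are $\gamma$-local, two checks of $\noisy{\sigma}$ that lie in a common meta-check are at bare distance at most $\gamma$. It therefore suffices to find $\noisy{\tau}\subseteq\noisy{\sigma}$ such that every \emph{$\gamma$-connected cluster} of $\noisy{\sigma}\backslash\noisy{\tau}$ is small. Here a $\gamma$-connected cluster is a cluster in the graph joining checks at distance $\le\gamma$. Each meta-check component is contained in such a cluster.

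\textbf{Step 1 (greedy ball growth).} For each $z\in\noisy{\sigma}$, let $n_z(r)=|\noisy{\sigma}\cap B_{\gamma r}(z)|$ for $r=0,1,\dots,R$. We call a radius $r<R$ \emph{good} for $z$ if $n_z(r+1)\le(1+\eta)n_z(r)$.

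Suppose no radius is good. Then $n_z(R)>(1+\eta)^R$. On the other hand, $i\ge i_\eta=\log_\alpha(\gamma R)$, so Lemma~\ref{lem:poly-growth} gives $n_z(R)\le|B_{\gamma R}(z)|\le V(\gamma R)^d$. These two bounds contradict the choice of $R$. Hence every $z$ has a good radius $r_z<R$.

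\textbf{Step 2 (peeling).} Process the checks iteratively on the remaining set. Pick any remaining $z$ and compute its good radius $r_z$ with respect to the \emph{remaining} set $\noisy{\sigma}'$. The argument of Step 1 still applies to $\noisy{\sigma}'$, since its ball counts are also bounded by $V(\gamma R)^d$.

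Next, declare $A_z=\noisy{\sigma}'\cap B_{\gamma r_z}(z)$ a block. Put the shell $S_z=\noisy{\sigma}'\cap(B_{\gamma(r_z+1)}(z)\backslash B_{\gamma r_z}(z))$ into $\noisy{\tau}$, and delete $A_z\cup S_z$ from $\noisy{\sigma}'$. Repeat until $\noisy{\sigma}'$ is empty.

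By goodness, $|S_z|\le\eta|A_z|$. The blocks are disjoint and lie in $\noisy{\sigma}$, so summing gives $|\noisy{\tau}|\le\eta|\noisy{\sigma}|$.

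\textbf{Step 3 (separation of blocks).} Take any point $y$ within distance $\gamma$ of $A_z$. Then $y$ lies in $B_{\gamma(r_z+1)}(z)$. So at the moment block $A_z$ is formed, every remaining point $\gamma$-adjacent to $A_z$ is either in $A_z$ or in $S_z\subseteq\noisy{\tau}$.

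Points removed earlier belong either to earlier blocks or to $\noisy{\tau}$. Earlier blocks were, by the same argument, already isolated from all later points. Consequently, every $\gamma$-connected cluster of $\noisy{\sigma}\backslash\noisy{\tau}$ is contained in a single block. Each block satisfies $|A_z|\le|B_{\gamma R}(z)|\le V(\gamma R)^d$, which gives the claimed component bound.

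\textbf{Main obstacle.} The delicate point is the bookkeeping in Step 3. Isolation must hold against \emph{later} blocks as well as earlier ones. It holds because each shell is removed at the moment its block is formed, so no later point can sit within $\gamma$ of an existing block. The good radius also has to be measured in the residual set $\noisy{\sigma}'$, not in $\noisy{\sigma}$, so that the inequality $|S_z|\le\eta|A_z|$ refers to points that are actually removed. A secondary technical issue is that Lemma~\ref{lem:poly-growth} is stated for radii $r$ with $i\ge\log_\alpha r$. We apply it with radius $\gamma R$, which is exactly why the hypothesis $i\ge i_\eta$ is imposed.
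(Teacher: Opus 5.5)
Your proof is correct and follows the same route as the paper. The paper also picks a remaining check and finds a scale below $R$ at which the $\noisy{\sigma}$-ball grows by at most a factor $1+\eta$, with existence forced by Lemma~\ref{lem:poly-growth} and the choice of $R$; it then puts the width-$\gamma$ annulus into $\noisy{\tau}$, uses $\gamma$-locality of meta-checks to isolate the inner ball, and recurses on the syndrome outside the annulus. Your write-up is slightly more careful than the paper's terse version, which writes $t\ge R$ where $t\le R$ is meant and leaves implicit that balls are counted inside the residual syndrome and that the center contributes $n_z(0)\ge 1$.
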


\begin{proof}
    Consider $\noisy{\sigma}$ as the induced subgraph of the bare code $C\super{i}$, and all subsets are implicitly subset of the subgraph $\noisy{\sigma}$.
    Start with $\noisy{\tau} =\varnothing$ and let us proceed by induction.
    If $\noisy{\sigma}\ne \varnothing$, then exists $z \in \noisy{\sigma}$.
    Note that there must exist $t\ge  R$ such that
    \begin{equation}
        |B_{\gamma t}(z)| \le (1+\eta) |B_{\gamma (t-1)}(z)|
    \end{equation}
    Indeed, if the converse holds, then
    \begin{equation}
        |B_{\gamma R}(z)|\ge (1+\eta)^R 
    \end{equation}
    However, by Lemma \ref{lem:poly-growth}, we read a contradiction by our choice of $R$. 
    Add the \textit{annulus} $B_{\gamma t}\backslash B_{\gamma (t-1)}$ to $\noisy{\tau}$. 
    Note that by locality of meta-checks,  no meta-check in $MZ$ intersects both $B_{\gamma (t-1)}(z)$ and $\noisy{\sigma}\backslash B_{\gamma t}(z)$.
    Apply induction on the $\noisy{\sigma}\backslash B_{\gamma t}(z)$ and the statement follows.
\end{proof}

\begin{lemma}[Local Decomposition]
    \label{lem:local-decomposition}
    Let $\sigma$ be a clean syndrome at the $i> i_\eta$ level. 
    Then there exists a disjoint partition $\sigma=\sqcup_i  u_i$ such that $v_i=\phi \bar{\delta} u_i$ satisfies
    \begin{equation}
        |u_i + v_i|\le B_{\eta}, \quad \sum_{i} |v_i| \le \ve |\sigma|
    \end{equation}
    where
    \begin{equation}
        B_{\eta} = (1+\|\phi\|_c\bar{\fd}) (\gamma R_{\eta})^d V, \quad \eta = \frac{\ve}{2 \|\phi\|_c \bar{\fd}}
    \end{equation}
\end{lemma}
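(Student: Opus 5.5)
We apply Lemma~\ref{lem:separation} to the clean syndrome $\sigma$ with the stated choice $\eta=\ve/(2\|\phi\|_c\bar{\fd})$. This gives a subset $\tau\subseteq\sigma$ with $|\tau|\le\eta|\sigma|$. Every meta-check connected component $\fs_j$ of $\sigma\backslash\tau$ then has size at most $(\gamma R_\eta)^d V$. The partition will consist of the pieces $u_j=\fs_j$ together with the singletons $\{z\}$ for $z\in\tau$, or possibly $\tau$ grouped into bounded clusters. The singletons are allowed because each has size $1\le(\gamma R_\eta)^dV$. These pieces are disjoint and cover $\sigma$, which gives the required partition $\sigma=\sqcup_i u_i$.

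The key observation is that a separated component is almost clean. Since $\sigma$ is clean, $\bar\delta\sigma=0$, so $\bar\delta u_i=\bar\delta(\sigma+u_i)$. Thus $\bar\delta u_i$ is supported only on meta-checks that meet both $u_i$ and $\sigma\backslash u_i$. For a component $\fs_j$ of $\sigma\backslash\tau$, no meta-check joins $\fs_j$ to another component, by definition of connectivity. Hence every violated meta-check of $\fs_j$ must contain a check of $\tau$, and
\begin{equation}
    |\bar\delta \fs_j| \le \bar{\fd}\,|\tau\cap \cN(\fs_j)|,
\end{equation}
where $\cN$ is the neighborhood through common meta-checks. Summing over all pieces, each $z\in\tau$ is charged at most $\bar{\fd}$ times as a neighbor of a component and at most $\bar{\fd}$ times as its own singleton. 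This gives
\begin{equation}
    \sum_i|\bar\delta u_i|\le 2\bar{\fd}|\tau|,
\end{equation}
and so
\begin{equation}
    \sum_i|v_i|\le\|\phi\|_c\sum_i|\bar\delta u_i|\le 2\|\phi\|_c\bar{\fd}\,\eta|\sigma|=\ve|\sigma|.
\end{equation}
The charging has to be done carefully, because a single meta-check may touch several components. Bounding by meta-check degree of $\tau$ rather than by the components handles this, so a constant $2$ in place of $1$ is what fixes the choice of $\eta$.

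For the size bound, we have $|u_i+v_i|\le|u_i|+|v_i|$. Also $|v_i|\le\|\phi\|_c|\bar\delta u_i|\le\|\phi\|_c\bar{\fd}|u_i|$, since each check lies in at most $\bar{\fd}$ meta-checks. Together these give $|u_i+v_i|\le(1+\|\phi\|_c\bar{\fd})(\gamma R_\eta)^dV=B_\eta$. The level hypothesis $i>i_\eta$ is used only to invoke Lemma~\ref{lem:separation}, which in turn relies on Lemma~\ref{lem:poly-growth}.

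The main obstacle is the bookkeeping in the second step. One must check that the elements of $\tau$ contribute only boundedly many violated meta-checks in total, and that nothing crosses between distinct components of $\sigma\backslash\tau$. If the singleton treatment of $\tau$ makes the constants exceed $\ve$, we fall back to a different assignment. In that alternative, each $z\in\tau$ is absorbed into an adjacent component, or kept as a singleton when it is isolated. The bound on $\sum_i|\bar\delta u_i|$ then becomes $\bar{\fd}|\tau|$ directly. We will also need to check that this absorption keeps each piece within a constant multiple of the component-size bound.
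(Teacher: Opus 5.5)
Your proposal is correct and is essentially the paper's proof: apply Lemma~\ref{lem:separation}, take the meta-check components of $\sigma\setminus\tau$ plus singletons from $\tau$, get $|u_i+v_i|\le(1+\|\phi\|_c\bar{\fd})|u_i|\le B_\eta$, and bound $\sum_i|\bar{\delta}u_i|\le 2\bar{\fd}|\tau|$ by charging each violated meta-check of a component to a check of $\tau$ it contains (no meta-check meets two components, so each is counted once) plus $\bar{\fd}|\tau|$ from the singletons. Note that the constant comes from this per-meta-check charging; summing your displayed per-component inequality over components could instead give $\bar{\fd}^2|\tau|$. Also, the fallback assignment is unnecessary, since $2\|\phi\|_c\bar{\fd}\eta=\ve$ exactly.
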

\begin{proof}
    By Lemma \ref{lem:separation}, we see that $\sigma$ has subset $\noisy{\tau}\subseteq \sigma$ such that $|\noisy{\tau}|\le \eta|\sigma|$ and $\sigma\backslash \noisy{\tau}$ has connected components (relative to meta-checks)  $\{c\}$, each with size $\le (\gamma R)^d V$.
    Let $\{u_i\}$ be the disjoint union of $\{c\}$ and $\noisy{\tau}$, where we treat every $z$-check in $\noisy{\tau}$ as an individual $u_i$.
    Note that
    \begin{equation}
        |u_i+v_i| \le (1 + \|\phi\|_c \bar{\fd})(\gamma R)^d V \equiv B_{\eta} 
    \end{equation}
    By definition,
    \begin{align}
        \sum_{c} |\bar{\delta} c| &\le  \sum_{M} 1\{\exists c, M\in \delta  \bar{c}\} \\
        &\le \sum_{M} \sum_{z\in \noisy{\tau}} 1\{M \in \delta z\} \\
        &\le \bar{\fd} |\noisy{\tau}|
    \end{align}
    where the summation over $c$ is the connected components (relative to meta-checks) of $\sigma\backslash \noisy{\tau}$ and $M$ denote the meta-check cells. 
    Note that we used the fact that a meta-check can interact with at most one connected component $c$, and that $\bar{\delta}\sigma=0$.
    Hence,
    \begin{equation}
        \sum_{i} |v_i| \le 2 \|\phi\|_{\rm c} \bar{\fd} |\noisy{\tau}| \le \ve |\sigma|
    \end{equation}
\end{proof}

\subsection{Local Update Rule}

Let us collect the constants appearing in Lemma \ref{lem:poly-growth}, \ref{lem:separation} and \ref{lem:local-decomposition}, i.e., $B_{\eta},R_{\eta},i_{\eta}$ and $\ve$.

% determined by
% \begin{equation}
%     \eta = \frac{1}{\|\phi\|_{\rm r}} \frac{1}{ \bar{\fw} +\bar{\fd}} \ve
% \end{equation}
% where $\|\cdot\|_{\rm r}$ is the max row weight.

\begin{definition}[Energy]
    Let $\nu,w$ be syndromes. Then define the \textbf{energy} of $w$ relative to $\nu$ as
    \begin{align}
        E_{\nu}(w)&\equiv |\nu +w| -|\nu| \\
        &= |w| - 2|\nu \cap w|
    \end{align}
\end{definition}

\begin{lemma}[Energy]
    \label{lem:energy}
    The energy $E_{\nu}(w)$ is
    \begin{itemize}
        \item additive on disjoint syndromes, i.e., $E_{\nu}(w+v)=E_{\nu}(w)+E_{\nu}(v)$ if $w\cap v=\varnothing$
        \item independent of disjoint updates, i.e., $E_{\nu+w}(v) = E_{\nu}(v)$ if $w\cap v=\varnothing$
        \item 1-Lipschitz, i.e., $|E_{\nu}(w+v)-E_{\nu}(w)|\le |v|$
    \end{itemize}
\end{lemma}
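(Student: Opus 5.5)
The plan is to work directly from the second expression for the energy, $E_\nu(w)=|w|-2|\nu\cap w|$. First this identity needs justifying from the definition: over $\F_2$ we have $|\nu+w| = |\nu|+|w|-2|\nu\cap w|$, because the symmetric difference removes the overlap from both sets. With that form in hand, each of the three items is a short counting argument.

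\textbf{Additivity.} If $w\cap v=\varnothing$, then $w+v$ is the disjoint union $w\sqcup v$. Hence $|w+v|=|w|+|v|$ and $|\nu\cap(w+v)|=|\nu\cap w|+|\nu\cap v|$. Substituting into $E_\nu(w+v)=|w+v|-2|\nu\cap(w+v)|$ gives $E_\nu(w)+E_\nu(v)$.

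\textbf{Independence of disjoint updates.} Here the claim is $E_{\nu+w}(v)=E_\nu(v)$. It suffices to show $|(\nu+w)\cap v|=|\nu\cap v|$. Since $w\cap v=\varnothing$, restricting to $v$ gives $(\nu+w)|_v=\nu|_v+w|_v=\nu|_v$. The $|v|$ term does not depend on the reference syndrome, so the two energies agree.

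\textbf{Lipschitz bound.} I would use the original form. Write $E_\nu(w+v)-E_\nu(w)=|\nu+w+v|-|\nu+w|$. By the triangle inequality for Hamming weight, $\bigl||a+v|-|a|\bigr|\le |v|$, applied with $a=\nu+w$. Note that this step needs no disjointness assumption, which is why the first form of the energy is the convenient one here.

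None of these steps is a genuine obstacle. The only point needing care is to use the correct form of $E_\nu$ for each item: the intersection form for the first two and the difference-of-weights form for the third.
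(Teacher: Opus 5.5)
Your proof is correct, and it is the natural direct check; the paper states this lemma without proof, treating all three items as immediate from $E_\nu(w)=|\nu+w|-|\nu|=|w|-2|\nu\cap w|$. You rightly note that reducing the Lipschitz bound to $\bigl||\nu+w+v|-|\nu+w|\bigr|\le|v|$ needs no disjointness.
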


% \begin{algorithm}
% \caption{\textsc{(Sequential) Local Repair}}
% \label{alg:sequential-local-repair}
% \begin{algorithmic}[1]
%     \Require (Noisy) syndrome $\noisy{\sigma}$
%     \Ensure Correction $\nu(\noisy{\sigma})$ such that $\rep \noisy{\sigma} \equiv \noisy{\sigma} +\nu(\noisy{\sigma})$ is clean
%     \State Initialize $\nu = \phi \bar{\delta} \noisy{\sigma}$
%     \ForEach {$c\in \cL$}
%         \State Update $\nu\gets \nu+c$ if $E_\nu(c)<0$
%     \EndFor 
%     \State Output final $\nu=\nu(\tilde{\sigma})$
% \end{algorithmic}
% \end{algorithm}

\begin{definition}[Local Repair]
    \label{def:local-correction}
    Let $\cL\equiv \cL_{\eta}(\bar{\Theta})$ denote the collection of \textbf{local steps}, i.e., clean connected (relative to meta-checks) syndrome $w\in \ker \bar{\delta}\cap Z(\Theta)$ with bound size $\le B_{\eta}$.
    Define $\nu:Z(\Theta)\to Z(\Theta)$ on $\noisy{\sigma}$ as follows. 
    Specifically, start with $\nu = \phi \bar{\delta} \noisy{\sigma}$.
    If there exists \textbf{admissible} local step, i.e., $w\in \cL$ such that the energy $E_{\nu}(w) <0$,  then update $\nu\gets \nu +w$; otherwise, $\nu$ is \textbf{stationary}. Continue until stationary.
    We refer to the final output $\nu$ as the \textbf{local syndrome repair}, which we define for levels $i > i_{\eta}$.
\end{definition}

\begin{remark}[Stationary]
    Note that depending on the sequential order of the local steps, the final output may be different, and thus we only use the property that the output is stationary.
\end{remark}

\begin{remark}[Run Time]
    \label{rem:run-time}
    Note that by sparsity, $|\cL|=O(|Z(\Theta)|)$, and thus a naive \textit{sweep} of all local steps in $\cL$ runs in linear time.
    Since the repair is updated after each local step, it's possible that a previously non-admissible local step becomes admissible at a later time, and thus a single sweep is insufficient.
    Note, however, that each sweep decreases $|\nu|$ by at least one, and thus at most $|\nu|$ sweeps is sufficient so that the local syndrome repair runs in quadratic time $O(|Z(\Theta)|^2)$.
    Though postponed to subsection \ref{sec:parallel} , this can be further optimized to reach linear run time by utilizing Lemma \ref{lem:energy}.
    In fact, for more practical implementation, we will show that a finite $O(1)$ run time cutoff of the algorithm is sufficient for our purposes. 
\end{remark}

\begin{theorem}[Stationary Energy]
    \label{thm:stationary-energy}
    Consider the local update rule in Definition \ref{def:local-correction} at $i > i_{\eta}$ level and assume that there are no more $B_{\eta}$-admissible steps for $\nu$. Then for any clean syndrome $\sigma\in \ker \bar{\delta}\cap Z\super{i}$,
    \begin{equation}
        E_{\nu}(\sigma) \ge -\ve |\sigma|
    \end{equation}
    In particular, if $\fs$ is a connected (relative to meta-checks) component of the repaired syndrome $\rep \noisy{\sigma} \equiv \noisy{\sigma} + \nu(\noisy{\sigma})$, then
    \begin{equation}
        |\fs| \le \frac{2}{1-\ve} |\noisy{\sigma}\cap \fs|
    \end{equation}
\end{theorem}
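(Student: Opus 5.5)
Throughout, $\nu$ is the stationary output of the local repair in Definition~\ref{def:local-correction}; stationarity (no admissible step from $\cL$) is the only property of the update dynamics we use. The first claim will follow from Lemma~\ref{lem:local-decomposition} together with the additivity and Lipschitz properties of the energy in Lemma~\ref{lem:energy}. The second claim will follow by applying the first claim to $\sigma=\fs$.

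\textbf{Step 1 (energy bound).} Fix a clean $\sigma\in\ker\bar\delta\cap Z\super{i}$ with $i>i_\eta$. Lemma~\ref{lem:local-decomposition} gives a disjoint partition $\sigma=\sqcup_j u_j$ with $v_j=\phi\bar\delta u_j$. Each $w_j:=u_j+v_j$ is clean, because
\begin{equation}
\bar\delta\phi\bar\delta=\bar\delta(\iota\pi+\mathrm{id}+\bar\delta\phi)=\bar\delta
\end{equation}
on $Z$ (using that $\iota\pi$ is a chain map and $\bar\delta^2=0$). Each $w_j$ also has size $\le B_\eta$. The $w_j$ need not be connected relative to meta-checks, but each meta-connected component of $w_j$ is clean and has size $\le B_\eta$, so it lies in $\cL$. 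Stationarity then gives $E_\nu(c)\ge 0$ for every such component $c$, and additivity over the disjoint components gives $E_\nu(w_j)\ge 0$.

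Next we pass from $w_j$ to $u_j$. The $u_j$ are disjoint, so additivity gives $E_\nu(\sigma)=\sum_j E_\nu(u_j)$. The 1-Lipschitz property gives $E_\nu(u_j)\ge E_\nu(w_j)-|v_j|\ge -|v_j|$. Summing and using $\sum_j|v_j|\le\ve|\sigma|$ yields $E_\nu(\sigma)\ge-\ve|\sigma|$.

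\textbf{Step 2 (near half fault).} Let $\fs$ be a meta-connected component of $\rep\noisy\sigma$. By the valid property of $\nu$ (initial $\phi\bar\delta\noisy\sigma$ has meta-syndrome $\bar\delta\noisy\sigma$ by the identity above, and every local step is clean), $\rep\noisy\sigma$ is clean. Hence $\fs$ is clean, since meta-checks touch only one component. Step 1 gives
\begin{equation}
|\fs|-2|\nu\cap\fs|=E_\nu(\fs)\ge-\ve|\fs|,
\end{equation}
so $|\nu\cap\fs|\le\frac{1+\ve}{2}|\fs|$. On $\fs\subseteq\noisy\sigma+\nu$, the sets $\noisy\sigma\cap\fs$ and $\nu\cap\fs$ partition $\fs$, because cells lying in both $\noisy\sigma$ and $\nu$ cancel and are not in $\fs$. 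Therefore
\begin{equation}
|\noisy\sigma\cap\fs|\ge\frac{1-\ve}{2}|\fs|,
\end{equation}
which is the claim.

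\textbf{Main obstacle.} The delicate point is Step 1's use of Lemma~\ref{lem:local-decomposition}: the pieces $u_j+v_j$ must genuinely be admissible candidates, that is, clean and of size at most $B_\eta$. The residual non-connectedness is handled by splitting into meta-connected components and using additivity.

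A second point to check is that $\fs$ need not be bounded in size. This is why the decomposition lemma is applied to $\fs$ itself, rather than comparing $\fs$ directly with local steps.
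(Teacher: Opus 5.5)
Your proposal is correct and follows essentially the same route as the paper's proof. You decompose $\sigma$ via Lemma~\ref{lem:local-decomposition}, use stationarity on the meta-connected components of each $w_j=u_j+v_j$, and pass to $u_j$ by the 1-Lipschitz property. You then apply $E_\nu(\fs)\ge-\ve|\fs|$ to a component $\fs$ of $\rep\noisy{\sigma}$ together with the partition $|\fs|=|\noisy{\sigma}\cap\fs|+|\nu\cap\fs|$. You also spell out two points the paper leaves implicit: cleanness of $w_j$ and of $\rep\noisy{\sigma}$ via $\bar\delta\phi\bar\delta=\bar\delta$, and $E_\nu(w_j)\ge 0$ by additivity over the possibly several meta-connected components of $w_j$.
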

\begin{proof}
    Note that by Lemma \ref{lem:energy}, the energy $E_{\nu}$ is additive for disjoint syndromes and 1-Lipschitz.   
    Also, by Lemma \ref{lem:local-decomposition}, we can decompose any clean syndrome $\sigma$ into disjoint $u_i$ such that if $v_i=\phi\bar{\delta}u_i$, then $w_i=u_i+v_i$ is clean and of size $\le B_{\eta}$.
    Hence, any connected component (relative to meta-checks) of $w_i$ cannot be $B_{\eta}$-admissible so that
    \begin{equation}
        E_{\nu} (w_i)\ge 0
    \end{equation}
    Note that
    \begin{align}
        E_\nu (u_i) &\ge E_\nu (w_i) -|v_i| \\
        E_\nu (\sigma) &\ge -\ve |\sigma|
    \end{align}
    Finally, note that if $\fs$ is a connected component (relative to meta-checks) of $\rep \noisy{\sigma}$, then the energy lower bound $E_{\nu}(\fs)$ with $\nu=\nu(\noisy{\sigma})$ implies that
    \begin{equation}
        |\fs| \ge \frac{2}{1+\ve} |\nu\cap \fs|
    \end{equation}
    Since $|\fs| =|\noisy{\sigma}\cap \fs | + |\nu\cap \fs|$, the near half fault bound then follows.
\end{proof}

\begin{remark}[$i_{\eta}$ Level is New Seed]
    Note that the local repair is only defined for levels $i>i_{\eta}$ since Lemma \ref{lem:local-decomposition} is proven only for $i>i_{\eta}$.
    One remedy is to simply \textit{shift} the levels so that $C\super{i_{\eta}}$ is regarded as the new seed code.
    Indeed, the RG decoder is single-shot as long as the seed code is locally trivial.
\end{remark}

\begin{lemma}[Any Level is Locally Trivial]
    Let the seed code $C\super{0}$ be locally trivial as in Remark \ref{rem:setup}.
    Then $C\super{i_*},i_*\ge 1$ is locally trivial.
\end{lemma}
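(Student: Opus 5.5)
By induction on $i_*$ it suffices to prove the claim for $i_*=1$, i.e.\ for a single application of the level map. So let $C$ be locally trivial and $\Theta=\Theta(C)$. Fix a physical error $h\in Q(\Theta)$ with $\delta h=0$. We assume $h$ is supported within a single $X$-check $x'$ of $\Theta$ (meaning on qubits $\succ x'$), or on a single qubit $q'$ of $\Theta$. The goal is to show $h\in\im\delta$. The tool is the SDR identity
\begin{equation*}
    \iota\pi h = h + \phi\delta h + \delta\phi h = h+\delta\phi h ,
\end{equation*}
which reduces the claim to showing that $\pi h\in\im\delta_C$. Once that holds, say $\pi h=\delta_C g$, we get $h=\iota\delta_C g+\delta\phi h=\delta(\iota g+\phi h)$, because $\iota$ is a chain map.

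The first step is to localize $\pi h$. Since $\pi$ is layer-local (lower triangular), $\dnum_u\pi h$ depends only on $\dnum^{\preceq u}h$. If $h$ is supported on qubits $\succ x'$, then by monotonicity of the shadow map every qubit $q''\succ x'$ has $\sh q''\succeq \sh x'=:w$. Hence $h$ lies in layers $\bar Y^{u}$ with $u\succeq w$, and $\pi h$ is supported on seed cells $u$ with $u\succeq w$ of qubit degree, i.e.\ on qubits in $\{\succeq w\}$. The second step is a short case analysis on $\deg w$. If $w$ is an $X$-check of $C$, then $\pi h$ is supported within the single $X$-check $w$. If $w$ is a qubit, then $\pi h\subseteq\{w\}$, which is supported on a single qubit. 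If $w$ is a $Z$-check, then $\{\succeq w\}$ contains no qubits, so $\pi h=0$. The single-qubit case $h=q'$ is identical with $w=\sh q'$. In every case $\delta_C\pi h=\pi\delta h=0$, and $\pi h$ is supported within one $X$-check or one qubit of $C$. Local triviality of $C$ then gives $\pi h\in\im\delta_C$, which completes the argument.

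The main obstacle I expect is definitional rather than computational. Local triviality must be read as "supported on qubits adjacent to one $X$-check, or on one qubit", and I need to check that the shadow argument really confines $\pi h$ to such a set. The delicate point is that layer-locality of $\pi$ only gives dependence on $\dnum^{\preceq u}h$. I must argue that $\dnum^{\preceq u}h=0$ unless $u\succeq w$. This holds because $h$ lives only in layers with shadow $\succeq w$, and any such layer $v\preceq u$ forces $u\succeq v\succeq w$ by transitivity. With this in place, the inductive step $C\super{i-1}\Rightarrow C\super{i}$ is immediate, since each level satisfies the same SDR and layer-locality hypotheses from Definition~\ref{def:level-map}.
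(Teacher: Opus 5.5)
Your proof is correct and follows essentially the paper's route. The paper's one-line proof appeals to Lemma \ref{lem:RG-stability}: layer-locality of $\delta,\pi,f$ together with shadow monotonicity confines the descent (here noiseless, and with $\delta h=0$ so that $\desc h=\pi h$) to $\{\succeq w\}$. It combines this with the stabilizer equivalence $\iota\desc h+h\in\im\delta$, which is exactly your SDR identity $\iota\pi h=h+\delta\phi h$. The only difference is organizational: the paper descends straight to the seed via the composite shadow, whereas you induct one level at a time. That induction closes because your case analysis shows $\pi h$ lands back in the "one $X$-check or one qubit" class, including the case of a $Z$-check shadow, where $\pi h=0$.
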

\begin{proof}
    Follows from Lemma \ref{lem:RG-stability}.
\end{proof}

\subsection{Parallelization (Optional)}
\label{sec:parallel}

\begin{algorithm}
\caption{\textsc{Parallel Local Repair}}
\label{alg:parallel-local-repair}
\begin{algorithmic}[1]
    \Require (Noisy) syndrome $\noisy{\sigma}$
    \Ensure Correction $\nu(\noisy{\sigma})$ such that $\rep \noisy{\sigma} \equiv \noisy{\sigma} +\nu(\noisy{\sigma})$ is clean
    \State Initialize $\nu = \phi \bar{\delta} \noisy{\sigma}$
    \RepeatN{$t$}
        \State Update $\nu$ based on a rainbow sweep in Definition \ref{def:parallel-local-repair}
    \EndRepeatN
    \State Output final $\nu=\nu(\tilde{\sigma})$
\end{algorithmic}
\end{algorithm}

As discussed in Remark \ref{rem:run-time}, the simplicity of the local syndrome repair rule in Definition \ref{def:local-correction} comes at a cost of having a naive quadratic run time.
Therefore, the goal of this subsection is to show that a variation of the algorithm can be parallelized to reach a stationary solution in linear run time; in fact, a finite run time cutoff will be sufficient for our purposes.
In particular, the parallelized local RG decoder has logarithmic run time as in Theorem \ref{thm:RG-run-time}. 
The parallel version is inspired by the local small-flip decoder in \cite{fawzi2020constant}.

\begin{definition}[Parallel Local Repair]
    \label{def:parallel-local-repair}
    The \textit{key observation} is that the energy is independent of disjoint updates as in Lemma \ref{lem:energy}, and thus whether a local step is admissible or not does not depend on disjoint steps.
    By sparsity,  $\cL=\cL_\eta(\bar{\Theta})$ can be partitioned into a bounded $\chi=\chi_{\eta}$ \textbf{colors} so that local steps $w$ of the same color are disjoint; let $\zeta:\cL\to [\chi]$ denote the \textbf{color map}.
    Hence, a $\zeta$ \textbf{color sweep} in which local steps of color $\zeta$ are determined whether to be admissible can be run in parallel with $O(1)$ time.
    A \textbf{rainbow sweep} in which all colors are cycled through thus runs in $O(1)$ time since $\chi$ is bounded.
    Note that each rainbow sweep decreases the weight of a non-stationary syndrome by $\ge 1$ and thus reaches a stationary output in linear time.
    However, for our purposes, only finite $t$ rainbow sweeps is sufficient, as in Algorithm \ref{alg:parallel-local-repair}, though the output is no longer stationary.
\end{definition}

\begin{lemma}[Rainbow Sweep]
    \label{lem:rainbow-sweep}
    Put $B=B_{\eta}$.
    Let $\bar{\delta}\noisy{\sigma} = \bar{\delta}\nu$ where $\noisy{\sigma},\nu$ are syndromes at $i\ge i_{\eta}$ level, and $\nu'$ be the output of a rainbow sweep with input $\nu$.
    % Let $\nu\mapsto \nu'$ denote a single \textbf{color sweep}, i.e., update $\nu \mapsto \nu'$ via all admissible local steps of the same color simultaneously.
    Then
    \begin{equation}
        |\nu'|\le r|\nu| + b(1+\ve) |\noisy{\sigma}|
    \end{equation}
    where
    \begin{equation}
        r = 1-b(1-\ve), \quad b = \frac{1}{(1+2\chi B)(1+\bar{\fw}\|\phi\|_{\rm r})}
    \end{equation}
    In particular, if the parallel local repair $\nu$ applies $t$ rainbow sweeps to $\noisy{\sigma}$ as in Algorithm \ref{alg:parallel-local-repair}, then the repair
    \begin{equation}
        |\rep \noisy{\sigma}| \le \left(\frac{2}{1-\ve} + \ve'\right)|\noisy{\sigma}|, \quad \ve'\equiv r^t \|\phi\|_c\bar{\fd}
    \end{equation}
\end{lemma}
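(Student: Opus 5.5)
The plan is to compare a single rainbow sweep against an idealized stationary decomposition. The approach is a standard ``progress or near-optimality'' dichotomy, in the spirit of the small-flip analysis of \cite{fawzi2020constant}. Write $\rho = \noisy{\sigma} + \nu$. Since $\bar{\delta}\noisy{\sigma} = \bar{\delta}\nu$, $\rho$ is clean. By Lemma \ref{lem:local-decomposition} (which applies for $i\ge i_\eta$), decompose $\rho = \sqcup_j u_j$ with $w_j = u_j + v_j$, $v_j=\phi\bar{\delta}u_j$, $|w_j|\le B$ and $\sum_j |v_j|\le \ve|\rho|$. Each connected component of $w_j$ lies in $\cL$. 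As in Theorem \ref{thm:stationary-energy}, additivity and the 1-Lipschitz property in Lemma \ref{lem:energy} give
\begin{equation}
\sum_j E_\nu(w_j) \le E_\nu(\rho) + \ve|\rho| = |\noisy{\sigma}| - |\nu| + \ve(|\noisy{\sigma}|+|\nu|).
\end{equation}
Hence the total negative energy available in local steps is at least $(1-\ve)|\nu| - (1+\ve)|\noisy{\sigma}|$. This is the source of the affine form $r|\nu| + b(1+\ve)|\noisy{\sigma}|$.

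Next, I would show that a rainbow sweep captures a fraction $b$ of this available negative energy. Fix a component $w$ of some $w_j$ with $E_\nu(w)<0$, and let it have color $\zeta(w)$. Either $w$ itself is flipped during the sweep, or some step $w'$ of its own or an earlier color, overlapping $w$, has been flipped first. A flipped step strictly lowers $|\nu|$ by at least one, and it changes the energies $E_\cdot(w)$ of only those $w$ it intersects. The number of such $w$, and the amount each energy can change, are bounded by the overlap counts $\chi B$ together with the row weight factor $\bar{\fw}\|\phi\|_{\rm r}$. The latter controls how far $\rho$-components spread through $\phi$. A charging argument then assigns each unit of available negative energy to a decrease of $|\nu|$, with multiplicity at most $(1+2\chi B)(1+\bar{\fw}\|\phi\|_{\rm r}) = b^{-1}$. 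Here the energy independence of disjoint updates in Lemma \ref{lem:energy} is used: steps of the same color do not interfere. This yields
\begin{equation}
|\nu| - |\nu'| \ge b\bigl((1-\ve)|\nu| - (1+\ve)|\noisy{\sigma}|\bigr),
\end{equation}
which is exactly the claimed bound.

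For the ``in particular'' statement, start from $\nu_0 = \phi\bar{\delta}\noisy{\sigma}$, so $|\nu_0|\le \|\phi\|_c\bar{\fd}|\noisy{\sigma}|$. The invariant $\bar{\delta}\nu=\bar{\delta}\noisy{\sigma}$ is preserved, since the steps are clean. Iterating the affine recursion $t$ times gives
\begin{equation}
|\nu_t| \le r^t\|\phi\|_c\bar{\fd}|\noisy{\sigma}| + \frac{b(1+\ve)}{1-r}|\noisy{\sigma}| = \Bigl(\ve' + \frac{1+\ve}{1-\ve}\Bigr)|\noisy{\sigma}|.
\end{equation}
Adding $|\noisy{\sigma}|$ via $|\rep\noisy{\sigma}|\le|\noisy{\sigma}|+|\nu_t|$ gives $\frac{2}{1-\ve}+\ve'$.

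I expect the main obstacle to be the charging step. Greedy flips of earlier colors can destroy the admissibility of a given $w$ without removing much of its energy. One must show that whenever this happens, the intersecting flip has already paid for it. I would handle this by bounding each $w$'s residual energy loss by the number of flipped steps intersecting it, each of size at most $B$, and then double count the pairs of intersecting steps to obtain the constant $b$.
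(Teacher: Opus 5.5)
Your proposal follows essentially the same route as the paper: decompose the clean syndrome $\noisy{\sigma}+\nu$ via Lemma \ref{lem:local-decomposition} to get the negative-energy budget $(1-\ve)|\nu|-(1+\ve)|\noisy{\sigma}|$; lose a factor $1+\bar{\fw}\|\phi\|_{\rm r}$ because one local step can be a component of several $w_j$; compare with the energies at each color's turn via $\max(0,-E_\nu(c))\le\max(0,-E_{\nu_*}(c))+2|(\nu+\nu_*)\cap c|$, where $\nu_*$ is the state just before $c$'s color is swept, and double count to obtain the factor $1+2\chi B$; then iterate the affine recursion exactly as you do. One caution on the charging step: to get exactly $1+2\chi B$ rather than $1+2\chi B^2$, charge each intersecting pair by $2|u\cap c|$ rather than by $B$, and use $\sum_{c}|u\cap c|\le\chi|u|\le\chi B$, as the paper does.
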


\begin{proof}
    Note that $\nu+\noisy{\sigma}$ is a clean syndrome, and thus by Lemma \ref{lem:local-decomposition}, there exists a partition of $\nu+\noisy{\sigma}=\sqcup u_i$ of syndromes such that if $w_i =u_i +\phi\bar{\delta} u_i$, then
    \begin{equation}
        |w_i| \le B_{\eta}, \quad \sum_{i} |\phi\bar{\delta} u_i| \le \ve |\nu+\noisy{\sigma}|
    \end{equation}
    Hence, by disjoint additivity and 1-Lipschitz in Lemma \ref{lem:energy},
    \begin{align}
        \sum_{i} E_\nu(w_i) &\le  \sum_i E_\nu (u_i) + \sum_i |\phi\bar{\delta}u_i| \\
        &\le |\noisy{\sigma}| -|\nu| +\ve (|\sigma|+|\nu|) \\
        &\le (1+\ve) |\noisy{\sigma}| -(1-\ve)|\nu|
    \end{align}
    Note that each connected component (relative to meta-checks) $c$ of $w_i$ is a local step in $\cL$. 
    Conversely, if local step $c$ is given and choose $z\in c$, then there exists at most $1+\|\phi\bar{\delta}\|_{\rm r}\le 1+\bar{\fw} \|\phi\|_{\rm r}$ many $i$ such that $z\in w_i$. 
    Hence, $c$ can be the connected component (relative to meta-checks) of at most $1+\bar{\fw} \|\phi\|_{\rm r}$ many $w_i$.
    Hence,
    \begin{align}
        (1+\bar{\fw} \|\phi\|_{\rm r}) \sum_{c\in \cL} \max(0,-E_\nu (c)) &\ge \sum_{c\in \cL} \sum_{i} \max(0,-E_\nu (c)) 1\{c \text{ connected component of } u_i\} \\
        &\ge - \sum_{i} E_{\nu} (w_i) \\
        &\ge (1-\ve)|\nu| -(1+\ve)|\noisy{\sigma}|
    \end{align}
    Also note that for any $\nu_*$, we have
    \begin{align}
        2|\nu \cap c| - |c| &\le 2|\nu_*\cap c| -|c| +2|(\nu+\nu_*)\cap c| \\
        \max(0,-E_{\nu}(c)) &\le \max(0,-E_{\nu_*}(c)) +2|(\nu+\nu_*)\cap c|
    \end{align}
    Let $\nu=\nu_0$ be the input and define inductively
    \begin{equation}
        \nu_i = \nu_{i-1} +\sum_{u\in \cA_i} u
    \end{equation}
    where $\cA_i$ denotes the admissible local steps when sweeping color $i$ so that $\nu_{i-1}\mapsto \nu_i$ is the output of a (parallel) $i$ color sweep.
    Then $\nu_1,...,\nu_\chi=\nu'$ is the sequential output of sweeping colors $1,...,\chi$, and $\cA$ denotes the collection of all admissible local steps during this random sweep.
    Hence, if we replace $\nu_*$ with the updated $\nu_{\zeta(c)-1}$ when evaluating $c$, then 
    \begin{align}
        \max(0,-E_{\nu}(c)) &\le \max(0,-E_{\nu_{\zeta(c)-1}}(c)) +2\sum_{w\in \cA: \zeta(w)<\zeta(c)} |u\cap c|
    \end{align}
    If both sides are summed over all local steps $c$, then the first term on the right side sums to the weight change due to a rainbow sweep, and thus the inequality compares the parallel and sequential local repairs.
    Hence,
    \begin{align}
        \sum_{c\in \cL_B} \max(0,-E_\nu(c)) &\le \sum_{c\in \cL_B} \max(0,-E_{\nu_{\eta(c)}}(c)) + 2 \sum_{c\in \cL_B} \sum_{u\in \cA: \eta(u)<\eta(c)} |u\cap c| \\
        &\le (|\nu|-|\nu'|) + 2\chi B |\cA| \\
        &\le (1+2\chi B) (|\nu|-|\nu'|)
    \end{align}
    Combining the two inequalities, we obtain
    \begin{align}
        b(1-\ve) |\nu| -b(1+\ve) |\noisy{\sigma}| &\le |\nu| -|\nu'|, \quad b = \frac{1}{(1+2\chi B)(1+\bar{\fw} \|\phi\|_{\rm r})}\\
        |\nu'| &\le r|\nu| +b(1+\ve)|\noisy{\sigma}|, \quad  r=1-b(1-\ve)
    \end{align}
    Moreover, if $\nu(\noisy{\sigma})$ is the output of Algorithm \ref{alg:parallel-local-repair} after $t$ rainbow sweeps
    After applying $t$ color sweeps, we see that
    \begin{equation}
        |\nu(\noisy{\sigma})| \le r^t |\phi\bar{\delta}\noisy{\sigma}| + \frac{1+\ve}{1-\ve} |\noisy{\sigma}|
    \end{equation}
    Hence, the statement follows.
\end{proof}
\begin{remark}[Global Half Fault]
    Note that Lemma \ref{lem:rainbow-sweep} shows that a local repair with finite color sweeps exhibits a property similar to the near half fault property in Definition \ref{def:repair}. 
    Unfortunately, the inequality is global, i.e., does not apply to each connected component (relative to meta-checks).
    The remedy is to note that since only finite color sweeps are applied, the interaction \textit{range} can be bounded, and thus if the connectivity of the syndrome graph is \textit{enlarged}, repair commutes with the \textit{enlarged} connectivity so that the inequality holds for \textit{enlarged} connected components, as follows.
\end{remark}

\begin{lemma}[Enlarged Connectivity]
    \label{lem:enlarged-connectivity}
    Let $G(\phi,\bar{\delta})$ denote the graph\footnote{The syndrome graph $G(\bar{\delta})$ is a subgraph of $G(\phi,\bar{\delta})$} consisting of vertices $\cZ(\Theta)$ and edges $zz'$ if $z',z$ is in a common meta-check or if $z'\in \phi\bar{\delta}z$ or vice-versa $z\lr z'$.
    Let $\noisy{\sigma}$ be an $i$-level syndrome and $N_{t\chi B}(\noisy{\sigma})$ is its $t\chi B$-radius neighborhood relative to $G(\phi,\bar{\delta})$.
    If $\cK$ is a connected component of $N_{t\chi B}(\noisy{\sigma})$ relative to $G(\phi,\bar{\delta})$, then
    \begin{equation}
        \nu (\noisy{\sigma}|_{\cK}) = \nu (\noisy{\sigma} )|_{\cK}, \quad \rep (\noisy{\sigma}|_{\cK}) = (\rep \noisy{\sigma})|_{\cK}
    \end{equation}
    where $\nu$ is defined by Algorithm \ref{alg:parallel-local-repair}.
    Consequently,
    \begin{equation}
        |\rep \noisy{\sigma}|_{\cK} \le \left(\frac{2}{1-\ve} + \ve'\right)|\noisy{\sigma}|_{\cK}
    \end{equation}
\end{lemma}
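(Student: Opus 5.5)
The plan is a locality (light-cone) argument: whether a local step is admissible depends only on a bounded neighborhood of the current $\nu$, so the $t$ rainbow sweeps cannot carry information between different components of $N_{t\chi B}(\noisy{\sigma})$. We first localize the initialization. The initial value $\nu_0=\phi\bar{\delta}\noisy{\sigma}$ is supported in the $1$-neighborhood of $\noisy{\sigma}$ in $G(\phi,\bar{\delta})$, because $z'\in\phi\bar{\delta}z$ is an edge of that graph. By linearity, $\phi\bar{\delta}\noisy{\sigma}=\sum_{\cK}\phi\bar{\delta}(\noisy{\sigma}|_{\cK})$, where each summand is supported in $\cK$. Since distinct components of $N_{t\chi B}(\noisy{\sigma})$ are at graph distance $\ge 2$, these summands are disjoint, and so $\nu_0(\noisy{\sigma}|_{\cK})=\nu_0(\noisy{\sigma})|_{\cK}$.

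Next we run an induction over the $t\chi$ color sweeps, with the following hypothesis. After $s$ color sweeps, the state obtained from input $\noisy{\sigma}$ is supported in $N_{1+sB}(\noisy{\sigma})$. Moreover, on each component $\cK$ it coincides with the state obtained from input $\noisy{\sigma}|_{\cK}$, and that state vanishes outside $\cK$.

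For the inductive step, take any local step $w\in\cL$; it is meta-check connected with $|w|\le B$, so its diameter in $G(\phi,\bar{\delta})$ is at most $B$. If $E_{\nu}(w)<0$ then $|\nu\cap w|>0$, so $w$ meets $\operatorname{supp}\nu$. Hence $w$ lies within distance $B$ of the current support, which keeps the support inside $N_{1+(s+1)B}(\noisy{\sigma})\subseteq N_{t\chi B}(\noisy{\sigma})$ (adjusting the radius by an additive constant if needed). A connected $w$ meeting the support inside $\cK$ stays inside the single component $\cK$, because components are separated. Its energy is
\begin{equation}
E_{\nu}(w)=|w|-2|\nu\cap w|,
\end{equation}
which depends only on $\nu|_{\cK}$ and therefore agrees between the full run and the restricted run. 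Within one color the steps are disjoint, so all admissible steps of that color are applied simultaneously and independently; Lemma \ref{lem:energy} (independence of disjoint updates) guarantees that the decision for one step is unaffected by the others. Thus the restriction commutes with each color sweep, which closes the induction and gives $\nu(\noisy{\sigma}|_{\cK})=\nu(\noisy{\sigma})|_{\cK}$. The identity for $\rep$ then follows because $\noisy{\sigma}$ restricts trivially.

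For the consequence, apply Lemma \ref{lem:rainbow-sweep} to the input $\noisy{\sigma}|_{\cK}$; its hypothesis $\bar{\delta}\noisy{\sigma}=\bar{\delta}\nu$ holds for the initialization by the SDR identities. This gives
\begin{equation}
|\rep(\noisy{\sigma}|_{\cK})|\le\left(\frac{2}{1-\ve}+\ve'\right)|\noisy{\sigma}|_{\cK}|,
\end{equation}
and substituting the commutation identity $\rep(\noisy{\sigma}|_{\cK})=(\rep\noisy{\sigma})|_{\cK}$ yields the claimed bound.

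The main obstacle is the bookkeeping of radii. Each color sweep can grow the support by up to $B$, and the total growth over all $t\chi$ color sweeps, plus the $1$-step from $\phi\bar{\delta}$, must stay within the radius $t\chi B$. I would check this carefully, and if necessary reinterpret the neighborhood radius as $t\chi B+1$ or note that the initialization step is absorbed by the constant.
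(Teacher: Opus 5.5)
Your proposal is correct and follows the paper's own light-cone argument: the initialization lies in $N_1(\noisy{\sigma})$, and each admissible step meets the current $\nu$ and has size at most $B$. You also supply the component-by-component induction that the paper compresses into ``the statement then follows.''

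The radius worry disappears without any reinterpretation. A meta-check-connected step of size at most $B$ has diameter at most $B-1$, so after $s$ color sweeps the support lies in $N_{1+s(B-1)}(\noisy{\sigma})$, and $1+t\chi(B-1)\le t\chi B$ whenever $t\chi\ge 1$.
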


\begin{proof}
    By Algorithm \ref{alg:parallel-local-repair}, we start with $\nu =\phi \bar{\delta} \noisy{\sigma}$ so that $\nu_0 \subseteq N_1(\noisy{\sigma})$.
    Note that after a single color sweep $\nu \gets \nu +w$ where $w$ must be the disjoint union of admissible local steps. 
    In particular, each admissible local step must intersect $\nu$ and thus $\nu +w \subseteq N_{t\chi B}(\noisy{\sigma})$ where we used the fact that each local step is of size $\le B$.
    Hence, after $t$ rainbow sweeps, we see that the output $\nu(\noisy{\sigma}) \subseteq N_{t\chi B}(\noisy{\sigma})$.
    The statement then follows.
\end{proof}

\begin{remark}[Enlarged RG Subgraph]
    \label{rem:enlarged-RG-subgraph}
    To accommodate the enlarged connectivity, we must replace the RG syndrome subgraph $\cG(\noisy{\bm{\sigma}})$ with
    \begin{equation}
        \cG(\noisy{\bm{\sigma}}) = \bigcup_{i\in [\ell]} \sh N_{t\chi B} (\noisy{\sigma}\super{i}) \cup \sh ( \rep\noisy{\sigma}\super{i} \cup f\super{i} \cup \delta f\super{i})
    \end{equation}
    The proof then follows similarly, culminating in Theorem \ref{thm:RG-threshold}, with noisy contraction
    \begin{equation}
        \noisy{\kappa} = \kappa \left(\frac{2}{1-\ve} + \ve'\right) <1
    \end{equation}
    where $\ve,\ve'$ are sufficiently small.
    % In addition, though we do not use the fact, Lemma \ref{lem:enlarged-connectivity} implies that if $zz'$ is an edge in $G(\phi,\bar{\delta})$, then $\sh z,\sh z'$ are adjacent in the RG graph (similar to Remark \ref{rem:meta-check-vs-RG-connectivity}) and that 
    % \begin{equation}
    %     \dnum^{\cK} \rep \noisy{\sigma}\super{i} = \rep (\dnum^{\cK} \noisy{\sigma}\super{i})
    % \end{equation}
    % for any connected component $\cK$ of the enlarged $\cG(\noisy{\bm{\sigma}})$.
\end{remark}

\begin{theorem}[Parallelized RG Decoder]
    \label{thm:RG-run-time}
    The RG decoder with the parallel local syndrome repair in Algorithm \ref{alg:parallel-local-repair} is single-shot as in Remark \ref{rem:enlarged-RG-subgraph} and runs in time 
    \begin{equation}
        \tau_{\rm RG} = O(\ell) + \tau_{\rm seed}
    \end{equation}
    where $\tau_{\rm seed}$ is the run time of the seed decoder.
    In particular, $\ell = \log (|\cQ\super{\ell}|/|\cQ\super{0}|)$.
\end{theorem}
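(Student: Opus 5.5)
The plan splits Theorem \ref{thm:RG-run-time} into two claims. The first is single-shotness with the parallel repair. The second is the run-time bound $O(\ell)+\tau_{\rm seed}$, together with the identification $\ell=\Theta(\log(|\cQ\super{\ell}|/|\cQ\super{0}|))$.

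\textbf{Single-shotness.} I would rerun the argument of Sections \ref{sec:run-failure}--\ref{sec:threshold} with the enlarged RG subgraph of Remark \ref{rem:enlarged-RG-subgraph}. The repair property of Definition \ref{def:repair} was used in two places: in the Valid property, and in the per-component near half-fault bound inside Lemma \ref{lem:noisy-contraction}.

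For the Valid property, I take $t$ large enough that $\phi\bar\delta$ initialization yields the correct meta-syndrome. Every later update is a clean local step, so $\bar\delta\nu=\bar\delta\noisy\sigma$ holds throughout. Valid-syndrome invariance also holds, since the algorithm reads $\noisy\sigma$ only through $\bar\delta\noisy\sigma$.

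For the component bound, I replace the per-meta-component half-fault estimate by Lemma \ref{lem:enlarged-connectivity}. This gives $|\rep\noisy\sigma|_{\cK}\le(\frac{2}{1-\ve}+\ve')|\noisy\sigma|_{\cK}$ on each enlarged component. The edges of $G(\phi,\bar\delta)$ have bounded range, by $\gamma$-local meta-checks and layer-locality of $\phi$. Hence their shadows are within bounded distance in the RG graph, and the analogue of Remark \ref{rem:meta-check-vs-RG-connectivity} survives, at the cost of a larger bounded degree $\Delta_{\cG}$. The cell-layer and correction invariance lemmas then go through verbatim, because restriction to an enlarged component commutes with the repair by Lemma \ref{lem:enlarged-connectivity}.

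Next I choose $\ve$ small so that $2\kappa/(1-\ve)<1$; this is possible since $\kappa=4/9$ by Theorem \ref{thm:validity}. I then fix $t$ so that $\ve'=r^t\|\phi\|_c\bar\fd$ keeps $\noisy\kappa<1$. With these choices, Lemmas \ref{lem:witness-size}, \ref{lem:active-component-size}, \ref{lem:temporal-witness-size} and Theorems \ref{thm:witness-probability}, \ref{thm:noise-transfer}, \ref{thm:RG-threshold} follow with modified constants.

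Levels $i\le i_\eta$ are absorbed into a new seed $C\super{i_\eta}$. This is legitimate because it is locally trivial by the ``Any Level is Locally Trivial'' lemma. The $O(1)$ extra levels, decoded by the old seed decoder, contribute only constant time.

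\textbf{Run time.} Each level pass consists of three steps. First, compute $\phi\bar\delta\noisy\sigma\super{i}$, which takes $O(1)$ depth since these are sparse linear maps. Second, run $t=O(1)$ rainbow sweeps, each $O(\chi)=O(1)$ parallel time by Definition \ref{def:parallel-local-repair}. Third, compute $f$ and $\desc=\pi(\cdot+\delta f(\cdot))$, which are layer-local on bounded-size layers. Layer-locality only reads $\dnum^{\preceq u}$, a bounded neighborhood, so each output layer is computed by a bounded-size circuit in parallel.

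Summing over $\ell$ passes gives $O(\ell)$, and the seed decoder adds $\tau_{\rm seed}$. Finally, each level multiplies the cell count by a factor between constants (the layer sizes are bounded by $\Delta_{\sh}$ and nonempty). Hence $\ell=\Theta(\log(|\cQ\super{\ell}|/|\cQ\super{0}|))$, with $\log$ to base the per-level blow-up.

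The main obstacle is the consistency of the enlarged connectivity, i.e.\ ensuring that the enlarged RG graph still has bounded degree and that restricting to its components commutes with repair across the descent. The concern is that the enlarged neighborhood $N_{t\chi B}$ adds vertices to $\cG(\noisy{\bm\sigma})$, which could break the component-size upper bound. I would handle this by Lemma \ref{lem:cover-size}: a neighborhood of bounded radius increases size by at most a constant factor times $|\noisy\sigma\super{i}|$. This keeps $|\cK|=O(|\noisy\sigma\super{\ell}|^{\cK})$ after using the noisy contraction geometric sum.
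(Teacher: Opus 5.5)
Your proposal follows the paper's route. The run-time part is exactly the paper's argument: $t=O(1)$ rainbow sweeps over $\chi=O(1)$ colors, plus layer-local cleaning and descent on uniformly bounded layers, give $O(1)$ parallel depth per level, summed over $\ell$ levels and followed by $\tau_{\rm seed}$. For single-shotness you are filling in Remark~\ref{rem:enlarged-RG-subgraph}, which the paper only invokes. Two of your justifications do not work as written, though both claims are true and easily repaired.

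\textbf{Meta-syndrome of the initialization.} The initialization $\nu=\phi\bar\delta\noisy{\sigma}$ has the correct meta-syndrome for a reason unrelated to $t$. Restrict the SDR identity $\iota\pi=\mathrm{id}+\phi\bar\delta+\bar\delta\phi$ of Definition~\ref{def:level-map} to $MZ$. There $\iota\pi=0$ and $\phi\bar\delta=0$, so $\bar\delta\phi=\mathrm{id}_{MZ}$ and hence $\bar\delta\phi\bar\delta\noisy{\sigma}=\bar\delta\noisy{\sigma}$. Conversely, every later update is clean, so no choice of $t$ could correct a wrong initial meta-syndrome.

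\textbf{Bound on $\ell$.} Nonempty layers only give a per-level growth factor $\ge 1$, which does not bound $\ell$ by $O(\log(|\cQ\super{\ell}|/|\cQ\super{0}|))$. You need a growth factor bounded below by a constant strictly greater than one. For example, each $Y^q$ contains the $m$ qubits $q\otimes V$, so each half-step $\Gamma,\Gamma_*$ multiplies the qubit count by at least $m=3$.

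\textbf{Minor points.} Enlarging the degree of the RG graph is unnecessary. For $z'\in\phi M$ with $M\succ z$, layer-locality gives $\sh z\preceq\sh M\preceq\sh z'$, so shadows of $G(\phi,\bar\delta)$-edges are already equal or adjacent in $\cG$. Also, after shifting the seed to $C\super{i_\eta}$, the levels below $i_\eta$ are not ``constant time'' in general, since the local repair is undefined there. So $\tau_{\rm seed}$ should be read as the run time of whatever decoder you place on $C\super{i_\eta}$.
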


\begin{proof}
    By Algorithm \ref{alg:parallel-local-repair}, the parallel local repair runs in $O(1)$ time for each level. Since the cleaning is layer-local and layers are uniformly bounded in size, it can be parallelized and run in $O(1)$ time. Hence, the $\desc \rep$ map runs in $O(1)$ time for each level and thus the statement follows.
\end{proof}

%% file: iterate.tex
\section{Level Map}
\label{sec:iterate}

Suppose that we are given a CSS code as a cocomplex $C=(C,\delta)$ where
\begin{equation}
    C =  X \to Q \to Z
\end{equation}
and $\bar{C} =(\bar{C},\bar{\delta})$ be that with meta-checks.
This subsection then inductively constructs CSS code $\bar{\Gamma}(C)$ with additional meta-checks only in the $Z$ sector, i.e., $MZ'$.
To construct meta-checks in the $X$ sector, a \textit{dual} iteration is applied via $\bar{\Gamma}_*(C)\equiv \bar{\Gamma}(C^\top)^\top$ so that the level map is given by the composition $\bar{\Theta} = \bar{\Gamma}_*\circ \bar{\Gamma}$ as in Lemma \ref{lem:level-map-composition}.
The following are some assumptions made to simplify the (already tedious) notation, though we remark that they can be removed and dealt with in full generality \cite{balasubramanian2026passive}.

\begin{hypothesis}[Edge-like $x,z$ Pair]
    \label{hyp:edge-like-supp}
    In general, the common support $x\wedge z$ has even cardinality if $x\prec z$. For (notation) simplicity, we shall assume that $|x\wedge z|=2$ whenever $x\prec z$ in $C$.
    Note that the assumption is self-dual, i.e., if $C$ satisfies this property, then so does it transpose $C^\top$.
\end{hypothesis}

\begin{hypothesis}[Connectivity]
    \label{hyp:connected-local-graph}
    Assume that the local graph $G^x$ as in Definition \ref{def:local-cone-x} is connected for all $x$-checks in $C$.
    Note that this assumption is not self-dual, and thus we shall also assume that the transpose $C^\top$ also satisfies, i.e., the local graphs of $z$-checks consisting of vertices $\{\prec z\}$ is connected for all $z$ checks
\end{hypothesis}

% \begin{hypothesis}[$\infty$ Syndrome Distance]
%     \label{hyp:seed-decoder}
%     For simplicity, we shall assume that the seed code $\bar{C}\super{0}$ has $\infty$ syndrome distance for both the $X$ and $Z$ sectors, i.e., all meta-check clean syndrome are valid syndrome. The case where the syndrome distance is simply large can also be dealt with.
%     Note that this property is self-dual and only assumed at the seed level.
% \end{hypothesis}

In particular, we shall show in Lemma \ref{lem:edge-like-support}-\ref{lem:connected-local-graph} that Hypotheses \ref{hyp:edge-like-supp}-\ref{hyp:connected-local-graph} hold inductively, i.e., $\Gamma(C),\Gamma_*(C)$ satisfy the hypotheses if $C$ satisfies them, so that it's sufficient to assume that only the seed code $C\super{0}$ satisfies the hypotheses.

\subsection{Local Graphs and Cones}

\begin{figure}[h]
\centering
\subfloat[\label{fig:input-code}]{%
    \centering
    \includegraphics[width=0.43\columnwidth]{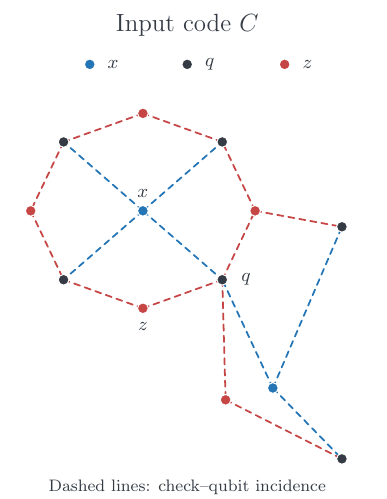}
}
\hspace{0.05\columnwidth}
\subfloat[\label{fig:local-cone}]{%
    \centering
    \includegraphics[width=0.43\columnwidth]{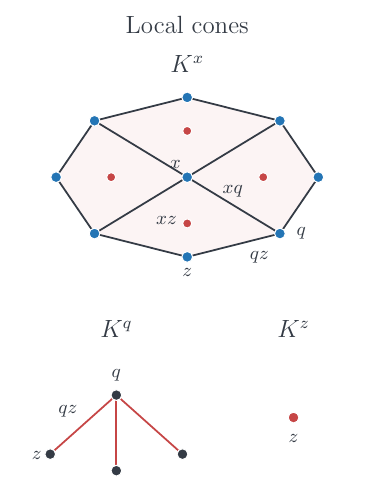}
}
\caption{Input Code and Local Cones. (a) A schematic portion of an input CSS code $C$: blue, black and red dots denote $x$-checks, qubits, and $z$-checks, respectively. Dashed lines denote check--qubit incidence; only selected cells and incidences are shown. (b) The local cones $K^x,K^q,K^z$ in Definitions \ref{def:local-cone-x} and \ref{def:local-cone-q-z}. In $K^x$, blue dots denote vertices $x,q,z$, black lines denote edges $xq,qz$, and red dots mark the shaded faces $xz$. In $K^q$, black dots denote vertices $q,z$ and red lines denote edges $qz$; $K^z$ consists of a single red vertex $z$. Only representative cell labels are shown.
When \textit{glued} together as in Remark \ref{rem:BPT}, we obtain a CSS code with $x$-checks, qubits and $z$-checks, labeled as blue, black and red, respectively.
}
\label{fig:local-view}
\end{figure}

\begin{definition}[Local Cone of $x$, Fig. \ref{fig:local-view}]
    \label{def:local-cone-x}
    Consider an $x$ check in $C$. Let the \textbf{local graph} $G^x=V^x\to E^x$ of $x$ be the (bipartite) graph cocomplex with vertices $\{\succ x\}$, i.e., all cells in $C$ strictly larger than $x$, and edges $qz$ if $q\prec z$. 
    Note that the vertices have degrees
    \begin{equation}
        |\delta_{G^x} q| = |\delta q|, \quad |\delta_{G^x} z| =2
    \end{equation}
    Write $\cone G^x$ as the corresponding cone cocomplex where we denote the extra \textit{cone} vertex as $x$ so that its vertices are all cells $\succeq x$, edges are $xq,qz,xz$ if the endpoints are adjacent, and faces $xqz$ is $x\prec q \prec z$.
    Slightly modify the cone and remove \textit{cone edges} $xz$. To keep the complex structure, for a given pair $x\prec z$, \textit{merge} faces $xqz$ and $xq'z$ (uniquely defined since $|x\wedge z|=2$) and call the corresponding face $xz$.
    We call the resulting cell cocomplex $K^x$ the \textbf{local cone} of $x$ with vertices $x,q,z$, edges $xq,qz$ and faces $xz$.
\end{definition}

\begin{remark}[Weights of $K^x$]
    \label{rem:codegree-Kx}
    Note that the column weights of (co)differential maps in $K^x$ can be easily derived from the following diagram
    \begin{equation}
        \label{eq:Kx-codeg-diag}
        \begin{tikzcd}
        	&& x \\
        	& q & xq \\
        	z & qz & xz
        	\arrow["{|\delta x|}", from=1-3, to=2-3]
        	\arrow["1", from=2-2, to=2-3]
        	\arrow["{|\delta q|}", from=2-2, to=3-2]
        	\arrow["{|\delta q|}", from=2-3, to=3-3]
        	\arrow["2"', from=3-1, to=3-2]
        	\arrow["1"', from=3-2, to=3-3]
        \end{tikzcd}
    \end{equation}
    where we have implicitly assumed that $q,z$ are over all cells $\succ x$.
    Specifically,
    \begin{align}
        |\delta_{K^x} x|=|\delta x|, \quad |\delta_{K^x} q|=1+|\delta q|, \quad |\delta_{K^x} z| =2 \\
        |\delta_{K^x} xq|=|\delta q|, \quad |\delta_{K^x} qz|=1
    \end{align}
    In particular, we had to remove the \textit{cone edges} $xz$ so that the degree of $x$ is \textit{preserved}. 
    The row weights can be simiarly derived, but are trivial since edges $xq,qz$ have two endpoints, and faces $xz$ are \textit{squares}, i.e., four edges.
\end{remark}
\begin{definition}[Local Cone of $q,z$, , Fig. \ref{fig:local-view}]
    \label{def:local-cone-q-z}
    Let the \textbf{local graph} $G^q=V^q$ of $q$ in $C$ be the graph cocomplex with vertices $\{\succ q\}$, i.e., all cells in $C$ strictly larger than $q$, and no edges. 
    Let $K^q=\cone G^q$ be the \textbf{local cone} of $q$ where we denote the extra \textit{cone vertex} as $q$ so that its vertices are all cells $\succeq q$ and edges are $qz$ if endpoints are adjacent.
    Further let the \textbf{local graph} $G^z=\varnothing$ of $z$ be the empty graph cocomplex.
    Let $K^z=\cone G^z$ be the \textbf{local cone} of $z$ where we denote the extra \textit{cone vertex} as $z$ so that its vertex is just $z$ itself.
\end{definition}

\begin{remark}[Weights of $K^q$]
    Similar to Remark \ref{rem:codegree-Kx}, note that the weights of $K^q$ are given by
    \begin{equation}
        \label{eq:Kq-codeg-diag}
        \begin{tikzcd}
        	& q \\
        	z & qz
        	\arrow["{|\delta q|}", from=1-2, to=2-2]
        	\arrow["1"', from=2-1, to=2-2]
        \end{tikzcd}
    \end{equation}
\end{remark}

\begin{remark}[Distinction]
    When we wish to distinguish cells in the local cones, we may write $|c;u\ket$ to denote the cell $c$ in $K^u$ for $u=x,q,z$.
\end{remark}
\begin{remark}[Local Cones are Contractible]
    \label{rem:cone-contractible}
    By construction, the local cones $K^x,K^q,K^z$ are all contractible, i.e., the cohomology vanishes except for the vertices, which have unique basis element, i.e., the sum of all vertices in $K^u$ for $u=x,q,z$.
\end{remark}

\begin{remark}[BPT Saturating]
    \label{rem:BPT}
    As first shown in \cite{lin2023geometrically} and elucidated in \cite{yuan2026unified}, if the purpose was to obtain BPT saturating (up to polylog) codes $\Gamma(C)$ from an input code $C$ by a single step of embedding, one can construct $\Gamma(C)$ via gluing (subdivisions of) $K^x,K^q,K^z$ together since they induce a natural inclusion chain map, due to the diagrams in Eq. \eqref{eq:Kx-codeg-diag}, \eqref{eq:Kq-codeg-diag} and Fig. \ref{fig:local-view}, and sketched below
    \begin{equation}
        \label{eq:K-glue-diag}
        \begin{tikzpicture}[baseline]
        \matrix(a)[matrix of math nodes, nodes in empty cells, nodes={minimum size=5mm},
        row sep=.4em, column sep=.5em,
        text height=1.5ex, text depth=0.25ex]
        {
            &&&&&& x \\
            &&& q && xq &\\
            z && qz && xz &&\\
            &&&q &&& \\
            z && qz &&&& \\
            &&&&&& \\
            z &&&&&& \\
        };
        \path[->,line width=.5mm,violet,font=\scriptsize]
        (a-3-1) edge (a-5-1)
        (a-3-3) edge (a-5-3)
        (a-5-1) edge (a-7-1);
        \path[->,font=\scriptsize]
        (a-2-4) edge (a-2-6)
        (a-3-1) edge (a-3-3)    
        (a-3-3) edge (a-3-5)
        (a-5-1) edge (a-5-3);
        \path[- ,line width=.5mm,violet,font=\scriptsize](a-2-4) edge ([yshift=4pt]a-3-4.center);
        \path[->,line width=.5mm,violet,font=\scriptsize]([yshift=-4pt]a-3-4.center) edge (a-4-4);
        \path[->,font=\scriptsize]
        ([xshift=1pt,yshift=2pt]a-2-4.south west) edge (a-3-3)
        ([xshift=1pt,yshift=2pt]a-2-6.south west) edge (a-3-5)
        ([xshift=1pt,yshift=2pt]a-1-7.south west) edge (a-2-6)
        ([xshift=1pt,yshift=2pt]a-4-4.south west) edge (a-5-3);
        \end{tikzpicture}  
        % \begin{tikzcd}[row sep=small, column sep=scriptsize]
        % 	&&&&&& \textcolor{rgb,255:red,51;green,51;blue,255}{x} \\
        % 	&&& \textcolor{rgb,255:red,51;green,51;blue,255}{q} && \textcolor{rgb,255:red,51;green,51;blue,255}{xq} \\
        % 	\textcolor{rgb,255:red,51;green,51;blue,255}{z} && \textcolor{rgb,255:red,51;green,51;blue,255}{qz} & {} & \textcolor{rgb,255:red,51;green,51;blue,255}{xz} \\
        % 	&&& q \\
        % 	z && qz \\
        % 	\\
        % 	\textcolor{rgb,255:red,255;green,51;blue,61}{z}
        % 	\arrow[color={rgb,255:red,51;green,51;blue,255}, from=1-7, to=2-6]
        % 	\arrow[color={rgb,255:red,51;green,51;blue,255}, from=2-4, to=2-6]
        % 	\arrow[color={rgb,255:red,51;green,51;blue,255}, from=2-4, to=3-3]
        % 	\arrow[color={rgb,255:red,160;green,51;blue,255}, line width=1pt, no head, from=2-4, to=3-4]
        % 	\arrow[color={rgb,255:red,51;green,51;blue,255}, from=2-6, to=3-5]
        % 	\arrow[color={rgb,255:red,51;green,51;blue,255}, from=3-1, to=3-3]
        % 	\arrow[color={rgb,255:red,160;green,51;blue,255}, line width=1pt, from=3-1, to=5-1]
        % 	\arrow[color={rgb,255:red,51;green,51;blue,255}, from=3-3, to=3-5]
        % 	\arrow[color={rgb,255:red,160;green,51;blue,255}, line width=1pt, from=3-3, to=5-3]
        % 	\arrow[color={rgb,255:red,160;green,51;blue,255}, line width=1pt, from=3-4, to=4-4]
        % 	\arrow[from=4-4, to=5-3]
        % 	\arrow[from=5-1, to=5-3]
        % 	\arrow[color={rgb,255:red,160;green,51;blue,255}, line width=1pt, from=5-1, to=7-1]
        % \end{tikzcd}
    \end{equation}
    where purple arrows denote the gluing maps $g^{qx}:K^x \to K^q$ if $q\succ x$ and similarly $g^{zq}$. Note that $g^{zx}=0$.
\end{remark}

\subsection{Local Hollowed Tensor}

To obtain self-correction, \cite{balasubramanian2026passive} extended the idea of Remark \ref{rem:BPT} by first modifying the local cones so that they possess meta-checks before gluing them together as in Eq. \eqref{eq:K-glue-diag}.
The most natural way is to take the tensor product of the local cones $K^{u}$ with the repetition code $R$.
However, this is slightly insufficient, since this would increase the weights of certain cells by $+2$ due to \textit{vertical} extensions of the repetition code as shown in Fig. \ref{fig:hollow-tensor}.
Hence, an iterative embedding over $i$ from the seed input code $C\super{0}$ would result in the final code $C\super{i}$ to be non-LDPC for sufficiently large $i$.
Therefore, \cite{balasubramanian2026passive} circumvents this issue via \textit{hollowing out} the \textit{interior} of the tensor product so that only the \textit{boundary} remains.
Specifically,

\begin{figure}[ht]
\centering
\subfloat[\label{fig:hollow-tensor}]{%
    \centering
    \includegraphics[width=0.43\columnwidth]{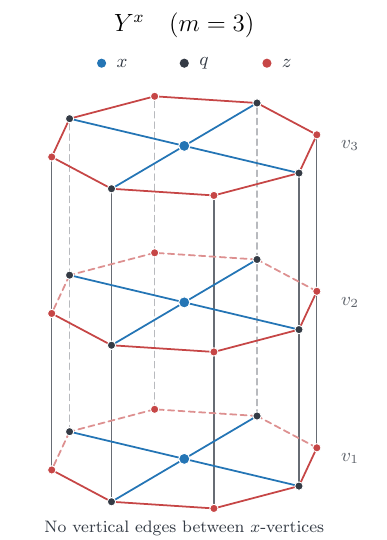}
}
\hspace{0.05\columnwidth}
\subfloat[\label{fig:meta-check}]{%
    \centering
    \includegraphics[width=0.43\columnwidth]{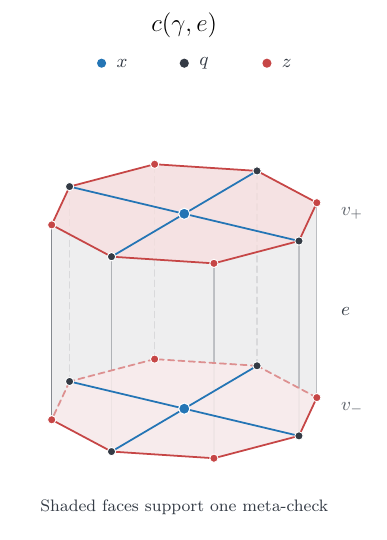}
}
\caption{Hollowed Tensor. (a) denotes $Y^x$ constructed from the local graph and cone of $x$ as in Fig. \ref{fig:local-view}. Note that the coloring of $Y^x$ follows Fig. \ref{fig:input-code} instead of Fig. \ref{fig:local-cone} to emphasize the origin of each vertex and edge. 
In particular, black edges denote vertical edges due to the repetition code, but note that there are no \textit{vertical edges} between (blue) $x$ vertices.
(b) shows the support of a meta-check $c(\gamma,e)$ in $\bar{Y}^x$, where $\gamma$ is the boundary cycle and $\partial^R e=v_-+v_+$. The red shaded caps and gray shaded side faces denote $\sum_{z\in\gamma}xz\otimes\partial^R e$ and $\sum_{qz\in\gamma}qz\otimes e$, respectively. Dashed lines indicate edges behind the displayed surfaces.
}
\label{fig:hollow-tensor-w-meta-checks}
\end{figure}

\begin{definition}[Hollowed Tensor, Fig. \ref{fig:hollow-tensor}]
    \label{def:hollowed-tensor}
    Let $G^u,K^u$ be the local graph and cone of cells $u=x,q,z$. Define the \textbf{hollowed tensor (product)} as the union of cocomplexes as follows
    \begin{equation}
        Y^u = K^u \otimes V + G^u\otimes R
    \end{equation}
    where $R=V\to E$ is the repetition code on $m\ge 2$ vertices, so that $Y^u$ are length 2, 1, 0 cocomplexes for $u=x,q,z$, respectively.
\end{definition}

Currently, the hollowed tensor does not yet have meta-checks; these will be added later. First, we check the homology of the cohollowed tensor. 
This can be verified directly via the Mayer--Vietoris sequence, but since the constructions are cell cocomplexes, it's more elucidating to the topological structure by relating the hollowed tensor to its homotopy equivalent, i.e., wedge sums of suspensions of local graphs.

\begin{lemma}[Cohomology]
    \label{lem:cohomology}
    Let $Y^u$ be the hollowed tensor for $u=x,q,z$. Then it is homotopy-equivalent to
    \begin{equation}
        Y^u \cong \bigvee^{m-1} \Sigma G^u
    \end{equation}
    where $\vee$ denotes the wedge sum of the suspension $\Sigma G^u$. In particular,
    \begin{equation}
        \tilde{H}^j(Y^u) \cong \bigoplus_{m-1} \tilde{H}^{j-1}(G^u)
    \end{equation}
    % If $G^x$ is connected, then $H^1(Y^x)=0$.
\end{lemma}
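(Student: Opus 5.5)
The plan is to exhibit an explicit homotopy equivalence at the level of cell cocomplexes. The homotopy is built by collapsing the contractible pieces. First I would describe $Y^u$ geometrically. $K^u\otimes V$ is $m$ disjoint copies $K^u\times\{v\}$ of the local cone, one for each vertex $v$ of the path. $G^u\otimes R$ is the cylinder $G^u\times[\text{path}]$, and it is glued to each copy along $G^u\times\{v\}\subseteq K^u\times\{v\}$, since $G^u\subseteq K^u$ is the base of the cone. The union of these two pieces is $Y^u$. By Remark \ref{rem:cone-contractible}, each $K^u$ is contractible. It is a cone over $G^u$, except for the merging of the faces $xqz,xq'z$ in $K^x$, and that merging only subdivides differently without changing the homotopy type.

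Next I would collapse each cone copy $K^u\times\{v\}$ to its apex. Since $(K^u\times\{v\},G^u\times\{v\})$ is a CW pair with a contractible total space, collapsing gives $Y^u\simeq (G^u\times I_m)/\!\sim$, where $I_m$ is the path on $m$ vertices and $G^u\times\{v\}$ is crushed to a point $a_v$ for every $v$. Each edge segment $G^u\times[v_j,v_{j+1}]$ with both ends crushed is exactly the unreduced suspension $\Sigma G^u$. Consecutive suspensions share the single point $a_{v_{j+1}}$, so the space is a chain of $m-1$ suspensions glued at points. This chain is the wedge sum $\bigvee^{m-1}\Sigma G^u$ up to homotopy: sliding the wedge points along arcs joining the apex points gives the equivalence, because each $\Sigma G^u$ is path connected when $G^u\ne\varnothing$. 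If $G^u=\varnothing$, as for $u=z$, then $\Sigma\varnothing=S^0$ and the chain is a path of $m-1$ intervals. That space is contractible, and it matches the convention $\tilde H^{j-1}(\varnothing)=\F_2 1\{j=0\}$ if one interprets the formula homologically.

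The cohomology statement then follows from the suspension isomorphism $\tilde H^j(\Sigma G)\cong\tilde H^{j-1}(G)$ together with additivity of reduced cohomology under wedge sums. As a cross-check that avoids topology, I would run Mayer--Vietoris on the cover $A=K^u\otimes V$, $B=G^u\otimes R$ with $A\cap B=G^u\otimes V$. The map $H^\bullet(A)\oplus H^\bullet(B)\to H^\bullet(A\cap B)$ is $(\F_2^m)\oplus H^\bullet(G^u)\to H^\bullet(G^u)^{\oplus m}$. Its kernel and cokernel give exactly $m-1$ copies of $\tilde H^{\bullet-1}(G^u)$, shifted up by one.

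The main obstacle is making the topology rigorous for these combinatorial cocomplexes. The merged square faces $xz$ of $K^x$ are not simplicial, so I would first check that $K^x$ is a regular CW complex with the same cellular cochain complex. Then cellular cohomology matches the algebraic one, and the Mayer--Vietoris route, which is purely algebraic on the short exact sequence $0\to Y^u\to A\oplus B\to A\cap B\to 0$, can serve as the actual proof if the CW identification becomes delicate.
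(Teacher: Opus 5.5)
Your argument for nonempty $G^u$ is correct and is essentially the paper's collapse-the-cones argument. The paper collapses one cone copy (together with the cylinder over the tree $R$), so that each remaining copy becomes $K^u/G^u\simeq\Sigma G^u$. You instead collapse all $m$ copies, obtain a chain of $m-1$ suspensions indexed by the edges of $R$, and slide the wedge points using path-connectedness. These two routes are equivalent. One wording fix: the pair to which ``collapsing a contractible subcomplex'' applies is $(Y^u,K^u\times\{v\})$, not $(K^u\times\{v\},G^u\times\{v\})$.

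The genuine error is the degenerate case $u=z$. There $G^z=\varnothing$, so the cylinder $G^z\otimes R$ is empty and $Y^z=K^z\otimes V=z\otimes V$ is $m$ isolated points; no intervals appear anywhere. The contractible ``path of $m-1$ intervals'' you describe is $\bar{Y}^z=z\otimes R$, the space obtained after adding the meta-check cells $z\otimes e$. Making it contractible is exactly what those meta-checks accomplish (Remark on hollowed tensors being contractible).

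Your contractibility claim would give $\tilde H^0(Y^z)=0$, which contradicts the lemma. With the convention $\tilde H^{-1}(\varnothing)=\F_2$, the right-hand side is $\F_2^{m-1}$ in degree $0$. So your assertion that contractibility ``matches the convention'' is internally inconsistent.

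The correct statement is that $m$ discrete points form $\bigvee^{m-1}S^0=\bigvee^{m-1}\Sigma\varnothing$, a wedge of $m-1$ two-point spaces at a common basepoint, so $\tilde H^0(Y^z)\cong\F_2^{m-1}$. This is also what your chain construction gives if followed literally: consecutive copies of $S^0$ glued at a point yield $m$ points. Your own Mayer--Vietoris cross-check, with $B=A\cap B=\varnothing$, gives $H^0(Y^z)=\F_2^m$ and would have caught the error. The same correction applies to any qubit $q$ with no $z$-neighbour, where $G^q=\varnothing$.
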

\begin{proof}
    For nonempty $G^u$, collapse one contractible copy $K^u\times\{v\}$ to a point. Since $R$ is a tree, each of the remaining $m-1$ cone copies contributes a suspension of $G^u$; the intervening cylinders give the attaching maps. Thus the quotient is homotopy-equivalent to the stated wedge. Reduced cohomology gives the second identity. For $u=z$, $Y^z$ is $m$ points: interpret $\Sigma\varnothing=S^0$ and $\tilde H^{-1}(\varnothing)=\F_2$.
\end{proof}

\begin{definition}[Meta-Checks, Fig. \ref{fig:hollow-tensor-w-meta-checks}]
    \label{def:hollowed-meta-checks}
    By Lemma \ref{lem:cohomology}, the cohomology of $Y^u$ can be reduced to that of the suspension of local graphs $G^u$, and thus it becomes clear how to introduce meta-checks in a natural manner.
    \begin{itemize}
        \item First consider $H^2(Y^x)\cong  H^1(G^x)^{m-1}$.
        Let $\gamma$ be a cycle of $G^x$ and edge $e\in R$. By bipartiteness, $\gamma$ must alternate between $q,z$ vertices and thus we can define
        \begin{equation}
            c(\gamma,e) = \sum_{z\in \gamma} xz\otimes \partial^R e + \sum_{qz\in \gamma} qz\otimes e
        \end{equation}
        Choose a cycle basis $\{\gamma\}$ of $H_1(G^x)$ and equip $Y^x$ with 3-cells $\{c(\gamma,e)\}$ so that $\bar{Y}^x$ is now a cell cocomplex with metachecks so that $H^2(\bar{Y}^x)=0$.
        \item Next consider $H^1(Y^q)\cong \tilde{H}^0(G^q)^{m-1}$. For each distinct $z,z'\succ q$ and edge $e\in R$, define
        \begin{equation}
            \gamma(zz',e) = \sum_{u=zz'} (qu \otimes \partial^R e + u\otimes e)
        \end{equation}
        Equip $Y^q$ with 2-cells\footnote{Technically, only a complete pair is necessary, e.g., arbitrarily order $z_0,z_1,...\succ q$ and only consider pairs $\gamma(z_0z_i,e),i\ge 1$ since any pair $\gamma(z_i z_j,e)$ can be obtained from $z_0 z_i, z_0z_j$.} $\{\gamma(zz',e)\}$ so that $\bar{Y}^x$ is now a cell cocomplex of length 2 with $H^1(Y^x)=0$.
        \item Finally, note that $Y^z=z\otimes V$ and thus naturally extend it to $\bar{Y}^z=z\otimes R\cong R$.
    \end{itemize}
\end{definition}

\begin{remark}[Hollowed Tensors are Contractible]
    \label{rem:hollow-contractible}
    In analogy to Remark \ref{rem:cone-contractible}, the meta-check equipped hollowed tensors $\bar{Y}^u,u=x,q,z$ are contractible.
    Note that since the graphs are all local, the equipped meta-checks are naturally local and sparse. See Lemma \ref{lem:sparsity}.
\end{remark}

\begin{remark}[Diagrams]
    Similar to Eq. \eqref{eq:Kx-codeg-diag}, \eqref{eq:Kq-codeg-diag}, the hollowed tensors have (commuting) diagrams
    \begin{equation}
        \label{eq:Y-diag}
        \bar{Y}^x = 
        \begin{tikzpicture}[baseline]
        \matrix(a)[matrix of math nodes, nodes in empty cells, nodes={minimum size=5mm},
        row sep=.4em, column sep=.5em,
        text height=1.5ex, text depth=0.25ex]
        {
            &&&&&& x,v \\
            &&& q,v && xq,v &\\
            z,v && qz,v && xz,v &&\\
            &&&q,e &&& \\
            z,e && qz,e && c(\gamma,e) && \\
        };
        \path[->,blue,font=\scriptsize]
        (a-3-1) edge (a-5-1)
        (a-3-3) edge (a-5-3);
        \path[->,red,font=\scriptsize]
        (a-3-5) edge (a-5-5)
        (a-5-3) edge (a-5-5);
        \path[->,font=\scriptsize]
        (a-2-4) edge (a-2-6)
        (a-3-1) edge (a-3-3)    
        (a-3-3) edge (a-3-5)
        (a-5-1) edge (a-5-3);
        \path[- ,blue,font=\scriptsize](a-2-4) edge ([yshift=4pt]a-3-4.center);
        \path[->,blue,font=\scriptsize]([yshift=-4pt]a-3-4.center) edge (a-4-4);
        \path[->,font=\scriptsize]
        ([xshift=1pt,yshift=2pt]a-2-4.south west) edge (a-3-3)
        ([xshift=1pt,yshift=2pt]a-2-6.south west) edge (a-3-5)
        ([xshift=1pt,yshift=2pt]a-1-7.south west) edge (a-2-6)
        ([xshift=1pt,yshift=2pt]a-4-4.south west) edge (a-5-3);
        \end{tikzpicture},
        \quad
        \bar{Y}^q = 
        \begin{tikzpicture}[baseline]
        \matrix(a)[matrix of math nodes, nodes in empty cells, nodes={minimum size=5mm},
        row sep=.4em, column sep=.5em,
        text height=1.5ex, text depth=0.25ex]
        {
            &&& q,v \\
            z,v && qz,v &\\
            &&& \\
            z,e && \gamma(zz',e) & \\
        };
        \path[->,font=\scriptsize]
        ([xshift=1pt,yshift=2pt]a-1-4.south west) edge (a-2-3)
        (a-2-1) edge (a-2-3);
        \path[->,blue,font=\scriptsize]
        (a-2-1) edge (a-4-1);
        \path[->,red,font=\scriptsize]
        (a-2-3) edge (a-4-3)
        (a-4-1) edge (a-4-3);
        \end{tikzpicture},
        \quad
        \bar{Y}^z = 
        \begin{tikzpicture}[baseline]
        \matrix(a)[matrix of math nodes, nodes in empty cells, nodes={minimum size=5mm},
        row sep=.4em, column sep=.5em,
        text height=1.5ex, text depth=0.25ex]
        {
            z,v \\
            \\
            z,e \\
        };
        \path[->,red,font=\scriptsize]
        (a-1-1) edge (a-3-1);
        \end{tikzpicture}
    \end{equation}
    where the black arrows denote adjacency due to local graphs or cones, blue arrows denote that due to the repetition code in Definition \ref{def:hollowed-tensor}, and red arrows due to the equipped meta-checks in Definition \ref{def:hollowed-meta-checks}.
\end{remark}

\subsection{Induction Step}
\begin{definition}[Induction]
    \label{def:gluing}
    Similar to Remark \ref{rem:BPT}, the hollowed tensors $\bar{Y}^x,\bar{Y}^q,\bar{Y}^z$ induce a natural chain map $g^{qx}:\bar{Y}^x\to \bar{Y}^q$ and $g^{zq}:\bar{Y}^q\to \bar{Y}^z$ if $x\prec q\prec z$, even when the meta-checks are included. The full diagram is omitted here due to its 4D-like structure; instead, we show the flattened diagram below.
    \begin{equation}
        \bar{\Gamma}(C) \equiv 
        X(\Gamma) \to Q(\Gamma) \to Z(\Gamma) \to MZ(\bar{\Gamma}) =
        \begin{tikzpicture}[baseline]
        \matrix(a)[matrix of math nodes, nodes in empty cells, nodes={minimum size=5mm},
        row sep=2em, column sep=2em,
        text height=1.5ex, text depth=0.25ex]
        {Y^x_0 & Y_1^x & Y_2^x & \bar{Y}_3^x\\
         Y^q_0 & Y_1^q & \bar{Y}_2^q & \\
         Y^z_0 & \bar{Y}_1^z &&  \\};
        \path[->,font=\scriptsize]
        (a-1-1) edge (a-1-2)
        (a-1-2) edge (a-1-3)
        (a-1-3) edge (a-1-4)
        (a-2-1) edge (a-2-2)
        (a-2-2) edge (a-2-3)
        (a-3-1) edge (a-3-2);
        \path[->,line width=.5mm,violet,font=\scriptsize]
        (a-1-1) edge (a-2-1)
        (a-1-2) edge (a-2-2)
        (a-1-3) edge (a-2-3)
        (a-2-1) edge (a-3-1)
        (a-2-2) edge (a-3-2);
        \end{tikzpicture}
    \end{equation}
    so that cells in $Y_0^x$ over all $x$ are the $X$-checks in $\Gamma$. 
    Similarly, those in $Y_1^x,Y_0^q$ are the qubits, those in $Y_2^x,Y_1^q,Y_0^z$ are the $Z$-checks, and those in $\bar{Y}_3^x,\bar{Y}_2^q,\bar{Y}_1^z$ are the meta $Z$-checks.
    In particular, each cell $\gamma$ in $\Gamma(C)$ must be in the layer $Y^c$ of some unqiue cell $c$ in $C$ and thus we call $c$ the \textbf{shadow} of $\gamma$, i.e., $c=\sh \gamma$.
    We also write $\Gamma=\Gamma(C)$ to be that without the meta-checks.
    Recall that the \textbf{dual} is $\bar{\Gamma}_*(C)=\bar{\Gamma}(C^{\top})^{\top}$.
\end{definition}

Unlike the BPT case in Remark \ref{rem:BPT}, the embedding $C\to \Gamma(C)$ will occur inductively over $i$, and thus we must make sure that the weights and qubit degrees remain uniformly bounded.
In particular, if the seed input code $C\super{0}$ is QLDPC, the final code must also be QLDPC (in fact, sparse with meta-checks), as guaranteed by the following.

\begin{lemma}[Sparsity]
    \label{lem:sparsity}
    Let $m=2$ ($m\ge 3$). Then $c=3 (4)$ and
    % \begin{equation}
    %     X\super{i} \xrightleftharpoons[\max(\fd_Z,2)]{\max( \weight_X,\fd_Z+3, 5)} Q \xrightleftharpoons[\max(\weight_Z,q_X+2,4)]{\max(\fd_Z,3)} Z\super{i}
    % \end{equation}
    \begin{align}
        \weight_X' &\le \max( \weight_X,\fd_Z+c, 2+c) \\
        \fd_Z' &\le \max(\fd_Z,c)\\
        \weight_Z' &\le \max(\weight_Z,\fd_X+2,4) \\
        \fd_X' &\le \max(\fd_X,2)
    \end{align}
    where the prime notation denotes the $X,Z$ check weights and qubit degrees of $\Gamma(C)$, and those without the prime notation are those of $C$.
    Choose a simple-cycle basis in each $G^x$ and put $r_* =\max_x\dim H_1(G^x)$.
    The meta-check weight $\bar{\fw}'$ and degree $\bar{\fd}'$ of $\bar\Gamma(C)$ then satisfy
    \begin{align}
        \bar\fw'&\le\max (4\weight_X,6+2\fd_X,2+\weight_Z ),\\
        \bar\fd'&\le 2\max (r_*,\fd_Z-1,1),\qquad
        r_*\le\frac{\weight_X\fd_Z}{2}.
    \end{align}
    % In particular, if 
    % \begin{align}
    %     \weight_{\alpha}'' &\le \max(\weight_{\alpha},\fd_{\bar{\alpha}} +c, 2+c) \\
    %     \fd_{\alpha}'' &\le \max(\fd_{\alpha},c)
    % \end{align}
    % where double prime notation denotes those of $(\Gamma_*\circ \Gamma)(C)$. 
    % Hence, if $C\super{0}$ is QLDPC, then so is $(\Gamma_*\circ \Gamma)^\ell(C\super{0})$, independent of $\ell$.
\end{lemma}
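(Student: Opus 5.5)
The plan is a direct cell-by-cell count over the three layer types $\bar Y^x,\bar Y^q,\bar Y^z$ of Definition \ref{def:gluing}, using the local column-weight diagrams \eqref{eq:Kx-codeg-diag}, \eqref{eq:Kq-codeg-diag}, \eqref{eq:Y-diag}, together with the gluing maps $g^{qx},g^{zq}$. A cell of $\Gamma(C)$ has the form $|c;u\rangle\otimes v$ or $|c;u\rangle\otimes e$ with $c\in K^u$ or $G^u$. Its codifferential gets three kinds of contributions. The first is the internal cone/graph differential, counted by the diagrams. The second is the repetition-code part: $|\delta^R v|\le 2$ and $|\delta^R e|=0$, and it is present only on $G^u\otimes R$ and never on the cone vertex. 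The third is the gluing part, which sends $|c;x\rangle$ to $|c;q\rangle$ for $c\succeq q\succ x$, and similarly for $z$. Row weights are handled by the same bookkeeping on the transposed diagrams. Throughout, Hypothesis \ref{hyp:edge-like-supp} is used to fix the weight $2$ of $z$ in $G^x$.

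\textbf{$X$-checks.} These are the cone vertices $|x;x\rangle\otimes v$ with weight $|\delta x|=\weight_X$, and the vertices $|q;x\rangle\otimes v$, $|z;x\rangle\otimes v$, $|q;q\rangle\otimes v$, $|z;q\rangle\otimes v$, $|z;z\rangle\otimes v$. For $|q;x\rangle\otimes v$ the weight is at most $1+|\delta q|+$(repetition $\le 2$ when $m\ge3$, $\le1$ when $m=2$)$+$(gluing $1$). The $|\delta q|$ term is the qubit's $Z$-degree $\fd_Z$, which gives $\fd_Z+c$ with $c=4$ or $3$. The $z$-type vertices give $2+c$ by the same count. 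For qubit degrees $\fd_Z'$, each qubit cell ($xq\otimes v$, $qz\otimes v$, $q\otimes e$, $z\otimes e$) meets a bounded number of $Z$-check cells: $xq$ meets $|\delta q|$ faces, and the rest meet at most $c$. This gives $\max(\fd_Z,c)$. The $Z$-side bounds $\weight_Z'$ and $\fd_X'$ come from the transposed count: faces $xz$ are squares (weight $4$), $z\otimes e$ gets $\fd_X+2$ from $qz\otimes e$ and gluing, and $z\otimes v$ at the top retains $\weight_Z$. I will go through the short table of cell types and take maxima; this part is routine.

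\textbf{Meta-checks.} The weight of $c(\gamma,e)$ is $2|\gamma|$, counting $z\in\gamma$ with two endpoints of $e$, plus $qz$ edges. A simple cycle in $G^x$ has length at most $2|\{q\succ x\}|\le 2\weight_X$, which gives $4\weight_X$. The meta-cell $\gamma(zz',e)$ in $\bar Y^q$ has weight $6$ plus gluing contributions up to $2\fd_X$, and $\bar Y^z_1$ cells have weight $2+\weight_Z$. For the meta-degree, a $Z$-check cell $xz\otimes v$ or $qz\otimes e$ lies in $c(\gamma,e)$ only if its cell belongs to $\gamma$. With a chosen simple-cycle basis, this number is at most $r_*$ per edge $e$, and at most $2$ edges of $R$ are adjacent to $v$. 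This gives $2r_*$. The $\bar Y^q$ cells give $2(\fd_Z-1)$ using the complete-pair choice in the footnote of Definition \ref{def:hollowed-meta-checks}. The bound $r_*\le \weight_X\fd_Z/2$ is the cycle rank $|E^x|-|V^x|+1$: each of the $\le\weight_X$ qubits has at most $\fd_Z$ edges, so $|E^x|\le \weight_X\fd_Z$, and each $z$ has degree $2$, so $|V^x|\ge |E^x|/2$.

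\textbf{Main obstacle.} I expect the main difficulty to be the gluing contributions to degrees, that is, how many $x\prec q$ map cells of $\bar Y^x$ onto a given cell of $\bar Y^q$. Done carelessly, this introduces a factor $\fd_X$ into the $Z$-side bounds, and the iteration would then blow up. I will track that $g^{qx}$ hits only the cone-vertex direction, while the $\fd_X$ multiplicity appears only in row weights of the $z\otimes e$ and meta types. That is exactly where $\fd_X+2$ and $6+2\fd_X$ appear, and those bounds are not self-amplifying.
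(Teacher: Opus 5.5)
Your strategy, a cell-by-cell count that splits each (co)boundary into internal, repetition and gluing contributions, is the same as the paper's. Several of your numbers agree with it: $\weight_X'$, the meta-check weights $4\weight_X$, $6+2\fd_X$, $2+\weight_Z$, and the $2r_*$ and anchor counts. The bookkeeping, however, has a grading error. By Definition \ref{def:gluing}, a cell of a $u$-layer has total degree equal to its layer degree plus $0,1,2$ for $u=x,q,z$. So the $X$-checks are only the vertices of $x$-layers. The vertices $|q;q\rangle\otimes v$ and $|z;q\rangle\otimes v$ are \emph{qubits}, and $|z;z\rangle\otimes v$ are $Z$-checks.

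Your qubit list therefore omits the $q$-layer qubits, and these are exactly where $\fd_X$ enters $\fd_X'$. Each of $q\otimes v$ and $z\otimes v$ in $Y^q$ receives a glued copy from every $x\prec q$, so its $X$-degree is $|\partial q|\le\fd_X$, whereas every $x$-layer qubit has $X$-degree $2$. These cells also carry $Z$-degrees $|\delta q|$ and $2+r_v$. For the same reason, $g^{qx}$ does not hit ``only the cone-vertex direction''. It also hits $z\otimes v$, $qz\otimes v$ and $z\otimes e$ of $Y^q$, as well as the pair meta-checks; this is where the weight $2+\fd_X$ of $qz\otimes v\in Y^q$ comes from.

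The step that actually fails is ``the rest meet at most $c$'' in the $\fd_Z'$ count. The cell $q\otimes e\in Y^x$ meets $|\delta q|$ side faces, which is harmless since that is $\le\fd_Z$. The real problem is the rail $z\otimes e\in Y^x$. It meets the two side faces $qz\otimes e$ and $q'z\otimes e$ of $Y^x$, where $\{q,q'\}=x\wedge z$. It also meets the two rails $z\otimes e$ in $Y^{q}$ and $Y^{q'}$ onto which it is glued; the gluing $g^{qx}(z\otimes e)=z\otimes e$ is forced by $g^{qx}\delta=\delta g^{qx}$ applied to $z\otimes v$. So its $Z$-degree is $4$ for every $m$. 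Hence, for $m=2$, the displayed bound $\fd_Z'\le\max(\fd_Z,3)$ is false whenever $\fd_Z\le 3$ and some pair $x\prec z$ exists.

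The paper's proof records this and establishes only $\fd_Z'\le\max(\fd_Z,4)$ in both cases. That suffices, since $m=3$ is what is used later. Your proof should state this correction rather than claim the $m=2$ bound.

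Two smaller slips remain:
\begin{itemize}
\item $|V^x|\ge|E^x|/2$ only yields $r_*\le\weight_X\fd_Z/2+1$. You need to count the $q$-vertices as well: $\dim H_1(G^x)=|\{z\succ x\}|-|\delta x|+1\le|\{z\succ x\}|=|E^x|/2$.
\item In the meta-degree count, $qz\otimes e\in Y^x$ is also glued into the pair meta-checks of $\bar Y^q$, giving degree up to $r_*+\fd_Z-1$. The rail $z\otimes e\in Y^q$ has meta-degree up to $\fd_Z$, including the rail of $\bar Y^z$, and $z\otimes v\in Y^z$ has meta-degree at most $2$. These all fit under $2\max(r_*,\fd_Z-1,1)$, but they must be counted.
\end{itemize}
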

\begin{proof}
    Put $r_v=\deg_R(v)\le c-2$. The $X$-check weights for vertices $x\otimes v,q\otimes v,z\otimes v$ in an $x$ layer are respectively $|\delta x|$, $|\delta q|+2+r_v$, and $4+r_v$. Qubits in $x$ layers have $X$-degree two and $Z$-degree at most $\max(\fd_Z,2+\max_v r_v,4)$, where the value four occurs only for a rail $z\otimes e$; for $m=2$ this rail still has degree four. Qubits in $q$ layers have $X$-degree at most $\fd_X$ and $Z$-degree at most $\max(\fd_Z,2+\max_v r_v)$. Finally, $Z$-checks in $x,q,z$ layers have weights at most $4,\fd_X+2,\weight_Z$, respectively. These give the displayed bounds with $\fd_Z'\le\max(\fd_Z,4)$ for both cases. 
    % Under alternating primal and dual steps, both qubit degrees are bounded by $D=\max(\fd_X^{(0)},\fd_Z^{(0)},4)$, and both check weights by $\max(\weight_X^{(0)},\weight_Z^{(0)},D+4,6)$. Hence the bounds are uniform in the number of levels.

    For the meta-check weights, $c(\gamma,e)$ has $2|E(\gamma)|\le4\weight_X$ boundary cells. A $q$-layer meta-check has six local boundary cells and two incoming faces for each $x\prec q$, giving weight at most $6+2\fd_X$. A $z$-layer meta-check has two local vertices and one incoming rail for each $q\prec z$, giving weight at most $2+\weight_Z$.
    For the meta-check degrees, the check types $xz\otimes v,qz\otimes e$ in $x$ layers, $qz\otimes v,z\otimes e$ in $q$ layers, and $z\otimes v$ in $z$ layers have degrees at most
    \begin{equation}
     2r_*,\quad r_*+\fd_Z-1,\quad2(\fd_Z-1),\quad\fd_Z,\quad2,
    \end{equation}
    respectively. Indeed, each local edge or vertex lies in at most $r_*$ basis cycles; an anchor $z_0$ meets $|\delta q|-1$ pair meta-checks, and a non-anchor meets one. The repetition path has degree at most two. Taking the maximum proves the degree bound. Finally, connectivity and $|x\wedge z|=2$ give
    \begin{equation}
     \dim H_1(G^x)=|\{z:x\prec z\}|-|\delta x|+1
     \le |\{z:x\prec z\}|\le\frac{\weight_X\fd_Z}{2}.
    \end{equation}
    Thus the new meta-checks are uniformly sparse under iteration as well.
\end{proof}

\begin{lemma}[Isomorphism]
    \label{lem:isomorphism}
    Let $\bar{\Gamma}=\bar{\Gamma}(C)$ be constructed as in Definition \ref{def:gluing} which contains meta-checks. Then
    \begin{equation}
        H_Q(\bar{\Gamma}) \cong H_Q(C), \quad H_Z(\bar{\Gamma}) \cong H_Z(C) = \coker \delta, \quad H_X(\bar{\Gamma}) \cong H_X(C) =\ker \delta
    \end{equation}
\end{lemma}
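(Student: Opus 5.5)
The plan is to exhibit $C$ as a strong deformation retract of $\bar{\Gamma}$ layer by layer and then read off the cohomology. By Remark \ref{rem:hollow-contractible}, each $\bar{Y}^u$ is contractible with cohomology $\F_2$ concentrated in degree $\deg u$. The generator is the sum of all vertices $u\otimes v$, or equivalently any single vertex class modulo coboundaries. Write $\iota_u:\F_2\to \bar{Y}^u$ for the inclusion of this class and $\pi_u:\bar{Y}^u\to\F_2$ for the projection onto a chosen representative vertex, say $u\otimes v_0$. Pick local contracting homotopies $\phi_u$ on each layer, which is possible since each layer is a finite contractible complex.

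Next I assemble these layer data. The total complex $\bar{\Gamma}$ is the flattened double complex of Definition \ref{def:gluing}, filtered by the degree of the shadow. Its codifferential is $\bar{\delta}=\delta_{\rm loc}+g$, where $\delta_{\rm loc}$ is the internal codifferential of the layers and $g$ consists of the gluing maps $g^{qx},g^{zq}$. The map $g$ is strictly lower-triangular with respect to shadows, because monotonicity of $\sh$ means it only increases the shadow.

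The key step is to apply the homological perturbation lemma (as cited in Remark \ref{rem:relation-with-embedding-framework}, following \cite{balasubramanian2026passive}). Start from the unperturbed SDR between $(\bigoplus_u \bar{Y}^u,\delta_{\rm loc})$ and $\bigoplus_u \F_2 u$ with zero differential, and perturb by $g$. Since $g$ is nilpotent (there are only three shadow degrees), the series $\sum_n(\phi g)^n$ terminates. This produces perturbed maps $\iota',\pi',\phi'$ and a perturbed differential on $\bigoplus_u\F_2 u$, namely $\pi g\iota+\pi g\phi g\iota+\cdots$. Two consequences follow. First, $H(\bar{\Gamma})\cong H$ of the transferred complex; in particular $H_Q,H_Z,H_X$ are given by its middle, top and bottom cohomology. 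Second, the perturbed maps are layer-local, which also yields the SDR used in Theorem \ref{thm:validity}.

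The main obstacle is showing that the transferred differential equals $\delta$ of $C$. On the generator of $\bar{Y}^x$, the first term $\pi_q g^{qx}\iota_x$ should give exactly $1\{x\prec q\}$. The reason is that the glue $g^{qx}$ sends the vertex class $q\otimes V$ of $K^x\otimes V$ onto the cone vertex class of $\bar{Y}^q$, and the $x$-vertex itself maps to $0$. I would check this directly from the diagrams in Eq. \eqref{eq:Y-diag}. The higher terms $\pi g\phi g\iota$ change shadow degree by $2$. They could only contribute a map from $\F_2 x$ to $\F_2 z$, and this lands in degree $\deg x+2$ of the small complex, where there is no further differential component because $\delta$ has degree one. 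So these higher terms vanish by degree. If needed, I would instead verify $\pi_z g^{zq}\phi_q g^{qx}\iota_x=0$ by choosing $\phi_q$ to kill the representative vertex.

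Once the transferred complex is identified with $C=X\to Q\to Z$, the three isomorphisms follow: $\ker\delta$ in degree $X$, $H_Q(C)$ in the middle degree, and $\coker\delta$ in degree $Z$. In the last case $MZ$ contributes nothing because the meta-checks are already absorbed inside the contractible layers.
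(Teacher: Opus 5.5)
Your proposal is correct and is essentially the paper's argument: the paper deduces the lemma from contractibility of the layers $\bar{Y}^u$ (Remark~\ref{rem:hollow-contractible}) via the main result of the code-embedding framework, which is the homological-perturbation transfer you describe, and it makes this explicit in Lemma~\ref{lem:SDR} by noting $g\,i_0=i_0\,\delta_C$, so that the transferred differential is $\delta_C$ and the inclusion is unperturbed. One imprecision to fix: the degree-zero class of $\bar{Y}^u$ is the sum of \emph{all} vertices of the layer (for $\bar{Y}^x$ this includes $q\otimes v$ and $z\otimes v$ for all $q,z\succ x$), not ``any single vertex modulo coboundaries'' (that describes homology); your computation $\pi_q g^{qx}\iota_x=1\{x\prec q\}$ relies on this choice, and so does the analogous $\pi_z g^{zq}\iota_q=1\{q\prec z\}$, which you should also state.
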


\begin{remark}[Renormalization]
    Note that in general, $H_Z(\bar{\Gamma})\not\cong H_Z(\bar{C})$ since meta-checks in $\bar{C}$ are discarded during the induction step.
    However, this does not obstruct our decoder, since the main point is that given measurement error $\mu\super{i}$, we apply a correction so that the repaired syndrome is meta-check clean and thus can be mapped to $[\mu]\in \coker \partial$ with representation $\mu$ chosen to satisfy certain properties. 
    This provides the renormalization step $\mu\super{i}\mapsto \mu$.
    See Definition \ref{def:RG-decoder} and Lemma \ref{lem:RG-noisy-decoder}.
\end{remark}
\begin{proof}
    The statement follows from the main result in \cite{yuan2026unified} and Remark \ref{rem:hollow-contractible}.
\end{proof}

\begin{lemma}[SDR]
    \label{lem:SDR}
    Let $\bar{\Gamma}=\bar{\Gamma}(C)$, and consider the diagram below
    \begin{equation}
    \begin{tikzpicture}[baseline]
    \matrix(a)[matrix of math nodes, nodes in empty cells, nodes={minimum size=25pt},
    row sep=2em, column sep=3em,
    text height=1.25ex, text depth=0.25ex]
    {X& Q  & Z & \\
     X(\Gamma) & Q(\Gamma)  & Z(\Gamma) & MZ(\bar{\Gamma})\\};
    \path[-left to,font=\scriptsize,transform canvas={yshift=0.2ex}]
    (a-1-1) edge (a-1-2)
    (a-1-2) edge (a-1-3)
    (a-2-1) edge (a-2-2)
    (a-2-2) edge (a-2-3)
    (a-2-3) edge (a-2-4);
    \path[left to-,font=\scriptsize,transform canvas={yshift=-0.2ex}]
    (a-2-1) edge node[below]{$\phi$}  (a-2-2)
    (a-2-2) edge node[below]{$\phi$}  (a-2-3)
    (a-2-3) edge node[below]{$\phi$}  (a-2-4);
    \path[-left to,font=\scriptsize,transform canvas={xshift=0.2ex}]
    (a-1-1) edge node[right]{$\iota$}  (a-2-1)
    (a-1-2) edge node[right]{$\iota$}  (a-2-2)
    (a-1-3) edge node[right]{$\iota$}  (a-2-3);
    \path[left to-,font=\scriptsize,transform canvas={xshift=-0.2ex}]
    (a-1-1) edge node[left]{$\pi$}  (a-2-1)
    (a-1-2) edge node[left]{$\pi$}  (a-2-2)
    (a-1-3) edge node[left]{$\pi$}  (a-2-3);
    \end{tikzpicture}
    \end{equation}
    Let $\iota:C\to \Gamma$ denote the inclusion map defined so that $\iota u$ is the collection of all vertices in $Y^u$ for any cell $u=x,q,z$.
    Then there exists layer-local $\pi,\phi$ so that $\iota,\pi$ form a SDR (see Definition \ref{def:level-map}) between $C,\bar{\Gamma}$.
\end{lemma}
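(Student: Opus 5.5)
The plan is to build $\pi,\phi$ layer by layer and then glue them, following Remark \ref{rem:relation-with-embedding-framework}. There a level map is obtained from contractible layers via the homological perturbation lemma. Concretely, I would first construct, for each cell $u\in\cC$, a local SDR between $\F_2 u$ (placed in degree $\deg u$) and the layer complex $\bar{Y}^u$. The local inclusion sends $u$ to the sum of all vertices of $Y^u$, matching the given $\iota$. The local projection $\pi^u$ and homotopy $\phi^u$ satisfy $\pi^u\iota^u=\mathrm{id}$ and $\iota^u\pi^u=\mathrm{id}+\phi^u\bar{\delta}^u+\bar{\delta}^u\phi^u$, together with the side conditions $(\phi^u)^2=\pi^u\phi^u=\phi^u\iota^u=0$.

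Existence of the local data comes from contractibility in Remark \ref{rem:hollow-contractible}. By Lemma \ref{lem:cohomology} together with the meta-checks of Definition \ref{def:hollowed-meta-checks}, $H^\bullet(\bar{Y}^u)\cong\F_2$, concentrated in the vertex degree and generated by the sum of vertices. For this I use Hypothesis \ref{hyp:connected-local-graph}: $G^x$ and the $z$-local graphs are connected, so the only surviving class is in degree $0$. Over a field, any complex with a chosen representative of its cohomology admits an SDR onto that cohomology. Concretely:
\begin{itemize}
  \item choose complements (a splitting $\bar Y^u=\mathcal H\oplus B\oplus B'$ with $\bar\delta:B'\xrightarrow{\sim}B$);
  \item let $\phi^u$ be the inverse of $\bar\delta$ on $B$, extended by zero;
  \item take $\pi^u$ to be the projection onto $\mathcal H$.
\end{itemize}
The side conditions can always be enforced by the standard correction $\phi\mapsto\phi\bar\delta\phi$ and then $\phi\mapsto(\mathrm{id}-\iota\pi)\phi(\mathrm{id}-\iota\pi)$. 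Explicit formulas are also possible: $\pi^u$ reads off the coefficient of a fixed base vertex $z\otimes v_0$ or $x\otimes v_0$, and $\phi^u$ contracts along the cone and repetition path.

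Next I glue. Write $\bar{\delta}_\Gamma=\bar{\delta}_{\rm loc}+g$, where $\bar{\delta}_{\rm loc}=\bigoplus_u\bar{\delta}^u$ and $g$ collects the gluing maps $g^{qx},g^{zq}$ of Definition \ref{def:gluing}. By monotonicity of the shadow (Remark \ref{rem:relation-with-embedding-framework}), $g$ is strictly lower-triangular with respect to the poset $\cC$. The local data $\bigoplus_u(\iota^u,\pi^u,\phi^u)$ form an SDR between $\bigoplus_u\F_2 u$ with zero differential and $(\bar{\Gamma},\bar{\delta}_{\rm loc})$. I then apply the homological perturbation lemma with perturbation $g$. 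Since $\phi g$ is nilpotent (it strictly increases shadow height, and $C$ has length $2$), the series $\sum_k(\phi g)^k$ terminates. This yields perturbed maps
\[
\iota'=\sum_k(\phi g)^k\iota,\qquad \pi'=\pi\sum_k(g\phi)^k,\qquad \phi'=\phi\sum_k(g\phi)^k,
\]
and a perturbed differential $\delta'=\pi g\sum_k(\phi g)^k\iota$ on $C$.

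Two checks then finish the argument. The first is that $\iota'=\iota$ and $\delta'=\delta$, so the perturbed SDR is exactly between $C$ and $\bar{\Gamma}$ with the stated inclusion. This holds because $g\iota u$ is a sum of vertices lying in the image of the local inclusions, and $\phi\iota=0$. Indeed, the gluing maps send vertices to vertices, with $g(\sum\text{vertices of }Y^x)=\sum_{q\succ x}\sum\text{vertices of }Y^q$ up to parity, which is $\iota\delta x$. That $g\iota u$ lands in the span of vertices is the step needing Hypothesis \ref{hyp:edge-like-supp}. Hence $\phi g\iota=0$ and $\delta'=\pi g\iota=\delta$. The second check is layer-locality: every term is a composite of layer-diagonal maps and $g$, which only maps from $Y^u$ to $Y^{v}$ with $u\prec v$. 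So $\dnum^v$ of each output depends only on $\dnum^{\preceq v}$ of the input, as required by Definition \ref{def:level-map}.

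I expect the main obstacle to be the identity $g\iota=\iota\delta$ at the level of cells, i.e. that the gluing map sends the vertex-sum of $Y^x$ exactly to the vertex-sums of the $Y^q$ with $q\succ x$, including the new meta-check degrees. If that fails on the nose, I would instead replace $\iota$ by $\iota'$ and verify that it coincides with the vertex-sum inclusion.
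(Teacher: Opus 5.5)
Your proposal is correct and is essentially the paper's proof. The paper likewise picks a contraction $(i_0,p_0,h_0)$ in each layer, with $p_0$ evaluation at a reference vertex, and uses that the shadow-raising gluing $g$ makes $gh_0$ nilpotent. It then transfers via $\pi=p_0(1+gh_0)^{-1}$ and $\phi=h_0(1+gh_0)^{-1}$, keeps the inclusion and differential unchanged because $gi_0=i_0\delta_C$, and reads layer-locality off shadow monotonicity. One correction: $\phi g\iota=0$ is automatic, since $g$ preserves the internal layer degree and the local homotopy vanishes on vertices, so Hypothesis~\ref{hyp:edge-like-supp} (which enters through the cell structure of $K^x$) is not what drives that step; what matters is that each $g^{vu}\iota u$, a degree-zero cocycle of $\bar Y^v$, equals $\iota v$ exactly when $v\in\delta u$.
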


\begin{proof}
    Choose within each layer a contraction $(i_0,p_0,h_0)$ onto its constant 0-cochain, with $h_0i_0=p_0h_0=h_0^2=0$. Here $p_0$ on vertices may be evaluation at a chosen reference vertex. The gluing $g$ strictly raises the shadow, so $(gh_0)^3=0$. The finite perturbation formulas give
    \begin{equation}
     \iota=i_0,\qquad \pi=p_0(1+gh_0)^{-1},\qquad
     \phi=h_0(1+gh_0)^{-1}.
    \end{equation}
    Indeed, $gi_0=i_0\delta_C$, so the transferred differential is $\delta_C$ and the inclusion is unchanged. The contraction identities give the SDR identities in Definition~\ref{def:level-map}. All factors preserve or raise the shadow, proving layer-locality; bounded layers and sparsity give uniform row and column bounds. 
\end{proof}

\begin{lemma}[Edge-like Support]
    \label{lem:edge-like-support}
    If Hypothesis \ref{hyp:edge-like-supp} holds for $C$, then it also holds for $\Gamma(C)$ (and thus $\Gamma_*(C)$)
\end{lemma}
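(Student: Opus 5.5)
The plan is to reduce the claim to an upper bound and then do a finite case analysis over layers. Fix an $X$-check $x'$ and a $Z$-check $z'$ of $\Gamma(C)$ with $x'\prec z'$.

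\emph{Parity.} Since $\delta^2=0$ in $\Gamma(C)$, the number of qubits $q'$ with $x'\prec q'\prec z'$ is even. It is nonzero because $x'\prec z'$. So it suffices to show $|x'\wedge z'|\le 2$.

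\emph{Reduction to local incidences.} I would list the cells of Definition~\ref{def:gluing} explicitly.
\begin{itemize}
\item $X$-checks are the vertices $(x,v)$, $(q,v)$, $(z,v)$ of the layers $Y^x$.
\item Qubits are the edges $xq\otimes v$, $qz\otimes v$, $q\otimes e$, $z\otimes e$ of $Y^x$, together with the vertices $(q,v)$, $(z,v)$ of $Y^q$.
\item $Z$-checks are the faces $xz\otimes v$, $qz\otimes e$ of $Y^x$, the edges $qz\otimes v$, $z\otimes e$ of $Y^q$, and the vertices $(z,v)$ of $Y^z$.
\end{itemize}
The codifferential is the sum of the local codifferential of $Y^u$ and the gluing maps $g^{qx},g^{zq}$, which strictly raise the shadow. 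Hence the shadow is monotone, so $\sh x'\preceq\sh q'\preceq\sh z'$ for every common qubit $q'$.

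\emph{Case analysis.} I would split according to the layer $\sh z'$.
\begin{enumerate}
\item If $\sh z'=x$ is an $X$-check of $C$, then every common qubit lies in $Y^x$ and is a local edge. The faces $xz\otimes v$ (the merged square $xqz\cup xq'z$) and $qz\otimes e$ are squares, and each vertex of a square meets exactly two of its edges. So the count is $2$.
\item If $\sh z'=q$, then $z'$ is $qz\otimes v$ or $z\otimes e$ in $Y^q$. Its boundary consists of local vertices of $Y^q$ together with gluing images of the corresponding edges $qz\otimes v$ and $z\otimes e$ from each layer $Y^x$ with $x\prec q$. For a fixed $x'$ in layer $x$, only the contributions from that single layer $x$ and from the two local endpoints can meet $\delta x'$. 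Reading off the diagrams \eqref{eq:Y-diag} and \eqref{eq:K-glue-diag} should give at most two.
\item If $\sh z'=z$, then $z'=(z,v)\in Y^z$. Its qubits are the vertices $(z,v)\in Y^q$ for $q\prec z$, plus possibly images from $x$-layers. For $x'=(z,v)$ in some layer $Y^x$, the common qubits are exactly the $(z,v)_q$ with $x\prec q\prec z$. This is where Hypothesis~\ref{hyp:edge-like-supp} for $C$ enters: $|x\wedge z|=2$ gives exactly two. The other choices of $x'$ (such as $(q,v)$ or $(x,v)$ in layer $x$) I expect to meet $z'$ in at most two qubits by the same bookkeeping, or not at all.
\end{enumerate}

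\emph{Dual code and main obstacle.} The statement for $\Gamma_*(C)=\Gamma(C^\top)^\top$ follows because the hypothesis is self-dual: if it holds for $C$, it holds for $C^\top$, hence for $\Gamma(C^\top)$, hence for its transpose. The main obstacle is the bookkeeping in the last two cases, namely pinning down exactly which cells the gluing maps $g^{qx},g^{zq}$ send where, including for the repetition edges $e$. I would handle this by writing $g$ on vertices as the identity-type map $|c;x\rangle\otimes v\mapsto|c;q\rangle\otimes v$ for $c\succeq q$, and zero otherwise, as suggested by \eqref{eq:K-glue-diag}, and then checking each incidence pattern.
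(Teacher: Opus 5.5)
Your proposal follows the same route as the paper: a case check on the layer containing $z'$, where the square faces $xz\otimes v$ and $qz\otimes e$ of $Y^x$ settle the case $\sh z'=x$ and the remaining cases are read off from the explicit incidences. Your parity reduction to $|x'\wedge z'|\le 2$ and your identification of Case 3 as the place where $|x\wedge z|=2$ for $C$ enters are both correct, and the bookkeeping you leave open (which the paper also leaves as ``similarly checked'') closes at once: $g^{qx}$ sends each vertex of $Y^x$ to at most one vertex of $Y^q$, while $\partial z'$ meets the layer $Y^x$ in a single edge, giving at most $1+1$ in Case 2; and in Case 3 only $x'=|z;x\rangle\otimes v$ meets $z'$ at all.
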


\begin{proof}
    Note that by Definition \ref{def:gluing}, $X$-checks in $\Gamma(C)$ are vertices in $Y^x$, while $Z$-checks are faces in $Y^x$, edges in $Y^q$ and vertices in $Y^z$.
    Since every face in $Y^x$ is a square, it's clear that given a pair of $X,Z$-check in $Y^x$, the hypothesis is satisfied for $\Gamma(C)$.
    The other cases can be similarly checked.
\end{proof}

\begin{lemma}[Connected Local Graph]
    \label{lem:connected-local-graph}
    If Hypothesis \ref{hyp:connected-local-graph} holds for $C$, then it also holds for $\Gamma(C)$ (and thus $\Gamma_*(C)$)
\end{lemma}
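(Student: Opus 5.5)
The plan is a direct case analysis over the cell types of $\Gamma(C)$. Both halves of Hypothesis~\ref{hyp:connected-local-graph} must be checked: the local graph $G^{x'}$ of each $X$-check $x'$ of $\Gamma(C)$ (vertices $\{\succ x'\}$), and, for the transpose, the graph on $\{\prec z'\}$ for each $Z$-check $z'$. Throughout I use monotonicity of the shadow map (Definition~\ref{def:level-map}). Cells below a cell of $Y^u$ have shadow $\preceq u$, and cells above it have shadow $\succeq u$. Since the gluing maps $g$ of Definition~\ref{def:gluing} strictly raise the shadow, this lets me split each up-set or down-set into the part inside the layer and the part contributed by gluing. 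The claim for $\Gamma_*(C)$ then follows by applying the result to $C^\top$; both halves of the hypothesis are assumed for $C$, and Hypothesis~\ref{hyp:edge-like-supp} is self-dual.

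\emph{$X$-side.} The $X$-checks of $\Gamma(C)$ are the vertices $x\otimes v$, $q\otimes v$, $z\otimes v$ of the layers $Y^x$ in Definition~\ref{def:hollowed-tensor}. First take $x'=x\otimes v$. It has no vertical edges, and the diagram~\eqref{eq:Y-diag} shows it has no gluing arrow out of it. So its up-set consists of the qubits $xq\otimes v$ for $q\succ x$ and the $Z$-checks $xz\otimes v$ for $z\succ x$. By Definition~\ref{def:local-cone-x}, $xq\otimes v\prec xz\otimes v$ iff $q\in x\wedge z$, i.e.\ iff $q\prec z$. Hence $G^{x\otimes v}\cong G^x$, which is connected by hypothesis.

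For $x'=q\otimes v$ or $z\otimes v$ in $Y^x$, the up-set contains in-layer cells together with glued cells of $Y^q$ and $Y^z$.
\begin{itemize}
\item For $q\otimes v$, the in-layer cells are $xq\otimes v$, $qz\otimes v$ and $q\otimes e$; $xz\otimes v$ and $qz\otimes e$ are two steps above and are not in the local graph. The glued cells have shadow $q$ or $z$.
\item For $z\otimes v$, the in-layer cells are the same kind of edges and faces.
\end{itemize}
In each case I would exhibit a spanning tree by hand. The in-layer qubits adjacent to $x'$ are linked through the faces $xz\otimes v$ and $qz\otimes e$, each of which is a square by Hypothesis~\ref{hyp:edge-like-supp}. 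Each glued cell is then attached through the square face or rail of $Y^q$, $Y^z$ into which it maps.

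\emph{$Z$-side.} For $z'$ in $Y^x_2$ (the faces $xz\otimes v$ and $qz\otimes e$), the down-set lies entirely in $Y^x$ by monotonicity. It is the boundary of a square: four vertices and four edges forming a single cycle, hence connected. The same argument applies to the rails $z\otimes e$ in $Y^z$, whose down-set is the two endpoints $z\otimes v$, $z\otimes v'$ together with the glued cells of shadow $\prec z$.

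The remaining cases are the $Z$-checks $qz\otimes v$ and $z\otimes e$ in $Y^q_1$, and $z\otimes v$ in $Y^z_0$. Their down-sets collect cells glued in from the layers $Y^x$ with $x\prec q$, respectively from the layers $Y^q$ with $q\prec z$. Here I will use the transpose half of the hypothesis on $C$: the graph on $\{\prec z\}$ is connected. For each $q\prec z$ (and $x\prec q$), the glued images form locally connected pieces, namely the cone pieces $K^x\to K^q\to K^z$ of~\eqref{eq:K-glue-diag}. Two such pieces meet whenever the corresponding cells are adjacent below $z$. Connectivity of $\{\prec z\}$ in $C$ then glues all pieces into one connected graph.

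I expect this last step to be the main obstacle. It requires writing out exactly which cells the gluing maps $g^{qx}$ and $g^{zq}$ send into the down-set of $z\otimes v$ and $qz\otimes v$. It also requires checking that the connecting cells really lie two degrees apart, i.e.\ an $X$-check below a qubit below $z'$, as the transposed local graph demands.
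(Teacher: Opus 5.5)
Your route is the paper's: a case check over the cell types of $\Gamma(C)$, splitting each up- or down-set into in-layer and glued cells. Your $x\otimes v$ case and your square-face cases are correct. The cases that carry the content, however, are either deferred or argued by the wrong mechanism.

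For $q\otimes v\in Y^x$, the local graph has vertex set all of $\{\succ q\otimes v\}$. So $xz\otimes v$, $qz\otimes e$ and the glued spokes $|qz\otimes v;q\rangle$ are vertices of it, not excluded as you state. The actual argument runs as follows:
\begin{itemize}
\item $xq\otimes v$ meets every face $xz\otimes v$ with $z\succ q$.
\item Each such face meets its hinge $qz\otimes v$.
\item Each hinge meets the side faces $qz\otimes e$, and hence the rails $q\otimes e$.
\item Each hinge also meets its glued image $|qz\otimes v;q\rangle$, and hence the copied vertex $|q\otimes v;q\rangle$.
\end{itemize}
Note that $Y^q$ contributes edges here, not square faces. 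This argument needs some $z\succ q$; otherwise $xq\otimes v$, $q\otimes e$ and $|q\otimes v;q\rangle$ are isolated. Connectivity of $G^x$ does not exclude this, since $G^x=\{q\}$ is connected. The paper flags exactly this (``This uses a neighbor $z\succ q$'') and relies on the convention that every qubit meets both sectors. Separately, $z\otimes e$ in $Y^z$ lies in $\bar{Y}^z_1$, so it is a meta-check, not a $Z$-check; drop that case.

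For the $Z$-checks in $Y^q_1$, the transposed hypothesis on $C$ is neither what is used nor sufficient. The down-set of the spoke $|qz\otimes v;q\rangle$ consists of:
\begin{itemize}
\item its in-layer endpoints $|q\otimes v;q\rangle$ and $|z\otimes v;q\rangle$, which your sketch omits;
\item the glued hinges $|qz\otimes v;x\rangle$ for $x\prec q$;
\item the $X$-checks $|q\otimes v;x\rangle$ and $|z\otimes v;x\rangle$.
\end{itemize}
For each $x\prec q$ this gives a path $|q\otimes v;q\rangle\leftrightarrow|q\otimes v;x\rangle\leftrightarrow|qz\otimes v;x\rangle\leftrightarrow|z\otimes v;x\rangle\leftrightarrow|z\otimes v;q\rangle$ joining the two endpoints. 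The rail $|z\otimes e;q\rangle$ is handled identically through $|z\otimes e;x\rangle$.

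So what matters is the existence of one $x\prec q$ together with the shared endpoints. If $q$ has no $X$-check, the graph is two isolated vertices, even though $\{\prec z\}=\{q\}$ is connected when $z$ has weight one.

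The transposed hypothesis is needed only for $z\otimes v\in Y^z$, and no ``gluing of pieces'' is required there. Its down-set is $\{|z\otimes v;q\rangle:q\prec z\}\cup\{|z\otimes v;x\rangle:x\prec z\}$, with $|z\otimes v;x\rangle\prec|z\otimes v;q\rangle$ iff $x\prec q$. This is an isomorphic copy of the graph on $\{\prec z\}$, so connectivity follows directly.
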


\begin{proof}
    For a new $X$-check $x\otimes v$, its local graph is a subdivision of $G^x$. For $q\otimes v$ in an $x$ layer, the cone edge $xq\otimes v$ connects through the faces $xz\otimes v$ to all hinges $qz\otimes v$; the adjacent rails and the copied $q$-layer vertex connect to these hinges. This uses a neighbor $z\succ q$. For $z\otimes v$ in an $x$ layer, the two hinges are joined by $xz\otimes v$, and all rails and copied vertices attach to them.

    For the transpose, a new $Z$-check in an $x$ layer is a square, so its local graph is connected. A $Z$-check in a $q$ layer is an edge; for each $x\prec q$, its copied edge and the two copied endpoint checks join its two endpoints, so the local graph is connected. Finally, the local graph of $z\otimes v$ in a $z$ layer is a subdivision of the original local graph at $z$. All new checks are nonempty and each new qubit meets both sectors. Hence both incidence and connectivity conditions persist, including after transposition.
\end{proof}

%% file: contraction.tex
\section{Cleaning}
\label{sec:RG-reduction}

As suggested by Definition \ref{def:level-map}, we must consider the $Z$-sector cleaning $f$ of $\Gamma$ and compute the contraction $\kappa$ and fill bound $c_f$.
% Since we will be considering the alternating construction $(\Gamma_*\circ \Gamma)\super{\ell}(C\super{0})$, we note that the treatment of $X$ and $Z$ syndromes are similar, and thus for simplicity, we shall focus on the $Z$ syndromes. 
In \cite{balasubramanian2026passive}, the authors showed that if $m=2$, then for valid syndromes $\sigma\in Z(\Gamma)$, a layer-local cleaning procedure $f$ can be constructed to obtain the coarse-grained valid syndromes $\desc \sigma\in Z(C)$ such that
\begin{equation}
    |\desc \sigma|\le \kappa |\sigma|,\quad  \kappa =\frac{1}{2}
\end{equation}
Since the aim of \cite{balasubramanian2026passive} was to prove self-correction, which only deals with physical errors, $m=2$ and $\kappa <1$ are sufficient to derive an RG argument.
However, as suggested by the noisy contraction in Remark \ref{rem:setup}, when dealing with both physical and measurement errors, we further require the reduction factor to be $\kappa<1/2$.
By Remark \ref{rem:m-sheets}, one way of accomplishing this is to use the conventional $m=2$ sheet construction and group multiple (two) levels together as a single pass.
However, since it is more natural to have a single construction, we shall extend the argument to $m=3$ and show that $\kappa <1/2$.
Note that some properties extend easily to general $m\ge 2$, while others are nontrivial, and thus we will emphasize the distinction.
Motivated by Definition \ref{def:gluing} and the Cleaning Lemma \cite{yuan2026unified}, we define
\begin{definition}
    Let $\sigma\ne 0$ be a syndrome in $Z(\Gamma)$ and a coboundary of $Y^v$ for $v=x,q,z$. Then the \textbf{relative (syndrome) coexpansion} for $\sigma$ is defined as
    \begin{equation}
        \|\sigma\|^{UV} = \min_{\delta^V f=\sigma} \frac{|g^{UV} f|}{|\delta^V f|}
    \end{equation}
    Further define the \textbf{relative (syndrome) coexpansion} $\zeta^{UV}$ as the maximum of $\|\sigma\|^{UV}$ over syndromes $\sigma$ which are also coboundaries  of $Y^v$.
\end{definition}

\subsection{$x$ Layer Cleaning}

\begin{figure}[ht]
\centering
\subfloat[\label{fig:hollow-tensor-copy}]{\includegraphics[height=0.40\columnwidth]{hollow-tensor-clean.pdf}}
\hspace{0.015\columnwidth}
\subfloat[\label{fig:Yx-q-block}]{\includegraphics[height=0.40\columnwidth]{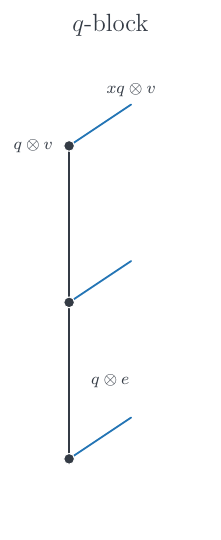}}
\hspace{0.015\columnwidth}
\subfloat[\label{fig:Yx-z-block}]{\includegraphics[height=0.40\columnwidth]{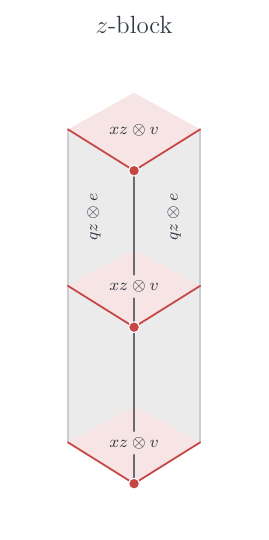}}
\hspace{0.015\columnwidth}
\subfloat[\label{fig:Yx-ze-block}]{\includegraphics[height=0.40\columnwidth]{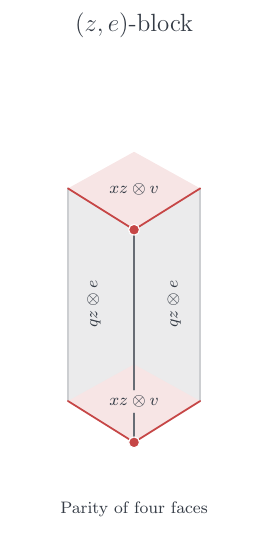}}
\caption{$Y^x$ blocks. (a) An $x$-layer $Y^x$ as in Fig.~\ref{fig:hollow-tensor}. (b) A $q$-block, with vertices $q\otimes v$ and edges $xq\otimes v$, $q\otimes e$. (c) A $z$-block: red shaded faces denote $xz\otimes v$ and gray shaded faces denote $qz\otimes e$. (d) A $(z,e)$-block for one repetition edge $e$; its parity is the $\F_2$ sum of the occupied faces among the two caps and two side faces. Open-ended edges in (b) meet $x$-vertices outside the block.}
\label{fig:Yx-blocks}
\end{figure}

In this subsubsection, we focus on the $x$-layer $Y^x= (Y^x,\delta^x)$ depicted by the diagram in Eq. \eqref{eq:Y-diag}.

\begin{remark}[$Y^x$ Geometric Reduction]
    \label{rem:x-geometric-reduction}
    Note that there exists a natural geometric reduction of $x$ layers as shown in Fig. \ref{fig:Yx-blocks}.
    From an algebraic perspective, this corresponds to flattening the diagram as follows
    \begin{equation}
    \begin{tikzpicture}[baseline]
    \matrix(a)[matrix of math nodes, nodes in empty cells, nodes={minimum size=5mm},
    row sep=1.5em, column sep=1.5em,
    text height=1.5ex, text depth=0.25ex]
    {&& x,v\\
    & q,v  &  xq,v \oplus q,e\\
    z,v  & qz,v\oplus z,e & xz,v\oplus qz,e\\};
    \path[->,font=\scriptsize]
    (a-2-2) edge  (a-2-3)
    (a-3-1) edge  (a-3-2)
    (a-3-2) edge  (a-3-3)
    (a-1-3) edge  (a-2-3)
    (a-2-2) edge  (a-3-2)
    (a-2-3) edge  (a-3-3);
    \end{tikzpicture}
    \end{equation}
    where the row cocomplexes are shown in in Fig. \ref{fig:Yx-q-block} and \ref{fig:Yx-z-block} and, in particular, given by the direct sum of $q$ \textbf{blocks} and $z$ \textbf{blocks}, respectively.
    Using either Mayer-Vietoris\footnote{Note that when using Mayer-Vietoris to compute the cohomology of the bottom row complex corresponding to Fig. \ref{fig:Yx-z-block}, one must take the transpose, since by definition, the intersection must be a well-defined complex.} or the Snake-Lemma, it's straightforward to check that the middle row complex has cohomology $(0,q\otimes E)$, while the bottom row complex has homology $(0,0,z\otimes E)$.
    Therefore, one can apply \cite{yuan2026unified} so that $Y^x$ (without meta-check 3-cells) is homotopically equivalent to the column cocomplex as shown below 
    \begin{equation}
    \begin{tikzpicture}[baseline]
    \matrix(a)[matrix of math nodes, nodes in empty cells, nodes={minimum size=25pt},
    row sep=1.5em, column sep=1.5em,
    text height=1.25ex, text depth=0.25ex]
    {Y_0^x & Y_1^x  & Y_2^x  \\
     x\otimes V& q\otimes E  & z\otimes E \\};
    \path[->,font=\scriptsize,transform canvas={yshift=0.2ex}]
    (a-1-1) edge (a-1-2)
    (a-1-2) edge (a-1-3)
    (a-2-1) edge (a-2-2)
    (a-2-2) edge (a-2-3);
    \path[-left to,font=\scriptsize,transform canvas={xshift=0.2ex}]
    (a-1-1) edge node[right]{$\iota$}  (a-2-1)
    (a-1-2) edge node[right]{$\iota$}  (a-2-2)
    (a-1-3) edge node[right]{$\iota$}  (a-2-3);
    \path[left to-,font=\scriptsize,transform canvas={xshift=-0.2ex}]
    (a-1-1) edge node[left]{$\pi$}  (a-2-1)
    (a-1-2) edge node[left]{$\pi$}  (a-2-2)
    (a-1-3) edge node[left]{$\pi$}  (a-2-3);
    \end{tikzpicture}
    \end{equation}
    where $x\otimes v \prec q\otimes e$ if $x\prec q$ in $C$ and $v\prec e$ in repetition $R$, while $q\otimes e \prec z\otimes e$ if $q\prec z$, and $\pi,\iota$ denote the projection and inclusion chain maps.
    From a geometric standpoint, the projection $\pi:Y_2^x \to z\otimes E$ computes the \textbf{flux} (parity) on the $(z,e)$ \textbf{block} as shown in Fig. \ref{fig:Yx-ze-block}, while $\pi:Y_1^x \to q\otimes E$ computes the \textbf{flux} (parity) on each $(q,e)$ \textbf{block}.
\end{remark}

\begin{lemma}[Flux Cleaning]
    \label{lem:flux-cleaning}
    Let $\sigma$ be a 2-coboundary in $Y^x$. For any subset $V'$ of $m-1$ vertices of $V$, there exists $f$ supported on \textit{cone lines} $xq\otimes v$ where $v\in V'$ such that $\sigma +\delta^x f$ has trivial parity in each $(z,e)$ block of $Y^x$; equivalently, the restriction of $\sigma +\delta ^x f$ is a 2-coboundary for each $z$ block.
\end{lemma}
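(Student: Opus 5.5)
The plan is to reduce the statement to the flux projection of Remark~\ref{rem:x-geometric-reduction}. There, $\pi: Y_2^x \to z\otimes E$ computes the parity of each $(z,e)$ block. A 2-cochain restricted to a $z$ block is a coboundary of that block exactly when all of its $(z,e)$ fluxes vanish, since the bottom row complex has top cohomology $z\otimes E$ detected by these fluxes. So it suffices to find $f$ on the cone lines $xq\otimes v$, $v\in V'$, such that $\pi(\sigma+\delta^x f)=0$ in $z\otimes E$.

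First, I would compute the fluxes available from $\sigma$. Since $\sigma=\delta^x g$ is a coboundary and $\pi$ is a chain map, the fluxes are $\pi\sigma=\delta\,\pi g$, where $\pi g\in q\otimes E$ and the differential is $q\otimes e\mapsto \sum_{z\succ q} z\otimes e$. Thus for each edge $e$ the flux pattern is $\delta_{G}$-like: it has the form $\sum_q a_{q,e}\,\delta q$ restricted to $\{z\succ x\}$. Second, I would compute the effect of a single cone line. From the diagram in Eq.~\eqref{eq:Y-diag}, $\delta^x(xq\otimes v)$ consists of the faces $xz\otimes v$ for $z\succ q$, together with $q$-block terms. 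Each cap $xz\otimes v$ lies in the $(z,e)$ blocks for the edges $e\ni v$, contributing flux to every such block. Hence
\begin{equation}
\pi\,\delta^x(xq\otimes v)=\sum_{z\succ q}\sum_{e\ni v} z\otimes e=\delta q\otimes \partial^\top_R v ,
\end{equation}
which is the lift of $xq\otimes v$ to $x\otimes V\to q\otimes E$ composed with $\delta$.

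Third, I would solve the edge-wise equations using the tree structure of $R$. The path has $m$ vertices and $m-1$ edges, and deleting the excluded vertex $v_*\notin V'$ leaves $m-1$ vertices $V'$. The coboundary map $\F_2^{V'}\to \F_2^{E}$, $v\mapsto \partial^\top_R v$, is then an isomorphism, because the path is a tree and fixing one vertex to zero removes the kernel spanned by the all-ones vector. So for each $q$ there exist unique coefficients $b_{q,v}$, $v\in V'$, with $\sum_v b_{q,v}\,\partial^\top_R v=\sum_e a_{q,e}\,e$. Setting $f=\sum_{q,v} b_{q,v}\, xq\otimes v$ gives $\pi\delta^x f=\pi\sigma$, hence $\pi(\sigma+\delta^x f)=0$.

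The main obstacle is the second step: verifying precisely that $\pi\delta^x(xq\otimes v)$ equals $\delta q\otimes\partial^\top_R v$. This requires checking that the side faces $qz\otimes e$ produced by $\delta^x$ do not spoil the flux, and also fixing the exact convention for $\pi$ on the face blocks. I would handle this with the chain-map property instead: $\pi\delta^x=\delta\pi$ on $Y_1^x$, and $\pi(xq\otimes v)$ is the $(q,e)$-block parity, which is $\sum_{e\ni v} q\otimes e$ because $xq\otimes v$ is counted in the $q$ block for each incident edge. This makes the identity immediate and avoids checking the faces by hand.
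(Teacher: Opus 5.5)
Your proposal is correct and is essentially the paper's proof: project via the flux map $\pi$ to the column complex $x\otimes V\to q\otimes E\to z\otimes E$, write $\pi\sigma$ as the coboundary of data in $q\otimes E$, and use that the path $R$ is a tree, so that for each $q$ this edge data equals $\delta^R$ of vertex data supported on any $m-1$ vertices, which cone lines $xq\otimes v$ realize. The only difference is cosmetic: the paper first lifts to rails $f_E$ and then trades them for cone lines, whereas you go straight to cone lines using $\pi(xq\otimes v)=\sum_{e\ni v}q\otimes e$. Your worry about side faces also does not arise, since in fact $\delta^x(xq\otimes v)=\sum_{z\succ q}xz\otimes v$ exactly.
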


\begin{proof}
    By geometric reduction in Remark \ref{rem:x-geometric-reduction}, if $\sigma\in Y^x$ is a 2-coboundary, i.e., $\in \im \delta^x$, then $\pi \sigma$ must be a 2-coboundary of the embedded column cocomplex.
    In particular, $(\pi \sigma)_e$ can be induced by some $f_e \subseteq \{q:q\succ x\}\times e$ for every edge $e$ of $R$.
    Note that $f_e$ is naturally corresponds to of a subset of 1-cells $\{q\otimes e:q\succ e\}$ in $Y^x$, and thus with slight abuse of notation, we also write $f_e\subseteq \{q\otimes e:q\succ e\}$ so that $(\pi \sigma)_e =\pi f_e$.
    In particular $\pi(\sigma+f_E)=0$ where $f_E$ is the (direct) sum of $f_e,e\in E$.
    
    Note that $f_e$ is supported on the \textit{rails} $q\otimes e$.
    For our purposes, we would instead prefer a solution concentrated on the \textit{cone lines} $xq\otimes v$.
    To do this, note that for any fixed $q$, the collection $f_E(q)\equiv \{f_e(q):e\}$ can be regarded a subset of edges of $R$ and since $H_1(R)=0$, there must exist subset $f_V(q)\equiv \{f_v(q):v\}$ of vertices such that $\delta^R f_V(q)= f_E(q)$.
    In fact, we can choose $f_V(q)$ to be supported on any $m-1$ vertices uniformly over $q$.
    With slight abuse of notation, write $f_V$ as the sum of cone lines $xq\otimes v$ if $f_v(q)\ne 0$, and thus it's straightforward to check that $\pi f_E=\pi f_V$ so that $\pi (\sigma +f_V)=0$.
\end{proof}

The purpose of Lemma \ref{lem:flux-cleaning} is so that one may focus on a single $z$ block of $Y^x$.
\begin{lemma}[$z$ Block Cleaning]
    \label{lem:z-block-cleaning}
    Let $\sigma$ be a 2-coboundary of a fixed $z$ block in $Y^x$.
    Then there exists 1-chain $f$ supported on hinges\footnote{For $m=3$, there is no loss in generality considering $f$ supported only on hinges $qz\otimes v$. Indeed, if $f$ is supported on some rail $z\otimes e$, then it contributes $+2$ to $|g^{Qx} f|$.
    This is equivalent to instead assigning $f$ a support on hinges $q^\pm z\otimes v$ where $v\prec e$ is the leaf of the $m=3$ repetition code, since each hinge contributes $+1$ to $|g^{Qx} f|$.} $qz\otimes v$ (but not on \textit{rails} $z\otimes e$) of the $z$ block such that $\delta^x f =\sigma$ and
    \begin{equation}
        |g^{Qx} f| \le 
        \begin{cases}
            \min ( |c^+|+|c^-|, 2m -|c^+|-|c^-|) & \beta =0 \\
            m - ||c^+|-|c^-|| & \beta =1
        \end{cases}
    \end{equation}
    where $c^\pm, \beta$ are defined via $\sigma$ as follows.
    Let $q^\pm$ denote the qubits in the $z$ block, i.e., $x\prec q^\pm \prec z$. 
    Choose a reference vertex $v_0$ on the repetition graph and let $\beta \in \F_2$ be the value of $\sigma$ on $xz\otimes v_0$. 
    Let $c_v^\pm \in \F_2$ be the summation of $\sigma$ on $q^\pm z\otimes e$ over edges $e$ along the unique path from $v_0 \to v$ in $R$ so that $c^\pm \in V$.
\end{lemma}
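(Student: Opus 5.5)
The plan is to parametrize every hinge-supported preimage of $\sigma$ explicitly. The $z$ block turns out to have only a one-parameter family of such preimages, so the bound comes from minimizing over a single bit. First record the local incidences in the $z$ block, following Eq.~\eqref{eq:Y-diag} and Fig.~\ref{fig:Yx-z-block}.
\begin{itemize}
\item The 1-cells are the hinges $q^\pm z\otimes v$ and the rails $z\otimes e$.
\item The 2-cells are the caps $xz\otimes v$ and the side faces $q^\pm z\otimes e$.
\item A hinge $q^\pm z\otimes v$ has coboundary the cap $xz\otimes v$ plus the side faces $q^\pm z\otimes e$ for the edges $e\ni v$.
\item Under the gluing $g^{Qx}$, each hinge contributes exactly one cell, namely the copy $q^\pm z\otimes v$ in $Y^{q^\pm}$.
\end{itemize}
Hence for $f$ supported on hinges, $|g^{Qx}f|=|f|$. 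I write $a_v^\pm\in\F_2$ for the coefficient of $f$ on $q^\pm z\otimes v$, so that $|g^{Qx}f|=|a^+|+|a^-|$ with $a^\pm\in V$.

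Next I write $\delta^x f=\sigma$ as a linear system. On a side face $q^\pm z\otimes e$ with $e=uv$ it reads $a_u^\pm+a_v^\pm=\sigma(q^\pm z\otimes e)$. On a cap $xz\otimes v$ it reads $a_v^++a_v^-=\sigma(xz\otimes v)$.

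Since $R$ is a path, hence a tree, the side-face equations determine each $a^\pm$ up to a global constant. Summing along the unique path from $v_0$ to $v$ gives $a_v^\pm=a_{v_0}^\pm+c_v^\pm$, using exactly the $c^\pm$ defined in the statement (with $c_{v_0}^\pm=0$). The cap equation at $v_0$ then forces $a_{v_0}^++a_{v_0}^-=\beta$.

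The caps at the other vertices $v$ require $\sigma(xz\otimes v)=\beta+c_v^++c_v^-$. The main point to check is that this holds because $\sigma$ is a 2-coboundary of the $z$ block. The cap-plus-side-face flux around each $(z,e)$ block is trivial, which is the parity condition in Remark~\ref{rem:x-geometric-reduction}, used as in Lemma~\ref{lem:flux-cleaning}. Propagating this condition edge by edge along the path from $v_0$ yields exactly the required identity. Consequently a hinge-only solution exists, and every hinge-only solution has the form
\[
a^+=\alpha\mathbf 1+c^+,\qquad a^-=(\alpha+\beta)\mathbf 1+c^-,\qquad \alpha\in\F_2,
\]
where $\mathbf 1\in V$ is the all-ones vector. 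Adding $\mathbf 1$ complements the weight: $|\mathbf 1+c|=m-|c|$.

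Finally I choose $\alpha$ to minimize the weight.
\begin{itemize}
\item If $\beta=0$, the two choices give $|c^+|+|c^-|$ and $2m-|c^+|-|c^-|$.
\item If $\beta=1$, they give $|c^+|+m-|c^-|$ and $m-|c^+|+|c^-|$. The smaller of these is $m-\big||c^+|-|c^-|\big|$.
\end{itemize}
Taking the better $\alpha$ in each case gives the claimed bound. The only nonroutine step is the consistency check that the coboundary hypothesis guarantees the cap equations at $v\neq v_0$. I would verify it by computing $\delta^x$ of a single rail $z\otimes e$ and a single hinge, and confirming that the $(z,e)$ flux functional annihilates both, so that it annihilates all of $\im\delta^x$ restricted to the block.
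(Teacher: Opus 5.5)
Your proposal is correct and follows essentially the same route as the paper: parametrize hinge-only preimages as $c^\pm$ plus constants $\beta^\pm\mathbf 1$ with $\beta^++\beta^-=\beta$. Consistency of the remaining cap equations comes from the zero $(z,e)$-flux, which makes $c_v^++c_v^-+\sigma(xz\otimes v)$ constant, and the bound follows by minimizing over the single free bit. Your explicit check that the flux functional annihilates $\delta^x$ of every hinge and rail in the block supplies the justification of the zero-parity property that the paper simply asserts.
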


\begin{proof}
    For notation simplicity, let $\sigma_v$ denote the value of $\sigma$ on $xz\otimes v$; similarly, let $f_v^\pm$ denote the value of $f$ on $q^\pm z\otimes v$.
    Since $\sigma$ has zero parity on each $(z,e)$ block, we see that $\beta_v \equiv c_v^- +c_v^+ +\sigma_v$ is a constant for all $v$ and equal to $\beta$.
    Also note that by definition, if $\delta^x f=\sigma$, then for any edge $vv'$ of the repetition code, we must have
    \begin{equation}
        f_v^\pm +f_{v'}^{\pm} = c_v^\pm +c_{v'}^{\pm}
    \end{equation}
    Hence, using the reference point $v_0$, we see that for any vertex $v$,
    \begin{equation}
        f_v^\pm =f_{v_0}^\pm +c_v^\pm
    \end{equation}
    Also note that if we desire $\delta^x f =\sigma$, then $f_{v_0}^+ +f_{v_0}^-=\beta$.
    The converse is also clear, i.e., any $f$ of the form $f^\pm =c^\pm +\beta^\pm \cV$ where $\cV$ denotes the subset of all vertices in $V$ with $\beta^+ +\beta^- =\beta$ satisfies $\delta^x f=\sigma$ and is supported on hinges $qz\otimes v$.
    In particular, due to the gluing in Definition \ref{def:gluing}, we see that
    \begin{align}
        |g^{Qx} f| =|f| &=\min_{\beta^+ +\beta^-=\beta} (|c^+ +\beta^+ \cV| +|c^-+\beta^- \cV|)
    \end{align}
    And thus the statement follows by considering the cases $\beta=0,1$ separately.
\end{proof}

\begin{theorem}[$x$ Layer Cleaning]
    \label{thm:x-layer-cleaning}
    Let $\sigma$ be a 2-coboundary of $Y^x$ with $m=3$. Then there exists $f$ supported on $Y^x$, but not on \textit{rails} $z\otimes e$ for any $z\succ x$ and $e\in E$, such that $\delta^x f= \sigma$ and
    \begin{equation}
        |g^{Qx} f| \le \frac{9}{7} |\sigma|
    \end{equation}
    In particular, $\zeta^{QX}\le 9/7$
\end{theorem}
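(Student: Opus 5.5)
The plan is to build $f$ in two stages and then bound the cost $|g^{Qx} f|$ block by block, charging it to $|\sigma|$ restricted to each $z$ block.

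\textbf{Stage 1 (flux correction).} Since $\sigma$ is a $2$-coboundary of $Y^x$, Lemma \ref{lem:flux-cleaning} gives, for any choice of $m-1=2$ vertices $V'\subseteq V$, a chain $f_V$ on cone lines $xq\otimes v$, $v\in V'$. After adding it, $\sigma'=\sigma+\delta^x f_V$ has trivial parity in every $(z,e)$ block. Under the gluing $g^{Qx}$ each cone line $xq\otimes v$ maps to a single cell $q\otimes v$ of $Y^q$. Hence this stage costs $|f_V|$.

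\textbf{Stage 2 (per-block correction).} Next apply Lemma \ref{lem:z-block-cleaning} independently to each $z$ block of $\sigma'$. This yields $f_z$ supported on hinges with cost at most $\min(|c^+|+|c^-|,6-|c^+|-|c^-|)$ when $\beta=0$ and at most $3-\bigl||c^+|-|c^-|\bigr|$ when $\beta=1$. Set $f=f_V+\sum_z f_z$. By construction $f$ avoids rails and $\delta^x f=\sigma$. The free parameters are $V'$ and the representative $f_V(q)$ of each $(q,\cdot)$ flux, which is forced once $V'$ is chosen, since $H^0$ is spanned by the constant. I will use this freedom, together with the triangle inequality $|g^{Qx}f|\le |f_V|+\sum_z|f_z|$, to optimize.

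\textbf{Charging scheme.} The cost of $f_V$ at a qubit $q$ comes from the nonzero $(q,e)$ fluxes of $\sigma$. The fluxes at $q$ are the parities of the side faces $qz\otimes e$ over the $z\succ q$ inside $Y^x$, summed with the corresponding caps. By Hypothesis \ref{hyp:edge-like-supp}, each $z$ block contains exactly two qubits $q^\pm$, so each side face $q^\pm z\otimes e$ of $\sigma$ lies in exactly one $(q^\pm,e)$ flux. I will distribute the Stage 1 cost at $q$ over the side faces of $\sigma$ that create that flux. Each block then carries a local cost, namely its Stage 2 cost plus its charged share of Stage 1, to be compared with its local weight $|\sigma|_{\text{block}}$ (three caps $xz\otimes v$ and four side faces $q^\pm z\otimes e$).

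For $m=3$ the resulting configurations form a finite list: the block restriction of $\sigma$ together with the fluxes entering from $q^\pm$. It then suffices to show a per-block inequality $\text{cost}\le\frac97|\sigma|_{\text{block}}$. A naive fixed choice of $V'$ will fail in some configurations. The remedy I would try first is to average over the three choices of $V'$, equivalently over which vertex is omitted, and to exploit the freedom of the reference vertex $v_0$ in Lemma \ref{lem:z-block-cleaning}. Some $V'$ then achieves at most the average, so it suffices to bound the averaged per-block cost by $\frac97$ of the block weight.

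\textbf{Main obstacle.} The hard step is the finite case analysis that identifies the worst-case ratio $9/7$. Two points make it delicate. First, the Stage 1 correction changes $\sigma$ inside the blocks, which shifts $c^\pm$ and $\beta$. Second, a single flux at $q$ is shared among several blocks $z\succ q$, so the charging must not double count. I expect the extremal configuration to be a block with $\beta=1$ and balanced $|c^+|=|c^-|$, costing $3$ against a small block weight, combined with an incoming flux. I would verify the whole case list by enumeration, possibly computer-assisted given its small size. Finally, since every syndrome achieves a ratio of at most $9/7$, taking the maximum over $\sigma$ gives $\zeta^{QX}\le 9/7$.
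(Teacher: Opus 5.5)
There is a genuine gap, and it sits in your Stage 1 accounting. The gluing $g^{qx}$ sends a qubit of $Y^x$ (an edge) to a $Z$-check of $Y^q$ (an edge of $Y^q$). Hinges $qz\otimes v$ go to spokes, and a rail $z\otimes e$ goes to one rail in each of $Y^{q^\pm}$ (the ``$+2$'' in the footnote). The cone lines $xq\otimes v$, which carry the cone label $x$, are annihilated, so $g^{Qx}f_V=0$. The cell $q\otimes v$ of $Y^q$ that you name is a vertex, i.e.\ a qubit of $\Gamma$. It is the image of the \emph{vertex} $q\otimes v$ of $Y^x$, not of a cone line.

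Under your cost model, the inequality you set out to verify is false. Take $\sigma=\delta^x(q\otimes e_{01})=\sum_{z\succ q}qz\otimes e_{01}$, so $|\sigma|=d_q=|\{z\succ q\}|$. Every admissible $f_V$ is nonzero, because each $(z,e_{01})$ block with $z\succ q$ has odd parity. Every $f_z$ with $z\succ q$ is also nonzero, because cone lines never touch side faces. Hence $|f_V|+\sum_z|f_z|\ge d_q+1>\tfrac97 d_q$ whenever $d_q\le 3$. Your enumeration would turn up these violations, and no charging of fluxes to side faces can fix that.

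The missing idea also removes your ``main obstacle.'' Stage 1 is free, so all the cost sits on hinges. Take the reference vertex of Lemma~\ref{lem:z-block-cleaning} to be the vertex $v^\star$ omitted from $V'$. Cone lines $xq\otimes v$ change only caps $xz\otimes v$ with $v\neq v^\star$. So the side faces and the $v^\star$-cap, which are exactly the data defining $c^\pm$ and $\beta$, are the same for $\sigma+\delta^x f_V$ as for $\sigma$.

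The block lemma then bounds the cost of block $z$ by one of three expressions, with $v_0,v_2$ the leaves:
$2|\sigma^z_{01}|+|\sigma^z_{12}|+3|\sigma^z_{0}|$ for $v^\star=v_0$;
$|\sigma^z_{01}|+|\sigma^z_{12}|+3|\sigma^z_{1}|$ for $v^\star=v_1$;
$|\sigma^z_{01}|+2|\sigma^z_{12}|+3|\sigma^z_{2}|$ for $v^\star=v_2$.
Here $\sigma^z_{ij}$ denotes the side faces on edge $v_iv_j$ and $\sigma^z_i$ the cap at $v_i$.

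Sum over blocks for a single global $v^\star$, then average the three totals with weights $\tfrac27,\tfrac37,\tfrac27$. This gives coefficient $\tfrac97$ on side faces and on the central cap, and $\tfrac67$ on leaf caps. Hence $\min_{v^\star}|g^{Qx}f|\le\tfrac97|\sigma|$, with no enumeration needed. Your instinct to average over the omitted vertex is right. It only works once $V'$ and the reference vertex are tied together and Stage 1 is recognized as costless.
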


% \begin{remark}[$x$ Layer Tightness]
%     \label{rem:x-layer-tight}
%     Note the factor $9/7$ is tight in the sense that there exists a local graph $G^x$ such that the resulting relative syndrome expansion is $=9/7$. Indeed, consider the local graph $G^x= B_3$. \red{***}
% \end{remark}
\begin{proof}
    Fix a reference vertex $v^\star$ of $R$.
    By Lemma \ref{lem:flux-cleaning}, there exists $f^{\star}$ supported on cone lines $xq\otimes v$ with $v\ne v^{\star}$ such that $\sigma +\delta^x f^{\star}$ is a 2-coboundary for each $z$ block, whose restriction we denote as $\hat{\sigma}^z$.
    By Lemma \ref{lem:z-block-cleaning}, we can find $f^z$ supported on hinges in the $z$ block such that $\hat{\sigma}^z=\delta^x f^z$.
    Moreover, let $v^{\star}$ be the reference point to upper bound $|g^{Qx} f^z|$.
    Note that by using the same reference point $v^{\star}$ for both Lemma \ref{lem:flux-cleaning} and \ref{lem:z-block-cleaning}, we see that
    \begin{equation}
        \hat{\sigma}^z_e=\sigma^z_e,\quad \hat{\sigma}^z_{v^{\star}} =\sigma^z_{v^{\star}}
    \end{equation}
    where the subscripts denote the restriction of $\hat{\sigma}^z$ (or $\sigma^z$) to the corresponding face, e.g., $\hat{\sigma}^z_{e}$ is that on $qz\otimes e,q\succ z$ and $\hat{\sigma}^z_{v}$ is that on $xz\otimes v$.
    Hence,
    \begin{equation}
        |g^{Qx} f^z| \le 
        \begin{cases}
            2|\sigma^z_{01}| +  |\sigma^z_{12}| +3|\sigma^z_0| & v^{\star} =v_0  \\
            |\sigma^z_{01}| +  |\sigma^z_{12}| +3|\sigma^z_1| & v^{\star} =v_1  \\
            |\sigma^z_{01}| +  2|\sigma^z_{12}| +3|\sigma^z_2| & v^{\star} =v_2
        \end{cases}
    \end{equation}
    Where $v_0,v_1,v_2$ are the $m=3$ vertices of $R$ and $v_0,v_2$ are the leaves and $v_1$ is the center vertex.
    Let
    \begin{equation}
        f =f^{\star} +\sum_{z\succ x} f^z
    \end{equation}
    Since $g^{Qx} f^{\star}=0$, we see that $\delta^x f=\sigma$ and
    \begin{equation}
        |g^{Qx} f| \le 
        \begin{cases}
            2|\sigma_{01}| +  |\sigma_{12}| +3|\sigma_0| & v^{\star} =v_0  \\
            |\sigma_{01}| +  |\sigma_{12}| +3|\sigma_1| & v^{\star} =v_1  \\
            |\sigma_{01}| +  2|\sigma_{12}| +3|\sigma_2| & v^{\star} =v_2
        \end{cases}
    \end{equation}
    Note that the reference $v^{\star}$ can be arbitrarily chosen, and instead of taking the minimum over the three options, we can take the average, with weighted options $2/7,3/7,2/7$ so that the statement then follows.
\end{proof}

\subsection{$q$ Layer Cleaning}

\begin{figure}[ht]
\centering
\subfloat[\label{fig:q-layer}]{\includegraphics[height=0.40\columnwidth]{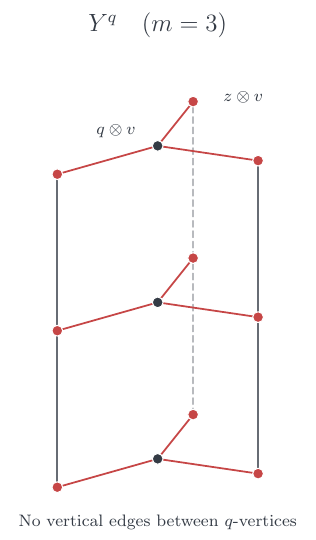}}
\hspace{0.015\columnwidth}
\subfloat[\label{fig:Yq-z-block}]{\includegraphics[height=0.40\columnwidth]{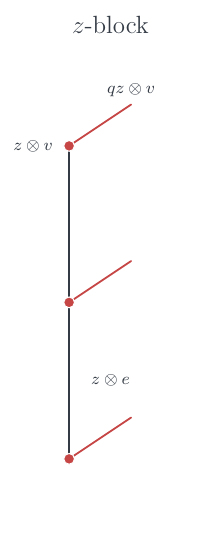}}
\hspace{0.015\columnwidth}
\subfloat[\label{fig:Yq-ze-block}]{\includegraphics[height=0.40\columnwidth]{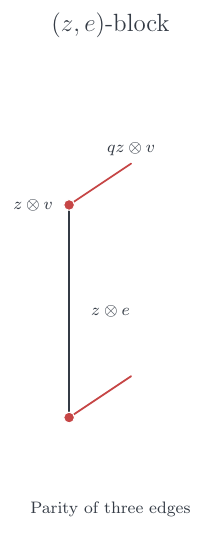}}
\caption{$Y^q$ blocks. (a) A $q$-layer $Y^q$. (b) A $z$-block, with vertices $z\otimes v$, spokes $qz\otimes v$, and rails $z\otimes e$. (c) A $(z,e)$-block for one repetition edge $e$; its parity is the $\F_2$ sum of the occupied edges among the two spokes and one rail. Open-ended spokes in (b), (c) meet $q$-vertices outside the block.}
\label{fig:Yq-blocks}
\end{figure}

In this subsubsection, we focus on the $q$-layer $Y^q= (Y^q,\delta^q)$ depicted by the diagram in Eq. \eqref{eq:Y-diag}.
\begin{remark}[$Y^q$ Geometric Reduction]
    \label{rem:q-geometric-reduction}
    Similar to Remark \ref{rem:x-geometric-reduction}, there also exists a natural geometric reduction of $q$ layers as shown in Fig. \ref{fig:Yq-blocks}.
    From an algebraic perspective, this corresponds to the flattened diagram
    \begin{equation}
    \begin{tikzpicture}[baseline]
    \matrix(a)[matrix of math nodes, nodes in empty cells, nodes={minimum size=5mm},
    row sep=1.5em, column sep=1.5em,
    text height=1.5ex, text depth=0.25ex]
    {& q,v\\
     z,v  &  qz,v\oplus z,e\\};
    \path[->,font=\scriptsize]
    (a-2-1) edge  (a-2-2)
    (a-1-2) edge  (a-2-2);
    \end{tikzpicture}
    \end{equation}
    where the row cocomplex is shown in in Fig. \ref{fig:Yq-z-block} and, in particular, given by the direct sum of $z$ \textbf{blocks} in $Y^q$, respectively.
    Similar to Remark \ref{rem:x-geometric-reduction}, we can compute the cohomology of the row cocomplex as $(0,z\otimes E)$ so that $Y^q$ (without meta-check 2-cells) is homotopically equivalent to the column cocomplex as shown below \cite{yuan2026unified}
    \begin{equation}
    \begin{tikzpicture}[baseline]
    \matrix(a)[matrix of math nodes, nodes in empty cells, nodes={minimum size=25pt},
    row sep=1.5em, column sep=1.5em,
    text height=1.25ex, text depth=0.25ex]
    {Y_0^q & Y_1^q \\
     q\otimes V  & z\otimes E \\};
    \path[->,font=\scriptsize,transform canvas={yshift=0.2ex}]
    (a-1-1) edge (a-1-2)
    (a-2-1) edge (a-2-2);
    \path[-left to,font=\scriptsize,transform canvas={xshift=0.2ex}]
    (a-1-1) edge node[right]{$\iota$}  (a-2-1)
    (a-1-2) edge node[right]{$\iota$}  (a-2-2);
    \path[left to-,font=\scriptsize,transform canvas={xshift=-0.2ex}]
    (a-1-1) edge node[left]{$\pi$}  (a-2-1)
    (a-1-2) edge node[left]{$\pi$}  (a-2-2);
    \end{tikzpicture}
    \end{equation}
    where $q\otimes v \prec z\otimes e$ if $q\prec z$ in $C$ and $v\prec e$ in repetition $R$, and $\pi,\iota$ denote the projection and inclusion chain maps.
    From a geometric standpoint, the projection $\pi:Y_1^q \to z\otimes E$ computes the \textbf{flux} (parity) on the $(z,e)$ \textbf{block} as shown in Fig. \ref{fig:Yq-ze-block}.
\end{remark}

\begin{lemma}[Flux Freedom]
    \label{lem:flux-freedom}
    If $\sigma$ is a 1-coboundary in $Y^q$, then the flux of $\sigma$ on $(z,e)$ blocks is constant over all $z\succ q$ for fixed $e$.
    In particular, there exists a bijection between 1-coboundaries $\sigma$ of $Y^q$ and $(\Phi,r_z,z\succ q)$ where $\Phi\in E,r_z\in V$, so that $\Phi$ is the flux of $\sigma$ and $r_z$ is the restriction of $\sigma$ on \textit{cone lines} $qz\otimes V$.
\end{lemma}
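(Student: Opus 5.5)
We work directly in the $q$-layer cocomplex $Y^q=K^q\otimes V+G^q\otimes R$. Its $0$-cells are $q\otimes v$ and $z\otimes v$ for $z\succ q$; its $1$-cells are the cone lines $qz\otimes v$ and the rails $z\otimes e$. We write $a\in \F_2^{\{q\}\times V}\oplus\bigoplus_{z\succ q}\F_2^{\{z\}\times V}$ as $a=(a_q,(a_z)_{z\succ q})$ with $a_q,a_z\in V$. Then $\sigma=\delta^q a$ has two parts. On the cone line $qz\otimes v$ its value is $a_q(v)+a_z(v)$, so $r_z=a_q+a_z$. On the rail $z\otimes e$ its value is $\delta^R a_z(e)$.

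\textbf{Step 1 (constant flux).} The $(z,e)$ block consists of the two spokes $qz\otimes v_\pm$ with $\partial^R e=v_-+v_+$ and the rail $z\otimes e$. Its parity is
\begin{equation}
    (a_q+a_z)(v_-)+(a_q+a_z)(v_+)+a_z(v_-)+a_z(v_+)=\delta^R a_q(e).
\end{equation}
This is independent of $z$, which proves the first claim with $\Phi=\delta^R a_q\in E$.

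\textbf{Step 2 (the map is well defined and injective).} The map $\sigma\mapsto(\Phi,(r_z))$ reads off coordinates of $\sigma$ and the parity, so it is well defined and linear. For injectivity, note that $\sigma$ is supported on spokes and rails. The spokes are exactly the $r_z$. The rail value is
\begin{equation}
    \sigma(z\otimes e)=\Phi(e)+r_z(v_-)+r_z(v_+),
\end{equation}
which follows directly from the definition of the block parity. Hence $(\Phi,(r_z))$ determines $\sigma$.

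\textbf{Step 3 (surjectivity).} This is the only nontrivial step. Given arbitrary $\Phi\in E$ and $r_z\in V$, we must realize them by some $a$. Since $R$ is a path, $H^1(R)=0$, so $\delta^R:V\to E$ is surjective and we can pick $a_q$ with $\delta^R a_q=\Phi$. We then set $a_z=a_q+r_z$. By the formulas of Step 1, $\delta^q a$ has cone-line restriction $r_z$ and flux $\Phi$.

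Alternatively, one can count dimensions. The coboundaries have dimension $|C^0|-\dim H^0(Y^q)$, and by Lemma \ref{lem:cohomology} with $G^q$ connected components counted, $H^0(Y^q)\cong\F_2$, since $Y^q$ is connected when $\{z\succ q\}$ is nonempty. This gives
\begin{equation}
    m(1+|\{z\succ q\}|)-1=(m-1)+m|\{z\succ q\}|=\dim E+\sum_z\dim V.
\end{equation}
Combined with injectivity, this yields bijectivity.

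The main obstacle, mild as it is, is bookkeeping of signs and incidences in the gluing of the cone and repetition factors. In particular, one must check that each spoke $qz\otimes v$ is incident to exactly $q\otimes v$ and $z\otimes v$, and each rail $z\otimes e$ to $z\otimes v_\pm$. This is what makes the block parity collapse to $\delta^R a_q(e)$.
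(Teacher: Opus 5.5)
Your proof is correct, but it takes a more elementary route than the paper. The paper's proof is a one-line appeal to the homotopy equivalence of Remark~\ref{rem:q-geometric-reduction} between $Y^q$ and the column cocomplex $q\otimes V\to z\otimes E$. In that picture the flux is the degree-one component of the chain map $\pi$. Constancy in $z$ is then just $\pi\delta^q=\delta'\pi$, where $\delta'(q\otimes v)=\sum_{z\succ q}\sum_{e\succ v}z\otimes e$ is manifestly independent of $z$. The parametrization by $(\Phi,r_z)$ is left implicit there.

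Your Step~1 identity, block parity $=\delta^R a_q(e)$, is exactly this chain-map relation written in coordinates, with degree-zero component $a\mapsto a_q$. Your Steps~2--3 supply what the paper leaves unsaid: the explicit inverse $\sigma(z\otimes e)=\Phi(e)+r_z(v_-)+r_z(v_+)$, and surjectivity from $H^1(R)=0$. Your version is self-contained, since it does not lean on the equivalence result of \cite{yuan2026unified}. It also produces precisely the coordinates used in the proof of Theorem~\ref{thm:q-layer-cleaning}. The paper's version is shorter and parallels the $x$-layer reduction.

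One caveat applies to both. The statement needs $\{z\succ q\}\neq\varnothing$; otherwise the flux is undefined, and the only $1$-coboundary is $0$ while $\dim E=m-1\ge 1$. You note this only in the dimension count. It should be stated at the outset, because Step~2 already uses it to define $\Phi$.
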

\begin{proof}
    Follows trivially from the homotopy equivalence between $Y^q$ and $q\otimes V\to z\otimes E$.
\end{proof}

\begin{theorem}[$q$ Layer Cleaning]
    \label{thm:q-layer-cleaning}
    Let $\sigma$ be a 1-coboundary of $Y^q$. If $m=3$, then
    \begin{equation}
        \|\sigma\|^{ZQ} |\sigma|\le |\sigma|_{\rm spokes} + \frac{4}{3}|\sigma|_{\rm rails}
    \end{equation}
    where $|\sigma|_{\rm spokes},|\sigma|_{\rm rails}$ denotes the weights of $\sigma$ restricted to spokes $qz\otimes v$ and rails $z\otimes e$, respectively.
    In particular, $\zeta^{ZQ}\le 4/3$.
\end{theorem}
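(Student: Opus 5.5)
The plan is to follow the $x$-layer argument of Theorem~\ref{thm:x-layer-cleaning}: parametrize every preimage $f$ of $\sigma$ explicitly, reduce the cost to a small combinatorial problem on the $m=3$ path, and bound a minimum by an average.

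\textbf{Parametrization.} I will use Lemma~\ref{lem:flux-freedom} and Remark~\ref{rem:q-geometric-reduction}. A $0$-cochain $f$ on $Y^q$ has values $a\in V$ on the cone copies $q\otimes v$ and $b_z\in V$ on the vertices $z\otimes v$ of each $z$ block. The equation $\delta^q f=\sigma$ reads, on spokes and on rails respectively,
\begin{equation}
    b_z=a+r_z,\qquad \delta^R b_z=\sigma|_{z\otimes E}.
\end{equation}
Here $r_z\in V$ is the restriction of $\sigma$ to the spokes $qz\otimes V$. The rail equations then force $\delta^R a=\Phi+\delta^R r_z$ for every $z$. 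This is consistent precisely because the flux $\Phi$ is independent of $z$ (Lemma~\ref{lem:flux-freedom}). Since $H^0(R)=\F_2$, every solution has the form $a=a^0+\epsilon\cV$ with $\epsilon\in\F_2$. The gluing $g^{ZQ}$ keeps only the vertices $z\otimes v$, so the cost of a solution is
\begin{equation}
    |g^{ZQ}f|=\sum_{z\succ q}|s_z+\epsilon\cV|,\qquad s_z\equiv a^0+r_z .
\end{equation}
In the same variables, the $z$ block of $\sigma$ has spoke weight $|s_z+a^0|$ and rail weight $|\delta^R s_z|$.

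\textbf{Averaging step.} The goal is
\begin{equation}
    \min_{\epsilon}\sum_z|s_z+\epsilon\cV|\le\sum_z\left(|s_z+a^0|+\tfrac43|\delta^R s_z|\right).
\end{equation}
I would not minimize over $\epsilon$ directly. Instead, as in the proof of Theorem~\ref{thm:x-layer-cleaning}, I would bound the minimum by an average over $\epsilon$ with weights $(p,1-p)$, chosen depending only on the global datum $a^0$. Replacing $a^0$ by its complement swaps the roles of $\epsilon=0,1$, so it suffices to take $a^0\in\{000,100,010\}$, where $v_1$ is the center vertex. For each of these three types, the claim reduces to a per-$z$ inequality over the eight values of $s_z\in V$:
\begin{equation}
    p|s|+(1-p)(3-|s|)\le |s+a^0|+\tfrac43|\delta^R s|.
\end{equation}
Checking this is a finite linear feasibility problem in $p$. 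The case $a^0=000$ works with $p=1$.

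\textbf{Main obstacle.} I expect the real work, and the source of the constant $4/3$, to be the types $a^0=100$ and $a^0=010$. A quick check suggests that a single data-independent weight $p$ may not satisfy all eight per-$z$ inequalities at once. For $a^0=100$, the values $s=111$ and $s=000$ force $p=2/3$, and then $s=100$ violates the inequality.

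If that happens, I would first try a weight $p$ that depends on the empirical distribution of the $s_z$. Equivalently, I would directly compare the two totals $A=\sum_z|s_z|$ and $B=\sum_z(3-|s_z|)$ with the right-hand side. The idea is to show that $\min(A,B)$ is controlled by a convex combination of spoke and rail weights, by grouping the $z$-blocks according to the class of $s_z$, namely constant, one switch, or two switches.

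If the constant still fails, I would re-examine whether the rails attached to the cone copies $q\otimes v$ give additional freedom in $a$. That freedom would make the bound $|g^{ZQ}f|\le|\sigma|_{\rm spokes}$ trivial for $\epsilon$-free choices. Either way, the final inequality $\zeta^{ZQ}\le 4/3$ follows from the displayed bound by dividing by $|\sigma|=|\sigma|_{\rm spokes}+|\sigma|_{\rm rails}$.
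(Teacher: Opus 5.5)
Your reduction is the paper's own. Your $a^0$ and $s_z$ are its $y$ and $u_z$, and your per-block inequality is its $\lambda|u|+(1-\lambda)(3-|u|)\le|u+y|+\frac43\,\ch u$. Bounding $\min_\epsilon$ by an average whose weight depends only on the global flux datum is exactly its choice $\lambda=1-|y|/3$.

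The gap is that you never complete the finite check, and you abandon the route because of an arithmetic slip. For $a^0=100$ you correctly find that $s=000$ and $s=111$ force $p=2/3$. But $s=100$ is not a violation: the left side is $\frac23\cdot 1+\frac13\cdot 2=\frac43$ and the right side is $|000|+\frac43|\delta^R(100)|=\frac43$, so the inequality holds with equality. With $p=2/3$ the left side equals $1+|s|/3$, and all eight values of $s$ check out. For $a^0=100$ equality occurs exactly at $s=000,100,111$; for $a^0=010$ it occurs at $s=000,111$.

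A uniform way to see this, as in the paper: with $p=1-|a^0|/3$ the per-block inequality is equivalent to $3|s\cap a^0|-|s|\,|a^0|\le 2|\delta^R s|$. The left side vanishes for $a^0=000$. For $|a^0|=1$ it is $0$ when $s$ is constant and at most $3-1=2\le 2|\delta^R s|$ otherwise. So a data-independent weight suffices, and your fallback of weights depending on the empirical distribution of the $s_z$ is unnecessary.

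Your second fallback would have been a dead end. $Y^q=K^q\otimes V+G^q\otimes R$ is hollowed, and $G^q$ does not contain the cone vertex. So there are no rails on the copies $q\otimes v$, and no freedom in $a$ beyond the global constant $\epsilon\cV$; this is consistent with $H^0(Y^q)\cong\F_2$.

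Once the finite check above is filled in, the rest of your argument goes through, including dividing by $|\sigma|=|\sigma|_{\rm spokes}+|\sigma|_{\rm rails}$ to obtain $\zeta^{ZQ}\le 4/3$.
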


% \begin{remark}[$m=3$ Special]
    % Note that the $ZQ$ relation coexpansion can be computed exactly as in Theorem \ref{thm:ZQ-expansion} of the Appendix and thus $m=3$ is special.
% \end{remark}
\begin{proof} 
    By Lemma \ref{lem:flux-freedom}, we see that a 1-coboundary $\sigma$ is specified by $(\Phi,r_z,z\succ q)$.
    Hence, $f$ is a 0-chain in $Y^q$ such that $\delta^q f=\sigma$ iff
    \begin{align}
        f_{z,v} &= r_{z,v} + f_{q,v} \\
        f_{q,v} &= f_{q,v^{\star}} + \sum_{e\in v^{\star} \to v} \Phi_e
    \end{align}
    where $v^{\star}$ is some chosen fixed reference vertex in $R$ and $v^{\star} \to v$ is the unique path in $R$ connecting the two vertices.
    In particular, we see that $f$ is unique up to freedom $f_{q,v^{\star}}\in \F_2$, which also follows from $H^0(Y^q) \cong \F_2$.
    Let
    \begin{equation}
        y_v = \sum_{e\in v^{\star}\to v} \Phi_e
    \end{equation}
    so that $\Phi \mapsto y$ is injective with the image being $y\in V$ such that $y_{v^{\star}}=0$. 
    Set $u_z= r_z+y$ and define the total change of $u_z$ as
    \begin{equation}
        \ch u_z \equiv \sum_{e=vv'\in E} |u_{z,v} + u_{z,v'}|
    \end{equation}
    Then it's straightforward to check that
    \begin{equation}
        \|\sigma\|^{ZQ} = \frac{1}{|\sigma|}\min\left(\sum_{z\succ q} |u_z|, \sum_{z\succ q} (3-|u_z|)\right)
    \end{equation}
    where
    \begin{equation}
        |\sigma| = \sum_{z\succ q} (|u_z+y| +\ch u_z)
    \end{equation}
    Note that
    \begin{equation}
        \min\left(\sum_{z\succ q} |u_z|, \sum_{z\succ q} (m-|u_z|)\right) \le \sum_{z\succ q} (\lambda |u_z| +(1-\lambda)(3-|u_z|))
    \end{equation}
    Choose $\lambda = 1-|y|/m$. Then
    \begin{equation}
        |\sigma|\|\sigma\|^{ZQ} \le \sum_{z\succ q} \left(|y| + \left(1-\frac{2}{3} |y|\right) |u_z|\right)
    \end{equation}
    Then it's sufficient to show that for any 1-chain $u\in V$, we have
    \begin{equation}
        |y| + \left(1-\frac{2}{3} |y|\right) |u_z| \le |u+y| +\frac{4}{3} \ch u
    \end{equation}
    Equivalently, it's sufficient to  show that
    \begin{equation}
        3 |u\cap y| -|u| |y| \le 2 \ch u 
    \end{equation}
    where we note that the left side is the covariance $3\Cov(u,y)$ between $u,y$ vectors.
    By symmetry between $y$ and $y+\cV$, it's sufficient to consider the cases where $|y|=0,1$. 
    The former case is trivial and thus we are only left with the singleton case $|y|=1$.
    In this case, if $u$ is a uniform constant so that $\ch u=0$, then it's clear that $\Cov(u,y)=0$ and thus the inequality holds. 
    Conversely, if $\ch u\ne 0$, then we see that the left-side is $\le 2$ and thus the inequality holds.
    Hence, the statement follows.
\end{proof}

\begin{remark}[Tightness]
    Since the local graph of $q$ is simply a set of disjoint vertices, one can further obtain $\zeta^{ZQ}$ for arbitrary $m\ge 2$ as follows.
    \begin{equation}
        \zeta^{ZQ} = 
        \begin{dcases}
            \left\lfloor \frac{m}{2} \right\rfloor & m\ne 3 \\
            \frac{6}{5} & m=3
        \end{dcases}
    \end{equation}
    The proof follows similarly to Theorem \ref{thm:q-layer-cleaning} but optimizes the constant using linear programming (LP).
    Since we do not use this fact, we omit the proof.
\end{remark}

\subsection{$x,q$ Cleaning}

\begin{theorem}[Contraction]
    \label{thm:contraction}
    Let $\Gamma$ be constructed with $m=3$. Then there exists a layer-local map $f$ which maps clean syndromes $\sigma \in \ker\bar{\delta}$ to qubits  such that
    \begin{equation}
        \sigma +\delta f(\sigma) \in \im \iota
    \end{equation}
    so that the \textbf{descent} $\desc \sigma = \pi (\sigma+\delta f(\sigma))$ is a syndrome in $C$, and 
    \begin{equation}
        |\desc\sigma| \le \kappa |\sigma|, \quad \kappa = \frac{4}{9} < \frac{1}{2}
    \end{equation}
    Moreover, $f(\sigma +\iota \noisy{\tau}) =f(\sigma)$ for any syndrome $\noisy{\tau}$ in $C$.
\end{theorem}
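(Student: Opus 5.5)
The plan is to build $f$ by cleaning the layers in order of decreasing shadow dimension: first the $x$ layers, then the $q$ layers, leaving the $z$ layers untouched. This is the same peeling order that underlies the perturbed projection $\pi=p_0(1+gh_0)^{-1}$ in Lemma \ref{lem:SDR}.

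\textbf{Step 1: the $x$ layers.} Write a clean syndrome as $\sigma=\sum_u \dnum^u\sigma$ over layers. For each $x$, I would first show that $\dnum^x\sigma$ is a 2-coboundary of $Y^x$:
\begin{itemize}
\item cleanliness against the meta-checks $c(\gamma,e)$ kills its class in $H^2(Y^x)\cong H^1(G^x)^{m-1}$, by Lemma \ref{lem:cohomology} and Definition \ref{def:hollowed-meta-checks};
\item layer-locality of $\bar\delta$ makes this check depend only on $\dnum^{\preceq x}\sigma=\dnum^x\sigma$, since $x$ is minimal.
\end{itemize}
Theorem \ref{thm:x-layer-cleaning} then supplies $f^x$, with no rail support, such that $\delta^x f^x=\dnum^x\sigma$ and $|g^{Qx}f^x|\le\frac97|\dnum^x\sigma|$. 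Replacing $\sigma$ by $\sigma_1=\sigma+\delta\sum_x f^x$ removes every $x$-layer component. The glued part $g^{Qx}f^x$ is pushed into $q$ layers only: $f^x$ sits on $xq\otimes v$ and hinges $qz\otimes v$, and $g^{zx}=0$. Hence
\begin{equation}
|\dnum^{q}\sigma_1|\le |\dnum^q\sigma|+\tfrac97\sum_{x\prec q}|\dnum^x\sigma|,
\end{equation}
with the $z$ layers unchanged.

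\textbf{Step 2: the $q$ layers.} $\sigma_1$ is still clean, because $\delta f$ is valid. Applying the same argument to $Y^q$ with the pair meta-checks $\gamma(zz',e)$ shows that $\dnum^q\sigma_1$ is a 1-coboundary. Theorem \ref{thm:q-layer-cleaning} then gives $f^q$ with $|g^{ZQ}f^q|\le\frac43|\dnum^q\sigma_1|$, landing in the vertices $z\otimes v$ of the $z$ layers.

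\textbf{Step 3: the $z$ layers and the projection.} After Step 2 the syndrome lives only in the $z$ layers. The meta-checks $z\otimes R$ force it to be constant along $V$ in each layer, so it lies in $\im\iota$. Moreover $\pi$ reads a single reference vertex, so each $z$ layer contributes weight $1$ per $3$ copies. This is the source of the factor $1/m=1/3$, and it gives
\begin{equation}
|\desc\sigma|\le \tfrac13\Bigl(|\sigma|_{Z\text{-layers}}+\tfrac43|\dnum^Q\sigma_1|\Bigr).
\end{equation}

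\textbf{Main obstacle: the bookkeeping giving $4/9$.} Naively chaining $\frac13\cdot\frac43\cdot\frac97$ gives $\frac{12}{21}\cdot\frac13=\frac{4}{7}\cdot\frac13$ on the $x$ part, $\frac49$ on the $q$ part and $\frac13$ on the $z$ part, so all parts are at most $\frac49$. The worry is that Step 1 adds to $q$ layers that already carry weight, so the weights do not simply add. I would therefore use the refined form of Theorem \ref{thm:q-layer-cleaning}: the bound is $|\sigma|_{\rm spokes}+\frac43|\sigma|_{\rm rails}$, and the $x$-layer cleaning deposits only on spokes $qz\otimes v$, via hinges. Spoke contributions then carry coefficient $1$ rather than $\frac43$, and the estimate closes by the triangle inequality. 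The delicate point is that the $\frac43$ factor and $\|\cdot\|^{ZQ}$ are minima, not linear, so I would need to check that the spoke/rail split is subadditive when the $x$-layer deposit is added. If that fails, I would instead fold the $x$ deposit directly into the optimisation over $f_{q,v^\star}$.

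\textbf{Layer-locality and passthrough.} Fix all choices deterministically per layer, making each $f^u$ a function of $\dnum^{\preceq u}$, which gives layer-locality. For passthrough, $\iota\noisy\tau$ consists of constant vertex chains in $z$ layers. These do not affect the $x$ or $q$ restrictions, so $f(\sigma+\iota\noisy\tau)=f(\sigma)$.
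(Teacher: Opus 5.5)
Your route is the paper's: clean the $x$ layers with a rail-free $f^x$ from Theorem~\ref{thm:x-layer-cleaning}, then clean the $q$ layers with Theorem~\ref{thm:q-layer-cleaning}. What remains is clean and lives on $z$ layers, so it equals $\iota\sigma'$ with $|\iota\sigma'|=3|\sigma'|$. Layer-locality and passthrough follow because $f$ only reads the $x$- and $q$-layer restrictions. The estimate you finally settle on, with spoke coefficient $1$ for the deposit $g^{QX}f^X$, is exactly the paper's:
\begin{equation*}
3|\desc\sigma|\le |\sigma|^Z+\sum_q\Bigl(|\dnum^q\sigma|_{\rm spokes}+|g^{qX}f^X|+\tfrac43|\dnum^q\sigma|_{\rm rails}\Bigr)\le |\sigma|^Z+\tfrac43|\sigma|^Q+\tfrac97|\sigma|^X\le\tfrac43|\sigma|.
\end{equation*}

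However, your reason for needing this refinement is wrong, and the Step~3 display as written does not prove the theorem.

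You computed $\frac43\cdot\frac97$ as $\frac{12}{21}$, but it is $\frac{12}{7}$. So naive chaining gives $\frac13\cdot\frac43\cdot\frac97=\frac47>\frac12$ on the $x$ part, not something below $\frac49$. Your Step~3 bound with $\frac43|\dnum^Q\sigma_1|$, fed by Step~1, therefore only yields $\frac47$. This is precisely the failure recorded in the paper's Single-Step Contraction remark, $\zeta^{ZQ}\zeta^{QX}/m>\frac12$.

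The spoke/rail form is thus the essential ingredient, not a hedge against non-additivity. It lowers the $x$ coefficient to $\frac13\cdot\frac97=\frac37<\frac49$, so the binding term is $\frac13\cdot\frac43=\frac49$ from the $q$ part.

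The delicate point you flag does not arise. Apply Theorem~\ref{thm:q-layer-cleaning} once to the combined cocycle $\dnum^q\sigma_1$. Its right-hand side is linear in spoke and rail weights, the rails are untouched by the deposit, and the triangle inequality on spokes finishes the estimate. No subadditivity of a minimum is needed.

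Finally, your per-$q$ bound $\frac97\sum_{x\prec q}|\dnum^x\sigma|$ overcounts by the qubit degree once summed over $q$. Bound the total deposit instead by
\begin{equation*}
\sum_q|g^{qX}f^X|\le\sum_x|g^{Qx}f^x|\le\tfrac97|\sigma|^X.
\end{equation*}
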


\begin{remark}[Single-Step Contraction]
    Follow the proof of Theorem \ref{thm:contraction}, as discussed previously, one may deduce a factor $\kappa = \zeta^{ZQ} \zeta^{QX}/m$ by a two step cleaning procedure.
    However, by Theorem \ref{thm:x-layer-cleaning} and Theorem \ref{thm:q-layer-cleaning}, this would result in $\kappa > 1/2$ for $m=3$. 
    % Since the factor $>1/2$, it would be insufficient to prove single-shotness as discussed in Section \red{***}.
    Hence, we optimize for the slightly more convoluted statements of Theorem \ref{thm:x-layer-cleaning} and \ref{thm:q-layer-cleaning}.
\end{remark}
\begin{proof}
    Let $\sigma= \sigma$ be a meta-check clean syndrome in $\Gamma$.
    By Definition \ref{def:gluing}, the restriction $\sigma|^x$ to the $x$-layer must be a 2-cocycle of $Y^x$.
    Since $Y^x$ is contractible, we see that it must also be a 2-coboundary and thus by Theorem \ref{thm:x-layer-cleaning}, there exists $f^x$ supported on $Y^x$, but not on rails $z\otimes e$, such that $\delta^x f=\sigma^x$ and
    \begin{equation}
        |g^{Qx} f^x| \le \frac{9}{7} |\sigma^x|
    \end{equation}
    Let $f^X$ denote the sum of $f^x$ over all $x\in X\super{i-1}$ and note that $\sigma^1=\sigma +\delta f^X$ has zero support on $x$-layers, while its restriction $\sigma^1|^q$ to a $q$ layer must be a 1-cocycle of $Y^q$.
    Since $Y^q$ is contractible, it must also be a 1-coboundary and thus by Theorem \ref{thm:q-layer-cleaning}, we see that there exists $f^q$ supported on $Y^q$ such that $\delta^q f^q=\sigma^1|^q$ and
    \begin{equation}
        |g^{Zq}f^q|\le |\sigma^1|^q|_{\rm spokes} +\frac{4}{3} |\sigma^1|^q|_{\rm rails}
    \end{equation}
    Note that by Theorem \ref{thm:x-layer-cleaning}, $\sigma^1|^q$ and $\sigma^q$ are equal on rails. Hence,
    \begin{equation}
        |g^{Zq}f^q|\le |\sigma|^q + g^{qX} f^X|_{\rm spokes} +\frac{4}{3} |\sigma|^q|_{\rm rails}
    \end{equation}
    Let $f^Q$ be the sum of $f^q$ over all qubits $q$ in $C$ and let $f=f^X+f^Q$.
    It's then clear that $\sigma\mapsto f$ is layer local and $\sigma^2 \equiv \sigma +\delta f$ is only supported on $z$ layers.
    Hence, we see that
    \begin{align}
        |\sigma^2| &= \sum_z |\sigma^2|^z \\ 
        &\le \sum_z |\sigma|^z +\sum_{q} |g^{Zq} f^q| \\
        &\le |\sigma|^Z + \sum_{q} |\dnum^q \sigma + g^{qX} f^X|_{\rm spokes} +\frac{4}{3} |\sigma|^q_{\rm rails} \\
        &\le |\sigma|^Z + \frac{4}{3} |\sigma|^Q + \sum_x |g^{Qx} f^x|  \\
        &\le |\sigma|^Z + \frac{4}{3} |\sigma|^Q + \frac{9}{7}|\sigma|^X  \\
        &\le \frac{4}{3} |\sigma|
    \end{align}
    Also note that $\sigma+\delta f\in \ker \bar{\delta}$ and thus by Definition \ref{def:gluing} must be in the image of the inclusion map $\iota$ so that
    \begin{equation}
        \sigma +\delta f =\iota \sigma'
    \end{equation}
    for some syndrome $\sigma'$ in $C$. Since $|\iota \sigma'| =3|\sigma'|$, we see that the inequality follows.
    Note that if $\sigma$ is a valid syndrome (not just meta-check clean), then $\sigma =\delta e$ for some $e$. Hence,
    \begin{equation}
        \sigma' = \delta \pi (e+f) \in \im \delta
    \end{equation}
    Hence, $\sigma'$ is a valid syndrome in $C$.
    Also note that by definition, $f$ satisfies the passthrough property in Definition \ref{def:cleaning}.
\end{proof}

% \begin{corollary}[Linear mod coBoundaries]
%     \label{cor:linear-mod-coboundaries}
%     Although the descent map is nonlinear in general, it is linear up to $\im \delta$, i.e.,
%     \begin{equation}
%         \desc(\sigma_1\super{i}+\sigma_2\super{i}) = \desc \sigma_1\super{i} +\desc \sigma_2\super{i} + \im \delta
%     \end{equation}
% \end{corollary}
% \begin{proof}
%     Let $\iota$ be the inclusion map. Then by Theorem \ref{thm:contraction},
%     \begin{equation}
%         \iota (\desc(\sigma_1\super{i}+\sigma_2\super{i}) = \desc \sigma_1\super{i} +\desc \sigma_2\sigma)\in  \im \delta
%     \end{equation}
%     Let $\pi:\Gamma\to C$ denote the projection chain map (without meta-checks) so that $\pi \iota $ is the identity, and thus the statement follows.
% \end{proof}

\subsection{Fill Bound}

As suggested by Definition \ref{def:level-map}, we must obtain the fill bound factor $c_f$.

\begin{theorem}[Fill Bound]
    \label{thm:fill-bound}
    Let $f$ be the cleaning map in Theorem \ref{thm:contraction} for $\Gamma$. Then for any clean syndrome $\sigma$ with $f=f(\sigma)$,
    \begin{equation}
        |\sh \{\succeq f\} \backslash \sh \sigma| \le |\sh f(\sigma) \backslash \sh \sigma| \le c_f |\sigma|
    \end{equation}
    where $c_f<3$
\end{theorem}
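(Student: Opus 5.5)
The plan is to count, layer by layer, the cells of $C$ whose layers receive support from $f$ but not from $\sigma$. By the proof of Theorem \ref{thm:contraction}, $f=f^X+f^Q$, where each $f^x$ is supported in $Y^x$ and each $f^q$ in $Y^q$. So $\sh f\subseteq\{x:f^x\ne 0\}\cup\{q:f^q\ne0\}$.

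The first inequality is handled separately. Since $\delta$ is layer-local and raises the shadow, any cell $\succeq f$ sits either in a layer $Y^u$ with $u\in\sh f$, or in a $Z$-type cell in a $z$ layer reached through $g^{zq}$. In the latter case that $z$ carries $\sigma^2=\sigma+\delta f$. I would bound such $z$ through the $q$-layer that feeds it, so that $\sh\{\succeq f\}\setminus\sh\sigma$ is controlled by $\sh f\setminus\sh\sigma$. If this containment is not literal, I would instead absorb the extra $z$ cells into the same count, using $|g^{Zq}f^q|$ as in the proof of Theorem \ref{thm:contraction}.

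The $x$-layers are straightforward. If $\sigma|^x=0$, the cleaning of Theorem \ref{thm:x-layer-cleaning} may be chosen with $f^x=0$ (use $f(0)=0$ layerwise). Hence $x\in\sh f$ forces $x\in\sh\sigma$, and $x$-layers contribute nothing to $\sh f\setminus\sh\sigma$.

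The main work is the $q$-layers. A qubit $q\notin\sh\sigma$ can still have $f^q\ne0$, because $\sigma^1|^q=g^{qX}f^X|^q$ may be nonzero through gluing from some $x\prec q$ with $f^x\ne0$. I would charge each such $q$ to a spoke in $g^{qX}f^X$. Since $f^x$ avoids rails, $g^{Qx}f^x$ lands on spokes, and each nonzero $f^q$ with $\sigma|^q=0$ requires $|g^{qX}f^X|^q\ge1$. Therefore
\[
|\{q\in\sh f:q\notin\sh\sigma\}|\le\sum_x|g^{Qx}f^x|\le\tfrac{9}{7}|\sigma|^X,
\]
by Theorem \ref{thm:x-layer-cleaning}. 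Actually a qubit receiving nonzero spoke input produces a 1-coboundary of weight at least 2 (a cocycle on the star of $q$), so one can sharpen the count to $\tfrac{9}{14}|\sigma|^X$ if needed. Either way $c_f\le 9/7<3$.

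The step I expect to be the main obstacle is verifying the first inequality $\sh\{\succeq f\}\subseteq\sh f\cup\sh\sigma$ modulo $z$-layers. Concretely, I must check that every cell $\succeq f$ whose shadow lies outside $\sh f$ is a $z$-cell where $\sigma$ or $\delta f$ already appears. If a stray $z$ in a $z$ layer occurs, I would bound those by $|g^{Zq}f^q|$. Theorem \ref{thm:q-layer-cleaning} bounds this by $\tfrac43|\sigma|^Q+\tfrac97|\sigma|^X$, which added to the $q$ count still gives a total below $3|\sigma|$.
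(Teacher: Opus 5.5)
Your fallback argument is correct, but drop the ``main obstacle'': the first inequality of the statement is false in general, so no containment argument can prove it. Since $f=f^X+f^Q$ lives in $\bigoplus_x Y^x_1\oplus\bigoplus_q Y^q_0$, $\sh f$ never contains a $z$. By contrast, $\sh\{\succeq f\}$ can contain a $z$ with $\sigma|^z=0$.

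Here is an example. Take $x\prec q^\pm\prec z$, a repetition edge $e_0=v_0v_1$, and $\sigma=\delta(z\otimes e_0)$ for the rail $z\otimes e_0$ of $Y^x$. This $\sigma$ consists of the side faces $q^\pm z\otimes e_0$ in $Y^x$ and the rails $z\otimes e_0$ in $Y^{q^\pm}$, so $\sh\sigma=\{x,q^+,q^-\}$. Because $f^x$ avoids rails, the rail survives in $(\sigma+\delta f^X)|^{q^\pm}$. Hence $f^{q^\pm}$ must contain a vertex $z\otimes v$ of $Y^{q^\pm}$, and so $z\in\sh\{\succeq f\}\setminus\sh\sigma$. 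With the cleaning of Theorem~\ref{thm:contraction}, however, $\sh f=\{x,q^+,q^-\}\subseteq\sh\sigma$.

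The same example refutes your remark that such a $z$ carries $\sigma+\delta f$. The cleaning gives $f^{q^\pm}=z\otimes v_0$, the two gluing images cancel, and in fact $\sigma+\delta f=0$. The paper's proof never addresses the middle term either. It only bounds $|\sh\{\succeq f\}\setminus\sh\sigma|$, which is all that Bounded Fill in Definition~\ref{def:cleaning} and Lemma~\ref{lem:witness-size} use.

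With that target, your argument matches the paper on the $x$- and $q$-layers. New $x$'s are excluded by layer-locality and $f(0)=0$. Each new $q$ is charged to a spoke of some $g^{qx}f^x$, giving $\tfrac97|\sigma|^X$.

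You differ from the paper on the $z$-layers:
\begin{itemize}
\item The paper uses $(\sigma+\delta f)|^z=g^{zQ}f^Q|^z$ when $\sigma|^z=0$, bounds the count by $|\iota\desc\sigma|\le\tfrac43|\sigma|$, and gets $c_f=\tfrac43+\tfrac97$.
\item You bound it by $\sum_q|g^{Zq}f^q|\le\tfrac43|\sigma|^Q+\tfrac97|\sigma|^X$ from the proof of Theorem~\ref{thm:contraction}. This gives a total of $\tfrac{18}{7}|\sigma|^X+\tfrac43|\sigma|^Q\le\tfrac{18}{7}|\sigma|$.
\end{itemize}
The paper's route neatly recycles the conclusion of the contraction theorem, but it is literally valid only for $\sh(f\cup\delta f)$. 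In the example above, $g^{zQ}f^Q|^z=0$ even though $z\in\sh\{\succeq f\}$. Your per-$q$ count is immune to such cancellations, recovers the paper's constant, and slightly improves it.

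Two small repairs are needed:
\begin{itemize}
\item A spoke of $g^{qx}f^x$ can lie in $\{\succeq f\}$ with $f^q=0$, for instance when hinge contributions from two $x,x'\prec q$ cancel in $g^{qX}f^X$. So count $\{q\in\sh\{\succeq f\}:\sigma|^q=0\}$ directly, as the paper does, rather than $\sh f\setminus\sh\sigma$. The same bound results.
\item The $\tfrac{9}{14}$ sharpening fails when $|\delta q|=1$, since then $\delta^q(q\otimes v)$ is a single spoke. You do not need it.
\end{itemize}
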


\begin{proof}
    Note that
    \begin{align}
        |\sh \{\succeq f\} \backslash \sh \sigma| &= \sum_{u} 1\{u\in \sh\{\succeq f\},\sigma|^u =0\} \\
        &= \sum_{u=q,z} 1\{u\in \sh\{\succeq f\},\sigma|^u =0\}
    \end{align}
    where we do not need to consider $x$ checks since if $\sigma|^x=0$, then by layer-locality, we have $f|^x=0$.
    First consider $u=q$. If $q\in \sh\{\succeq f\}$, then we see that either $f|^q\ne 0$ or there exists $x\prec q$ such that $g^{qx} f|^x\ne 0$.
    Since $\sigma|^q=0$, we see that if $g^{qX} f|^X=0$, then by construction, we would have chosen $f|^q=0$ and thus either possibility would imply that there exists $x\prec q$ such that $g^{qx} f|^x\ne 0$.
    Hence,
    \begin{align}
        \sum_{q} 1\{q\in \sh\{\succeq f\},\sigma|^q=0\} &\le \sum_{q} \sum_{x\prec q} 1\{ g^{qx} f|^x\ne 0,\sigma|^q=0\} \\
        &\le \sum_{x} \sum_{q\succ x} |g^{qx} f^X| \\
        &\le \frac{9}{7} |\sigma^X|
    \end{align}
    Now consider $u=z$. By definition, if $z\in \sh\{\succeq f\}$, then $g^{zq}f|^q\ne0$ for some $q\prec z$. Hence,
    \begin{align}
        \sum_{z} 1\{z\in \sh\{\succeq f\},\sigma|^z=0\} &\le \sum_{z} |g^{zQ} f^Q| 1\{\sigma|^z=0\} \\
        &\le |\iota \desc \sigma| \\
        &\le \frac{4}{3} |\sigma|
    \end{align}
    Combining the two results in $c_f = 4/3+ 9/7 <3$
\end{proof}

%% file: dual.tex
\section{Dual Step $\Gamma_*$}
\label{sec:dual}
In this section, we collect the explicit facts/notation that is captured implicitly in $\bar{\Gamma}_*=\bar{\Gamma}_*(C)$.
Specifically, write
\begin{equation}
    \bar{\Gamma}_*(C) = MX(\bar{\Gamma}_*)\to X(\Gamma_*)\to Q(\Gamma_*)\to Z(\Gamma_*)
\end{equation}
Then Lemma \ref{lem:SDR} is replaced with the diagram
\begin{equation}
    \begin{tikzpicture}[baseline]
    \matrix(a)[matrix of math nodes, nodes in empty cells, nodes={minimum size=25pt},
    row sep=2em, column sep=3em,
    text height=1.25ex, text depth=0.25ex]
    {&X& Q  & Z \\
     MX(\bar{\Gamma}_*) &X(\Gamma_*)& Q(\Gamma_*)  & Z(\Gamma_*)\\};
    \path[-left to,font=\scriptsize,transform canvas={yshift=0.2ex}]
    (a-1-2) edge (a-1-3)
    (a-1-3) edge (a-1-4)
    (a-2-1) edge (a-2-2)
    (a-2-2) edge (a-2-3)
    (a-2-3) edge (a-2-4);
    % \path[left to-,font=\scriptsize,transform canvas={yshift=-0.2ex}]
    % (a-2-2) edge node[below]{$\phi$}  (a-2-3)
    % (a-2-3) edge node[below]{$\phi$}  (a-2-4);
    \path[-left to,font=\scriptsize,transform canvas={xshift=0.2ex}]
    (a-1-2) edge node[right]{$\iota$}  (a-2-2)
    (a-1-3) edge node[right]{$\iota$}  (a-2-3)
    (a-1-4) edge node[right]{$\iota$}  (a-2-4);
    \path[left to-,font=\scriptsize,transform canvas={xshift=-0.2ex}]
    (a-1-2) edge node[left]{$\pi$}  (a-2-2)
    (a-1-3) edge node[left]{$\pi$}  (a-2-3)
    (a-1-4) edge node[left]{$\pi$}  (a-2-4);
    \end{tikzpicture}
\end{equation}
where $\pi=\pi_{\Gamma_*}:\Gamma_*\to C$ denote the parity map (transpose of the inclusion) so that $(\pi \gamma)|_u$ is the parity of vertices in $\gamma|^{u}$ relative to the layer $Y^{*u}$ for any cell $u=x,q,z$, and $\iota=\iota_{\Gamma_*}$ is a layer-local map such that $\iota,\pi$ form a SDR of $C,\Gamma_*^\circ$.
Furthermore, one can obtain a cleaning for $Z$ syndromes in $\Gamma_*$ as follows

\begin{theorem}[Dual Step Cleaning]
    \label{thm:dual}
    There exists a layer-local cleaning $f$ on clean syndromes $\sigma$ in $\Gamma_*$ such that $\sigma + \delta f(\sigma)\in \im \iota$ and 
    \begin{equation}
        |\desc \sigma| \le |\sigma|
    \end{equation}
    where $\desc \sigma = \pi (\sigma+\delta f(\sigma))$.
    Moreover,
    \begin{equation}
        |\sh \{\succeq f(\sigma)\} \backslash \sh \sigma| =0
    \end{equation}
    and
    \begin{equation}
        f(\sigma +\iota \noisy{s}) = f(\sigma)
    \end{equation}
\end{theorem}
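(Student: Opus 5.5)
We will work directly with the transpose of the primal construction. Write $\Gamma_*=\Gamma(C^\top)^\top$. Its $Z$-checks are the $0$-cells (vertices) of the hollowed tensors $Y^{u}$ built from $C^\top$, over shadows $u$ that are $Z$-checks, qubits or $X$-checks of $C$. Its meta-checks $MX(\bar{\Gamma}_*)$ lie on the $X$ side, so in the $Z$ sector every syndrome $\sigma\in Z(\Gamma_*)$ is automatically clean. The map $\pi$ is the layerwise parity: $(\pi\sigma)|_z$ is the parity of the vertices of $\sigma|^z$ in the layer over $z$. Any $Z$-check of $\Gamma_*$ has shadow a $Z$-check $z$ of $C$ or a cell below it, since monotonicity of the shadow keeps $Z$-checks of $\Gamma_*$ inside the layers $Y^{*z}$. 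In fact the $Z$-checks of $\Gamma_*$ are exactly the vertex sets $z\otimes V$ of the $z$-layers.

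The plan is to define $f$ layer by layer. Within each layer $Y^{*z}$ (a copy of $\cone G^z\otimes V+G^z\otimes R$ transposed), the image $\im\iota$ restricted to that layer is spanned by a chosen representative of the nontrivial parity class. I will use the SDR of Lemma \ref{lem:SDR} applied to $C^\top$ and transposed; this gives $\iota_{\Gamma_*}$ layer-local. The cleaning is then $f(\sigma)|^z\equiv \phi$-type homotopy applied to $\sigma|^z+\iota\pi\sigma|^z$. More concretely, $\sigma|^z+\iota(\pi\sigma)|_z$ has even parity in the layer, so it is a coboundary of qubits lying inside the same layer; the layer is contractible, meaning the only degree-$Z$ cohomology there is the parity. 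I choose such a preimage $f|^z$ supported in $Y^{*z}$ itself. This gives $\sigma+\delta f=\iota\pi\sigma$, hence $\desc\sigma=\pi\sigma$.

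Three checks then remain. Since parity never exceeds weight, $|\desc\sigma|=|\pi\sigma|\le\sum_z 1\{|\sigma|^z|\text{ odd}\}\le|\sigma|$. Because $f|^z$ is supported in the layer over $z$ only when $\sigma|^z\ne0$, and $Z$-checks are maximal cells, $\{\succeq f\}$ stays in layers $z$ with $\sigma|^z\neq 0$. This gives the zero fill bound. Passthrough follows because $\iota\noisy{s}$ changes $\sigma|^z$ by an element of $\im\iota$, which does not affect $\sigma|^z+\iota\pi\sigma|^z$; I define $f$ via this quantity, so it is invariant. The requirement $f(0)=0$ is immediate.

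The main obstacle is the support claim in the fill-bound step. The qubits whose coboundary lands in $Y^{*z}$ include those in lower layers: in the transpose, the gluing maps go downward in shadow. The preimage must therefore be chosen using only qubits whose shadow is $z$, and I need to verify that such a preimage exists. I would try the first option, namely showing that the restriction of $\delta$ to qubits in the $z$-layer, mapping into $Z$-checks in the $z$-layer, is itself the layer cochain complex with parity-only cokernel. This follows from Remark \ref{rem:hollow-contractible} for $\bar{Y}^z$ after transposition. If this fails, the fallback is to allow support on $q$-layers below $z$, which still gives up-set shadows inside $\sh\sigma$ by monotonicity. This is acceptable because the fill bound only counts shadows of $\{\succeq f\}$, and these are $Z$-checks over $z$.
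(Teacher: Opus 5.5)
Your proposal is correct and is essentially the paper's proof. The paper takes $\iota(z)$ to be a reference vertex $v^\star$ of the layer over $z$ and sends each vertex $v$ to a shortest path from $v^\star$ to $v$ inside that layer; this is just a particular linear choice of your in-layer preimage of $\sigma|^z+\iota\pi\sigma|^z$, and then $\desc\sigma$ is the layerwise parity, the fill is zero because $z$ is maximal, and passthrough holds exactly as you argue.

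Two slips need fixing. First, by construction (not by monotonicity), the $Z$-checks of $\Gamma_*$ are all vertices $u\otimes v$ with $u\preceq z$ of the hollowed tensor that $z$ generates as an $X$-check of $C^\top$, not just $z\otimes V$, and there is no rail $z\otimes e$; so the in-layer preimage exists because this layer's $1$-skeleton is connected ($H^0=\F_2$), not because of $\bar{Y}^z=z\otimes R$. Second, your fallback of using qubits in $q$-layers is inadmissible: layer-locality forces $\dnum^q f(\sigma)=\dnum^q f(0)=0$, and such qubits would add $q\notin\sh\sigma$ to $\sh\{\succeq f\}$, though the fallback is never needed.
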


\begin{proof}
    Note that a syndrome $\sigma$ in $Z(\Gamma_*)$ is simply a subset of vertices in (possibly many) $z$ layers.
    For each $z$ layer, we can choose a fixed reference vertex $v^{\star}$, and let $f$ denote the linear extensions of mapping each vertex $v$  to a minimum length path in the $z$ layer connecting $v^{\star}\lr v$.
    It's then clear that $f$ is layer-local and $\sigma +\delta f(\sigma)\in \im \iota$.
    In particular, $\desc \sigma$ is simply the parity of $\sigma|^z$ at each $z$-layer and thus the inequality follows.
    Also note that if $u\in \sh\{\succeq f\}$, then $u=z$ is a $z$-check and thus by definition, we must have $\sigma|^z\ne 0$. Hence, $c_f=0$.
    Since $f$ is linear, we see that $f(\sigma+\iota \noisy{s}) =f(\sigma) +f(\iota \noisy{s})=f(\sigma)$.
\end{proof}

%% file: metric.tex
\section{Metric Contraction}
\label{sec:metric}

We verify the geometric properties used in Section~\ref{sec:local-repair}. All distances below are on the bare graph, without meta-check vertices or edges. The repetition size and input sparsity bounds are fixed. 
% For meta-check locality, we omit zero stabilizer generators, so every input check is incident to a qubit. Isolated qubits may be treated separately. These conventions are preserved by the construction.

\begin{lemma}[Metric Contraction]
    \label{lem:explicit-metric-contraction}
    Each half-step $F=\Gamma,\Gamma_*$ satisfies
    \begin{equation}
        \dist_C(\sh u,\sh v)\le \tfrac12\dist_F(u,v)+2.
    \end{equation}
    Consequently, $\Theta=\Gamma_*\circ\Gamma$ and $\Theta_*=\Gamma\circ\Gamma_*$ have metric contraction $(4,3)$.
\end{lemma}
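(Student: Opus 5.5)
The plan is to prove the half-step bound for $F=\Gamma$ directly from the layer structure in Definition \ref{def:gluing}. The dual case $F=\Gamma_*$ then follows because $\Gamma_*(C)=\Gamma(C^\top)^\top$ has the same bare graph as $\Gamma(C^\top)$ with the same shadows, and $\dist_{C^\top}=\dist_C$. Finally, compose the two half-steps.

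First I show that $\sh$ is $1$-Lipschitz on bare-graph edges. Every bare edge $u\prec v$ of $\Gamma$ is either internal to a layer $Y^c$, in which case $\sh u=\sh v$, or a gluing edge from $Y^{c}$ to $Y^{c'}$ with $c\prec c'$. A gluing edge induced by $g^{qx}$ or $g^{zq}$ therefore moves the shadow by one bare edge of $C$. This already gives $\dist_C(\sh u,\sh v)\le \dist_F(u,v)$. The content of the factor $\tfrac12$ is that consecutive gluing edges cannot advance too fast.

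Next I take a geodesic $u=u_0,\dots,u_L=v$ in $F$ and project it to the walk $\sh u_0,\dots,\sh u_L$, deleting the repeats. The key observation is that two consecutive shadow-changing steps must be separated by at least one intra-layer step, unless they backtrack.

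The reason comes from the cell types involved. An inter-layer edge from $Y^x$ to $Y^q$ lands on a vertex $q\otimes v$ or an edge $qz\otimes v$ in $Y^q$. Leaving $Y^q$ toward another layer requires a cell glued either down to some $Y^{x'}$ or up to $Y^z$, and these gluings attach at different cells of $Y^q$: the vertex $q\otimes v$ versus the spoke $qz\otimes v$. For instance, the path $x\otimes v\to xq\otimes v\ (\in Y^x)\to q\otimes v\ (\in Y^q)\to qz\otimes v\to z\otimes v\ (\in Y^z)$ uses the intra-layer edge $q\otimes v\prec qz\otimes v$. Similarly, crossing from $Y^x$ to $Y^{x'}$ through $Y^q$ requires entering and leaving $Y^q$ at the copies of two different cells of $Y^x$, $Y^{x'}$, which are generally distinct vertices of $Y^q$.

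I will check this case by case using the diagrams in Eq. \eqref{eq:Y-diag}. In each case a path of length $k$ in $F$ realizes at most $\lceil k/2\rceil+O(1)$ net shadow steps. The additive constant $2$ absorbs the boundary effects at the two endpoints, where the first or last step may be a crossing without a preceding intra-layer step.

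The main obstacle is making this alternation claim rigorous. The delicate cases are paths that cross gluing edges twice in a row, going through a cell lying in two gluing images. For example, $qz\otimes v\in Y^x$ is glued to $qz\otimes v\in Y^q$, which is glued to $z\otimes v\in Y^z$ only via $\iota$-images of vertices. If some such double crossing does occur in one step, I would instead bound the length of the shadow walk by a charging argument: each crossing of $C$-distance $1$ is charged to at least two $F$-edges, allowing at most a constant number of uncharged crossings per path. I expect the type bookkeeping here, namely which cell types of $Y^c$ are adjacent to which cell types of $Y^{c'}$, to be where the argument could fail. If it does, I would fall back to a weaker slope $\tfrac12$ with a larger additive constant.

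Finally, the composition. Applying the half-step bound twice gives
\[
\dist_C(\sh\sh u,\sh\sh v)\le \tfrac12\left(\tfrac12\dist_\Theta(u,v)+2\right)+2=\tfrac14\dist_\Theta(u,v)+3 .
\]
This yields metric contraction $(4,3)$ for $\Theta=\Gamma_*\circ\Gamma$, and the same computation applies to $\Theta_*=\Gamma\circ\Gamma_*$.
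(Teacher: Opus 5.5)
\paragraph{Gap: the alternation claim is false.} Your reduction of $\Gamma_*$ to $\Gamma$ by transposition, the observation that $\sh$ is $1$-Lipschitz along bare edges, and the composition $\tfrac12(\tfrac12 d+2)+2=\tfrac14 d+3$ are all fine. The problem is the central alternation claim.

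Gluing edges do not attach at disjoint cells of $Y^q$. The qubit $z\otimes v\in Y^q$ is glued from the $X$-check $z\otimes v\in Y^x$ for every $x\prec q$. It is also glued to the $Z$-check $z\otimes v\in Y^z$. That check is in turn glued to $z\otimes v\in Y^{q'}$ for every $q'\prec z$. So
\[
z\otimes v\,(Y^x)\to z\otimes v\,(Y^q)\to z\otimes v\,(Y^z)\to z\otimes v\,(Y^{q'})\to z\otimes v\,(Y^{x'})
\]
is a run of four consecutive gluing edges whose shadow walk $x\to q\to z\to q'\to x'$ need not backtrack. Already the first two steps advance the shadow by $\dist_C(x,z)=2$ with no intra-layer step between them.

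Your case examples also misread the incidences:
\begin{itemize}
\item $g^{qx}(xq\otimes v)=0$, so $xq\otimes v$ is glued to nothing.
\item The spoke $qz\otimes v\in Y^q$ has no bare gluing edge into $Y^z$; only the vertices $z\otimes v$ do.
\item Your list of landing cells in $Y^q$ omits exactly $z\otimes v$, the cell carrying both gluings.
\end{itemize}

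The fallback does not rescue the argument. Such runs can recur between any two internal edges of a path, so nothing bounds the number of ``uncharged'' crossings by a constant. Charging per crossing is the wrong bookkeeping. Moreover, a larger additive constant would not prove the stated $+2$, and hence would not give $(4,3)$.

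\paragraph{Missing idea: a second, gluing-invariant label.} The paper assigns to each bare cell $a$ its terminal input cell $t(a)$:
\begin{itemize}
\item $b$ for a local vertex labeled $b$;
\item $b$ for an edge labeled $ab$;
\item $z$ for a face labeled $xz$;
\item unchanged by tensoring with a repetition vertex or edge.
\end{itemize}
This label has three properties: $\dist_C(\sh a,t(a))\le 2$; $t$ is constant along gluing edges (the run above has $t\equiv z$ throughout); and $t$ moves by at most one input edge along internal edges. Meanwhile $\sh$ is constant along internal edges and moves by one along gluing edges.

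For a path with $I$ internal and $J$ gluing edges, the two projected walks then give
\[
\dist_C(\sh u,\sh v)\le\min\{J,\,I+4\}\le\tfrac12(I+J)+2,
\]
which is the half-step bound with constant $2$. Runs of consecutive gluing edges are harmless because they stay within distance $2$ of a fixed $t$. That control is what your shadow-only argument lacks.
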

\begin{proof}
    First consider $\Gamma$. Besides the shadow $s(a)=\sh a$, associate to each bare cell $a$ its terminal input cell $t(a)$: a local vertex labeled $b$ has $t=b$, an edge labeled $ab$ has $t=b$, and a face labeled $xz$ has $t=z$. Tensoring with a repetition vertex or edge does not change $t$. Thus $\dist_C(s(a),t(a))\le2$.

    An edge of the bare graph is either internal to a layer or a gluing incidence. Along internal edges, $s$ is constant and $t$ changes by at most one input edge. Along gluing edges, $t$ is constant and $s$ changes by one input edge. This follows directly from the cell labels in Definition~\ref{def:gluing}; repetition incidences change neither label.
    If a path from $u$ to $v$ contains $I$ internal and $J$ gluing edges, its two projected walks give
    \begin{equation}
        \dist_C(s(u),s(v))\le \min\{J,I+4\}
        \le \tfrac12(I+J)+2.
    \end{equation}
    Taking a shortest path proves the claim. Transposition preserves the bare graph, so the same argument applies to $\Gamma_*$. Applying the two inequalities successively gives
    \begin{equation}
        \dist_C(\sh u,\sh v)
        \le \tfrac14\dist_\Theta(u,v)+\tfrac12\cdot2+2,
    \end{equation}
    and similarly for $\Theta_*$.
\end{proof}

\begin{lemma}[Bounded Lifts]
    \label{lem:metric-bounded-lifts}
    For either half-step $F$, let $\Delta_F$ bound the number of bare cells in one layer and put $D_F=2\Delta_F$. Then
    \begin{equation}
        \dist_F(u,v)\le 2D_F+3\dist_C(\sh u,\sh v).
    \end{equation}
\end{lemma}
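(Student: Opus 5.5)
The plan is to prove the bound by lifting a shortest path in $C$ to a path in $F$, using that each layer is connected and of bounded size. Put $k=\dist_C(\sh u,\sh v)$ and fix a shortest bare path $\sh u=c_0,c_1,\dots,c_k=\sh v$ in $C$. Consecutive cells satisfy $c_j\prec c_{j+1}$ or $c_{j+1}\prec c_j$. The lifted path will alternate between two kinds of moves: jumps across a gluing incidence from layer $Y^{c_j}$ to $Y^{c_{j+1}}$, and walks inside a single layer. The inner walks will be arranged so that their lengths stay bounded, except possibly at the two endpoints, where we pay at most $D_F$ each.

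\textbf{Step 1 (bounded intra-layer diameter).} Each layer $Y^c$, viewed as a bare graph, should be connected. For $\Gamma$ this follows from the construction in Definition~\ref{def:hollowed-tensor}. The cones $K^x,K^q,K^z$ are connected: $K^x$ through the cone vertex, using Hypothesis~\ref{hyp:connected-local-graph}. The repetition edges $q\otimes e$ and $z\otimes e$ then link the $m$ copies. For $z$-layers, the rails $z\otimes e$ play this role; in $Y^z=z\otimes V$ one uses the bare edges through the glued cells of neighbouring layers, or one argues directly with the dual convention for $\Gamma_*$. Any path inside a connected graph on at most $\Delta_F$ cells has length $<\Delta_F\le D_F$, so the intra-layer distance is at most $D_F$. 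If some layer (for instance $Y^z$ in $\Gamma$) is not internally connected, I would instead bound distances through one neighbouring layer, which costs $O(\Delta_F)$. This is why the constant is $D_F=2\Delta_F$ rather than $\Delta_F$.

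\textbf{Step 2 (a bounded gluing hop).} For each adjacent pair $c\prec c'$ in $C$, I want the following: for every cell $a\in Y^{c}$ there is a cell $a'\in Y^{c'}$ with $\dist_F(a,a')\le 3$, and conversely. A crude version of this is too weak, because it would force a full layer traversal of cost $D_F$ at every step and give $kD_F$ rather than $3k$. So I would use the explicit gluing maps $g^{qx},g^{zq}$ of Definition~\ref{def:gluing}. For $x\prec q$, the vertex $q\otimes v\in Y^x$ is glued to $q\otimes v\in Y^q$. For $q\prec z$, the vertices $z\otimes v$ are identified through $qz\otimes v$ incidences. The aim is to choose the lifted cells as these \emph{matching copies} labelled $c_{j+1}\otimes v$, keeping a common repetition vertex $v$ throughout. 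Then each step $c_j\to c_{j+1}$ moves from $c_j\otimes v$ to $c_{j+1}\otimes v$ within distance $3$: one edge labelled $c_jc_{j+1}$ and at most two incidences through a hinge or face. The label bookkeeping in the proof of Lemma~\ref{lem:explicit-metric-contraction} (shadow and terminal cell) is the tool for checking this.

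\textbf{Step 3 (assembly).} From $u$, walk inside $Y^{\sh u}$ to the copy $c_0\otimes v$, which costs at most $D_F$. Take the $k$ hops of length at most $3$ each. Finally walk inside $Y^{\sh v}$ to $v$, again at most $D_F$. This gives $\dist_F(u,v)\le 2D_F+3k$. For $F=\Gamma_*$, transposition preserves the bare graph, so the same argument applies verbatim.

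The main obstacle is Step 2. One must check case by case ($x\prec q$ versus $q\prec z$, and both orientations of traversal) that a matching copy $c_{j+1}\otimes v$ in the next layer lies within three bare edges of $c_j\otimes v$, without re-entering the interior of a layer. If some case fails, I would relax the constant by absorbing an extra bounded detour into the endpoint budget, which is where the factor $2$ in $D_F$ gives slack.
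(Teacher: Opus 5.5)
Your proposal is correct and follows essentially the same route as the paper. You fix a common repetition vertex $v_0$ and lift each input edge $a\prec b$ of a shortest path by the three-edge path $a\otimes v_0\leftrightarrow ab\otimes v_0\leftrightarrow b\otimes v_0$ inside $Y^a$, followed by the gluing incidence to $b\otimes v_0\in Y^b$, traversed in reverse when $b\prec a$. Each endpoint then costs at most $D_F=2\Delta_F$, since you may walk through at most two layers; this covers internally disconnected layers such as $Y^z$ in $\Gamma$, or $Y^q$ when $q$ has no $z$-neighbour. The case check you flag as the main obstacle is exactly this one-line path, so your fallback is never needed — and it would not have worked anyway, because per-step detours accumulate with $\dist_C(\sh u,\sh v)$ and cannot be absorbed into the endpoint budget.
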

\begin{proof}
    Fix a repetition vertex $v_0$ and represent each input cell $a$ by the vertex $r_a=a\otimes v_0$ in its own layer. For an input edge $ab$, the cone edge $ab\otimes v_0$ and the gluing incidence give a three-edge path
    \begin{equation}
        r_a\;\longleftrightarrow\; ab\otimes v_0\text{ in }Y^a
        \;\longleftrightarrow\; b\otimes v_0\text{ in }Y^a
        \;\longleftrightarrow\;r_b.
    \end{equation}
    Hence an input path lifts with length at most three times its length.

    Every bare cell in $Y^a$ connects to $r_a$ using at most two layers. Within a sheet, cone edges connect all vertices, and each bare edge or face is incident to this connected part. Different sheets connect through boundary rails. For an $x$-layer, use a rail at any incident $q$; for a $q$-layer with a $z$ neighbor, use a $z$ rail. If $q$ has no $z$ neighbor, use a $q$ rail in an incident $x$-layer and the gluing edges. For a $z$-layer, use a $z$ rail in an incident $q$-layer. The required neighbor exists under the stated convention. These paths lie in a connected graph of at most $2\Delta_F$ bare cells, so their length is at most $D_F$. Joining the two endpoint paths to the lifted input path proves the bound. The dual case follows by transposition.
\end{proof}

\begin{lemma}[Locality of Meta-Checks]
    \label{lem:explicit-meta-locality}
    The $Z$-sector meta-checks of $\bar\Theta=\bar\Gamma_*\circ\bar\Gamma$ are local relative to the bare graph. One may take
    \begin{equation}
        \gamma_\Gamma=2D_\Gamma+12,\qquad
        \gamma=2D_{\Gamma_*}+3\gamma_\Gamma.
    \end{equation}
    These constants depend only on the fixed repetition size and sparsity bounds. The analogous statement holds for the $X$ sector of $\bar\Theta_*$.
\end{lemma}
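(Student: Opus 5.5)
The plan is to prove locality first for a single half-step $\bar\Gamma$, giving the constant $\gamma_\Gamma$. We then transfer it through $\Gamma_*$, using the composition rule for meta-checks in Lemma \ref{lem:level-map-composition}. There the $Z$-sector meta-checks of $\bar\Theta$ are of two kinds. The first kind is the new meta-checks $\bar\delta_2$ of the outer step, which here is $\bar\Gamma_*$. This step adds only $X$-sector meta-checks, so it contributes nothing to the $Z$ sector; if any such meta-checks did exist, we would handle them by the direct argument below. The second kind is the inherited meta-checks $\bar\delta_1\pi_2$, where $\pi_2$ is the projection of $\Gamma_*$ and $\bar\delta_1$ is the $Z$-sector meta-check map of $\bar\Gamma$. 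So the work is to bound the bare distance between any two $z$-checks of $\Theta$ that lie in a common inherited meta-check.

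\emph{Step 1 (single step).} We take a meta-check of $\bar\Gamma$ of type $c(\gamma,e)$ in $\bar Y^x$, $\gamma(zz',e)$ in $\bar Y^q$, or $z\otimes e$ in $\bar Y^z$, and read off its $Z$-check support from Definition \ref{def:hollowed-meta-checks} and the gluing of Definition \ref{def:gluing}.

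\begin{itemize}
\item \emph{Local part.} Within the layer, the support consists of cells $xz\otimes v$, $qz\otimes e$ along a simple cycle $\gamma\subseteq G^x$, or the analogous cells in $Y^q$ and $Y^z$. Every such cell has its shadow in $\{\succeq x\}$, $\{\succeq q\}$ or $\{z\}$. Hence any two of them have shadows at input distance at most $2$, since both contain the common cell $x$ or $q$. Lemma \ref{lem:metric-bounded-lifts} then gives bare distance at most $2D_\Gamma+6$.
\item \emph{Incoming part.} The gluing contributions (the incoming faces and rails counted in Lemma \ref{lem:sparsity}) come from layers $Y^{x'}$ with $x'\prec q$, or $Y^{q'}$ with $q'\prec z$. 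Their shadows are again within input distance $2$ of the meta-check's shadow.
\end{itemize}

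Two support cells therefore have shadows at input distance at most $4$. Lemma \ref{lem:metric-bounded-lifts} gives $\dist_\Gamma\le 2D_\Gamma+12$, which is $\gamma_\Gamma$.

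\emph{Step 2 (transfer through $\Gamma_*$).} Suppose $z,z'\in Z(\Theta)$ share an inherited meta-check $M$, so that $\pi_2 z$ and $\pi_2 z'$ meet $M$. Since $\pi_2$ is layer-local, it maps each $z$ only to cells whose shadow is $\preceq \sh z$. So we can pick cells $w\in\pi_2 z$ and $w'\in\pi_2 z'$ in $Z(\Gamma)$, both in the support of $M$, with $w\preceq \sh z$ and $w'\preceq\sh z'$ in $\Gamma$. The cleanest choice is $w=\sh z$ itself, using that $\pi_2$ is the parity map, so $\pi_2 z=\sh z$. Step 1 gives $\dist_\Gamma(w,w')\le\gamma_\Gamma$. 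Then $\dist_\Gamma(\sh z,\sh z')\le \gamma_\Gamma$, up to an additive constant absorbed in the bound. Lemma \ref{lem:metric-bounded-lifts} for $F=\Gamma_*$ yields $\dist_\Theta(z,z')\le 2D_{\Gamma_*}+3\gamma_\Gamma$, which is $\gamma$. The $X$ sector of $\bar\Theta_*$ is the transpose of this situation, and transposition preserves bare graphs, so the same argument applies.

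\emph{Main obstacle.} The hard part is the bookkeeping in Step 1: checking that every $Z$-check in the support of a meta-check, including those entering via the gluing maps $g^{qx}$ and $g^{zq}$, really has its shadow within input distance $2$ of the meta-check's shadow. A second point to verify is that, in Step 2, $\pi_2$ sends a $z$-check of $\Gamma_*$ exactly to its shadow cell rather than to higher cells. For this I would use the explicit description of $\pi_{\Gamma_*}$ as the parity map in Section \ref{sec:dual}.
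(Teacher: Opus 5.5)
Your proposal is correct and follows the paper's proof. You bound the input distance between shadows of checks in a single $\bar\Gamma$ meta-check by $4$ and apply Lemma~\ref{lem:metric-bounded-lifts} to get $\gamma_\Gamma$; you then use that the composed $Z$-sector meta-checks are the rows of $\bar\delta_\Gamma\pi_{\Gamma_*}$, with $\pi_{\Gamma_*}$ the parity map, so $\pi_{\Gamma_*}u=\sh u$ and a second application of the lemma gives $\gamma$.

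One small fix: every cell of $Y^x$ (resp.\ $Y^q$) has shadow exactly $x$ (resp.\ $q$), so the shadows in your ``local part'' simply coincide. Your stated reason would not work, since two distinct $z\succ x$ can be at input distance $4$. Also, because $\pi_{\Gamma_*}u=\sh u$ exactly, no additive slack is needed in Step~2.
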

\begin{proof}
    First consider a meta-check in $\bar\Gamma(C)$. By Definition~\ref{def:gluing}, its incident checks lie in its own layer or in layers with shadows preceding its shadow. Any two such shadows have input distance at most four: each is within distance two of the meta-check's shadow. Lemma~\ref{lem:metric-bounded-lifts} therefore bounds the diameter of its support in $\Gamma$ by $\gamma_\Gamma$.

    Write $D=\Gamma(C)$. The dual half-step has no new $Z$-sector meta-checks; those of the composition are precisely the rows of $\bar\delta_\Gamma\pi_{\Gamma_*}$, as in Lemma~\ref{lem:level-map-composition}. Here $\pi_{\Gamma_*}$ on $Z$ is the parity map on the vertices of each $z$-layer (Appendix~\ref{sec:dual}). Consequently, checks $u,v$ in a common composed meta-check have shadows in a common meta-check of $\bar\Gamma$, and
    \begin{equation}
        \dist_\Theta(u,v)
        \le 2D_{\Gamma_*}+3\dist_D(\sh u,\sh v)
        \le\gamma.
    \end{equation}
    Uniform layer-size bounds follow from the fixed repetition size and Lemma~\ref{lem:sparsity}. Transposition gives the other sector. Only the intermediate meta-checks within this two-step composition are retained; the claim does not concern the full inherited hierarchy in Remark~\ref{rem:hidden-meta-checks}.
\end{proof}